\documentclass[acmsmall,screen,nonacm]{acmart}

\setcopyright{none}
\renewcommand\footnotetextcopyrightpermission[1]{}

\AtBeginDocument{%
  }

\usepackage[utf8]{inputenc}
\usepackage[T1]{fontenc}
\usepackage[english]{babel}

\usepackage{microtype}
\usepackage{enumitem}  
\usepackage{relsize}  
\usepackage[noabbrev,capitalize]{cleveref}  
\usepackage{mathpartir}  
\usepackage{thm-restate}
\usepackage[normalem]{ulem}    
\usepackage[clock]{ifsym}  
\usepackage{wrapfig}
\usepackage{xcolor}
\usepackage{pifont}  

\usepackage{xr-hyper}
\usepackage{tikz}
\usetikzlibrary{fit, shapes.multipart, decorations,arrows,arrows.meta,shapes,shapes.multipart,decorations.pathreplacing,calc,positioning,decorations.pathmorphing}

\usepackage[final]{listings}

\newcommand{\lstCodeSize}{\footnotesize\ttfamily} 
\definecolor{keywordcolour}{rgb}{0.5,0,0.35}
\definecolor{greencomments}{rgb}{0,0.5,0}
\definecolor{redstrings}{rgb}{0.9,0,0}
\lstdefinelanguage{scribble}{
  basicstyle=\footnotesize\ttfamily,
  stringstyle=\color{blue},
  showstringspaces=false,
  keywords={mixed,and,as,at,by,catches,choice,continue,do,from,global,import,instantiates,interruptible,local,module,or,par,protocol,rec,role,sig,throws,to,type,with,int,aux},
  morestring=[b]",
  morestring=[b]',
  morecomment=[l][\color{greencomments}]{//},
  literate={->}{{${\rightarrow\ }$}}1 {>}{{$>$}}1 {<}{{$<$}}1 {<=}{{$\leq$}}1 {>=}{{$\geq$}}1 {&&}{{$\land$}}1 {||}{{$\lor$}}1 {!=}{{$\neq$}}1 {=}{{$=$}}1
}

\lstdefinelanguage{Erlang}{
	morekeywords={
		after, and, andalso, band, begin, bnot, bor, bsl, bsr, bxor,
		case, catch, cond, div, end, fun, if, let, not, of, or, orelse,
		receive, rem, try, when, xor, module, export, import, define,
		record, include, behaviour, callback_mode, throw, true, false,
		begin, do, else, query, whereis, internal, cast, self, stop
	},
	sensitive=true,                       
	keywordstyle=\color{blue},            
	comment=[l]{\%},                      
	commentstyle=\color{greencomments},            
	stringstyle=\color{orange},           
	morestring=[b]",                      
	morestring=[b]',                      
	basicstyle=\ttfamily\footnotesize,    
	numbers=left,                         
	numberstyle=\tiny\color{gray},        
	stepnumber=1,                         
	numbersep=5pt,                        
	tabsize=2,                            
	breaklines=true,                      
	breakatwhitespace=false,              
	showstringspaces=false,               
	escapeinside={(*@}{@*)},              
	alsoletter={_},                       
	otherkeywords={->, <-, ==, /=, =<, >=, <, >, ::, =>, ++, --, **, !, ?}, 
	literate=
			{start(}{{\color{keywordcolour}start}\color{black}(}1
			{start_link(}{{\color{keywordcolour}start\_link}\color{black}(}1
			{stop(}{{\color{keywordcolour}stop}\color{black}(}1
			{call(}{{\color{keywordcolour}call}\color{black}(}1
			{reply(}{{\color{keywordcolour}reply}\color{black}(}1
			{enter_loop(}{{\color{keywordcolour}enter\_loop}\color{black}(}1
			{send_event(}{{\color{keywordcolour}send\_event}\color{black}(}1
			{send_all_state_event(}{{\color{keywordcolour}send\_all\_state\_event}\color{black}(}1
			{sync_send_event(}{{\color{keywordcolour}sync\_send\_event}\color{black}(}1
			{sync_send_all_state\_event(}{{\color{keywordcolour}sync\_send\_all\_state\_event}\color{black}(}1
			{start_timer(}{{\color{keywordcolour}start\_timer}\color{black}(}1
			{cancel_timer(}{{\color{keywordcolour}cancel\_timer}\color{black}(}1
			{state_timeout(}{{\color{keywordcolour}state\_timeout}\color{black}(}1
			{timeout(}{{\color{keywordcolour}timeout}\color{black}(}1
			{hibernate(}{{\color{keywordcolour}hibernate}\color{black}(}1
			{format_status(}{{\color{keywordcolour}format\_status}\color{black}(}1
			{code_change(}{{\color{keywordcolour}code\_change}\color{black}(}1
			{terminate(}{{\color{keywordcolour}terminate}\color{black}(}1
			{s1(}{{\color{keywordcolour}s1}\color{black}(}1
			{s2(}{{\color{keywordcolour}s2}\color{black}(}1
			{s3(}{{\color{keywordcolour}s3}\color{black}(}1
			{s4(}{{\color{keywordcolour}s4}\color{black}(}1
			{s5(}{{\color{keywordcolour}s5}\color{black}(}1
			{s6(}{{\color{keywordcolour}s6}\color{black}(}1
			{s1(}{{\color{keywordcolour}s1}\color{black}(}1
			{s2(}{{\color{keywordcolour}s2}\color{black}(}1
			{s3(}{{\color{keywordcolour}s3}\color{black}(}1
			{s4(}{{\color{keywordcolour}s4}\color{black}(}1
			{s5(}{{\color{keywordcolour}s5}\color{black}(}1
			{s(}{{\color{keywordcolour}s6}\color{black}(}1
			{send_a1(}{{\color{keywordcolour}send\_a1}\color{black}(}1
			{send_a2(}{{\color{keywordcolour}send\_a2}\color{black}(}1
			{send_a3(}{{\color{keywordcolour}send\_a3}\color{black}(}1
			{send_a4(}{{\color{keywordcolour}send\_a4}\color{black}(}1
			{send_a5(}{{\color{keywordcolour}send\_a5}\color{black}(}1
			{send_a6(}{{\color{keywordcolour}send\_a6}\color{black}(}1
			{send_to(}{{\color{keywordcolour}send\_to}\color{black}(}1
}

\usepackage{macros}
\usepackage{amsmath}
\usepackage{thmtools}
\usepackage{bussproofs}
\usepackage{stmaryrd}  
\usepackage{multirow}
\usepackage{array}
\usepackage{tabularx}
\usepackage{booktabs}
\usepackage{svg}
\newtheorem{remark}{Remark}

\newcommand{\CODE}[1]{\texttt{\small#1}}

\makeatletter
\providecommand*\xrightarrowtriangle[2][]{%
  \ext@arrow 0055{\arrowfill@\relbar\relbar\rightarrowtriangle}{#1}{#2}}
\makeatother

\newcommand*{\rightsquigarrowtriangle}{%
    \mathrel{\raisebox{.2mm}{%
        \begin{tikzpicture}
            \draw[
            arrows={-Triangle[scale=0.8]},
            line cap=round, decorate, decoration={
                zigzag,
                segment length=3.7,
                amplitude=.8,
                post length=2pt,
            }] (0,0) -- (.3,0);
        \end{tikzpicture}%
    }}%
}

\begin{document}

\title{Mixed Choice in Asynchronous Multiparty Session Types}

\author{Laura Bocchi}
\email{L.Bocchi@kent.ac.uk}
\orcid{0000-0002-7177-9395}
\affiliation{%
 \institution{University of Kent}
 \city{Canterbury}
 \country{UK}
}

\author{Raymond Hu}
\email{r.hu@qmul.ac.uk}
\orcid{0000-0003-4361-6772}
\affiliation{%
 \institution{Queen Mary University of London}
 \city{London}
 \country{UK}
}

\author{Adriana Laura Voinea}
\email{laura.voinea@glasgow.ac.uk}
\orcid{0000-0003-4482-205X}
\affiliation{%
 \institution{University of Glasgow}
 \city{Glasgow}
 \country{UK}
}

\author{Simon Thompson}
\email{S.J.Thompson@kent.ac.uk}
\orcid{0000-0002-2350-301X}
\affiliation{%
 \institution{University of Kent}
 \city{Canterbury}
 \country{UK}
}

\begin{abstract}
We present a multiparty session type (MST) framework with \emph{asynchronous
mixed choice} (MC).
We propose a core construct for MC that allows transient inconsistencies in
protocol state between distributed participants, but ensures all participants
can always eventually reach a mutually consistent state.
We prove the correctness of our system by establishing a progress property and
an operational correspondence between global types and distributed local type
projections.
Based on our theory, we implement a practical toolchain for specifying and
validating asynchronous MST protocols featuring MC, and programming compliant
\CODE{gen\_statem} processes in Erlang/OTP.
We test our framework by using our toolchain to specify and reimplement part of
the \CODE{amqp\_client} of the RabbitMQ broker for Erlang.


\end{abstract}

\maketitle

%
\section{Introduction}
\label{sec:intro}

\emph{Multiparty session types} (MST)~\cite{HondaYC08} is a typing discipline
for concurrent processes that interact via message passing in communication
sessions.
The main idea is that an MST communication protocol can be statically checked
for \emph{communication safety}, i.e., freedom from fundamental errors such
as reception errors (receiving unexpected messages), deadlocks (wait-for
cycles) and orphan messages (messages that the receiver will never
attempt to consume).

MST is an active area of research due to its potential to offer programmatic
techniques for safe specification and lightweight verification of communication
protocols in concurrent and distributed systems.
The key challenges being tackled include expressiveness of the types,
tractability of the metatheory, and practicality of session-based programming
and verification methods.
These challenges are accentuated in the setting of distributed systems (DS)
where communications are inherently asynchronous and failure is the norm.

This paper tackles a crucial problem that concerns all of the above
challenges: the notion of \emph{mixed choice} in \emph{asynchronous} MST.
In classical
MST~\cite{HondaYC08,DBLP:conf/concur/BettiniCDLDY08long,DBLP:conf/sfm/CoppoDPY15},
the construct for choice in a protocol, called a \emph{directed choice}, looks
as follows.

\centerline{ \(
\p \mathbin{\rightarrowtriangle} \q : \{ a_i . \GT_i \}_{i \mathclose{\in} I}
\)}

\noindent
It specifies that a participant $\p$ makes an \emph{internal} choice to send
one of the $a_i$ messages to $\q$ and continue in protocol $G_i$.
Participant $\q$ receives the $a_i$ as an \emph{external} choice and continues
in $G_i$ correspondingly.
Some systems (e.g.,~\cite{DBLP:conf/cav/LangeY19,DBLP:conf/cav/LiSWZ23})
support generalised choice constructs that allow the $a_i$ to be sent to
different $\q_i$.
However, the key point remains that the choice is directed by $\p$.

By contrast, this paper develops the following construct for \emph{mixed}
choice.

\centerline{\(
\q \mathbin{\rightarrowtriangle} \p : a_1 . \GT_1 ~~\triangleright~~ \p \mathbin{\rightarrowtriangle} \q : a_2 . \GT_2
\)}

\noindent
It specifies that $\p$ and $\q$ each independently face a choice between a
\emph{mix} of input and output actions: $\q$ faces a mixed choice between
sending $a_1$ on the left and receiving $a_2$ on the right, and vice versa for
$\p$.
In an asynchronous setting, this means that $\q$ may opt to send $a_1$ and
continue in protocol $G_1$ \emph{concurrently with} $\p$ opting to send $a_2$
and continue in $G_2$.
In this way, asynchronous mixed choices inherently describe a form of race
condition, which classical MST intentionally prohibits outright because
directed choice syntactically partitions all choices as input or output only.

Yet such race conditions are useful and important in many real applications.
Indeed, much recent research on improving the expressiveness and practicality
of MST has touched on aspects of mixed choice, including work on
\emph{exceptions}~\cite{DBLP:journals/mscs/CapecchiGY16,DBLP:conf/esop/VieringCEHZ18,DBLP:journals/pacmpl/FowlerLMD19},
\emph{interrupts}~\cite{DBLP:journals/fmsd/DemangeonHHNY15,DBLP:conf/forte/ChenVBZE16},
\emph{timing}~\cite{DBLP:journals/pacmpl/IraciCHZ23} and
\emph{timeouts}~\cite{DBLP:conf/coordination/PearsBK23,DBLP:conf/ecoop/HouLY24},
\emph{failure
handling}~\cite{DBLP:conf/forte/AdameitPN17,DBLP:conf/concur/BarwellSY022,DBLP:conf/ecoop/BarwellHY023,DBLP:conf/forte/BrunD24},
and
\emph{fault-tolerance}~\cite{DBLP:journals/pacmpl/VieringHEZ21,DBLP:journals/lmcs/PetersNW23}.
These works (implicitly) involve patterns where a participant has the option on
one hand to asynchronously output a message, while on the other hand it is
simultaneously prepared for some input event, say, catching a concurrent
channel exception, or receiving a criss-crossing interrupt or timeout message,
or handling the failure of some other participant.

The fundamental problem in reasoning about mixed choice in MST is that it
allows participants to diverge in their local views of the protocol during
execution.
The insight of this paper is that communication safety can be achieved despite
\emph{transient} inconsistencies in distributed protocol state depending on how
participants react to race-y messages and interact to resolve conflicts.

\paragraph{Contributions and roadmap}

This paper presents the following contributions.

\begin{itemize}[leftmargin=*]
\item
We introduce \textbf{\mMST{}}, the first theory of global and local MST with an
\emph{explicit} construct for \emph{asynchronous mixed choice}.
To date, mixed choices in asynchronous sessions have only been partially
expressed through specific-purpose constructs for exceptions, failure handling
and so forth.
By contrast, we present a general-purpose core theory of mixed choices that
captures and unifies the fundamental essence of such ad hoc constructs.
We propose an \emph{asymmetric} design for mixed choice that statically ensures
participants can eventually resolve the inherent race conditions through
explicit interactions and agree on how the protocol should proceed.

\item
We establish the correctness of \mMST{} by proving a progress
property for our mixed choice types, and an operational
correspondence between global types and the corresponding system of local
types.
Together these guarantee that every non-terminated participant in an \mMST{}
protocol can always progress and that its behaviour is always
protocol-compliant.
Our results lay general foundations for a more flexible and practical concept
of multiparty sessions that allows participants to deal with discrepancies in
their distributed views of the protocol and safely converge on a consistent
outcome.

\item
We apply our theory by implementing a prototype toolchain for specifying and
programming \mMST{}-based protocols as \CODE{gen\_statem} programs in
Erlang/OTP.
We test the expressiveness and practicality of \mMST{} by using our toolchain
to implement coordination protocols in the Erlang client of the RabbitMQ
message broker, as well as a selection of examples from MST literature augmented
with mixed choices.
\end{itemize}

\noindent
\textbf{\Cref{sec:overview}} gives a high-level overview of our mixed choice
construct.
\textbf{\Cref{sec:globaltypes,sec:localtypes}} present our theory and its
formal properties.
\textbf{\Cref{sec:implementation}} describes our toolchain for Erlang, our
RabbitMQ use case and other examples.
\textbf{\Cref{sec:related}} discusses related work, limitations and future work.
Full details and proofs are available in the appendix. 

\section{Overview}
\label{sec:overview}

Our overall framework comprises two main stages.

\begin{itemize}[leftmargin=*]
\item
Specification and static validation of a source \mMST{} protocol based on our
formal theory.
\item
Implementation of each role in the asynchronous protocol in Erlang using
correct-by-construction modules generated from the source protocol.
\end{itemize}

\noindent
The following two subsections illustrate the key concepts in each stage using a
practical example.

\smallskip
\begin{figure}[t]
\small
\begin{tabular}{l|l}
\begin{minipage}{0.4\textwidth}
{\begin{lstlisting}[numbers=left,language=scribble]
// Exception, Interrupt, etc. similar
global protocol Timeout(
      role A, role B, role C) {
  mixed {  // "Left-hand" side (LHS)
    a1() from A to B;  (*\label{ln:L1}*)
    a2() from A to C;
    a3() from B to C;  (*\label{ln:CLHS}*)
    a4() from B to A;  (*\label{ln:ALHS}*)
    a5() from C to A;  (*\label{ln:L2}*)
  } or {  // "Right-hand" side (RHS)
    TOa() from B to A;  (*\label{ln:ARHS}\label{ln:R1}*)
    TOc() from B to C;  (*\label{ln:CRHS}\label{ln:R2}*)
} }
\end{lstlisting}}
\end{minipage}
&
\begin{tabular}{c}
\begin{minipage}{0.45\textwidth}
{\begin{tikzpicture}
\node (A) {\CODE{A}};
\node[below=3mm of A] (B) {\CODE{B}};
\node[below=3mm of B] (C) {\CODE{C}};
\draw[dotted] (A) --
    node[pos=0.1,above](A1){\GREY{$1, !a_\mathtt{1}$}}
    node[pos=0.25,above](A2){\GREY{$2, !a_\mathtt{2}$}}
    node[pos=0.67,above](A3){}
    node[pos=0.75,above](A4){\GREY{$3,?a_\mathtt{4}$}}
    node[pos=0.9,above](A5){\GREY{$4, ?a_\mathtt{5}$}}
    ++(6cm, 0) node (AA) {};
\draw[dotted] (B) --
    node[pos=0.35,below](B1){\GREY{$1, {}?a_\mathtt{1}$}}
    node[pos=0.45,above](B2){\GREY{$2, {}!a_\mathtt{2}$}}
    node[pos=0.55,below](B3){\GREY{$3, {}!a_\mathtt{4}$}}
    ++(6cm, 0) node (BB) {};
\draw[dotted] (C) --
    node[pos=0.3,below](C1){\GREY{$1, ?a_\mathtt{2}$}}
    node[pos=0.52,below](C2){\GREY{$2, ?a_\mathtt{3}$}}
    node[pos=0.66,below](C3){\GREY{$3, !a_\mathtt{5}$}}
    ++(6cm, 0) node (CC) {};
\node[anchor=center] at (A5.south) {\tiny$\bullet$};
\draw[->] (A1.south) -- node[left]{\BLUE{$a_1$}} (B1.north);
\draw[->] (A2) -- node[right,pos=0.15]{\BLUE{$a_2$}} (C1);
\draw[->] (B2) -- node[left,pos=0.6]{\BLUE{$a_3$}} (C2);
\draw[->] (B3.north) -- node[above,pos=0.2]{\BLUE{$a_4$}} (A4.south);
\draw[->] (C3) -- node[right,pos=0.3]{\BLUE{$a_5$}} (A3);
\draw[dashed,gray,->] (A3.center) -- ++(0,5mm) -- ++(13mm,0) -- ++(0,-1.8mm);
\end{tikzpicture}}
\end{minipage}
\\
\\[-2mm]
\begin{minipage}{0.45\textwidth}
{\begin{tikzpicture}
\node (A) {\CODE{A}};
\node[below=3mm of A] (B) {\CODE{B}};
\node[below=3mm of B] (C) {\CODE{C}};
\draw[dotted] (A) --
    node[pos=0.2,above](A1){\GREY{$1, !a_\mathtt{1}$}}
    node[pos=0.4,above](A2){\GREY{$2, !a_\mathtt{2}$}}
    node[pos=0.65,above](A3){\GREY{$3, ?^*\mathtt{TOa}$}}
    ++(5cm, 0) node (AA) {};
\draw[dotted] (B) --
    node[pos=0.1,below](B1){\GREY{$1,!^*\mathtt{TOa}$}}
    node[pos=0.35,below](B2){\GREY{$5, \mathsf{gc}$}}
    node[pos=0.48,above](B3){\GREY{$5, !\mathtt{TOc}$}}
    ++(5cm, 0) node (BB) {};
\draw[dotted] (C) --
    node[pos=0.65,below](C1){\GREY{$1, ?^*\mathtt{TOc}$}}
    node[pos=0.95,below](C2){\GREY{$4, \mathsf{gc}$}}
    ++(5cm, 0) node (CC) {};
\draw[->,red] (A1) -- node[right,pos=0.3]{\BLUE{$a_1$}} (B2);
\draw[->] (A2) -- node[above]{\BLUE{$a_2$}} (C2);
\draw[->,red] (B1.north) -- node[above,pos=0.1]{\BLUE{$\mathtt{TOa}$}} (A3.south);
\draw[->] (B3) -- node[right]{\BLUE{$\mathtt{TOc}$}} (C1);
\end{tikzpicture}}
\end{minipage}
\end{tabular}
\\
\end{tabular}
\caption{A multiparty timeout pattern as a safe \mMST{} protocol using asynchronous mixed choice.}
\label{fig:timeout}
\end{figure}
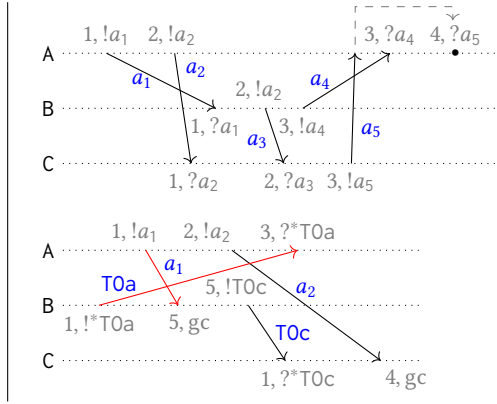

\subsection{Asymmetric Mixed Choice in Asynchronous MST}
\label{sec:asymmMC}

Our toolchain takes \mMST{} protocols written in our extension of the Scribble
protocol language~\cite{DBLP:conf/tgc/YoshidaHNN13,DBLP:conf/fase/HuY16}.
Communications are asynchronous: interactions are non-blocking on
the sender side, while the receiver side blocks until a message is available
for reading.
This means a sender moves ahead in the protocol immediately after dispatching a
message without waiting for the message to be received.
Messages are delivered in order of dispatch in each direction between each pair
of roles; receivers read messages from the expected sender in a FIFO manner.
This model reflects our target Erlang programs and wider domains such as
TCP-based Internet applications and Web services.

\Cref{fig:timeout} (left) illustrates a small protocol called \CODE{Timeout}
involving three participants, whose behaviour is abstracted in MST as
\emph{roles}.
The syntax of our \emph{mixed choice} (MC) construct is:

\smallskip
\centerline{\CODE{mixed \{ /* "Left-hand" side (LHS) */ \} or \{ /* "Right-hand" side (RHS) */ \}}}
\smallskip

\noindent

The protocol features an MC between role \CODE{A} on the LHS and
\CODE{B} on the RHS.
It expresses a typical timeout pattern where \CODE{A} has the option to send
message \CODE{a1} to \CODE{B} (and proceed asynchronously) on the LHS, but it
must also be prepared to handle the potentially concurrent timeout message
\CODE{TOa} from \CODE{B} on the RHS.
Conversely, \CODE{B} has the option to wait for the \CODE{a1} message on the
LHS, or give up waiting and asynchronously send \CODE{TOa} on the RHS.
Depending on how \CODE{A} and \CODE{B} proceed, role \CODE{C} must be prepared
to handle one or both of the \CODE{a2} from \CODE{A} and the \CODE{TOc} from
\CODE{B}.

We summarise the key concepts in the design of our MC and how we ensure MST
safety.

\begin{itemize}[leftmargin=*]
\item
Our MC is an \emph{asymmetric} construct.
In each MC, we designate the sender on the RHS as a special role that we refer
to as the \emph{observer} of the MC.
The LHS can be considered a default or speculative branch, which can be
asynchronously superceded by the RHS on the instigation of the
observer.

\item
We identify a notion of \emph{commitment} of roles to an MC branch (LHS or
RHS).
Commitment means that the inherent race condition of an MC has been resolved
from the perspective of that role and it knows the protocol will henceforth
proceed only in that branch.

An MC starts with no roles committed to either branch.
Regardless of (speculative) interactions between other roles in the LHS, the
first action by the observer in an MC (input on the LHS or output on the RHS)
commits the observer to that branch.
Any subsequent action by another role $\role{r}$, where the action causally
depends on the observer, commits $\role{r}$ to the same branch.

\item
Protocol validation due to our formal theory ensures that however a protocol
(speculatively) proceeds, all eventual commitments are monotonic and
always consistent with the observer.
\end{itemize}

To illustrate, \Cref{fig:timeout} (right) depicts two possible executions of
the protocol.
For now, the reader can focus on the arrows and the blue labels; the grey
annotations will be explained in Sec.~\ref{sec:moti}.

\begin{itemize}[leftmargin=*]
\item
The upper chart is a run where observer \CODE{B} opts to receive \CODE{a1} and
follow \CODE{A} on the LHS, i.e., it does not raise the timeout.
While \CODE{A} and \CODE{C} may proceed asynchronously on the LHS, the observer
\CODE{B} is the first role to commit when it consumes the \CODE{a1}.
In turn \CODE{C} commits to the LHS when it consumes the \CODE{a3} from
\CODE{B}, and \CODE{A} commits when it consumes the \CODE{a4}.
Although the \CODE{a5} can arrive on the LHS at \CODE{A} before the \CODE{a4},
\CODE{A} must consume the \CODE{a4} (and commit to the LHS) first following the
protocol.

\item
The lower chart is a run where observer \CODE{B} opts to overrule \CODE{A} and
raise the timeout by sending \CODE{TOa}, which commits \CODE{B} to the RHS.
Although \CODE{A} and \CODE{C} may be concurrently engaged in interactions on
the LHS, role \CODE{A} eventually receives the \CODE{TOa}, causing it to switch
from the LHS and commit to the RHS.
Similarly, \CODE{C} switches and commits to the RHS when it receives the
\CODE{TOc}.
\end{itemize}

\paragraph{Stale message purging.}
The latter of the above cases demonstrates that supporting mixed choices safely
in asynchronous MST requires one further key concept.
Due to asynchrony, actions performed by one role may concurrently render other
messages that are buffered or being sent obsolete.
For instance, in the lower chart, the \CODE{a1} message `criss-crosses' with
\CODE{TOa} and arrives at \CODE{B} after \CODE{B} has already committed to the
RHS.
From \CODE{B}'s perspective, sending the \CODE{TOa} renders the (already
in-transit) \CODE{a1} obsolete; in such cases, we refer to messages like
\CODE{a1} as \emph{stale} messages.
Similarly, the receipt of \CODE{TOc} by \CODE{C} renders the \CODE{a2} stale.

Our theory formalises a runtime mechanism that allows stale messages to be
automatically purged \emph{transparently} to the user program; e.g., in the
case mentioned above, the runtime at \CODE{B} (resp.\ at \CODE{C}) will
transparently purge the \CODE{a1} that arrives after sending \CODE{TOa} (resp.\
the \CODE{a2} after receiving \CODE{TOc}).
Crucially, stale message purging can be safely performed by the \emph{local}
runtime of each distributed participant using only knowledge available to that
participant.
The notion of stale messages and purging never arises in classical MST.
Our transparent purging mechanism could be considered a message passing
analogy to garbage collection of stale objects in programming languages with
automated memory management (such as Erlang).

\paragraph{MC protocol validation.}
For reference, the \CODE{Timeout} protocol written in our formal notation for
\emph{global types} (\Cref{sec:globaltypes}) is as follows.
The $\triangleright$ separates the LHS and RHS of the MC.
The additional \BLUE{blue} colouring indicates the points at which the roles become
committed.

\smallskip
\centerline{$
\mathtt{A} \mathbin{\rightarrowtriangle} \mathtt{B} : a_1 \,.\,
\mathtt{A} \mathbin{\rightarrowtriangle} \mathtt{C} : a_2 \,.\,
\mathtt{\color{blue}B} \mathbin{\rightarrowtriangle} \mathtt{\color{blue}C} : a_3 \,.\,
\mathtt{B} \mathbin{\rightarrowtriangle} \mathtt{\color{blue}A} : a_4 \,.\,
\mathtt{C} \mathbin{\rightarrowtriangle} \mathtt{A} : a_5 \,.\, \tend
\;\;\triangleright\; \;
\mathtt{\color{blue}B} \mathbin{\rightarrowtriangle} \mathtt{\color{blue}A} : \mathtt{TOa} \,.\,
\mathtt{B} \mathbin{\rightarrowtriangle} \mathtt{\color{blue}C} : \mathtt{TOc} \,.\, \tend
$~~\textcolor{green}{\ding{51}}}
\smallskip

\noindent
Our static protocol validation ensures that role commitments are always
consistent with the observer and that role termination is safe (as illustrated
in the earlier example runs).
By contrast, the below (renaming \CODE{TOc}) is rejected because commitment is
ambiguous for \CODE{C}.

\smallskip
\centerline{\(
\mathtt{A} \mathbin{\rightarrowtriangle} \mathtt{B} : a_1 \,.\,
\mathtt{A} \mathbin{\rightarrowtriangle} \mathtt{C} : a_2 \,.\,
\mathtt{\color{blue}B} \mathbin{\rightarrowtriangle} \mathtt{\color{blue}C} : \RED{a_3} \,.\,
\mathtt{B} \mathbin{\rightarrowtriangle} \mathtt{\color{blue}A} : a_4 \,.\,
\mathtt{C} \mathbin{\rightarrowtriangle} \mathtt{A} : a_5 \,.\, \tend
\;\;\triangleright\; \;
\mathtt{\color{blue}B} \mathbin{\rightarrowtriangle} \mathtt{\color{blue}A} : \mathtt{TOa} \,.\,
\mathtt{B} \mathbin{\rightarrowtriangle} \mathtt{\color{blue}C} : \RED{a_3} \,.\, \tend
\)~~\textcolor{red}{\ding{55}}}
\smallskip

\noindent
The below (dropping the $a_3$) is also rejected because \CODE{C} reaches \CODE{end}
without committing on the LHS.

\smallskip
\centerline{$
\mathtt{A} \mathbin{\rightarrowtriangle} \mathtt{B} : a_1 \,.\,
\mathtt{A} \mathbin{\rightarrowtriangle} \mathtt{C} : a_2 \,.\,
\mathtt{\color{blue}B} \mathbin{\rightarrowtriangle} \mathtt{\color{blue}A} : a_4 \,.\,
\mathtt{C} \mathbin{\rightarrowtriangle} \mathtt{A} : a_5 \,.\, {\tend}
\;\;\triangleright\; \;
\mathtt{\color{blue}B} \mathbin{\rightarrowtriangle} \mathtt{\color{blue}A} : \mathtt{TOa} \,.\,
\mathtt{B} \mathbin{\rightarrowtriangle} \mathtt{\color{blue}C} : \mathtt{TOc} \,.\, \tend
$~~\textcolor{red}{\ding{55}}}
\smallskip

\noindent
However, the below (dropping the $a_4$) is accepted; \CODE{A} can safely commit
on the LHS because it has a transitive causal dependency with observer \CODE{B}
via \CODE{C}.

\smallskip
\centerline{$
\mathtt{A} \mathbin{\rightarrowtriangle} \mathtt{B} : a_1 \,.\,
\mathtt{A} \mathbin{\rightarrowtriangle} \mathtt{C} : a_2 \,.\,
\mathtt{\color{blue}B} \mathbin{\rightarrowtriangle} \mathtt{\color{blue}C} : a_3 \,.\,
\mathtt{C} \mathbin{\rightarrowtriangle} \mathtt{\color{blue}A} : a_5 \,.\, \tend
\;\;\triangleright\; \;
\mathtt{\color{blue}B} \mathbin{\rightarrowtriangle} \mathtt{\color{blue}A} : \mathtt{TOa} \,.\,
\mathtt{B} \mathbin{\rightarrowtriangle} \mathtt{\color{blue}C} : \mathtt{TOc} \,.\, \tend
$~~\textcolor{green}{\ding{51}}}
\smallskip

Altogether, the design of our asynchronous MC, the concepts of observer and
commitment and the protocol validation guarantee that every role that has not
safely terminated will continue progressing through the protocol as expected.
As an advance pointer, the main conditions checked by the protocol validation
(which we refer to as \emph{awareness}) and the \emph{progress} property for
global types are formally defined and established in \Cref{sec:globalprogress}.

On the expressiveness of \mMST{}, consider again the criss-crossing pattern
between \CODE{A} and \CODE{B} in the lower chart of \Cref{fig:timeout} (right).
By contrast, the conservative syntactic structure of classical
MST~\cite{DBLP:journals/mscs/CoppoDYP16,DBLP:journals/jacm/HondaYC16}
implicitly precludes all such patterns where asynchronous messages criss-cross
between a pair of roles in opposing directions.
The complications introduced by these patterns -- such as the inherent race
conditions, stale messages, and the mechanisms required to resolve these -- are
why reasoning about safety of mixed choices in asynchronous MST (and DS in
general!) is a difficult challenge, and has remained an open problem for MST.

We have used the small \CODE{Timeout} example to introduce our MC and the key
concepts.
Nevertheless, it demonstrates how \mMST{} supports the fundamental
communication pattern underlying various important DS constructs, including
exceptions, interrupts and failure handling, as found in many real-world
applications.
Our MC provides a \emph{core} building block for expressing these constructs in
MST.
\Cref{sec:globaltypes,sec:evaluation} include and summarise a range of further
such examples that involve combining MC with the standard directed choice of
MST, recursion and nested MCs.
%

\subsection{Mixed Choice MST protocols in Erlang}
\label{sec:moti}

Erlang is a concurrent, dynamically typed functional programming language
designed for implementing applications out of message passing processes.
It is used in major applications and platforms such as WhatsApp and RabbitMQ.
Its emphasis on building fault-tolerant distributed systems makes it a good
target for applying and testing \mMST{}.

Erlang/OTP has built-in support for a set of core design patterns, known as
\emph{behaviours}, for implementing processes.
In this paper, we target the \CODE{gen\_statem} behaviour that provides a
generic framework for implementing state machines.
A process is formed by combining the provided generic state machine behaviour
module with a user-written module of \emph{callback} functions for handling
events and state transitions according to the required application-specific
logic.

Based on our theory, we have implemented a toolchain for specifying and
implementing \mMST{} protocols as \CODE{gen\_statem} processes.
Our toolchain involves these steps.

\begin{itemize}[leftmargin=*]
\item
The user writes the source \emph{global} protocol (e.g., \CODE{Timeout} in
\Cref{fig:timeout}) using our \mMST{} extension of Scribble and uses the tool
to validate it based on our formal conditions.
\item
The tool internally \emph{projects} valid global protocols to a \emph{local}
protocol for each role.
The tool represents a local protocol as an \emph{Event-Driven Finite State
Machine} (EFSM) that matches the programming abstractions of \CODE{gen\_statem}
(see below).
\item
From each EFSM, the tool generates (i) a correct-by-construction
\emph{protocol- and role-specific} \CODE{gen\_statem} behaviour module, and
(ii) a corresponding template callback module for the programmer to use and
adapt as required to complete the process definition.
\end{itemize}

\paragraph{Global-to-local projection}
A local protocol is the view of the protocol from a specific role.
Projecting a global protocol to a set of local protocols, one for each role,
yields a distributed model.
In our formal notation, the projected \emph{local types} for \CODE{Timeout}
are:

\centerline{$
\begin{array}{lllll}
\mathtt{Timeout} \proj \mathtt{A} = &
\mathtt{B}\oplus a_1 \,.\,
\mathtt{C}\oplus a_2 \,.\,
\phantom{\mathtt{B}\,\&\, a_3 \,.\,}
\mathtt{B}\, \& \, a_4 \,.\,
\mathtt{C}\, \& \, a_5 \,.\,
\tend
&\triangleright&
\mathtt{B} \,\& \,\mathtt{TOa} \,.\,
\phantom{\mathtt{B} \oplus \mathtt{TOc} \,.\,}
\tend
\\
\mathtt{Timeout} \proj \mathtt{B} = &
\mathtt{A}\, \& \, a_1 \,.\,
\phantom{\mathtt{A}\, \& \, a_2 \,.\,}
\mathtt{C}\oplus a_3 \,.\,
\mathtt{A}\oplus a_4 \,.\,
\phantom{\mathtt{C}\, \& \, a_5 \,.\,}
\tend
&\triangleright&
\mathtt{A} \oplus \mathtt{TOa} \,.\,
\mathtt{C} \oplus \mathtt{TOc} \,.\,
\tend\\
\mathtt{Timeout} \proj \mathtt{C} = &
\phantom{\mathtt{A}\, \& \, a_1 \,.\,}
\mathtt{A}\, \& \, a_2 \,.\,
\mathtt{B}\, \& \, a_3 \,.\,
\phantom{\mathtt{B}\, \& \, a_4 \,.\,}
\mathtt{A}\oplus \, a_5 \,.\,
\tend
&\triangleright&
\phantom{\mathtt{B} \oplus  \mathtt{TOa} \,.\,}
\mathtt{B} \, \& \,  \mathtt{TOc} \,.\,
\tend
\\
\end{array}
$}
\smallskip

\noindent The $\triangleright$ separates the LHS and RHS of the projected mixed
choices.
The $\oplus$ denotes an output-only choice (internal select), and $\&$ denotes the
dual input-only choice (external branch); in this simple example, the
input/output-only choices are all unary.
\Cref{sec:correspondence} proves that the behaviour of a projected local
system corresponds to that of the global protocol.
Projection thus entails that a protocol is realisable as a
distributed system.

\paragraph{EFSM representation}

The core abstraction of \CODE{gen\_statem} is an \emph{Event-Driven Finite
State Machine} (EFSM).
Transitions are described
by\footnote{\url{https://www.erlang.org/doc/system/statem.html}}

\smallskip
\centerline{State (S) $\times$ Event (E) $\rightarrow$ Action (A) $\times$ State (S')}
\smallskip

\noindent
which can be read: if the \emph{current state} is S and \emph{event} E occurs,
then perform \emph{action} A and transition to \emph{successor state} S' -- i.e.,
the transition is triggered by E and A is a consequent effect.
Events may have external sources, such as the arrival of a message ($?$), or
internal ($\tau$), such as a local computation.
Actions include sending ($!$) messages and the empty action ($\epsilon$).
In our \mMST{} setting, a transition can also have the effect of
switching from the LHS of an MC to commit to the RHS.

\begin{figure}[t]
\small
\begin{tabular}{cc}
\begin{minipage}{0.67\textwidth}
{\begin{tikzpicture}
[node/.style={draw,circle,inner sep=0pt,minimum width=3mm,font=\scriptsize}]
\node[node,label=above:{\CODE{A}}] (S1) {$1$};
\node[node,below=6mm of S1] (S2) {$2$};
\node[node,below=6mm of S2] (S3) {$3$};
\node[node,below=6mm of S3] (S4) {$4$};
\node[node,below=6mm of S4] (S5) {$5$};
\draw[->] (S1) -- node[left,pos=0.3]{$\tau/\mathtt{B}!a_\mathtt{1}$} (S2);
\draw[->] (S2) -- node[right,pos=0.3]{$\tau/\mathtt{C}!a_\mathtt{2}$} (S3);
\draw[->] (S3) -- node[right,pos=0.3]{$\mathtt{B}?a_\mathtt{4}$} (S4);
\draw[->] (S4) -- node[pos=0.3,right]{$\mathtt{C}?a_\mathtt{5}$} (S5);
\draw[->] (S1) edge[bend left=80] node[pos=0.05,right, outer sep=5pt]{$\mathtt{B?^*TOa}$} (S5);
\draw[->] (S2) edge[bend right=80] node[pos=0.05,left,yshift=2pt]{$\mathtt{B?^*TOa}$} (S5);
\draw[->] (S3) edge[bend right=80] node[pos=0.05,left,yshift=2pt]{$\mathtt{B?^*TOa}$} (S5);
\node[node,right=28mm of S1,label=above:{\CODE{B}}] (B1) {$1$};
\node[node,below=7mm of B1,xshift=-6mm] (B2) {$2$};
\node[node,below=7mm of B2] (B3) {$3$};
\node[node,below=7mm of B3,xshift=6mm] (B4) {$4$};
\draw[->] (B1) -- node[pos=0.3,left]{$\mathtt{A?}a_\mathtt{1}$} (B2);
\draw[->] (B2) -- node[pos=0.3,left]{$\tau/\mathtt{C!}a_\mathtt{3}$} (B3);
\draw[->] (B3) -- node[left]{$\tau/\mathtt{A!}a_\mathtt{4}$} (B4);
\path (B2) -- coordinate[midway](C) (B3);
\node[node] at ($(C)+(18mm,0)$) (B5) {$5$};
\draw[->] (B1) edge[bend left=38] node[pos=0.05,right,outer sep=6pt]{$\mathtt{A?}a_\mathtt{1}/\mathtt{A!^*TOa}$} (B5);
\draw[->] (B1) -- node[pos=0.8,left,xshift=2pt]{$\mathtt{\tau/A!^*TOa}$} (B5);
\draw[->] (B5) -- node[right]{$\tau/\mathtt{C!TOc}$} (B4);
\node[node,right=30mm of B1,label=above:{\CODE{C}}] (C1) {$1$};
\node[node,below=7mm of C1,xshift=-2mm] (C2) {$2$};
\node[node,below=7mm of C2] (C3) {$3$};
\node[node,below=7mm of C3,xshift=2mm] (C4) {$4$};
\draw[->] (C1) -- node[left,pos=0.3]{$\mathtt{B?}a_\mathtt{2}$} (C2);
\draw[->] (C2) -- node[left,pos=0.3]{$\mathtt{A?}a_\mathtt{3}$} (C3);
\draw[->] (C3) -- node[left,pos=0.3]{$\tau/\mathtt{A!}a_\mathtt{5}$} (C4);
\draw[->] (C1) edge[bend left=85] node[right,pos=0.05,outer sep=6pt]{$\mathtt{B?^*TOc}$} (C4);
\draw[->] (C2) edge[bend left=30] node[right,pos=0.05]{$\mathtt{B?^*TOc}$} (C4);
\node[below=7mm of C3] {\phantom{$4$}};
\end{tikzpicture}}
\end{minipage}
&
\begin{minipage}{0.35\textwidth}
\begin{tabular}{ll}
Transition labels:\\\quad Event `/' Action
\\
Event:\\
\quad`$\tau$' internal\\
\quad`$?$' external msg.\ arrival
\\
Action:\\
\quad`$!$' send $\;|\;$ `$\epsilon$' empty
\\
\\
$?$-only labels are short for $?$ / $\epsilon$
\\
\\
${}^*$ means switch to RHS of MC
\\
\end{tabular}
\end{minipage}
\\
\end{tabular}
\caption{Local \mMST{} protocols for each role of \CODE{Timeout} as EFSMs in
Erlang \CODE{gen\_statem}.}
\label{fig:efsms}
\end{figure}
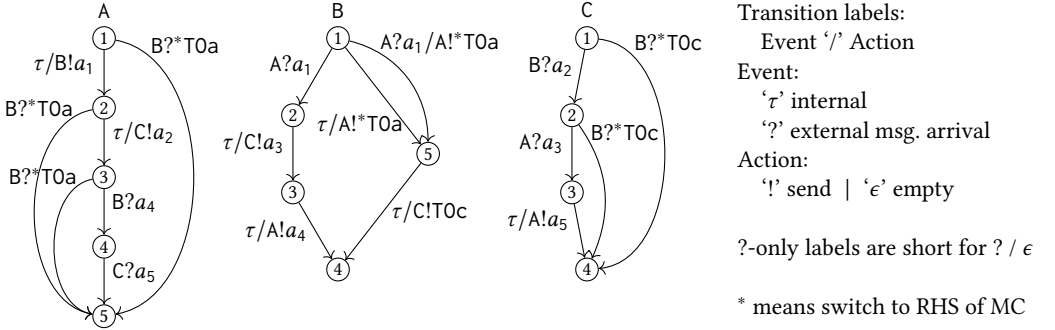

\Cref{fig:efsms} depicts the EFSM generated by our tool for each role
of \CODE{Timeout}.
The notation $e/\alpha$ on the transitions means event $e$ triggers the
transition and action $\alpha$ is performed; an $e$ on its own is shorthand
for $e/\epsilon$.
For example, the $\tau/\mathtt{B}!a_\mathtt{1}$ from state $1$ of $\mathtt{A}$
means that $\mathtt{A}$ can internally decide to send $a_\mathtt{1}$ to
$\mathtt{B}$ (and transition to $2$).
In $\mathtt{B}$, the two transitions from state $1$ with event
$\mathtt{A}?a_\mathtt{1}$ mean that if message $a_\mathtt{1}$ arrives as an
external event, $\mathtt{B}$ can either do $\mathtt{A}!^*\mathtt{TOa}$ and
transition to $4$, or transition to $2$ (with the empty action).
Note, the ${}^*$ annotation (as also occurs in the text below) indicates the transitions that switch from the LHS to the
RHS of the MC.

Classical MST permits only three kinds of local states: input-only (branch),
output-only (select), and terminal; e.g., state $4$ in $\mathtt{A}$ is a
branch, and $3$ in $\mathtt{C}$ is a select (again, both unary in this simple
example).
By contrast, the more expressive local behaviours enabled by our asynchronous
MC can also be seen in the above EFSMs.
As mentioned, observer \CODE{B} faces an \emph{internal MC} in state~$1$
indicated by the \emph{mix} of $\tau/!^*$ and $?/\epsilon$ transitions induced
by its RHS and LHS, respectively, yielding the timeout behaviour described at
the start of \Cref{sec:asymmMC}.
On the other hand, \CODE{A} faces an \emph{external MC} in state $1$ as
indicated by its mix of $?^*/\epsilon$ and $\tau/!$ transitions: \CODE{A} can
either (wait to) receive the \CODE{TOa} (and switch to the RHS), or internally
decide (by some internal event $\tau$) to send $a_\mathtt{1}$ and proceed to
$2$.
Our formal semantics allows both options, but an implementation may (e.g.) give
priority to the former case if $\mathtt{TOa}$ has already arrived.
Note how \CODE{A} is forced to eventually commit to either the LHS or RHS (in
accordance with \CODE{B}) by state $3$ at the latest by receiving
$a_\mathtt{4}$ or $\mathtt{TOa}$.

We give a couple of further comments on the richer behaviours expressed by our
MC.
First, note that external MCs may take various forms: e.g., both states $1$ in
\CODE{A} and $1$ in \CODE{C} are external MCs featuring $?^*$, but the former
is mixed with a $\tau$ whereas the latter is mixed with an $?$-event.
Second, note that the top-level roles of the MC are \CODE{A} and \CODE{B}, but
their localised (projected) behaviours in the MC are not direct duals; e.g.,
state $3$ in \CODE{A} is an external MC between $a_4$ and $\mathtt{TOa}$ that
has no \emph{direct} counterpart in \CODE{B} (i.e., there is no internal MC
featuring $a_4$ in \CODE{B}).
Additionally, $?^*\mathtt{TOa}$ events are spread over states $1$, $2$ and $3$
in \CODE{A}; whereas in \CODE{B}, the $!^*\mathtt{TOa}$ occurs only in $1$,
although in two different transitions since \CODE{B} can choose to raise the
timeout either before ($\tau$) or after ($?$) the $a_\mathtt{1}$ has arrived.
These behaviours are considerably more exotic than those supported in classical
MST and speak to the challenge of reasoning about asynchronous MC theoretically
and practically.

\paragraph{Programming \mMST{}-based processes in Erlang.}

Our toolchain uses the EFSMs to generate two Erlang modules per role for the
programmer(s) to work with.
The \emph{role module} (RM) provides a customised \CODE{gen\_statem} behaviour
that is specialised to the source protocol and role.
The programmer can consider it as part of the runtime and does not need to use
it directly.
The other, called the \emph{callback module} (CM), is a template module of
callback functions that the RM delegates to for handling event
occurrences and performing state transition actions.
The CM is generated with minimal placeholder code that the
programmer can modify and extend with the required application-specific logic.
The RM and CM together form a protocol- and role-specific \mMST{} process.
A set of processes comprising one for each role forms a complete application.

\begin{figure}
\begin{tabular}{ll}
\begin{minipage}{0.5\textwidth}
{\begin{lstlisting}[numbers=left,language=Erlang, numbersep=3pt, basicstyle=\ttfamily\scriptsize]
% Extract from role A
s1(internal, a1, Data) ->
  gen_a:send_s1_a1(BPid, Data), {next_state, s2,
    Data, [{next_event, internal, a2}]};
s1(cast, {BPid, 'TOa'}, Data) -> {stop, normal, Data}.

s2(internal, a2, Data) ->
  gen_a:send_s2_a2(CPid, Data), {next_state, s3, Data};
s2(cast, {BPid, 'TOa'}, Data) -> {stop, normal, Data}.

s3(cast, {BPid, a4}, Data) -> {next_state, s4, Data};
s3(cast, {BPid, 'TOa'}, Data) -> {stop, normal, Data}.

s4(cast, {CPid, a5}, Data) -> {stop, normal, Data}.


% Extract from role B
s1(internal,'TOa', Data) ->
case make_choice_TOa(Data) of
  1 -> {keep_state, Data};
  2 -> gen_b:send_s1_TOa(APid, Data), {next_state, s3,
         Data, [{next_event, internal,'TOc'}]}
end;  % Continued on the right
\end{lstlisting}}
\end{minipage}
&
\begin{minipage}{0.5\textwidth}
{\begin{lstlisting}[firstnumber=24,numbers=left,language=Erlang, numbersep=3pt, basicstyle=\ttfamily\scriptsize]
% Continued from the left
s1(cast,{APid, a1},Data) ->
  case make_choice_a1(Data) of
    1 -> {next_state, s2, Data,
           [{next_event, internal, a3}]};
    2 -> gen_b:send_s1_TOa(APid, Data),
           {next_state, s5, Data,
             [{next_event, internal,'TOc'}]}
  end.

s5(internal, 'TOc', Data) ->
  gen_b:send_s5_TOc(CPid, Data),
    {stop, normal, Data}.

s2(internal, a3, Data) ->
  gen_b:send_s2_a3(CPid, Data), {next_state, s3,
    Data, [{next_event, internal, a4}]}.

s3(internal, a4, Data) ->
  gen_b:send_s3_a4(APid, Data),
    {stop, normal, Data}.
$~$
$~$
\end{lstlisting}}
\end{minipage}
\\
\end{tabular}
\caption{Extracts from the user-facing callback modules generated for \CODE{A} and \CODE{B}
in the \CODE{Timeout} protocol.}
\label{fig:exceptionB}
\end{figure}

\Cref{fig:exceptionB} shows code from the CMs as generated for roles \CODE{A} and \CODE{B}.
As per their respective EFSMs (\cref{fig:efsms}), they feature callback \emph{state functions} (e.g., \CODE{s1}, \CODE{s2}, etc.)
for handling state transitions according to event occurrences.
The callbacks are fired (from the RM) based on the current state and
by pattern matching against the event, which may come from an internal
(locally triggered transitions, such as timeout decisions) or external
(incoming messages) source.

For instance, in \CODE{A}, state function \CODE{s1} represents an external MC corresponding to state 1 (for \CODE{A} in \cref{fig:efsms}):
\CODE{A} may either send \CODE{a1} to \CODE{B} (the \CODE{internal} clause), or receive \CODE{`TOa'} from \CODE{B} (the \CODE{cast} clause), corresponding to the $\mathtt{\tau/B!a_1}$
and $\mathtt{B?^*TOa}$ transitions, respectively. 
In \CODE{B}, the function \CODE{s1} represents an internal MC.
In the first clause (\CODE{internal}), \CODE{B} handles the internal event from
the generated template function \CODE{make\_choice\_TOa} by which it decides
whether to remain in the current state (\CODE{keep\_state}) and wait to receive
\CODE{a1}, or transition to \CODE{s2} by sending \CODE{`TOa'} to \CODE{A} (cf.\
$\mathtt{\tau/A!^*To}$).
%
Generally, the programmer will modify/replace such template decision functions
according to the required application logic.
In the second clause (\CODE{cast}), \CODE{B} receives the message \CODE{a1}
from \CODE{A}, and based on template function \CODE{make\_choice\_a1(Data)}, either
transitions (cf.\ $\mathtt{A}?a_\mathtt{1}\mathtt{/A!^*TOa}$) to
\CODE{s3} and schedules an internal event \CODE{a2}, or sends \CODE{`TOa'} to
\CODE{A} and transitions (cf.\ $\mathtt{A}?a_\mathtt{1}$) to
\CODE{s2}.

\Cref{sec:implementation} demonstrates the internals of the RM that
fires the callbacks in the above CM.
In short, our tool generates the RM by instantiating the default
\CODE{gen\_statem} behaviour with the required EFSM structure (states, events,
actions, and transitions); this is correct-by-construction and the programmer
should not modify the RM.
Following our formal semantics, the RM is also generated to
encapsulate the (again, correct-by-construction) runtime mechanisms for
handling MC commitment, LHS-to-RHS switching, and stale message purging
specifically for the source protocol and role.
The programmer can assume these mechanisms are provided and correct when
working on the CM.

Recall the example executions in \Cref{fig:timeout} (right).
The grey annotations denote the current state and relevant event/action at each
role according to their EFSMs in \Cref{fig:efsms}; e.g., \GREY{$1,
!a_\mathtt{1}$} means in state $1$, send message $a_\mathtt{1}$.
%
%
Stale message purging, denoted by \GREY{$\mathsf{gc}$} in \Cref{fig:timeout},
is handled internally by the RM; we exclude purge actions from our
depictions of EFSMs as they are transparent to the programmer.
User processes do not consume stale messages, so they must be purged from the
input buffer to prevent interference with future receives (given the FIFO
nature of inputs).

\subsection{Properties of \mMST{}}

We end this overview by summarising the properties of our framework with
advance pointers (a roadmap) to the relevant parts of our formal theory.

\Cref{sec:globaltypes} formalises the syntax and metatheoretical LTS semantics
of our \emph{global types} with mixed choice and the notion of
\emph{committing} messages in MCs.
It defines the conditions checked by protocol validation on protocol structure
(\emph{well-formedness} of committing messages, \emph{balance} of roles across
choice branches) and on inter-role dependencies in MCs (\emph{awareness}).
\Cref{sec:globalprogress} proves that valid protocols enjoy a per-role
\emph{progress} property and that their roles are always consistent in their
commitments (\emph{coherence}).

\Cref{sec:localtypes} formalises the syntax and LTS semantics of distributed
\emph{local types} and asynchronous message queues, the \emph{projection} from
global to local types, and the runtime mechanism for \emph{stale message
purging}.
\Cref{sec:correspondence} proves an operational correspondence in both
directions between valid projectable global types and local systems that
\PURPLE{\emph{preserves projection}}.
The operational correspondence (i.e., preservation of projection) and progress
properties together safely entail that local roles never get stuck in a
deadlock nor (noting the FIFO nature of communications in the local type LTS)
due to receiving an unexpected message.
\Cref{sec:localproperties} further proves orphan message freedom.

As discussed, our practical toolchain is implemented to perform protocol
validation (conservatively, see \Cref{lab:toolchainimpl}) and projection
following our formal theory.
It is implemented to perform a correct-by-construction translation from local
projections to EFSMs and generation of \CODE{gen\_statem} modules to transfer
the above correctness properties to Erlang processes.
The usage contract is that the programmer should not modify the generated RM
and must use the CM according to the generated structures.
The RM is generated with some internal runtime checks against the
programmer supplying an invalid CM.

\section{Global Types}
\label{sec:globaltypes}

The syntax of global types $\GT$ is defined by the grammar below:
\[\begin{array}{lll}
  \GT  &::= & \GactShort{p}{q}{\St{}}
     \mid  \GmsgShort{p}{q}{k}{\St{}}
     \mid  \mu \rv . \GT
     \mid  \rv
    \mid \tend
    \mid \GtoDef{\GactShort{q}{p}{\St{1}}}{c}{\lset}{\rset}{\GactShort{p}{q}{\St{2}}}
     \mid  \GtoP{\GT_1}{c}{\lset}{\rset}{\GT_2}
    \\
    \St{} & ::= &  \{a_i.\GT_i\}_{i \in I}
\end{array}
\]
The first five terms are standard~\cite{DenielouY13}. We recall that $\GactShort{p}{q}{\St{}}$ is an interaction where $\role p$ is the sender and $\role q$ is the receiver. Term $\St{}$ specifies a set of choices: $\p$ can send $\q$ one of the labels $a_i$ (for $i\in I$) and the protocol continues as $\GT_i$.
The message in-transit type $\GmsgShort{p}{q}{k}{\St{}}$ stands for a state where $\role p$ has sent label $a_k$ but $\role q$ has not yet received it (asynchronous communication). Term $\mu \rv.\GT$ is a recursive definition\footnote{We adopt iso-recursive types for a lower level view of nested MC instantiation and straightforward implementation.} and $\rv$ a recursive variable.

The last two terms are new and model \emph{mixed choices} (MC).
The term $\GtoDef{\GactShort{q}{p}{\St{1}}}{c}{\lset}{\rset}{ \GactShort{p}{q}{\St{2}}}$ is an \emph{MC definition}: we call $\GactShort{q}{p}{\St{1}}$ the left-hand side block (LHS) of the MC and $\GactShort{p}{q}{\St{2}}$ the right-hand side block (RHS).
Role $\role p$ is set to receive a message from $\role q$ in the LHS, but may decide to execute the RHS by sending a message to $\role q$ instead. We say that $\role p$ is the \emph{observer} of the MC.
We annotate MC definitions with a unique name $c$, which we may omit when it is not important.

The term $\GtoP{\GT_1}{c}{\lset}{\rset}{\GT_2}$ is an \emph{active MC}. To ensure that roles eventually agree on which block (LHS or RHS) to execute, we use a notion of \emph{commitment}.
Some actions are designated as committing  (defined in \Cref{sec:committing}).
When a role executes a committing action within a block, it commits to that block, meaning that it can no longer perform any action (send or receive) in the other block. The semantics of global types use sets $\lset$ to keep track of the roles that are committed to the LHS, and $\rset$ to keep track those committed to the RHS.
For convenience we may use the notation $\GtoDef{ \GT_1}{c:\p}{\lset}{\rset}{ \GT_2}$
and
$\GtoP{ \GT_1}{c:\p}{\lset}{\rset}{ \GT_2}$
assuming that $\GT_1$ and $\GT_2$ are of the form given by the grammar and that $\role p$ is the observer. For readability, we omit annotations $c$,
$\lset$ or $\rset$ when not needed.
A global type is \emph{initial} if it has no messages in transit and no active MC.

The set of roles of $\GT$, denoted $\RolesG{\GT}$, is defined as usual except for the two new cases for MC:
\[
\RolesG{\GtoDef{  \GT_1}{}{\lset}{\rset}{  \GT_2}} 	=
\RolesG{  \GT_1}~ \cup ~\RolesG{  \GT_2}
\qquad
\RolesG{\GtoP{\GT_1}{}{\lset}{\rset}{\GT_2}
} 	=
\RolesG{  \GT_1}\setminus\rset~ \cup ~\RolesG{  \GT_2}\setminus\lset
\]
The roles of a MC definition are the roles in either of its sides. The case of active MC excludes the roles that have committed to the opposite side. This will be critical when defining progress, to characterise the roles that should continue on each side.
We define a context environment $\Ctname$:
\[
\Ctname :: =  \hole
 \mid  \GactShort{p}{q}{\St{}}\cup\{a.\Ctname\}
\mid   \GmsgShort{p}{q}{k}{\St{}}\cup\{a_k.\Ctname\}
 \mid  \Ctname \lmix \GT
 \mid  \GT \lmix \Ctname
 \mid  \GtoDef{\Ctname}{}{}{}{\GT}
 \mid  \GtoDef{\GT}{}{}{}{\Ctname }
 \mid  \mu \rv . \Ctname
\]
 We say that $\GT'$ is a \emph{subterm} of $\GT$ (or \emph{is in} $\GT$) if there exists $\Ctname$ such that $\GT = \Ctname[\GT']$. We say that two subterms of $\GT_1$ and $\GT_2$ of $\GT$ are \emph{distinguished} if $\GT = \Ctname_1[\GT_1] = \Ctname_2[\GT_2]$
implies $\Ctname_1\neq\Ctname_2$. We say that $\GT$ \emph{has an active MC}
$\GT_1\blacktriangleright\GT_2$
if $\GT_1\blacktriangleright\GT_2$ is a subterm of $\GT$.
Similarly for MC definitions.

\subsection{Committing Set and Well-Formedness}
\label{sec:committing}

In this section we formally introduce the notion of commitment, by defining the
\emph{committing set} of a MC named $c$, that is the set of actions by which
some role commits to either the LHS or to the RHS of $c$.
The committing set is defined on initial global types.
%
%
Intuitively, the committing actions of a MC are: (1) on the LHS, the first receive action by the observer, and all receive actions of a message from a committed sender, (2) on the RHS, all the first actions of a role on that side.
For simplicity, we identify committing actions (e.g., $\p\q?a$) using only
their labels (e.g., $a$). The committing set is therefore, a set of
communication labels.

In the following we denote with $\unfold{\GT}$ the standard \emph{unfold-all-once} operation that unfolds once all recursive types in $\GT$. We say that $\GT_1 \mixi{c} \GT_2$ is the \emph{outermost occurrence} of $c$ in $\unfold{G}$ if it is not a subterm of other MC definitions $c$ in $\unfold{G}$.
Given a global type $\GactShort{q}{p}{\St{}}$ with $\St{}= \{a_i.\GT_i\}_{i\in I}$, we use the notation $\labels{\St{}}=\{a_i\}_{i\in I}$ and $\types{\St{}}=\{\GT_i\}_{i\in I}$.

\begin{definition}[Committing set]\label{def:committ}
Let $\GT$ be an initial global type. Let $c$ be an MC definition in $\GT$, and ${\GactShort{q}{p}{\, \St{1}}\mixi{c}\,\GactShort{p}{q}{\, \St{2}}}$ be the outermost occurrence of $c$ in $\unfold{\GT}$. The committing set of $c$ in $\GT$, denoted $\comSetGc{\GT}{c}$, is defined as follows:
\[\begin{array}{l}\comSetGc{\GT}{c}~\coloneq~ \labG{\St{1} \,\cup\, \St{2}} ~ ~ \cup ~~\bigcup_{G\in \tyG{\St{1}}}\comSet{G}{ \{\role p\}}{c}
~ ~\cup ~~\bigcup_{G\in \tyG{\St{2}}}
 \comSet{G}{\{\role p, \role q\}}{c} \end{array}\]
where $\comSet{\GT}{C}{C}$ is an auxiliary function parameterized by the set $C$ of committed roles, used to track dependency with other committing actions.
In brief, the case for interactions is:

$\comSet{\GactShort{p}{q}{\St{}}}{C}{c} = \begin{cases}
\labG{\St{}} ~\cup  \bigcup_{\GT \in \tyG{\St{}}}  \comSet{\GT}{C\cup\{\role q\}}{c}
& (\p \in C   ~\land~\q\not \in C)
\\
\bigcup_{\GT\in \tyG{\St{}}}\comSet{\GT}{C}{c}
& \text{otherwise}
\end{cases}$

\noindent
The case $\comSet{\GtoDef{\GT_1}{c'}{\lset}{\rset}{\GT_2}}{C}{c}$ 
returns $\comSet{\GT_1}{C}{c}~ \cup  \comSet{\GT_2}{C}{c}$ if $c'\neq c$, and the empty set otherwise. Finally, $\comSet{\mu \mathtt t .\GT}{C}{c}\mathop{=}\comSet{\GT}{C}{c}$ and $\comSet{\mathtt t}{C}{c} \mathop{=} \comSet{\tend}{C}{c} \mathop{=} \emptyset$.
%
 %
     %
\end{definition}

Intuitively, $\comSetGc{\GT}{c}$ identifies the labels of $\St{1}\cup\St{2}$ as the set of initial committing labels in $c$: the labels in $\St{1}$ are committing for $\p$ and labels in $\St{2}$ are committing for both $\q$ and $\p$.
To these labels, we add  those that depend on actions by $\p$ on the RHS of $c$ and $\{\p,\q\}$ on the LHS, by using the auxiliary function.
The function $\comSet{\GT}{C}{c}$ traverses the syntactic structure of $\GT$ until it reaches $\mathtt t$, $\tend$, or a nested occurrence of the same $c$ (introduced by unfolding).
In case $\comSet{\GactShort{p}{q}{\St{}}}{C}{c}$ all labels that are sent from a committed role $\p\in C$ to a non-committed role $\q\not\in C$ are regarded as committing.
Unfolding $\unfold{\GT}$ is used to account for committing labels that are captured into a MC block only after recursive unfolding, as we illustrate in \Cref{ex:unf}.

\begin{example}[Label capture]\label{ex:unf}
Consider the recursive type $\GT_{c} = \mu \mathtt t.\, \GactShort{p}{{ r}}{{\color{magenta}a}. ( \GactShort{q}{p}{b. \, \mathtt t}  \mix \GT_2)}$ where label $a\mathop{\not\in} \GT_2$ and hence does not appear directly in the MC. If we apply the auxiliary function in \Cref{def:committ} to $\GT_c$ itself (rather than its unfolding $\unfold{\GT_c}$) then $a$ would not be included in the committing set of $\GT_c$. However, after some steps that involve recursive unfolding, shown in (\ref{unf}), an occurrence of $a$ is introduced on the LHS of the outer MC. This second occurrence of $a$ \emph{is} committing (it makes $\role r$ commit to the LHS of the outer MC). By using unfolding, \Cref{def:committ} correctly identifies $a$ as committing.
\begin{equation}\label{unf}
\GT_{c}~~~\myreaches{} ~~~
\GactShort{p}{{ r}}{{\color{magenta}a}. ( \GactShort{q}{p}{b. {\color{blue}(\mu \mathtt t.\, \GactShort{p}{{ r}}{{\color{magenta}a}. ( \GactShort{q}{p}{b. \, \mathtt t}  \mix\GT_2)})}} \lmix  \GT_2)}
\end{equation}
\end{example}

\paragraph{Well-formedness}
Global types where a label occurs multiple times, and where some occurrences are committing and some others are not, are problematic. A simple example is (\ref{unf1}), where $b$ appears twice, and is non-committing on the LHS of $c$ while it is committing on the RHS.
\begin{equation}\label{unf1}
   \GT_\textrm{\ding{55}} = \GactShort{q}{p}{a}.\,  \GactShort{q}{r}{{\color{magenta}b}.\, \GT_1}  \mixi{c}  \GactShort{p}{q}{c}. \, \GactShort{q}{r}{{\color{magenta}b}.\, \GT_2}
\end{equation}
It is critical that we avoid ambiguities such as those in (\ref{unf1}), as they may lead to incorrect semantics for mixed choices. To address this, we introduce a well-formedness condition on initial global types, requiring that all occurrences of a label are exclusively either committing or non-committing.

Well-formedness relies on the notion of committing set (\Cref{def:committ}) and on the dual notion  of \emph{non-committing set} (\Cref{def:ncommitt}  below). Well-formedness is formally defined in \Cref{well-formed} by requiring the committing and non-committing sets to be disjoint.

\begin{definition}[Non-committing set]\label{def:ncommitt}
Let $\GT$ be an initial global type, and let $c$ be a MC definition in $\GT$, and ${\GactShort{q}{p}{\, \St{1}}\mixi{c}\GactShort{p}{q}{\, \St{2}}}$ be the outermost occurrence of $c$ in $\unfold{\GT}$. The non-committing set of $c$ in $\GT$, denoted $\ncomSetG{\GT}$, is defined as follows, relying on auxiliary function $\newcomSet{\GT}{C}{C}$ :
\[ \begin{array}{l}\ncomSetG{\GT}~\coloneq ~\bigcup_{G\in \tyG{\St{1}}} \newcomSet{G}{ \{\role p\}}{c}
~ \cup ~\bigcup_{G\in \tyG{\St{2}}}
 \newcomSet{G}{\{\role p, \role q\}}{c} \end{array}\]
 In contrast to $\comSetG{\GT}$ in \Cref{def:committ}, $\ncomSetG{\GT}$ does not include $\labG{\St{1}}\cup\labG{\St{2}}$ in the non-committing set of $c$.
 
 Function $ \newcomSet{G}{C}{c}$ is defined like $ \comSet{G}{C}{c}$ from \Cref{def:committ} except the case for interaction types that is dual: specifically, $\newcomSet{\GactShort{p}{q}{\St{}}}{C}{c}$
returns
 $\newcomSet{\GT}{C\cup\{\role q\}}{c}$ if $(\p \in C   ~\land~\q\not \in C)$, and returns $\labG{\St{}} ~\cup ~  \bigcup_{\GT\in \tyG{\St{}}}\newcomSet{\GT}{C}{c}$ otherwise.
%
\end{definition}




\begin{definition}[Well-formedness]\label{well-formed}
Initial $\GT$ is \emph{well-formed} if for all $c$ in $\GT$, $\ncomSetG{\GT}{}{} \,\cap \comSetG{\GT}{}{} = \emptyset$.
\end{definition}

Since the syntactic structure of $\unfold{\GT}$ is finite, well-formedness yields a decidable algorithm based on the definitions of committing and non-committing sets. Hereafter we assume all initial global types to be well-formed.

\begin{example}[Well-formedness]\label{ex:unf1bad}
One can verify that the global type $\GT_\textrm{\ding{55}}$ in (\ref{unf1}) is not well-formed by observing that $b \in ~ \ncomSetG{\GT_\textrm{\ding{55}}}{}{}$ and  $b\in ~\comSetG{\GT_\textrm{\ding{55}}}{}{}$.
One can verify that $\GT_\textrm{\ding{51}}$ below is well-formed by observing that $\comSetG{\GT_\textrm{\ding{51}}}{}{}=\{a,d,e\}$, $\ncomSetG{\GT_\textrm{\ding{51}}}{}{}=\{b\}$, and $\{a,d,e\}\cap\{b\}=\emptyset$.
\begin{equation*}
\begin{array}{llll}
\GT_\textrm{\ding{51}} & = & \mu \mathtt t.\,
 ( \GactShort{q}{p}{{\color{black}a}. \,
\GactShort{q}{r}{{\color{black}b}}.
\, \mathtt t}  \mixi{c} \,\GactShort{p}{q}{{\color{black}d}}. \GactShort{p}{r}{{\color{black}e}}.\, \mathtt t )\\
\unfold{\GT_\textrm{\ding{51}}} & = &
\GactShort{q}{p}{{\color{black}a}}. \,
\,
\GactShort{q}{r}{{\color{red}b}}.\, {\GT_\textrm{\ding{51}}}
\mixi{c}  \GactShort{p}{q}{{\color{black}d}}.\, \GactShort{p}{r}{{\color{black}e}}.\,
{\color{blue}\mu \mathtt t.\,
 ( \GactShort{q}{p}{a}. \,
\GactShort{q}{r}{{\color{red}b}}.
\, \mathtt t }\mixi{c}
\color{blue}\ldots
)
\end{array}
\end{equation*}
This example is noteworthy: 
label $b$ occurs multiple times in $\unfold{\GT_\textrm{\ding{51}}}$ but all occurrences are consistent as they are all non-committing.
The occurrence of $b$ on the RHS of the outer $c$ is non-committing because the receiver $\role r$ is already committed to that block, having previously received committing label $e$.
\end{example}

\subsection{Semantics of Global Types}
\label{sec:global_semantics}

The semantics of global types is defined as a Labelled Transition System over terms $\GT$ with labels
\[ \ell :=  \lsnd{p}{q}{a} \mid \lrcv{p}{q}{a} \mid \lnewlab\]
 $\lsnd{p}{q}{a}$ (resp. $\lrcv{p}{q}{a}$) denotes the sending (resp. receiving) action of  message $a$ from $\role p$ to $\role q$. Label $\lnewlab$ is for MC instantiation.
We define the subject of a label as the singleton set containing the role performing the action described by that label:
\(\sbj{\lsnd{p}{q}{a}} = \sbj{\lrcv{q}{p}{a}} = \{ \p \}\). In what follows, we may write $\p$ as a shorthand for the singleton $\{\p\}$.
The subject $\nlab$ is defined to be the empty set.
In the LTS we assume knowledge of the original initial (or \emph{base}) global type from which a given state is reached. We denote it with $\underline{\GT}$. This is akin to assuming knowledge of the original static protocol specification of an ongoing session.

The rules for the LTS are given in \cref{fig:gsem}.
The first set of rules for interactions is given in \cref{fig:gsem} (top), and is a minor adaptation of the semantics in \cite{DenielouY13}.

The remaining rules are new for MC.
 \glab{Inst} instantiates a MC. \glab{Ctx1} and \glab{Ctx2} handle nested instantiations. The conditions on the roles ensure that once a side is resolved (i.e., no role can act in it anymore), it cannot perform degenerate $\nlab$ transitions that would artificially break our correspondence results (\Cref{sec:correspondence}).
 In \glab{Ctx1}, $\GT_l$ is resolved when all roles are committed to the right. In \glab{Ctx2}, $\GT_r$ is  resolved when some roles commit to the left. The asymmetry in the conditions of \glab{Ctx1} and \glab{Ctx2} reflects the fact that commitment to the RHS does not immediately preclude actions on the LHS, until all participants have committed. Maintaining the MC contexts, including the resolved sides in the global types (which will be incorporated into the projected local types), is important for correctly characterising stale messages in the projected systems, and thus for ensuring our correspondence results.
 \glab{LSnd} allows a send action by the LHS if the subject $\role p$ is not committed to the RHS.  \glab{RSnd} is symmetric, except $\role p$ is added to $\rset$ (any action on the RHS is committing).
Rules \glab{LRcv1} and \glab{LRcv2} are for committing and non-committing receive actions on the LHS, respectively. In both cases, the subject $\role q$ must not be committed on the RHS. \glab{RRcv} is similar for receive actions in the RHS.

\begin{figure}
\small
\[
\begin{array}{cc}
\begin{array}{cc}
	%
    %
	 \gspairf{\Gact} \trans{\lsnd{p}{q}{a_k}} \Gmsg~~~  (k\in I)~~
    \mathtt{[Snd]}\\[0.3cm]
	\gspairf   \Gmsg \trans{\lrcv{p}{q}{a_k}} \GT_k  ~~~\mathtt{[Rcv]}~~~ \qquad
	   \inferrule{\forall i \in I \qquad \gspairf\GT_i \trans{\tlab}\GT'_i \qquad \role p, \role q \not \in \sbj{\tlab} 
        }{\gspairf\Gact \trans{\tlab} \GactP{p}{q}{a_i}{\GT'_i}{i\in I}}
    ~~\mathtt{[Cont1]}\\[0.7cm]
     \inferrule{\gspairf{\GT[\mu \rv. \GT / \rv]}\trans{\tlab}{\GT'} }{\gspairf\mu \rv . \GT \trans{\tlab} {\GT' }}~~\mathtt{[Rec]}
     \qquad~~
    \inferrule{ \gspairf\GT_k \trans{\tlab}{\GT'_k} \qquad \role q \not \in \sbj{\tlab} \qquad \forall i\in I\setminus k.\, \GT_i=\GT_i'
    }{\gspairf\Gmsg \trans{\tlab} \GmsgP{p}{q}{k}{a_i}{\GT'_i}{i\in I}}
    ~~
      \mathtt{[Cont2]}\\[0.7cm]
    \hline \\
\end{array}\\
%
%
%
%
%
\begin{array}{rr}
	{ \gspairf{\GtoDef{  \GT_1}{c}{\lset}{\rset}{  \GT_2}}
\trans{\nlab } {\GtoP{  \GT_1}{c}{\emptyset}{\emptyset}{  \GT_2}}} ~~\mathtt{[Inst]}
 &
 \inferrule{\gspairf \GT_l \trans{\nlab } {\GT_l'} \quad \RolesG{\underline{\GT}} \neq \rset
 }
	{ \gspairf{\GT = \GtoP{\GT_l}{}{\lset}{\rset}{\GT_r} }  \trans{\nlab } {\GtoP{\GT'_l}{}{\lset}{\rset}{\GT_r}}}
 ~~ \mathtt{[Ctx1]}\\[0.7cm]
 \inferrule{\gspairf\GT_r  \trans{\nlab } {\GT_r'}\quad \lset = \emptyset
 }
	{ \gspairf{\GT = \GtoP{\GT_l}{~}{\lset}{\rset}{\GT_r} }  \trans{\nlab } {\GtoP{\GT_l}{~}{\lset}{\rset}{\GT_r'}}}
 ~~ \mathtt{[Ctx2]}
&
	\inferrule{\gspairf \GT_l \trans{\lsnd{p}{q}{a}} {\GT_l'}
 \quad
 \color{olive}\role p \not \in \rset }
	{ \gspairf{\GtoP{\GT_l}{}{\lset}{\rset}{\GT_r}}\trans{\lsnd{p}{q}{a}} {\GtoP{\GT'_l}{}{\lset}{\rset}{\GT_r}}
    }
    ~~ \mathtt{[LSnd]}\\[0.8cm]
 \inferrule{\gspairf\GT_l \trans{\lrcv{p}{q}{a}} {\GT_l'} \qquad
 \color{olive}\role q \not \in \rset \qquad a\in \comSetG{
 \underline{\GT}}
 }
	{ \gspairf{\GtoP{\GT_l}{c}{\lset}{\rset}{\GT_r}} \trans{\lrcv{p}{q}{a}} {\GtoP{\GT'_l}{c}{\lset\cup\{\role q\}}{\rset}{\GT_r}}}~~ \mathtt{[LRcv1]}
 &
	\inferrule{\gspairf\GT_l \trans{\lrcv{p}{q}{a}} {\GT_l'}\qquad
 \color{olive}\role q \not \in \rset \qquad  a\not \in \comSetG{
\underline{\GT}}}
	{ \gspairf{\GtoP{\GT_l}{c}{\lset}{\rset}{\GT_r}} \trans{\lrcv{p}{q}{a}} {\GtoP{\GT'_l}{c}{\lset}{\rset}{\GT_r}}}~~\mathtt{[LRcv2]}\\[0.9cm]
	\inferrule{\gspairf\GT_r \trans{\lsnd{p}{q}{a}} {\GT_r'} \qquad
 \color{olive}\role p \not \in \lset }
	{ \gspairf\GtoP{\GT_l}{}{\lset}{\rset}{\GT_r} \trans{\lsnd{p}{q}{a}} {\GtoP{\GT_l}{}{\lset}{\rset\cup\{\role p\}}{\GT'_r}}}~~ \mathtt{[RSnd]}
&
\inferrule{\gspairf\GT_r \trans{\lrcv{p}{q}{a}} {\GT_r'} \qquad \color{olive}\role q \not \in \lset  }
  { \gspairf\GtoP{\GT_l}{}{\lset}{\rset}{\GT_r} \trans{\lrcv{p}{q}{a}} {\GtoP{\GT_l}{}{\lset}{\rset\cup\{\role q\}}{\GT'_r}}}~~  \mathtt{[RRcv]}
\end{array}
\end{array}
\]
\caption{Global semantics: standard rules (top) and new rules for MC (bottom)}
\label{fig:gsem}
\end{figure}

Let $\vv \ell=\ell_0, \ldots, \ell_n$ be a (possibly empty) vector. We write $\GT_0\trans{\vv \ell} \GT_{n+1}$ if $\forall i \in {0, \ldots, n}.\, \gspairl{\GT_i}{\Theta_i} \trans{\ell_{i+1}} \gspairl{\GT_{i+1}}{\Theta_{i+1}}$, (or simply $\GT \myreaches{~}{} \GT'$ when labels are irrelevant).
We say $\GT'$ is reachable from $\GT$ if $\GT \myreaches{~}{} \GT'$. We just say $\GT'$ \emph{is reachable} if it is reachable from an initial global type. Labels or reached states may be omitted.

The LTS for global types has
a monotonicity property: the sets $\lset$ and $\rset$ of a MC are monotonically non-decreasing with respect to transition.

\begin{example}[Global MC steps]\label{ex:wigglymc}

One way a well-formed MC may proceed is shown below.
In the second line, assume $\GT_1 \myreaches{~}{} \GT_1'$ on the LHS where
observer $\p$ is not the subject in any of those steps.

\smallskip
\centerline{$
\begin{array}{lll}

\GtoDef{\q \mathbin{\rightarrowtriangle} \p : a_1 . \GT_1}{c}{\lset}{\rset}{\p \mathbin{\rightarrowtriangle} \q : a_2 . \GT_2}
& \trans{\nlab } &
\q \mathbin{\rightarrowtriangle} \p : a_1 . \GT_1 \blacktriangleright_{\emptyset, \emptyset}^c \p \mathbin{\rightarrowtriangle} \q : a_2 . \GT_2
\\
& ~\myreaches{~}{} &
\q \mathbin{\rightsquigarrowtriangle} \p : a_1 \{ a_1 . \GT_1'\} \blacktriangleright_{\emptyset, \emptyset}^c \p \mathbin{\rightarrowtriangle} \q : a_2 . \GT_2 \quad(*)
\\
\end{array}
$}
\smallskip

\noindent
The marked state may then proceed one of two ways.
In the upper of the paths below, observer $\p$ also commits to the LHS,
followed by $\q$ and other roles after some more steps.

\smallskip
\centerline{
\begin{tikzpicture}
\node (A) {$(*)$};
\node[right=10mm of A,yshift=2.8mm] (B) {
$\GT_1'
\blacktriangleright_{\{\p\}, \emptyset}^c
\GREY{\p \mathbin{\rightarrowtriangle} \q : a_2 . \GT_2}$};
\node[right=27mm of B] (D) {
$\myreaches{~}{}~ \GT_1''
\blacktriangleright_{\{\p, \q, ...\}, \emptyset}^c
\GREY{\p \mathbin{\rightarrowtriangle} \q : a_2 . \GT_2}$};
\node[right=10mm of A,yshift=-2.8mm] (C) {
$\q \mathbin{\rightsquigarrowtriangle} \p : a_1\{a_1 . \GT_1'\}
\blacktriangleright_{\emptyset, \{\p\}}^c
\p \mathbin{\rightsquigarrowtriangle} \q : a_2\{a_2 . \GT_2'\}
~~~\myreaches{~}{}~
\GREY{\q \mathbin{\rightsquigarrowtriangle} \p : a_1\{a_1 . \GT_1'''\}}
\blacktriangleright_{\emptyset, \{\p, \q, \ldots\}}^c
\GT_2''$
};
\path (A) -- node[sloped,yshift=2.5mm]{$\trans{\q\p?a_1}$} (B.west);
\path (A) -- node[sloped,yshift=1mm]{$\trans{\p\q!a_2}$} (C.west);
\end{tikzpicture}
}
\smallskip

\noindent
The lower path corresponds to the race condition situation where $\q$ instead
commits on the RHS, leading to a (transient) period with active interactions
(e.g., $\rightsquigarrowtriangle$) by roles on both sides of the MC.
However, assuming that all roles have appropriate inputs on the RHS by which
they can learn of $\q$'s decision (such as the $a_1$ input by $\p$), they will
eventually follow the observer into committing on the RHS and no further
actions will occur on the LHS.
The conditions for ensuring safe eventual commitment by all roles are
formalised in the next subsection.
\end{example}

\subsection{Progress of Global Types}
\label{sec:globalprogress}

Progress ensures that any role $\p\in \RolesG{\GT}$ either (1) can make a move (immediately or after actions by other roles) or (2) is in a final state. Intuitively, this means that no role is permanently stuck. The notion of \emph{final state} is captured syntactically by $\RolesG{\GT}$: if $\role{r} \in \RolesG{\GT}$ then $\role r$ is \emph{not} in a final state (i.e., it is still active in $\GT$).
Recall that $\RolesG{\GtoP{\GT_1}{}{\lset}{\rset}{\GT_2} } 	=
\RolesG{  \GT_1}\setminus\rset~ \cup ~\RolesG{  \GT_2}\setminus\lset$ and hence, once a role is committed to one block, it is not active in the other block. For example, both $\p$ and $\q$ are in a final state in $\GtoP{\GmsgShort{p}{q}{\{a.\tend\}}}{}{\emptyset}{\{\role p,\role q\}}{\tend}$.

\begin{definition}[Progress]\label{def:pro}
$\GT$ enjoys \emph{progress} if for all $\GT'$ reachable from $\GT$ the following holds:
 \[\role r\in \RolesG{\GT'} \quad \Rightarrow\quad \GT' \myreaches{}\hspace{-0.2cm}\trans{\ell} ~\text{ with }~\sbj{\ell}=\role r\]
\end{definition}

As standard in MST~\cite{HondaYC08,DBLP:journals/mscs/CoppoDYP16},
not all global types enjoy progress.  
%
We give two sufficient\footnote{Complete decidable conditions are, unfortunately, not possible
~\cite{DBLP:journals/iandc/GoudaMY84}.} conditions for progress: \emph{awareness} and \emph{balance}.


\emph{Awareness} guarantees that each role eventually commits when the RHS is taken, and that each role with a terminating execution commits when the LHS is taken. Awareness builds on two relations over roles (\Cref{def:dep}).

\begin{definition}[Role dependencies $\dep{}{}{\GT}$ and $\wdep{}{}{\GT}$]\label{def:dep}\label{edep}
Let $\role p,\role q\in\RolesG{\GT}$.
We define two kinds of dependencies between roles:
\emph{strict dependence}, written $\dep{\role p}{\role q}{\GT}$, and
\emph{eventual dependence}, written $\wdep{\role p}{\role q}{\GT}$.
\begin{itemize}
%
\vspace{-0.2cm}
    \item
    $\dep{\role p}{\role q}{\GT}$, if $\GT\trans{\vec{\ell}\, \ell_\role{q}}~\land ~\sbj{\ell_\role{q}} = \q$
     implies $\exists {\ell}\in \vec{\ell}.~\sbj{\ell} =\role{p}$
%
%
\item
$\wdep{\role p}{\role q}{\GT}$, if $\GT\trans{\vec{\ell}}\GT'$ implies $\exists a. ~\p\q!a \in \vec{\ell}~\lor ~ \GT' \myreaches{}\trans{\p\q!a}$
\end{itemize}
 \end{definition}
 Namely: $\dep{\role p}{\role q}{\GT}$ if $\role q$ will only take action  after $\role p$ does, and $\wdep{\role p}{\role q}{\GT}$ if it is always possible for $\p$ to send a message to $\q$.
We say that $\role p$ \emph{diverges} in $\GT$ if $\GT\myreaches{}\GT'$ implies $\GT'\myreaches{}\trans{\ell}$ with  $\sbj{\ell}=\p$.

\begin{definition}[Awareness]\label{def:aware}
An active MC $\GT = \GtoP{\GT_1}{\role p}{\lset}{\rset}{\GT_2}$ is \emph{aware} if for all $\role r\in \RolesG{\GT}\setminus\role p $:\\
(1) $~\rset = \emptyset
~ \Longrightarrow~\dep{\role p}{\role r}{\GT_2}$ (single-decision), \\
(2) $~ \lset = \emptyset ~\Longrightarrow~\wdep{\role p}{\role r}{\GT_1}$
or $\role r$ diverges in $\GT_1$
(clear-termination).

Similarly, an MC definition $\GT = \GtoDef{\GT_1}{\role p}{}{}{\GT_2}$ is aware if for all $\role r\in \RolesG{\GT}$, $\dep{\role p}{\role r}{\GT_2}$ and $\wdep{\role p}{\role r}{\GT_1}$. A global type $\GT$ is aware if all MC definitions and active MC in $\unfold{\GT}$ are aware.
\end{definition}
Single-decision requires that, on the RHS, all roles depend on the observer until committed. 
Clear termination allows the roles to start communicating on the LHS without waiting for a committing message. However, if their execution terminates, they must receive a committing message; this is key to progress, as illustrated below in Example~\ref{ex:cleartermination}.
The use of $\unfold{\GT}$ is necessary to deal with label capture (as shown in Example~\ref{ex:unf}).

\begin{example}[Clear termination]\label{ex:cleartermination}
In the type below (left): if observer $\p$ receives $a$ and terminates, then $\role q$ is unable to locally determine whether to terminate or continue waiting for a $b$ that will never arrive (i.e., termination is not \emph{clear}); one fix is to add $\GactShort{p}{q}{c}$ on the LHS. As a generalisation, however we can safely lift eventual dependency for infinite executions, as shown in a \emph{stream exception} pattern below (right) where $\q$ establishes a communication (label $c$) and immediately starts the stream, and potentially handle later a connection failure message ($b$ by $\p$):

\smallskip
\centerline{$
\GactShort{q}{p}{a.~\tend}\mix \GactShort{p}{q}{b.~\tend}~~\textcolor{red}{\textrm{\ding{55}}}
 \qquad
\GactShort{q}{p}{c.\, } \mu\rv. (\GactShort{q}{p}{a.~\rv})\mix \GactShort{p}{q}{b.~\tend}~~\textcolor{green}{\textrm{\ding{51}}}
$}
\smallskip

\noindent
Consider now a \emph{third-party exception} below, where a `third-party' observer $\role r$ may decide, after some interactions between $\p$ and $\q$, to either commit on the LHS or raise the RHS exception:

\smallskip
\centerline{$
\GactShort{q}{r}{c.\,} \mu\rv. (\GactShort{q}{p}{a.~\rv})\mix \GactShort{r}{q}{b.\,} \GactShort{r}{p}{b.~\tend}~~\textcolor{green}{\textrm{\ding{51}}}
$}
\smallskip

\noindent
Clear-termination holds because: (i) $\role r$ is able to commit to the LHS by receiving $c$, (ii) all other roles ($\role p$, $\role q$) do not need to commit because they diverge within the LHS.
\end{example}

\begin{example}[Interrupt pattern]\label{ex:interrupt}
Interrupts can be modelled with a minor extension to the theory (implemented in \Cref{sec:evaluation}), if we allow the observer to interact on the LHS before committing on that side.
To this aim, allocate a set $\GC$ of choice labels used by observers to \emph{commit on the LHS}.
The \emph{interrupt pattern} can then be expressed as the global type below, letting ${\color{red}d}\in \GC$ and $a,b,c\not\in \GC$:

\smallskip
\centerline{$
\GactShort{q}{p}{c.\, }\mu\rv. (\GactShort{q}{p}{\{a.~\rv,\, {\color{red}d}.\, \GactShort{p}{q}{e.\, } \tend\}})\mix \GactShort{p}{q}{b.~\tend}
$}
\smallskip

\noindent
Unlike in the stream exception (Example~\ref{ex:cleartermination}), now $\p$ can repeatedly \emph{receive} messages $a$ from $\q$ before, possibly, throwing the interrupt $b$ (or committing/terminating on the LHS).
Our theory can be easily adapted to support interrupt patterns using $\GC$ by: (1) updating the definition of committing set so that the committing chains on the LHS start with actions in $\GC$ (see \Cref{asec:committing2}), (2) adjusting the existential quantifier in \Cref{edep} (eventual dependence $\wdep{}{}{\GT}$) by requiring label $a$ to be in $\GC$ (i.e., $\exists {a}\in\GC$). 
\end{example}

Awareness ensures convergence of the roles in a block by requiring dependencies with the observer. However, if a role does not appear in \emph{some} execution branches, it may not be able to converge with the  observer's decision. This is a known problem also with branching choices:
\begin{equation}\label{eq:particip1}
\GactShort{p}{q}{\{a.\, \GactShort{q}{r}{\{c.\tend\}}, ~
b .\,  \GactShort{q}{p}{\{c.\tend\}}}\}
\end{equation}

In (\ref{eq:particip1}), role $\role r$ does not know whether to terminate or wait for a message, and thus does not terminate. \emph{Balance} (\Cref{wellset}) generalizes this idea to MC, operating on the unfolding $\unfold{\GT}$ to account for label capture. 
\Cref{wellset} also uses a truncation operator $\trunc{\GT}$ that replaces each recursive subterm in $\GT$ with $\tend$. Assuming $\GT$ has no free recursion variables: $\trunc{\mu \rv . \GT'} = \trunc{\tend} = \tend$. All other cases are defined inductively.
Truncation prevents expressiveness loss:
\Cref{wellset} universally quantifies over subterms of $\unfold{\GT}$ which would consider recursive subterms out of their intended context. For example, without truncation any instantiation of example (\ref{unf}) in \Cref{sec:committing} would (unnecessarily) be excluded as unbalanced.

\begin{definition}[\Bname]\label{wellset}
A global type $\GT$ is \emph{\Bpp} if for all subterms $\GT'$ of $\trunc{\unfold{\GT}}$:
\begin{enumerate}
    \item $\GT'=\GactShort{p}{q}{\St{}}~~\Rightarrow~~
    \forall \GT_1,\GT_2\in \tyG{\St{}}, ~~\RolesG{\GT_1}\setminus\{\p,\q\}=\RolesG{\GT_2}\setminus\{\p,\q\}$
    \item
    $\GT'=\GtoDef{  \GT_1}{~}{\lset}{\rset}{  \GT_2} ~~\Rightarrow~~
    \RolesG{\GT_1}=\RolesG{\GT_2}$
   \qquad (3)~
    $\GT'=\GtoP{\GT_1}{~}{\lset}{\rset}{\GT_2}
    ~~\Rightarrow~~
    \RolesG{\GT_1}\cup \lset=\RolesG{\GT_2}\cup\rset $
    \end{enumerate}
\end{definition}
Case (3) allows committed roles to `disappear' from a block. Case (1) is normally entailed by
projection. We give it here for a simpler presentation and separation of concerns.

\begin{restatable}
[Progress]{theorem}{cor:gprog}\label{cor:gprog}
If $\GT$ is initial, aware, and balanced then it enjoys \textbf{progress}.
\end{restatable}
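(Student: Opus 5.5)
The plan is an invariant-based argument. We identify a predicate on reachable global types that holds of initial aware balanced types, is preserved by every transition of \cref{fig:gsem}, and on its own gives the progress condition of \Cref{def:pro} at every state satisfying it. Take the invariant $\mathcal{I}(\GT')$ to be: $\GT'$ is aware in the sense of \Cref{def:aware}, applied to every active MC and MC definition in $\unfold{\GT'}$, and $\GT'$ is balanced in the sense of \Cref{wellset}. We will also use that well-formedness (\Cref{well-formed}) of the base type $\underline{\GT}$ is fixed along the run.

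The first step is preservation, checked rule by rule. For \glab{Snd}, \glab{Rcv}, \glab{Cont1}, \glab{Cont2} and \glab{Rec}, balance follows from clause (1) of \Cref{wellset}: roles other than $\p,\q$ agree across branches, so selecting a branch keeps the roles of sibling subterms equal. Awareness is preserved because $\dep{}{}{}$ and $\wdep{}{}{}$ are defined by quantifying over future traces, and the traces of a successor are suffixes of the traces of its predecessor. For \glab{Inst}, the definition-level awareness, which requires $\dep{\p}{\role r}{\GT_2}$ and $\wdep{\p}{\role r}{\GT_1}$ for all $\role r$, directly yields awareness of the active MC with $\lset=\rset=\emptyset$. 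Clause (2) of \Cref{wellset} yields clause (3) with empty sets. For the commitment rules \glab{LSnd}, \glab{RSnd}, \glab{LRcv1}, \glab{LRcv2} and \glab{RRcv}, the sets only grow (monotonicity). Each premise of awareness is an implication guarded by $\rset=\emptyset$ or $\lset=\emptyset$, so once a guard fails the obligation vanishes, and otherwise the obligation is inherited by suffix-closure. Clause (3) is preserved because a role that leaves $\RolesG{\GT_i}$ by committing to side $i$ is added to the corresponding set.

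The second step shows that $\mathcal{I}(\GT')$ and $\role r\in\RolesG{\GT'}$ together imply that $\role r$ can eventually act. We proceed by induction on the structure of $\GT'$, unfolding recursion once; guardedness is implicit. For a plain interaction, $\role r$ is either the sender, who can send immediately, or the receiver, who receives after the send. Otherwise $\role r$ lies in every continuation by balance clause (1), so we fire any branch (or the pending delivery) and apply the induction hypothesis. For an MC definition we fire \glab{Inst} and reduce to the active case. For an active MC $\GT_1\blacktriangleright_{\lset,\rset}\GT_2$ with observer $\p$ there are three cases.
\begin{itemize}
\item If $\role r\in\RolesG{\GT_2}\setminus\lset$ and $\rset\neq\emptyset$, the observer has committed right, so we drive $\GT_2$ by induction (rules \glab{RSnd}, \glab{RRcv}, \glab{Ctx2}).
\item If $\rset=\emptyset$ and $\role r$ is waiting, we let $\p$ either receive on the left or send on the right. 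In the right branch, single-decision ($\dep{\p}{\role r}{\GT_2}$) together with balance ensures $\role r$ appears in $\GT_2$ after $\p$'s action and is enabled there.
\item If $\role r\in\RolesG{\GT_1}\setminus\rset$, we drive $\GT_1$. If $\GT_1$ would stall for $\role r$, clear-termination applies: either $\role r$ diverges in $\GT_1$, which gives progress directly, or $\p$ eventually sends to $\role r$ a committing message. That message either enables $\role r$ or commits it, removing it from the roles.
\end{itemize}
Well-formedness guarantees that the choice between \glab{LRcv1} and \glab{LRcv2} is determined by labels, so the commitment sets computed by the semantics agree with the syntactic dependency analysis of \Cref{def:committ}.

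The main obstacle is the race case $\lset\neq\emptyset$ and $\rset\neq\emptyset$. Here roles are active on both sides and the guards of the awareness clauses have both fired. We would first argue that, because the observer commits first on either side, only one of $\p\in\lset$ and $\p\in\rset$ holds. Suppose $\p\in\rset$. Every $\role r$ with $\role r\notin\rset$ still has a pending obligation from the single-decision property recorded at the moment $\rset$ became nonempty, namely $\dep{\p}{\role r}{\GT_2}$ as evaluated then. Combined with balance clause (3), this forces $\role r$ to eventually receive on the RHS and commit. To make this usable, the invariant should be strengthened to carry these historical obligations, for example by recording awareness of $\GT_2$ relative to $\rset$. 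Once strengthened, the rule \glab{Ctx1} side condition $\RolesG{\underline{\GT}}\neq\rset$ can be discharged accordingly.
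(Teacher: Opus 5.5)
Your proof has a genuine gap: the invariant of awareness plus balance is missing \emph{coherence}, the fact that every active MC satisfies $\lset=\emptyset\lor\rset=\emptyset$ (\Cref{def:coherence}, preserved by \Cref{lem:cohpre}). Your overall plan matches the paper's: an invariant preserved by the rules of \cref{fig:gsem}, then a structural argument that the invariant yields progress, splitting on the commitment sets of the active MC. Coherence is what the paper's argument hinges on.

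Coherence is preserved because of LR-initiation (\Cref{auxx}). In an aware active MC with observer $\p$, $\rset\neq\emptyset$ implies $\p\in\rset$, because single-decision makes $\p$ the first to act on the RHS. Likewise $\lset\neq\emptyset$ implies $\p\in\lset$: by \Cref{pro:committing} the first committing LHS receive is $\p$'s, and later committing labels are sent only by roles already committed left. The premises of \glab{LRcv1}, \glab{RSnd} and \glab{RRcv} stop $\p$ from joining both sets, so with \Cref{lem:monoton} one set stays empty forever. The fact you cite, that the observer commits first on either side, already gives this. You drew only the weaker conclusion that $\p$ is not in both sets.

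As a result your proof is not closed. The case $\lset\neq\emptyset\neq\rset$ is actually unreachable, yet you leave it to an unspecified strengthening, and the strengthening you suggest cannot work. Roles in $\lset$ are barred from the RHS by the premises of \glab{RSnd} and \glab{RRcv}, and \glab{Ctx2} requires $\lset=\emptyset$ outright. So an uncommitted role whose RHS continuation waits on such a role is stuck there, whatever single-decision obligation toward $\p$ you record. The same dependency is hidden in your first bullet. Lifting progress of $\GT_2$ through \glab{RSnd}, \glab{RRcv} and \glab{Ctx2} requires every subject to lie outside $\lset$, and that is exactly what coherence guarantees once $\rset\neq\emptyset$.

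A secondary error is in your balance-preservation step. Clause (3) of \Cref{wellset} is not disturbed only by committing steps. A role can leave $\RolesG{\GT_1}$ through a non-committing LHS step, such as its last send, without joining $\lset$. Example (\ref{eq:balance}) shows balance then fails, so it is not preserved in general (\Cref{Finvariant}). Preservation needs awareness, specifically clear-termination. Clear-termination guarantees that an uncommitted role either still has a pending receive from $\p$ on the LHS or diverges there, so it stays in $\RolesG{\GT_1}$ until it commits. This is the key point of the paper's preservation argument.
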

The proof relies on a \emph{coherence} property:
all the committed roles are committed to the same side. Formally, if
$\GT = \Ct{C}{
\GtoP{\GT_1}{}{\lset}{\rset}{\GT_2}}$ then $\lset=\emptyset\, \lor\, \rset=\emptyset$.
Essentially, we prove that (1) awareness, balance, and coherence are preserved by transition, and (2) balanced, aware, and coherent global types enjoy progress. \Cref{cor:gprog} follows since initial global types are always coherent. 
See \Cref{asec:progress} for details of the proofs.

\section{Local Types}
\label{sec:localtypes}

The syntax of \emph{local types} is given by the following grammar:
\[ \ST  ::=   \tbrai{\p}{a}\ST_i
  ~ \mid ~ \tseli{\p}{a}\ST_i
  ~   \mid ~ \mu \mathtt{t} . \ST
  ~ \mid ~ \mathtt{t}
  ~\mid ~ \tend
  ~ \mid ~ \ST_1 \mixi{c} \ST_2                       %
  ~ \mid ~ \ST_1 \lmixi{c} \ST_2
    ~\mid ~ \ST \lmixi{c}\bullet
     ~\mid~  \bullet \lmixi{c}  \ST\]
The first five terms are standard; the notation for roles and labels are as in global types. We recall $\tbrai{\p}{a}\ST_i$ is a branching type, waiting for one of the labels $\{a_i\}_{i\in I}$ and continuing as the corresponding $\ST_i$. Term $\tseli{\p}{a}\ST_i $ is the corresponding send/selection type.  We often omit $\tend$.

Terms $\ST_1 \mixi{c} \ST_2$ and $\ST_1 \lmixi{c} \ST_2 $
are for \emph{MC definition} and \emph{active MC}, respectively.
\PURPLE{The terms $\ST \lmixi{c} \bullet$} and $\bullet \lmixi{c} \ST$ model run-time states in which the role has \emph{committed} on the LHS or the RHS, respectively.
We omit the annotation
$c$ when it is clear from context or not relevant.
In-transit/buffered (i.e., yet to be consumed) messages are modelled using FIFO queues. The localised view of a global type for a given role is thus a \emph{configuration} $(\p, \ST, \sigma)$
where $\role p$ is the role, $\ST$ is the local type specifying the behaviour of $\p$, and $ \sigma$ is a local FIFO queue of messages that have been sent to $\p$ (and possibly arrived) but not consumed.
A global type therefore corresponds overall to a collection of local configurations. We call this collection a \emph{system}, ranged over by $Y,Y'$.
\[\begin{array}{llll}
Y ::= (\p_i, \ST_i, \sigma_i)_{i \in I}~~
\qquad
& \sigma : \q \mapsto \vv{m}~~
\qquad
& m::=\msg{p}{a}{\pth}
\qquad
&
\pth ::= \epath \; | \; \pleft. \pth \; | \; \pright. \pth
\end{array}
\]
We assume the configurations of a system have pairwise-distinct roles.

Messages carry a \emph{path} $\pth$, which fully qualifies the active MC context under which the message was sent.  A path specifies left-to-right the top-most MC to the inner MC of the sending local type. We often omit the trailing $\epath$. 
As an example,
$(\ST_1 \blacktriangleright {\color{magenta} \p \oplus a}) \blacktriangleright{\color{magenta} \p \oplus b}$ may send two messages: $(a,\, \pleft.\pright)$ and $(b, \pright)$.
In global types, the equivalent path information is implicit in the global type context (nested active MCs) in which the corresponding in-transit message term ($\rightsquigarrowtriangle$) occurs.

\subsection{Operational Semantics}
\label{sec:localsemantics}

The local operational semantics uses the labels below and two main judgments:
\newcommand{\llab}{a}
\[
\begin{array}{c}
\ell ::= \p\q!\llab \;|\; \p\q?\llab \;|\; \nlab   \;|\; \gcl\qquad
Y \trans{\ell} Y'
\qquad
\tstate{\ltheta}{\lenv}{ Y }\trans{\ell} Y'
\end{array}
\]
\noindent
Labels $\ell$ now include $\gcl$ for stale message purging from local queues.
The first judgment $Y \trans{\ell} Y'$ is for top-level concurrent execution of systems.
It has just three rules.
{\small
\[
\begin{array}{c}
\inferrule*[Right={[Par]}]{
{ Y }\trans{\ell} Y''
}{
Y, Y' \trans{\ell} Y'', Y'
}
\qquad
\qquad
\quad
\inferrule*[Right={[Discard]}]{
\mathsf{gc}(\ST,\sigma) = \sigma'
}{
(\p, \ST, \sigma)
\trans{\gcl}
(\p, \ST, \sigma')
}
\qquad
\qquad\qquad
\inferrule*[Right={[Low]}]{
\tstate{\ltheta}{\lenvp{0}{0}{\epsilon}}{ Y }\trans{\ell} Y'
}{
Y \trans{\ell} Y'
}
\end{array}
\]
}

\noindent
Rule~\textsc{[Par]} is a structural rule that models concurrent execution.
Rule~\textsc{[Low]} uses lower-level judgments of the form $\tstate{\ltheta}{\lenv}{ Y }\trans{\ell} Y'$ (see below).
Rule~\textsc{[Discard]} performs stale message purging on the local queue
$\sigma$ of a configuration. This is necessary to maintain correspondence
between the executions of global and local types. Consider the example in (\ref{gc}) where the right-hand side term is reached after the following sequence of actions: (1) $\q$ sends $c$ to $\p$ and commits to the RHS, (2) $\p$ sends $a$ to $\q$ (on the LHS), (3) $\p$ receives $c$ and commits to the RHS, (3) $\p$ sends $d$ to $\q$ on the RHS.
\begin{equation}\label{gc}
Y, (\q,\, \p \& a . \p \oplus b.\tend \lmix
\p \oplus c . \p \& d.\tend , \sigma)
\myreaches{}
Y', (\q,\, \bullet \lmix
\p \& d.\tend , \sigma: \p \mapsto (a,\pleft),(d,\pright))
\end{equation}
In the reached term -- on the right-hand side of (\ref{gc}) -- the enqueued message $(a,\pleft)$ pertains to the LHS of the MC and is essentially garbage. The presence of $(a,\pleft)$ in $\q$'s queue would naively break correspondence (\Cref{sec:correspondence}) with the projection of the corresponding metatheoretical global type
$\p\rightsquigarrowtriangle \q:a. \GactShort{q}{p}{b.\tend} \blacktriangleright_{\emptyset,\{\p,\q\} }
\GmsgShort{p}{q}{}{d.\tend}$.

Fortunately, local types allow us to define a \emph{purge} function $\gc$ that operates solely on information that is \emph{local} to the configuration. 
In the general case of nested MC, we identify stale messages by traversing the MC structures of the receiver's local type $\ST$ using the sender's path $\pth$ included in the message.
This check is defined inductively as a predicate $\mathtt{stale}(\pth, \ST)$, which returns true if following $\pth$ in $\ST$ hits a stale ($\bullet$) side. Below, the otherwise cases may include non-initialized MC which are never stale as still uncommitted, and  $\mathtt{stale}(\epsilon, \ST)= \mathtt{false}$ for all $\ST$.
{\small
\begin{equation*}
\label{eq:gc}
\begin{array}{llcll}
\begin{array}{llcll}
\mathtt{stale}(\pleft.\pi, \ST) =
\begin{cases}
 \mathtt{true}
        & \ST = \bullet  \blacktriangleright  \ST_2\\
 \mathtt{stale}(\pi, \ST_1)
        & \ST = \ST_1 \blacktriangleright \ST_2\textit{ or }\\
        &~\ST = \ST_1 \blacktriangleright \bullet  \\
 \mathtt{false}
        & \textit{otherwise}
\end{cases}
&  &
\mathtt{stale}(\pright.\pi, \ST) =
\begin{cases}
 \mathtt{true}
    & \ST = \ST_1 \blacktriangleright \bullet \\
 \mathtt{stale}(\pi, \ST_2)
    & \ST = \ST_1 \blacktriangleright \ST_2\textit{ or } \\
    & ~\ST = \bullet \blacktriangleright \ST_2 \\
 \mathtt{false}
    & \textit{otherwise}
\end{cases}
\end{array}
\end{array}
\end{equation*}
}
Purge (a.k.a.\ garbage collection) $\gc(\ST, \sigma)$ looks into $\sigma$ (for all $\p$ in its domain) and removes all messages $(a, \pi)$ for which $\mathtt{stale}(\pi, \ST)$ is true.

Lower-level judgements $\lenv \vdash Y \trans{\ell} Y'$ model the main behaviour of configurations.
The environment $\pth$ records the MC context for messages being sent.
The first three rules in Figure~\ref{fig:lsem2} are standard for communications and recursion. Rule~$\lrule{Snd}$ appends an $a_k$ message to the receiver's queue and continues as $\ST_k$. The message is annotated with $\pth$ from the context.
Dually, rule~$\lrule{Rcv}$ consumes the first such
message $a_k$ annotated by $\pth$ and continues as $\ST_k$.
$\lrule{Rec}$ is standard. 

The remaining rules are new and define the semantics for MC. \textsc{[New]} instantiates a MC definition.
\textsc{[LSnd]} executes an output action in the LHS of an active but not committed MC. The
context $\pth$ is updated to $\pth . \pleft$ to indicate the action is happening on the LHS of the current MC.
Analogously to global types, sending on the LHS is never committing. For LHS receive actions we have two rules: $\lrule{LRcv1}$ (committing) and~$\lrule{LRcv2}$ (non-committing).
As in the global semantics, we infer committing labels from the base $\underline{\GT}$.\footnote{Observe that this is purely static information and can be derived (pre-processed) from the source global type.  Equivalently, we could embed this information into local types by adding annnotations.
}
$\lrule{RSnd}$ is for sending on the RHS which is always committing;
likewise $\lrule{RRcv}$ for receiving on the RHS (symmetric to $\lrule{LRcv1}$).
%
See \Cref{app:locallts} for the full rules, including a  
structural rule  $\lrule{RCtxt}$ (which is symmetric to $\lrule{LCtxt}$), and
structural rules $\lrule{NLCtxt}$ and $\lrule{NRCtxt}$ for $\lnewlab$ actions on the LHS and RHS of $\ST_1\blacktriangleright \ST_2$, respectively.

\begin{figure}[t]
\small
$ \arraycolsep=1pt
\begin{array}{cr}
\inferrule{%
k \in I
\qquad
m = \msg{p}{a_k}{\pth}
}{%
\tstate{\ltheta}{\lenv}{
(\p, \tseli{\q}{a} \ST_i, \sigma), (\q, \ST, \sigma'[\p \mapsto \vv{m}])}
\trans{\p\q!\llab_k}
(\p, \ST_k, \sigma), (\q, \ST,\sigma'[\p \mapsto \vv{m}\cdot m])
}
&
{\scriptstyle \lrule{Snd}}
\\[0.8cm]
\inferrule{%
k \in I
\qquad
\vv{m} = \vv{m}_1 \cdot \msg{p}{a_k}{\pth} \cdot {\vv{m}}_2
\qquad
(a, \pth) \not\in \vv{m}_1
}{%
\tstate{\ltheta}{\lenv}{
(\q, \tbrai{\p}{a}\ST_i, \sigma[\p \mapsto \vv{m}])}
\trans{\p\q?\llab_k}
(\q, \ST_k, \sigma[\p \mapsto \vv{m}_1 \cdot {\vv{m}}_2])
}
&
{\scriptstyle \lrule{Rcv}}
\\[0.6cm]
\inferrule{
\tstate{\ltheta}{\lenv}{
(\p, \ST[\mu \mathtt t . \ST / \mathtt t], \sigma), Y}
\trans{\ell}
Y'
}{
\tstate{\ltheta}{\lenv}{
(\p, \mu \mathtt t . \ST, \sigma), Y}
\trans{\ell}
Y'
}
\qquad
\tstate{\ltheta}{\lenv}{
(\p, \ST \mixi{c} \ST', \sigma)}
\trans{\lnewlab}
(\p, \ST\lmixi{c} \ST', \sigma)
&
{\scriptstyle \lrule{Rec/New}}
\\[0.6cm]
\inferrule{
{\begin{array}{l}
\tstate{\ltheta}{\lenv . \pleft }{
(\p, \ST_1 , \sigma), Y}
\trans{\p\q!\llab}
(\p, \ST'_1, \sigma), Y'
\end{array}}
}{
\tstate{\ltheta}{\lenv }{
(\p, \ST_1 \lmix \ST_2, \sigma), Y}
\trans{\p\q!\llab}
(\p, \ST'_1 \lmix \ST_2, \sigma), Y'
}
~
~
~
\inferrule{
{\begin{array}{l}
\tstate{\ltheta}{\lenv . \pright}{
(\p, \ST_2 , \sigma), Y}
\trans{\p\q!\llab}
(\p, \ST'_2, \sigma), Y'
\end{array}}
}{
\tstate{\ltheta}{\lenv }{
(\p, \ST_1 \lmix \ST_2, \sigma), Y}
\trans{\p\q!\llab}
(\p, \bullet \lmix \ST'_2, \sigma),Y'
}
&
{\scriptstyle \lrule{LSnd/RSnd}}
\\\\
\inferrule{
{\begin{array}{l}
\tstate{\ltheta}{\lenv . \pleft }{
(\p, \ST_1 , \sigma)}
\trans{\p\q?\llab}
(\p, \ST'_1, \sigma')
\end{array}}
\quad
{\color{olive} a\in \comSetG{
 \underline{\GT}}}
}{
\tstate{\ltheta}{\lenv}{
(\p, \ST_1 \lmix \ST_2, \sigma)}
\trans{\p\q?\llab}
(\p, \ST'_1 \lmix \bullet, \sigma')
}
\quad
\inferrule{
{\begin{array}{l}
\tstate{\ltheta}{\lenv . \pleft }{
(\p, \ST_1 , \sigma)}
\trans{ \p\q?\llab}
(\p, \ST'_1, \sigma')
\end{array}}
\quad
{\color{olive} a\not \in \comSetG{
 \underline{\GT}}
}}{
\tstate{\ltheta}{\lenv }{
(\p, \ST_1 \lmix \ST_2, \sigma)}
\trans{ \p\q?\llab}
(\p, \ST'_1 \lmix\ST_2, \sigma')
}
&
{\scriptstyle \lrule{LRcv1/2}}
\\
\\
\inferrule{
{\begin{array}{l}
\tstate{\ltheta}{\lenv . \pright }{
(\q, \ST_2 , \sigma)}
\trans{\p\q?\llab}
(\q, \ST'_2, \sigma')
\end{array}}
}{
\tstate{\ltheta}{\lenv }{
(\q, \ST_1 \lmix \ST_2, \sigma)}
\trans{\p\q?\llab}
(\q, \bullet \lmix \ST'_2, \sigma')
}
\quad
\inferrule{
\tstate{\ltheta}{ \pth . \pleft  }{(\p, \ST, \sigma), Y}
\trans{\ell}
(\p, \ST', \sigma'), Y'
}{
\tstate{\ltheta}{\lenv }{
(\p, \ST \lmix \bullet, \sigma), Y }
\trans{\ell}
(\p, \ST' \lmix \bullet, \sigma'), Y'
}
&
{\scriptstyle \lrule{RRcv/LCtxt}}
\end{array}$
\caption{Local semantics, selected rules.
See \Cref{app:locallts} for omitted structural rules.
and the two context rules, $\lrule{NLCtxt}$ and $\lrule{NRCtxt}$, for $\lnewlab$ actions on the LHS and RHS of $\ST_1\blacktriangleright \ST_2$, respectively.
}
\label{fig:lsem2}
\end{figure}

\subsection{Projection}
\label{sec:projection}

The distributed counterpart of a global type is a system of configurations derived by \emph{projection}.
The projection of a global type $\GT$ onto a role $\role r\mathop{\in}\RolesG{\GT}$,
written $\GT \proj \role{r}$, returns a pair $(\ST, \sigma)$ of a local type and its queue.
Recall, the queue $\sigma$ is a mapping from roles $\role r'\in\Roles(\GT)$ to a vector of messages received, but not consumed, by $\role r$ from $\role r'$.

The main rules for projection,
given in \Cref{fig:projection},
take a context path $\pi$, which is needed to define messages $m$ being enqueued.
Top-level projection $\GT \mathbin{\proj} \role{r}$ is bootstrapped as $\projrl{ \GT}{ \role{r}}{\epsilon}$.

\begin{figure}[t]
\small
\centering
\[\begin{array}{l}
\projrl{\GactShort{\p}{\q}{\{ a_i : \GT_i\}_{i\in I}}}{\role{r}}{\pth} =
\begin{cases}
\q \oplus_{i \in I} a_i . \ST_i, \sigma
& \text{if }{\role{r}= \p } \quad \quad\text{where in all cases } \forall i\in I,\, {\color{black}\ST_i,\sigma} =
\projrl{ \GT_i }{\role{r}}{\pth}
\\
\p \&_{i \in I} a_i . \ST_i, \sigma
& \text{if }{\role{r}= \q } \quad
\\
\sqcap_{i \in I} \ST_i , \sigma
& \text{if } \role r\not\in\{\role p, \role q\}
\end{cases}
\\[0.7cm]
\projrl{\GmsgShort{p}{q}{k}{\{a_i.\GT_i\}_{i\in I}} }{ \role{r}}{\pth} =
\begin{cases}
\ST_k, \sigma_k
& \text{if }\role{r}= \p
\\
\p \&_{i \in I} a_i . \ST_i, \, \sigma_k[\p \mapsto \msg{p}{a_k}{\pth} \mathop{\cdot} \sigma_k(\p)]
& \text{if }\role{r}= \q
\\
\ST_k, \sigma_k
& \text{if } \role r\not\in\{\role p, \role q\}
\end{cases}
\\
\hspace{4.3cm}\text{where in all cases }\forall i\in I,\, {\color{black}\ST_i, \sigma_i} = \projrl{ \GT_i }{\role{r} }{\pth} \text{ and }\forall i\in I\setminus\{k\},\, \sigma_i = \sigma_0
\\[0.1cm]
\projrl{ \GT_1 \mixc{c} \GT_2 }{\role{r}}{\pth} ~~=~~
\ST_1 \mixc{c} \ST_2, \sigma_0
\qquad \text{where }\forall i\in\{1,2\},\, {\color{black}\ST_i, \sigma_0} =
\projrl{ \GT_i }{\role{r}  }{\pth}
\\[0.1cm]
\projrl{\GtoP{\GT_1}{}{\lset}{\rset}{ \GT_2} }{\role{r}}{\pth} =
\begin{cases}
\ST_1 \lmix \bullet, \sigma_1
& \text{if }\role{r}\mathop{\in} \lset ~~\text{ and }~~
{\color{black}\ST_1, \sigma_1} = \projrl{ \GT_1} {\role{r}}{\pth . \pleft }
\\
\bullet \lmix \ST_2, \sigma_2
& \text{if }\role{r}\mathop{\in} \rset ~~\text{ and }~~
{\color{black}\ST_2, \sigma_2} = \projrl{ \GT_2 }{\role{r}}{\pth . \pright }
\\
\ST_1 \lmix  \ST_2, \sigma_1 \circ \sigma_2
&
\text{if }\role{r}\mathop{\not\in} \lset \mathop{\cup}\rset ~
 \text{ and }~\forall i\in\{1,2\},\, \ST_i, \sigma_i =
\projrl{ \GT_i }{\role{r}}{\pth_i}\\
& \quad \text{ with }~\pth_1 = \pth . \pleft~\text{ and }~ \pth_2 = \pth . \pright
\end{cases}
\\[0.8cm]
\begin{array}{c}
\projrl{\mu \mathtt{t} . \GT }{ \role{r}}{\pth} =
\begin{cases}
\mathtt{end}, \sigma_0 & ~\text{ if } \GT = \mathtt t
\hspace{4cm} \projrl{  \mathtt t}{\role{r}}{\pth} = \mathtt t, \sigma_0
\\
\mathtt{t}', \sigma_0 & ~\text{ if } \GT = \mathtt t' \land \mathtt t' \neq \mathtt t
\hspace{2.75cm} \projrl{\mathtt{end}}{\role{r}}{\pth} = \mathtt{end}, \sigma_0
\\
\mu \mathtt{t} . \ST, \sigma & ~\text{ otherwise, with } \projrl{ \GT}{\role{r}}{\pth} = \ST, \sigma
\end{cases}
\end{array}
\end{array}\]
\caption{Projection \PURPLE{of global types to} local types and input FIFOs}
\label{fig:projection}
\end{figure}
The first two rules for interaction and in-transit messages are standard, except we also project the queues. Notation $\sigma_0$ stands for an `empty queue' that maps each
$\role{p}$ in its domain to $\epsilon$.
The projection of a message in transit
$\GmsgShort{p}{q}{k}{a_i.\GT_i}$ uses the context $\pi$ to
construct the queued message.
The case of projection for an active MC, of the form $\GT_1 \lmix \GT_2$, uses
the operation $\sigma_1 \mathbin{\circ} \sigma_2$ to concatenate queues: if $\mathsf{dom}(\sigma_1) = \mathsf{dom}(\sigma_2)$ then $
\sigma_1 \mathbin{\circ} \sigma_2 = \{\role{r}\mapsto \sigma_1(r) \mathop{\cdot} \sigma_2(r) \mid \role{r} \mathop{\in} \mathsf{dom}(\sigma_1)\}$.
The behaviour of an MC can be considered as starting from the LHS before possibly switching to the RHS, hence the projection accordingly concatenates the
LHS and RHS queues in this order.

The notation $\sqcap_{i\in I}\ST_i$ is the standard notion of
\emph{merge}~\cite{DenielouY12} for projecting onto third-party roles.
For example, the projection of
$\GactShort{p}{q}\{a_1.\,\GactShort{q}{r}\{a_1.\tend\},\, a_2.\,\GactShort{q}{r}\{a_2.\tend\}\}$
onto $\role r$ is defined as
$\role p \& a_1.\tend \sqcap \role p \& a_2.\tend$, yielding $\role p \&_{i\in\{1,2\}} a_i.\tend$.
The definition (omitted)
implicitly handles MC in the same way as other non-branching constructs, i.e., the projection
of each case must be the same.

We extend projection to systems
and define the \emph{system derived} from $\GT$ as $\derived{\GT} = \{ (\role r, \ST_{\role r}, \sigma_{\role r}) \}_{\role{r} \in \RolesG{\GT}}$  where $\forall \role r\in\RolesG{\GT} \;.\; \ST_{\role r}, \sigma_{\role r} = \GT\mathbin{\proj} \role{r}$.

\subsection{Operational Correspondence}
\label{sec:correspondence}

We establish an operational correspondence, called \emph{fidelity}, between global types and the systems obtained by projection. Fidelity is defined as a weak correspondence relation over pairs of global types and systems, treating $\nlab$ and $\gcl$ actions as the silent action $\tau$. This correspondence is precise, comprising a top-down and a bottom-up property, as depicted below.

\centerline{
\small
\begin{tabular}{cccc}
{\begin{tikzpicture}
\node (G1) {$\GT$};
\node[right=41mm of G1] (G2) {$\GT'$};
\node[below=5mm of G1] (Y1) {$Y_{\GT}$};
\node[right=16mm of Y1] (Y2) {$Y$};
\node[right=32mm of Y1] (Y3) {$Y'\color{black}\mathbin{<:} Y_{\GT'}$};
\draw[->, dotted]
(G1) -- node[left]{$\proj$} (Y1);
\draw[->, dashed] (G1) -- node[above]
{$\text{\PURPLE{(bottom-up)}}\qquad\trans{\gcc{\ell}}$} (G2);
\draw[->] (Y1) -- node[above]{$\trans{\ell}$} (Y2);
\draw[->, dashed] (Y2) -- node[above]{$\myreaches{\gcl}$} (Y3);
\draw[->, dotted] (G2) -- node[left]{$\proj$} ($(Y3.north)+ (3.5mm, 0)$);
\end{tikzpicture}}
&\quad  &
{\begin{tikzpicture}
\node (G1) {$\GT$};
\node[below=5mm of G1] (Y1) {$Y_{\GT}$};
\node[right=41mm of G1] (G2) {$\GT'$};
\node[right=16mm of Y1] (Y2) {Y};
\node[right=32mm of Y1] (Y3) {$Y'\color{black}<: Y_{\GT'}$};
\draw[->, dashed] (Y1) -- node[above]{ 
$\trans{{\ell}}$} (Y2);
\draw[->, dashed, color=black] (Y2) -- node[above]{$ \myreaches{\gcl} $} (Y3);
\draw[->, dotted] (G1) -- node[right] (D) {$\proj$} (Y1);
\draw[->] (G1) -- node[above]{$\text{\PURPLE{(top-down)}}\qquad\trans{\gcc{\ell}}$} (G2);
\draw[->, dotted] (G2) -- node[left]{$\proj$} ($(Y3.north)+ (3.5mm, 0)$);
\end{tikzpicture}}
\end{tabular}
}

\noindent
Bottom-up fidelity states that each action of a system can be matched with the same action by the corresponding global type, preserving the correspondence given by projection modulo garbage collection and a preorder `$<:$' over local types (discussed below). Top-down fidelity, conversely, states that each step of $\GT$ can be matched by a step of $Y$.

The asynchronous nature of mixed choices and the decentralised semantics of systems raise two main challenges: (1) upon commitment of a configuration to a MC block, some messages in its queue may become stale; and (2) localised MC instantiations and recursive unfoldings can cause (a minor form of) misalignment of the configurations.
So far, our definitions have paved the way for tackling (1): we use local states $\ST \lmix \bullet$ and
$\bullet \lmix \ST$ to  keep track of local commitments, and use function $\gc()$ and action $\gcl$ to (locally) identify and remove stale messages. Both top-down and bottom-up fidelity may require \PURPLE{some number of} $\gcl$ steps to preserve correspondence.

We address (2) using a preorder `$<:$' on pairs of systems, as we now discuss. In a global type, a MC is instantiated with a single ``centralised'' action $\nlab$, whereas the local configurations in the derived system
independently perform {separate decentralised} $\nlab$ actions.
Decentralised MC instantiation can cause administrative issues for the correspondence. Consider the global types below where $\GT_1= \GactShort{ q}{ p}{a_1.\tend  }$ and $c$ is instantiated before $\role q$'s first send action:

\smallskip
\centerline{$
\begin{array}{lll}
\GT_P   =  \GactShort{ q}{ r}{b. (\GtoDef{\GT_1}{c}{}{}{\GactShort{ p}{ q}{a_2.\tend})}}~\trans{\nlab }\trans{\role p\role q! a_2}
 \GactShort{q}{r}{b.\,
(\GtoP{\GT_1}{c}{\emptyset}{\{\p\}}{\GmsgShort{ p}{ q}{2\,a_2.\tend}})
} =\GT_P'
\end{array}
$}
\smallskip

Consider now configuration $Y_{\role q}$ obtained by projecting $\GT_P$ on $\role q$:

\smallskip
\centerline{
\small{
$
Y_{\role q}= (\role q,\, \role r \oplus b. \, ( \role p \oplus a_1. \tend \mixi{c} \role q \,\& \, a_2 . \tend),\, \sigma_0)
\qquad
Y'_{\role q}= (\role q,\, \role r \oplus b. \, ( \role p \oplus a_1. \tend \lmixi{c}\role q \,\& \, a_2 . \tend),\, \sigma_0)
$}}
\smallskip

\noindent
In $Y_{\role q}$, $\role q$ cannot instantiate the MC before it sends $b$ to $\role r$ (i.e., $Y_{\role q}$ cannot reduce to $Y'_{\role q}$). Namely, the system derived from $\GT_P$ cannot reach the state derived from $\GT_P'$ where all configurations have instantiated $c$ (yet  $\role q$ has not sent $b$). The preorder `$<:$'  is introduced to regulate this deferral of instantiations which may need to be interleaved with other actions. In the example above $Y_{\role q}<: Y'_{\role q}$. The key rule for MC is given below (left). ${\ST}$ is the local type used to bootstrap the derivation at the configuration level (below right). Essentially, the other cases (except for axioms $\tend <: \tend$ and $\rv<:\rv$) are defined inductively, and the preorder is applied to systems pointwise.

\smallskip
\centerline{$
\begin{array}{c}
\inferrule{
 \ST_1' \mathbin{<:} \ST_1
\qquad \ST_2'~ \mathbin{<:}~ \ST_2
}
{
\ST_1' \,{\mixi{c}}\, \ST_2'
~\mathbin{<:}~
\ST_1 \lmixi{c} \ST_2
}
\qquad\qquad
\inferrule{
\ST' \mathbin{<:} {\ST}
}{
(\p, \ST', \sigma )
\mathbin{<:}
(\p, {\ST}, \sigma)
}
\end{array}
$}

\begin{restatable}[Bottom-up fidelity]{lemma}{fullfidelity}
\label{fullfidelity}\label{thm:complete}
Let $\GT$ be a global type reachable from an initial, aware, and balanced global type, and let $Y_{\GT} = \derived{\GT}$. Then the following holds:
\begin{enumerate}
\item $Y_{\GT} \trans{\ell} Y ~\land ~~\ell\neq \rho~~~~\Longrightarrow ~~~~\GT \trans{\gcc{\ell}} {\GT'} \quad \land \quad \exists Y'.~ Y\gcreach{}Y' ~~\land~~ Y' <: \derived{\GT'}$
\item $Y_{\GT} \trans{\rho} Y ~~~~\Longrightarrow ~~~~Y_{\GT} = Y  = \derived{\GT}$
\end{enumerate}
\end{restatable}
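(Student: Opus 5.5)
The proof proceeds by induction on the derivation of $Y_{\GT}\trans{\ell}Y$. Through \textsc{[Par]} and \textsc{[Low]} this reduces to the lower-level judgement $\tstate{\ltheta}{\lenv}{Y}\trans{\ell}Y'$. There are two cases, $\ell=\gcl$ and $\ell\neq\gcl$.

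\textbf{Part (2): the $\gcl$ case.} We show that $\derived{\GT}$ contains no stale messages. This follows from an auxiliary invariant on projection.

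For every configuration $(\role r,\ST,\sigma)=\GT\proj\role r$ and every message $(a,\pth)$ in $\sigma$, the path $\pth$ was produced by the projection rule for $\GmsgShort{p}{q}{k}{\St{}}$. That rule uses the context path accumulated while descending through active MCs. By the active-MC projection case, a component $\pleft$ (resp. $\pright$) is appended only when $\role r\notin\rset$ (resp. $\role r\notin\lset$), or when $\role r$ is committed to that same side. Queues from a side to which $\role r$ is \emph{not} committed are discarded: the $\ST_1\lmix\bullet$ case keeps only $\sigma_1$. Hence following $\pth$ in $\ST$ never reaches a $\bullet$, so $\mathtt{stale}(\pth,\ST)=\mathtt{false}$ and $\gc(\ST,\sigma)=\sigma$. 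I prove this as a lemma by structural induction on $\GT$, generalised over the starting path.

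\textbf{Part (1): the non-$\gcl$ case.} I carry a projection-inversion argument through the rules of the lower-level semantics, proceeding by induction on $\GT$ and generalising over the context path $\pth$. Each non-$\gcl$ action comes from a single configuration, or a sender/receiver pair for $\lrule{Snd}$, whose local type has a shape dictated by projection.
\begin{itemize}
\item \textbf{Standard constructs} (interaction, in-transit, recursion). The argument follows Deni\'elou--Yoshida. A send by $\p$ comes from an interaction prefix reached in $\GT$ through \glab{Cont1}/\glab{Cont2}, using the standard merge lemma for third parties. A receive consumes the head $(a_k,\pth)$ of the queue; by projection this head corresponds to an in-transit term $\GmsgShort{p}{q}{k}{}$, and \glab{Rcv} fires.
\item \textbf{Active MCs.} The local rule used (\textsc{LSnd}, \textsc{RSnd}, \textsc{LRcv1/2}, \textsc{RRcv}, \textsc{LCtxt}) determines whether the role is in $\lset$, $\rset$ or neither, via the three projection cases. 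The matching global rule has the same side condition. The committing check $a\in\comSetG{\underline{\GT}}$ coincides in both semantics.
\item \textbf{MC instantiation} ($\nlab$). $\GT$ performs \glab{Inst}, or \glab{Ctx1}/\glab{Ctx2}, at the corresponding MC. The other roles, which have not instantiated, are related by $<:$ through the key MC rule. This is why the conclusion yields $<:$ rather than equality. I also need to check the side conditions $\RolesG{\underline\GT}\neq\rset$ and $\lset=\emptyset$. Here coherence and monotonicity are used: a role not in $\rset$ still has an uncommitted local MC, so $\rset\neq\RolesG{\underline\GT}$.
\end{itemize}

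\textbf{Restoring projection after the step.} Having obtained $\GT\trans{\ell}\GT'$, I compute $\derived{\GT'}$ and compare it with $Y$. The only discrepancies are queued messages that became stale because the acting role just committed. In \textsc{RSnd}/\textsc{RRcv}, $\role r$ moves to $\rset$, so its LHS queue portion $\sigma_1$ is dropped in $\derived{\GT'}$. In \textsc{LRcv1} the RHS portion is dropped. These discrepancies are exactly the messages with $\mathtt{stale}$ true in the new local type. A single $\gcl$ step on that configuration, via \textsc{[Discard]}, therefore removes them, giving $Y\gcreach{}Y'$ with $Y'<:\derived{\GT'}$.

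\textbf{Main obstacle.} I expect the hardest part to be the receive case inside nested active MCs. The queue in the projection is a concatenation $\sigma_1\circ\sigma_2$, and \lrule{Rcv} picks the first message matching $(a_k,\pth)$, not the head. I must show the chosen message is the one corresponding to the in-transit term on the side the global rule uses. The $\pth$ tag plus well-formedness (disjointness of committing and non-committing sets) should rule out a matching message from the wrong side. Coherence of $\GT$, which holds because $\GT$ is reachable from an initial aware balanced type (\Cref{cor:gprog}'s proof), rules out conflicting commitments.
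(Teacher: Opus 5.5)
Your skeleton is the paper's: part (2) is its \Cref{nostale}/\Cref{nogarbage}, and part (1) is the same induction on $\GT$, with merge and $<:$ bookkeeping for third parties and deferred instantiations. The gap is in ``Restoring projection after the step''. You assert that the only discrepancies between $Y$ and $\derived{\GT'}$ are messages made stale because the \emph{acting} role just committed, and you purge that role's configuration. This misses a send into a receiver that is \emph{already} committed to the opposite side.

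Global \glab{LSnd} requires only that the sender satisfies $\p\notin\rset$, so the receiver $\q$ may already be in $\rset$. Local $\lrule{LSnd}$ then appends to $\q$'s queue a message whose path takes $\pleft$ at this MC, while $\q$'s local type there is $\bullet\lmix\ST_2$. So the message is stale on arrival. It is also absent from $\derived{\GT'}$, because the projection case for $\q\in\rset$ keeps only the RHS queue $\sigma_2$. Since $\mathtt{[PConfig]}$ relates configurations only with identical queues, $Y'<:\derived{\GT'}$ fails without a purge. No role commits in this step, so your analysis never fires, and purging the acting (sending) configuration does not help: the $\gcl$ step must be applied to the \emph{receiver}. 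This is exactly role \CODE{B} in the lower run of \Cref{fig:timeout} (\CODE{B} sends \CODE{TOa}, then \CODE{A} sends \CODE{a1} on the LHS), and step (2) of (\ref{gc}).

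The missing ingredient is what the paper proves as \Cref{lem:prestale} and \Cref{lem:stale}:
\begin{itemize}
\item \Cref{lem:prestale}: a send leaves the receiver's projected local type unchanged.
\item \Cref{lem:stale}: the sent message occurs in the receiver's projected queue iff it is not stale in the receiver's local type.
\end{itemize}
Both are proved by induction over the send rules and the three active-MC projection cases, using that $\circ$ places RHS messages after LHS ones. With these, every send step is closed by purging the receiver.

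Your own-queue case is still needed on top of this. After $\lrule{RSnd}$/$\lrule{RRcv}$, the LHS part of the actor's queue is stale; the paper's write-up passes over this by keeping $\sigma_1$ unchanged. Your $\lrule{LRcv1}$ variant, by contrast, is vacuous: coherence forces $\rset=\emptyset$, so the RHS queue is empty.

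Two smaller points.
\begin{itemize}
\item In your ``main obstacle'', it is the path tag, not well-formedness, that prevents consuming a message from the wrong side. $\lrule{Rcv}$ only consumes messages whose path equals the current context path.
\item Your coherence/monotonicity remark discharges only the \glab{Ctx1} premise. It says nothing about $\lset=\emptyset$ for \glab{Ctx2}. That premise fails when a still-uncommitted role uses $\lrule{NRCtxt}$ to instantiate an MC nested in its RHS after the observer has committed left. This case needs separate treatment; the paper's proof also passes over it with ``by induction''.
\end{itemize}
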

\noindent In case (1), if the action by $\GT$ is committing (say for $\p$), then it may cause some of the messages in $\p$'s queue in $Y$ to become stale, hence a $\rho$ action may be needed to restore correspondence. Case (2) highlights a property of projection: derived systems have no stale messages. 

\begin{restatable}[Top-down fidelity]{lemma}{fullcomplete}\label{wc:top}
Let $\GT$ be a global type reachable from an initial, aware, and balanced global type, and let $Y_{\GT} = \derived{\GT}$. Then the following holds:
\begin{enumerate}
    \item  $\gspair{\GT}{\Theta}\trans{\role p\role q \wild a} {\GT'}  \quad\Longrightarrow\quad\exists  Y'.~~~ Y_{\GT} \trans{\role p\role q \wild \llab }\hspace{-2mm}\gcreach Y'
    ~~~\land~~~Y'<: \derived{\GT'}
    \qquad(\wild\in\{!,?\})$
    \item $\gspair{\GT}{\Theta}\trans{\nlab } {\GT'} 
    \quad\Longrightarrow\quad
    \exists Y'.~~~Y_{\GT} \trans{\overrightarrow{\lnewlab}}  Y' ~~~\land~~~Y'<: \derived{\GT'}$
\end{enumerate}
\end{restatable}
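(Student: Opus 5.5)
We prove the two cases by induction on the derivation of the global transition $\GT \trans{\ell} \GT'$ (rules of \cref{fig:gsem}), generalising the statement to projection under an arbitrary context path $\pth$. The strengthened claim reads: if $\GT \trans{\ell} \GT'$, then the configurations $\projrl{\GT}{\role r}{\pth}$, for $\role r \in \RolesG{\GT}$, perform the corresponding local step under environment $\pth$ (possibly followed by $\gcl$ steps), and reach configurations related by $<:$ to $\projrl{\GT'}{\role r}{\pth}$. The top-level statement is then the instance $\pth=\epsilon$, using rule \textsc{[Low]} to lift the lower-level judgment and \textsc{[Par]} to frame the untouched configurations.

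For case (1) we proceed as follows.
\begin{itemize}
\item \textbf{Base rules.} For \glab{Snd} and \glab{Rcv} we match directly with \lrule{Snd} and \lrule{Rcv}. By the projection of $\GmsgShort{p}{q}{k}{\St{}}$, the message $\msg{p}{a_k}{\pth}$ sits at the end (respectively the head) of $\q$'s queue from $\p$.
\item \textbf{Rule \glab{Rec}.} We use the standard fact that projection commutes with unfolding, together with \lrule{Rec}.
\item \textbf{Rules \glab{Cont1} and \glab{Cont2}.} The acting roles are third parties to the prefix, so their projection is a merge $\sqcap_i \ST_i$ (or the $k$-th continuation). We invoke the usual lemma that a step of a merge is a step of each merged component, and that the merge of the resulting types projects $\GT'$. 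Queues are unchanged for the prefix roles.
\item \textbf{MC rules} \glab{LSnd}, \glab{RSnd}, \glab{LRcv1}, \glab{LRcv2}, \glab{RRcv}. The side conditions $\p\notin\rset$ (resp.\ $\lset$) mean that $\p$ is projected either as $\ST_1\lmix\ST_2$ or as $\ST_1\lmix\bullet$ (resp.\ $\bullet\lmix\ST_2$). The IH applied with $\pth.\pleft$ (resp.\ $\pth.\pright$) gives the local step, and we choose the local rule whose shape matches: \lrule{LSnd} or \lrule{LCtxt}, \lrule{LRcv1} or \lrule{LRcv2} via the same membership test in $\comSetG{\underline{\GT}}$, and \lrule{RSnd}, \lrule{RRcv} or \lrule{RCtxt}. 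The updated $\lset$, $\rset$ then turn the projection of $\p$ into $\ST_1'\lmix\bullet$ or $\bullet\lmix\ST_2'$, exactly as the local rule does.
\end{itemize}

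Case (2) is the $\nlab$ case. For \glab{Inst} we must instantiate $c$ in every role. Each configuration whose projection exposes $\ST_1\mixi{c}\ST_2$ at top level fires \lrule{New} (under \lrule{NLCtxt}/\lrule{NRCtxt} if it is nested), giving a sequence $\overrightarrow{\lnewlab}$. Roles for which the MC is guarded by a pending prefix (as $\role q$ in $\GT_P$) cannot fire; their residual is related to the projection of $\GT'$ by the $<:$ rule $\ST_1'\mixi{c}\ST_2' <: \ST_1\lmixi{c}\ST_2$. Hence we only need $Y' <: \derived{\GT'}$ rather than equality. For \glab{Ctx1} and \glab{Ctx2} we apply the IH in the appropriate side, using the local context rules.

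The main obstacle is the \textbf{queue bookkeeping in the MC cases}. The projection of an uncommitted receiver concatenates $\sigma_1\circ\sigma_2$. A committing step must leave exactly the queue of $\projrl{\GT'}{\role r}{\pth}$, which for a committed role is only $\sigma_1$ or only $\sigma_2$; the surplus messages are precisely those whose path hits a $\bullet$. We therefore prove an auxiliary lemma: after a local commitment, $\gc$ removes exactly the messages with path prefix $\pth.\pright$ (resp.\ $\pth.\pleft$) and keeps all others. This yields the trailing $\gcreach$.

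The lemma relies on two facts.
\begin{itemize}
\item \textbf{Path uniqueness.} Messages in the queue projected from a subterm under path $\pth'$ carry exactly $\pth'$, which is immediate from the projection of in-transit messages.
\item \textbf{FIFO order.} The consumed committing message must be the earliest relevant one under \lrule{Rcv}. Here we use that the LHS queue component precedes the RHS component, and that, by coherence (established in the proof of \Cref{cor:gprog}), an RHS-committing receive can only be enabled once the LHS messages are already stale or absent.
\end{itemize}
The reachability hypothesis (aware, balanced, and hence coherent) is used exactly at this point, to exclude interleavings in which a live LHS message would block the matching local receive.
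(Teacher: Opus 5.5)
Your overall approach matches the paper's: induction on the global derivation with the projection path generalised, $<:$ absorbing deferred $\lrule{New}$ steps, and trailing $\gcl$ steps. Your point that a committing \glab{RRcv} strands the receiver's LHS component $\sigma_1$, which must then be purged, is correct. The paper's sketch actually claims that \glab{RRcv}, like \glab{RSnd}, needs no $\gcl$.

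However, the `FIFO order' argument you attach to this is false, and the reachability hypothesis does not exclude the interleaving you are worried about. Take the well-formed, aware and balanced type $(\q\mathbin{\rightarrowtriangle}\p:a.\,\role s\mathbin{\rightarrowtriangle}\role r:x.\,\p\mathbin{\rightarrowtriangle}\role s:c.\,\p\mathbin{\rightarrowtriangle}\role r:d.\,\p\mathbin{\rightarrowtriangle}\q:e.\,\tend)\triangleright(\p\mathbin{\rightarrowtriangle}\q:b.\,\p\mathbin{\rightarrowtriangle}\role s:y.\,\role s\mathbin{\rightarrowtriangle}\role r:z.\,\tend)$ and run it as follows:
\begin{itemize}
\item $\role s$ sends $x$ on the LHS while still uncommitted;
\item $\p$ sends $b$ and $y$;
\item $\role s$ receives $y$ and sends $z$.
\end{itemize}
Now $\role r$ is uncommitted, its queue from $\role s$ is $(x,\pleft)\cdot(z,\pright)$ with $(x,\pleft)$ live, and the global type can perform $\role s\role r?z$ by \glab{RRcv}. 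The matching local step exists because $\lrule{Rcv}$ is selective, not because of coherence. It consumes the first occurrence of the exact pair $(a_k,\pth)$ and skips messages carrying a different path, so messages from the other side never block it. After the step, $(x,\pleft)$ is stale and is purged.

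The second problem is that your purge lemma only covers staleness created by a commitment step. It therefore misses the case where the paper actually spends its $\gcl$: \glab{LSnd} with the receiver $\q\in\rset$. There $\q$ is projected as $\bullet\lmix\ST_2,\sigma_2$, so the induction hypothesis on the LHS subsystem says nothing about $\q$'s real configuration. Yet $\lrule{Snd}$ still appends $(a,\pth.\pleft\ldots)$ to $\q$'s queue, and this message does not occur in $\derived{\GT'}$. The paper closes this case with \Cref{lem:stale} (a sent message lands in the projected queue iff it is not stale) together with \Cref{nogarbage}.

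In the same rule with $\q$ uncommitted, you also need $\sigma_2(\p)=\epsilon$. This holds because $\p\notin\rset$ has sent nothing on the RHS; without it, appending to $\sigma_1(\p)\cdot\sigma_2(\p)$ would disagree with $\sigma_1'\circ\sigma_2$.

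Finally, \glab{Snd} and \glab{Rcv} are not matched `directly' for third parties. Their projection moves from $\sqcap_{i\in I}\ST_i$ to $\ST_k$ while their configurations stay unchanged, so you need $\sqcap_{i\in I}\ST_i<:\ST_k$ there.
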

\noindent Case (1) is symmetric to case (1) of bottom-up fidelity. In case (2) action $\nu $ can be immediately matched with several actions $\nu  $ by $Y$ (others may be deferred by `$<:$'). 

The operational correspondence is formulated in terms of a weak relation over pairs of global types and systems, where $\nu$ and $\gcl$ are renamed as $\tau$. We write $\taureach$ for a possibly empty sequence of $\tau$ actions and define $\Taureach{\ell} =
   \taureach \trans{\ell} \taureach $ if $\ell \neq \tau$ and $\Taureach{\ell} =\taureach$ otherwise.

\begin{definition}[Weak Correspondence]\label{def:wc}
A relation $\mathcal{R}$ over $(\GT, Y)$ is a \emph{weak correspondence} if whenever $(\GT, Y)\in\mathcal{R}$ then: (1)
If $\GT\trans{\ell}\GT'$ then
$Y\Taureach{\ell}Y'$ and $(\GT',Y')\in\mathcal{R}$; (2) \PURPLE{
If $Y\trans{\ell}Y'$ then
$\GT\,\Taureach{\gcc\ell}\,\GT'$ and $Y'\Taureach{}Y''$  with $(\GT', Y'')\in\mathcal{R}$.}
Two states $\GT$ and $Y$ are weakly correspondent, written $\GT\approx Y$, if and only if there exists a weak correspondence $\mathcal{R}$ such that $(\GT, Y)\in\mathcal{R}$.
\end{definition}

\begin{theorem}
\label{thm:oc}
Let $\GT$ be reachable from an initial, aware and balanced global type, then $\GT\approx \derived{\GT}$.
\end{theorem}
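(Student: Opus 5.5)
We will exhibit an explicit relation and show it is a weak correspondence in the sense of \Cref{def:wc}. Define
\[
\mathcal{R} = \{(\GT, Y) \mid \GT \text{ reachable from an initial, aware and balanced global type, and } Y \taureach Y' \text{ with } Y' <: \derived{\GT}\}.
\]
Alternatively, we take the slightly simpler variant where $Y <: \derived{\GT}$ directly, possibly up to pending $\gcl$ steps. The pair $(\GT, \derived{\GT})$ lies in $\mathcal{R}$ by reflexivity of $<:$, so the theorem follows once $\mathcal{R}$ is shown to be a weak correspondence. Reachability is preserved by global transitions, so the hypotheses of \Cref{thm:complete} and \Cref{wc:top} hold at every global state we reach.

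\textbf{Clause (1): global steps are matched.} Suppose $(\GT, Y) \in \mathcal{R}$ with $Y <: \derived{\GT}$ and $\GT \trans{\ell} \GT'$. We first use \Cref{wc:top} to get a matching step from $\derived{\GT}$: if $\ell$ is a send or receive, a $\ell$-step followed by $\gcl$ steps; if $\ell = \nlab$, a sequence of $\nlab$ steps. Each lands in some $Y''$ with $Y'' <: \derived{\GT'}$. We must transfer this to $Y$. For this we prove a \emph{simulation lemma for $<:$}: if $Y <: Y_0$ and $Y_0 \trans{\ell} Y_0'$, then $Y \trans{\vec{\nlab}}\trans{\ell} Y'$ with $Y' <: Y_0'$. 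The lemma is proved by induction on the derivation of the local step. Where $Y$ holds an uninstantiated definition $\ST_1' \mixi{c} \ST_2'$ but $Y_0$ holds an active $\ST_1 \lmixi{c} \ST_2$, we first fire \textsc{[New]} and then use the inductive hypotheses $\ST_i' <: \ST_i$. Recursion is handled via \lrule{Rec}, and queues are identical componentwise. We also show that $\gcl$ steps commute with $<:$, since $\mathtt{stale}$ only inspects active or committed MC nodes along a path. Since the extra local $\nlab$ and $\gcl$ steps are $\tau$, this yields $Y \Taureach{\ell} Y'$ with $Y' <: \derived{\GT'}$, so $(\GT', Y') \in \mathcal{R}$.

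\textbf{Clause (2): local steps are matched.} Suppose $Y \trans{\ell} Y'$.
\begin{itemize}
\item If $\ell = \gcl$: by part (2) of \Cref{thm:complete}, projected systems contain no stale messages. We show this is preserved under $<:$, so $Y' = Y$, and we take the empty global move.
\item If $\ell = \nlab$ on some role: the step either instantiates a definition that is still deferred relative to $\derived{\GT}$, in which case $Y' <: \derived{\GT}$ still holds and we take the empty global move, or it corresponds to an instantiation present in $\derived{\GT}$, in which case \Cref{thm:complete}(1) gives $\GT \trans{\nlab} \GT'$, which we use.
\item If $\ell$ is a send or receive: we need the converse direction of $<:$. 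From $Y \trans{\ell}$ we show that either the acting configuration's step lives in the common part of $Y$ and $\derived{\GT}$, in which case $\derived{\GT} \trans{\ell}$ with a related result; or the step lies under a deferred $\mixi{c}$ in $Y$, in which case performing the corresponding $\nlab$ on the global side, allowed by \glab{Ctx1} and \glab{Ctx2}, realigns the two systems. We then apply \Cref{thm:complete}(1) to obtain $\GT \Taureach{\gcc\ell} \GT'$ and $Y' \gcreach Y''$ with $Y'' <: \derived{\GT'}$.
\end{itemize}

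\textbf{Main obstacle.} The hardest step is this backward use of $<:$ in clause (2). The preorder lets a local configuration be \emph{behind} on instantiation, but an action in $Y$ must be shown to also be enabled in the more-instantiated $\derived{\GT}$. This matters especially for receives, where \lrule{LRcv1}, \lrule{LRcv2} and \lrule{RRcv} differ on commitment. We would handle it by first proving that the local rules for an uninstantiated $\ST_1 \mixi{c} \ST_2$ admit only \textsc{[New]}, so any visible action in $Y$ occurs outside deferred definitions. Together with induction on $<:$, this gives that $Y$ and $\derived{\GT}$ coincide on the acting redex.
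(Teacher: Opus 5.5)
Your architecture matches the paper's. The relation is in effect $\{(\GT,Y)\mid Y\mathbin{<:}\derived{\GT}\}$ over reachable $\GT$; the pending-$\gcl$ variant is unnecessary, since clause (2) of \Cref{def:wc} already allows trailing $\tau$ steps on the local side. Clause (1) follows from top-down fidelity plus a simulation lemma letting the $<:$-smaller system match the larger one after some $\nlab$ steps (cf.\ \Cref{para1,para}). Clause (2) follows from bottom-up fidelity plus a backward transfer across $<:$.

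The gap is exactly in the step you single out as the main obstacle. You plan to show that, once steps under uninstantiated definitions are excluded, $Y$ and $\derived{\GT}$ coincide on the acting redex. That is false. Besides deferring instantiation and folding recursion, $<:$ contains the branch-pruning rule $\p\,\&_{i\in I}\,a_i.\ST_i' \mathbin{<:} \p\,\&\,a_k.\ST_k$, and this rule cannot be dropped. It is what makes the $\mathtt{[Rcv]}$ case of top-down fidelity go through: there a third-party role keeps its merged type $\sqcap_{i\in I}\ST_i$ locally, while $\derived{\GT_k}$ assigns it only the continuation of the selected branch. So a receiver in $Y$ may offer labels that its counterpart in $\derived{\GT}$ does not. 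Induction on $<:$ alone therefore cannot show that a receive performed by $Y$ is enabled in $\derived{\GT}$.

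The missing idea is to argue from the queues, as the paper does in \Cref{para3}. Rule $\mathtt{[PConfig]}$ forces $Y$ and $\derived{\GT}$ to have identical queues, so the message $(a_k,\pi)$ consumed in $Y$ also sits in the receiver's queue in $\derived{\GT}$. Projection only creates queued messages from in-transit terms $\GmsgShort{p}{q}{k}{\{a_i.\GT_i\}_{i\in I}}$ (cf.\ \Cref{inversion}). Hence $\GT$ contains that in-transit message at the context designated by $\pi$, and the sender's actual choice forces the branch retained in $\derived{\GT}$ to be $a_k$. So $\derived{\GT}$ performs the same receive, and the successors are related by \Cref{para22}. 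Only then can you invoke bottom-up fidelity and close with your $\gcl$-commutation lemma and transitivity of $<:$.

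A side remark: your sub-case of a step ``under a deferred $\mixi{c}$'', realigned by a global $\nlab$, has the direction reversed. Under $Y\mathbin{<:}\derived{\GT}$ it is $Y$ that lags behind, so no global $\nlab$ is involved. As you note later, the case is vacuous anyway, because a definition only admits \textsc{[New]}.
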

(\Cref{thm:oc}) follows from
\PURPLE{top-down and bottom-up fidelity}, since $Y<:\derived{\GT}$ implies $\GT \approx Y$, which is mechanical by induction.

\subsection{Further Properties of Systems}
\label{sec:localproperties}

\emph{Local progress} ensures that each configuration in a system can make further actions, unless it is in a final state (or it is final for short). Configuration
$(\p, L, \sigma)$ is \emph{final} if $L\equiv\tend$, where $\equiv$ is the structural equivalence defined by the following rules:
$
\tend \blacktriangleright\bullet\equiv \bullet
\blacktriangleright\tend
\equiv
\tend \blacktriangleright
\tend \equiv \tend
$.
Observe that $L\equiv\tend$ implies that $\p$ has no further actions in $L$ and is committed in all MC in $L$.

\begin{definition}[Local Progress]
$Y$ enjoys progress if for all $Y'$ reachable from $Y$:

\smallskip
\centerline{$
(\p,\ST, \sigma)~\text{ in }~Y'~\text{ is not final } \quad \Rightarrow \quad Y'\myreaches{}\trans{\ell}~~\land~~ \sbj{\ell}=\role p
$}
\end{definition}

The \Cref{lPro} of \Cref{thm:oc} lifts global progress (\Cref{cor:gprog}) to systems.

\begin{corollary}[Local Progress]\label{lPro}
If $\GT$ is initial, aware and balanced then $\derived{\GT}$ enjoys progress.
\end{corollary}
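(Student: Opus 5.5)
The plan is to transfer global progress (\Cref{cor:gprog}) to $\derived{\GT}$ along the weak correspondence of \Cref{thm:oc}, so that system-level reachability becomes reachability of some global state. Fix $Y'$ reachable from $\derived{\GT}$ via $\derived{\GT}\myreaches{}Y'$. I will first show, by induction on the length of this run and using clause (2) of \Cref{def:wc} at each step, that there are $\GT'$ and $Y''$ with $\GT\myreaches{}\GT'$, $Y'\Taureach{}Y''$ (only $\nlab$ and $\gcl$ steps) and $\GT'\approx Y''$. Concretely, I will take the relation $\mathcal{R}=\{(\GT',Y)\mid Y<:\derived{\GT'}\}$ from the proof of \Cref{thm:oc} and carry its invariant along: $Y''<:\derived{\GT'}$. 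By the closure properties in the proofs of \Cref{cor:gprog}, $\GT'$ is again aware, balanced and coherent, and it is reachable from the initial $\GT$.

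Next, let $(\p,\ST,\sigma)$ in $Y'$ be non-final. I will argue that $\p\in\RolesG{\GT'}$. Only $\tau$-steps separate $Y'$ from $Y''$. A $\gcl$ step leaves the local type unchanged, and a $\nlab$ step turns $\mixi{c}$ into $\lmixi{c}$, which preserves non-finality modulo $\equiv$, so the configuration of $\p$ in $Y''$ is still non-final. Since $Y''<:\derived{\GT'}$ and $<:$ only relates definitions to active MCs with related components, the projection $\GT'\proj\p$ is also non-final. It remains to show the contrapositive of a projection lemma: if $\p\notin\RolesG{\GT'}$ then $\GT'\proj\p\equiv\tend$. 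I will prove this by induction on $\GT'$. The interesting case is an active MC in which $\p$ is committed to one side, say $\p\in\lset$. Then $\p$ is excluded from $\RolesG{\GT_2}$ by the definition of roles, and the projection is $\ST_1\lmix\bullet$ with $\ST_1\equiv\tend$ by the induction hypothesis. Also $\tend\lmix\tend\equiv\tend$ for uncommitted sides where $\p$ appears nowhere.

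With $\p\in\RolesG{\GT'}$, \Cref{cor:gprog} (through the preserved invariants) gives $\GT'\myreaches{}\GT''\trans{\ell}$ with $\sbj{\ell}=\p$. Applying clause (1) of \Cref{def:wc} repeatedly from $(\GT',Y'')$ then yields $Y''\Taureach{}\cdot\trans{\ell}$ on the system side. Moreover, \Cref{wc:top}(1) guarantees that the matching system action carries the same label, so its subject is $\p$. Prepending $Y'\Taureach{}Y''$ gives $Y'\myreaches{}\trans{\ell}$ with $\sbj{\ell}=\p$, which is what local progress requires.

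I expect the main obstacle to be the second step, namely relating non-finality of the local configuration to membership in $\RolesG{\GT'}$ through $<:$ and the intermediate $\tau$-steps. Deferred instantiations, where the local type still has $\mixi{c}$ while the global one has $\blacktriangleright$, and recursion unfolding mean the syntactic correspondence is not literal. If the direct induction proves awkward, I would instead prove a single auxiliary lemma: whenever $Y<:\derived{\GT'}$, any configuration that is non-final in $Y$ has its role in $\RolesG{\GT'}$. I would prove it by induction on the derivation of $<:$.
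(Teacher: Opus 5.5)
Your overall route is the paper's: there the corollary is stated as an immediate consequence of \Cref{thm:oc} and \Cref{cor:gprog}, with no further argument. However, the bridging step you single out as the main obstacle is false as stated, and the failure is not a proof artefact.

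\emph{Finality.} A $\nlab$ step need not preserve non-finality, and $<:$ need not reflect it. The type $\tend\mixi{c}\tend$ is not final, because the rules for $\equiv$ only concern $\blacktriangleright$. Yet $\tend\mixi{c}\tend\trans{\nlab}\tend\lmixi{c}\tend\equiv\tend$, and $\tend\mixi{c}\tend<:\tend\lmixi{c}\tend$. Your lemma ``$\p\notin\RolesG{\GT'}$ implies $\GT'\proj\p\equiv\tend$'' fails for the same reason. For MC definitions and recursive MCs in which $\p$ does not occur, projection yields $\tend\mixi{c}\tend$ or $\mu\mathtt t.(\mathtt t\mixi{c}\tend)$.

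Concretely, take $\GT=\GactShort{q}{r}{b.\,(\GactShort{q}{p}{a_1.\GactShort{p}{q}{e.\tend}}\mixi{c}\GactShort{p}{q}{a_2.\tend})}$. This is the paper's $\GT_P$ made clear-terminating, and it is initial, well-formed, aware and balanced. Once $\role r$ has consumed $b$:
\begin{itemize}
\item its configuration $(\role r,\tend\mixi{c}\tend,\sigma_0)$ is non-final;
\item its only move is $\nlab$, whose subject is $\emptyset$;
\item $\role r\notin\RolesG{\GT'}$ for the reached global state $\GT'$, so global progress says nothing about it.
\end{itemize}
Hence no action with subject $\role r$ is ever reachable. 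With a recursive MC in place of $c$, $\role r$ even performs $\nlab$ steps forever.

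So, under the definitions as given, the statement fails for such roles. The repair has to be made on the definitions, not in the proof. For instance, let $\equiv$ absorb MC definitions and recursions containing no action of the role, or let projection collapse them to $\tend$. Only then can your auxiliary lemma ``non-final in $Y$ implies the role is in $\RolesG{\GT'}$'' be proved.

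Relatedly, $\derived{\GT'}$ is indexed by $\RolesG{\GT'}$, whereas $<:$ on systems is pointwise over a common role set. So $Y''<:\derived{\GT'}$ only makes sense after extending the derived system to all roles of the base type, as you implicitly do.

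\emph{Induction along the run.} Independently, this induction does not go through as described. Clause (2) of \Cref{def:wc} relates the matching global state not to the reached system $Y_1$, but to some $Y_1''$ with $Y_1\Taureach{}Y_1''$. The next transition of your run starts from $Y_1$, so clause (2) cannot be applied to it.

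The concrete relation $\{(\GT',Y)\mid Y<:\derived{\GT'}\}$ does not help either. $\mathtt{[PConfig]}$ demands identical queues, and projections are garbage-free (\Cref{nogarbage}), so a system still holding stale messages is never related.

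You need a commutation lemma for purging, which is the only kind of $\tau$-step bottom-up fidelity produces. It could read: if $Y\myreaches{\gcl}Y''$ and $Y\trans{\ell}Y_2$ with $\ell\neq\gcl$, then $Y''\trans{\ell}Y_2''$, and $Y_2$, $Y_2''$ purge to the same system. This holds because stale messages are never consumable and stay stale (\Cref{foreverstale}). With it, the set of systems that purge to a system related to some global state reachable from $\GT$ is closed under transitions. Your final forward-simulation step via clause (1) is then fine.
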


From local progress we further establish \emph{orphan message freedom} (OMF),
which states that every message in a queue can eventually be received (see \Cref{app:om} for details).
%
\begin{restatable}[name=OMF]{theorem}{enabledreceptionlocal}
\label{erl}
Let $Y, (\q, \ST, \sigma)$ be a system reachable from the projection of an initial, aware, and balanced global type.
If $\sigma[\p] = \vec{m_1} \cdot (a_k, \pi) \cdot \vec{m_2}$ with $(a_k, \pi) \not\in \vec{m_1}$, then there exists $Y', (\q, \ST', \sigma')$ reachable from $Y, (\q, \ST, \sigma)$ such that either (1)
$\mathtt{stale}(\pi,\ST')$, or (2)
$Y', (\q, \ST', \sigma')\trans{\p\q?(a_k,\pi)}$.
\end{restatable}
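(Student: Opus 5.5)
The proof goes through the global type. A message in $\q$'s queue corresponds, via projection, to an in-transit term in the global type, and global progress forces that term either to be consumed or to end up on a side that $\q$ has abandoned.

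First I would reduce to a derived system. By \Cref{thm:oc}, and more concretely by iterating bottom-up fidelity (\Cref{fullfidelity}), the reachable system $Y, (\q,\ST,\sigma)$ satisfies $Y,(\q,\ST,\sigma) \gcreach Y_1$ with $Y_1 <: \derived{\GT}$ for some $\GT$ reachable from the initial global type. The $\gcl$ steps either remove $(a_k,\pi)$, which is only possible if $\mathtt{stale}(\pi,\ST)$ and gives case (1), or leave it in place. The preorder $<:$ only defers $\lnewlab$ instantiations. Using top-down fidelity (\Cref{wc:top}(2)), the missing $\lnewlab$ steps can be performed, which does not touch queues. I may therefore assume the message sits in the queue of $\q$ in $\derived{\GT}$.

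Next, I invert projection (\Cref{fig:projection}). A queued entry $(a_k,\pi)$ at $\q$ from $\p$ arises only from a subterm $\GmsgShort{p}{q}{k}{\St{}}$ of $\GT$. This subterm lies under exactly the active-MC context described by $\pi$, where $\pi$ records the $\pleft/\pright$ choices. Queue concatenation $\sigma_1\circ\sigma_2$ and the FIFO order identify which in-transit term it is; I treat the condition $(a_k,\pi)\notin\vec{m_1}$ via this order.

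Then I argue by a well-founded measure, the number of in-transit terms from $\p$ to $\q$ that precede it, ranked in the global FIFO order. By \Cref{cor:gprog}, together with coherence and the preservation of awareness and balance, $\q\in\RolesG{\GT}$ can always eventually act, as long as $\q$ is still active in the MC context of that subterm. I then split on whether $\q$ is still active there.

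If $\q$ is no longer active, then along $\pi$ there is an active MC in which $\q\in\rset$ while the path goes left, or $\q\in\lset$ while the path goes right. By projection, $\q$'s local type has $\bullet$ on that side, so $\mathtt{stale}(\pi,\ST')$ holds and we are in case (1).

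Otherwise I follow the run from progress. Each global step is mirrored in the system by top-down fidelity. Global progress for $\q$, combined with the FIFO discipline of $\lrule{Rcv}$ and the fact that all preceding $\p$-to-$\q$ messages are consumed or purged first, eventually yields either a state where $\GmsgShort{p}{q}{k}{}$ is at the head of $\q$'s context, or a commitment of $\q$ to the opposite side. At the head, the global [Rcv] rule fires, possibly via the LRcv or RRcv rules, and by \Cref{wc:top}(1) the system performs $\p\q?a_k$, giving case (2). A commitment to the opposite side makes the message stale, giving case (1).

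The main obstacle is the last step. Global progress only says $\q$ can do some action, not necessarily this receive. I would strengthen it by analysing the global context above the in-transit term: a prefix action of $\q$ blocks it, and progress plus a decreasing count of $\q$'s guards (using balance to exclude branches where $\q$ disappears) ensures that guard is eventually discharged, or that $\q$ commits elsewhere. Recursion makes this count delicate. I would handle it by observing that the in-transit term already exists, so the prefix of $\q$ above it is finite and contains no unfolding.
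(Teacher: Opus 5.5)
Your overall route matches the paper's. You get $Y,(\q,\ST,\sigma)$ into correspondence with a reachable $\GT$, invert projection to locate the in-transit term under the active context named by $\pi$, show globally that it is eventually received or its path goes stale, and transfer back. But the engine you use for the global step does not work. You rely on global progress for $\q$ plus a decreasing count of $\q$'s guards above the in-transit term. This implicitly assumes every action of $\q$ is one of those guards, the receive itself, or a commitment to the opposite side. That trichotomy misses non-committing actions of an uncommitted $\q$ on the LHS of an active MC whose RHS lies on $\pi$, and clear termination explicitly allows such actions to go on forever.

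For instance, extend the paper's stream exception to $\GactShort{q}{p}{c.\, } \mu\rv. (\GactShort{q}{p}{a.~\rv})\mix \GactShort{p}{q}{b.\,}\GactShort{p}{q}{d.~\tend}$. Let the observer $\p$ send $b$ and then $d$ on the RHS, so $\rset=\{\p\}$, $\lset=\emptyset$, and $\q$'s queue is $(b,\pright)\cdot(d,\pright)$. For $(d,\pright)$ the only guard of $\q$ is the receive of $b$. Yet $\q\notin\rset$ may send $c$ and then $a$ on the LHS indefinitely. Every witness that \Cref{cor:gprog} offers for $\q$ may be such a send, so your count never decreases and $\q$ never commits. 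Progress is existential in the action it yields and cannot select the guard.

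The missing idea is to schedule the run explicitly, as \Cref{erg} does by induction on the active context above the in-transit term:
\begin{enumerate}
\item Fire any interaction or in-transit prefix at the head of the context, whatever its subject. This also avoids lifting the receive through other roles' pending prefixes via $\mathtt{[Cont1]}$, which would need it in every branch.
\item Lift through an MC layer on whose RHS $\pi$ lies using $\lset=\emptyset$, which coherence gives.
\item Lift through an LHS layer with $\rset=\emptyset$ directly.
\item Invoke awareness only for an LHS layer with $\rset\neq\emptyset$. There, single decision and progress of the RHS block make $\q$ act, hence commit, on the RHS, so $\pi$ becomes stale.
\end{enumerate}

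Your first step also needs repair. The preorder $<:$ does not merely defer $\lnewlab$ steps; it also relates a merged branching to a narrowed one, and a $\mu$-type to its unfolding. Deferred instantiations generally cannot be performed either: the paper's $Y_{\q}$ cannot instantiate $c$ before sending $b$. \Cref{wc:top}(2) does not help, since it only matches a global $\lnewlab$ step taken from $\derived{\GT}$. Your conclusion that the message sits in $\q$'s queue in $\derived{\GT}$ is still true, because $\mathtt{[PConfig]}$ keeps queues identical. However, you cannot replace the actual system by $\derived{\GT}$, and iterating \Cref{wc:top} only ever yields $Y'<:\derived{\GT'}$. To carry the global run back to the actual system you need the fact that a $<:$-smaller system simulates the larger one (\Cref{para1,para}), which is what the paper's last step uses. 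On the positive side, your explicit treatment of the stale outcome is more careful than the paper's final proof, which transfers only the receive case of \Cref{erg}.
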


OMF in MST was first formulated~\cite{DenielouY12,DenielouY13} to state that
\emph{final} (terminated) states have empty queues.  Our \Cref{erl} is more
general because it also covers non-terminating systems.  In our setting, we
can state the weaker property of Deni\'{e}lou and Yoshida modulo stale message
purging as: if $Y$ is \emph{final} and derived from the projection of an
initial, aware and balanced $\GT$, then for all $(\p, \ST, \sigma)$ in $Y$,
$\gc(\ST, \sigma) = \emptyset$.  This follows immediately from \Cref{erl}: if
a non-stale message were present in $Y$, by \cref{erl} there would exist $Y'$
reachable from $Y$ where that message is consumed or stale, contradicting the
finality of $Y$.

A stronger version of
OMF~\cite{DBLP:journals/lmcs/ChenDSY17} has been studied in the setting of
(binary) session types with fairness conditions by~\citet{
DBLP:conf/ecoop/PadovaniZ25}.
Due to MC, our \Cref{erl} first differs by considering garbage
collection (case (1) in \Cref{erl}).
Secondly, their \emph{fair termination} condition enforces that every message is
\emph{necessarily} received in every possible execution, whereas our \Cref{erl} ensures that every
enqueued message may always be \emph{possibly} received or garbage collected (e.g., given $\mu \mathtt t .\, \GactShort{p}{q}{a.\, \mathtt t }$, it is always possible for $\q$ to receive 
but the unfair execution where 
$\p$ sends forever while $\q$ never receives is also possible).
We leave to future work an investigation of
fairness conditions \`{a} la~\citet{DBLP:journals/jlap/CicconeDP24,DBLP:conf/ecoop/PadovaniZ25} for asynchronous MC in MST. %
\cref{sec:related} gives further comparisons with \citet{DBLP:journals/jlap/CicconeDP24}.

\section{Implementation and Further Examples}
\label{sec:implementation}

\subsection{Toolchain Implementation}
\label{lab:toolchainimpl}

\paragraph{Global protocol validation}
We have implemented \mMST{} as an extension to the Scribble protocol
language~\cite{DBLP:conf/tgc/YoshidaHNN13,DBLP:conf/fase/HuY16}.
The practical syntax for our MC is as demonstrated in \Cref{sec:overview}.
Following our formal theory, our implementation allows MC to be combined with
the standard MST constructs for regular directed choice (non-mixed
branch/select) and recursion to express a range of patterns and DS constructs
found in practical applications.

Our toolchain internally translates Scribble specifications to a representation
based on our formal definitions and \emph{syntactically} checks the source
protocol for
(a) the well-formedness of committing message labels (\Cref{well-formed}),
(b) awareness (\Cref{def:aware}) and balance (\Cref{wellset}),
and (c) projectability (\Cref{sec:projection}) onto all roles.
In our current implementation, checking (b) syntactically means inferring role
occurrences and dependencies between roles as written in the source protocol
without (semantically) unfolding recursive types.
This is sound: such syntactically inferred dependencies conservatively
imply our formal conditions.

\emph{Code generation.}
The toolchain implements projection of global types to local types following
the formal theory.
Each local type is then translated to an event-driven finite state machine
(EFSM) representation.
As described in \cref{sec:moti}, from the EFSM we generate two Erlang modules
for each role: a \emph{role module} (RM) and a \emph{callback module} (CM).
In \cref{fig:exceptionB}, we showed the CM for role \CODE{B}.
In \cref{fig:apiB}, we show the corresponding RM for \CODE{B}.
The RM builds on \CODE{gen\_statem}, providing an EFSM structure to manage state transitions and message passing in accordance with the protocol. Specifically, it provides state function definitions (\cref{fig:apiB}, lines 27-40), and callback specifications (\cref{fig:apiB}, lines 2-8) that must be implemented by the CM to handle incoming messages and internal events.
It also provides functions for sending messages (\cref{fig:apiB}, lines 17-23). Moreover, it leverages \CODE{gen\_statem}’s asynchronous selective receive to defer out-of-order events by returning a postpone action (e.g. \CODE{{keep\_state, Data, [postpone]})} (lines 27-28), requeuing them until the state machine transitions into a state that can properly handle them.
Following Erlang convention, the RM excludes any application-specific logic, which instead is left to callbacks in the CM.

\begin{figure}[t]
\begin{center}
\begin{tabular}{c}
    \begin{lstlisting}[numbers=left,language=erlang, numbersep=3pt, basicstyle=\fontsize{8}{8}\ttfamily]
% -------- Callback specifications --------
-callback s5(term(), {atom()}, state_data()) -> {stop, normal, state_data()}.
-callback s1(term(), {atom()} | {pid(), {term()}}, state_data()) -> 
  {next_state, s5, state_data(), [{next_event, internal, {'TOc'}}]} | {keep_state, state_data()} | 
  {keep_state, state_data(), [postpone]} | {next_state, s2, state_data(), [{next_event, internal, {a3}}]}.
-callback s2(term(), {atom()}, state_data()) -> 
    {next_state, s3, state_data(), [{next_event, internal, {a4}}]} | {keep_state, state_data()}.
-callback s3(term(), {atom()}, state_data()) -> {stop, normal, state_data()}.
%% -------- Record and type definitions for maintaining state --------
-record(data, { a_pid :: pid(), c_pid :: pid() }).  %% Process identifiers for roles A and C.             
%% -------- Send helpers --------
send_s1_TOa(APid, Data) -> gen_statem:cast(APid, {self(), {'TOa'}, Path}).
send_s5_TOc(CPid, Data) -> gen_statem:cast(CPid, {self(), {'TOc'}, Path}).
send_s3_a4(APid, Data) -> gen_statem:cast(APid, {self(), {a4}, Path}).
send_s2_a3(CPid, Data) -> gen_statem:cast(CPid, {self(), {a3}, Path}).
%% ---------- Callback function definitions delegate the processing to the CM ----------
s1(EventType, {APid, {a1}, Pi}, Data) ->
  case stale(Pi, left) of
      true -> {keep_state, Data};
      false -> Side = CallbackModule:s1(EventType, {APid, {a1}}, Data)
      case Side of {next_state, s6, _} -> commit_entry(?MC1, left), Side;
                   {next_state, s3, _} -> commit_entry(?MC1, right), Side
      end;
  end;
s1(EventType, {'TOa'}, Data) -> Side = CallbackModule:s1(EventType, {'TOa'}, Data)
      case Side of {next_state, s6, _} -> commit_entry(?MC1, left), Side;
                   {next_state, s3, _} -> commit_entry(?MC1, right), Side
      end.
s5(EventType, 'TOc', Data) -> CallbackModule:s5(EventType, 'TOc', Data).
s2(EventType, a3, Data) -> CallbackModule:s2(EventType, a3, Data).
s3(EventType, a4, Data) -> CallbackModule:s3(EventType, a4, Data).
    \end{lstlisting}
    \end{tabular}
\end{center}
	\caption{Extract from the RM generated for \CODE{B} in \CODE{Timeout}, implementing its protocol- and role-specific \CODE{gen\_statem} behaviour and declaring the callback functions to be implented by the CM.}
	\label{fig:apiB}
\end{figure}

In mixed choices, participants may receive messages from non-selected branches until all participants become committed.
To prevent protocol violations stemming from these stale messages, the toolchain implements event queue management and state data tracking within the \CODE{gen\_statem} framework.
The implementation follows the theory, and derives the stale messages to be purged from the global type, as defined in \cref{eq:gc}.
\CODE{gen\_statem} maintains an event queue where incoming events are enqueued and dispatched sequentially based on arrival order and their priority, with internal events being prioritised over external events.
Each incoming MC message carries a path \CODE{Pi} recording the sides taken through active MCs. The RM keeps a local commitment map, and as illustrated in \cref{fig:apiB}, lines 22-26, when an event is received the RM module purges the event if the path \CODE{Pi} is stale, otherwise it forwards the event to the CM module.

\emph{Runtime requirements.}
Our implementation uses Erlang's core features for process spawning and asynchronous message passing, and the built-in \CODE{gen\_statem} library for executing event-driven finite state machines (EFSMs). However, a runtime for our theory can be implemented in any setting that supports event-driven concurrency and asynchronous messaging, over which a framework for session-based EFMSs can be readily developed~\cite{DBLP:journals/pacmpl/VieringHEZ21} if not directly supported as in Erlang.  Our protocol validation and projection is independent of Erlang.


\subsection{Expressiveness by Examples}
\label{sec:evaluation}

\Cref{tab:examples} summarises a range of examples from MST literature that we
have extended with mixed choice (MC) to support features such as timeouts and
exceptions.
Our MC allows use cases like SMTP to be expressed more fully than in prior MST
systems due to supporting timeouts (prior works simply disregarded that aspect).
Our examples test the static elements of protocol validation, projection, EFSM
translation and Erlang module generation.
We also implemented minimal but functional skeleton programs for each role of
the examples to test the runtime I/O and event handling dynamics, and embedded
MC mechanisms such as stale message purging.
However, our minimal implementations generally do not perform the full
application logic of the examples.

\begin{table}[t]
\footnotesize
\caption{
MST examples extended with asynchronous mixed choice (MC).
}
\label{tab:examples}
\vspace{-3mm}
\begin{tabular}{lllllllll}
\multicolumn{9}{l}{M=Multiparty~~B/S=Branch/Select~~Rec=Recursion~~nMC=nested MC~~ndMC=non-directed MC~~GC=purging}
\\[2pt]
Example & M & B/S & Rec & MC & nMC & ndMC & GC & Source
\\
\hline
Calculator & $\checkmark$ & $\checkmark$ & & $\checkmark$ & & & $\checkmark$ & [Hu and Yoshida, 2016]
\\
CircuitBreaker & $\checkmark$ & $\checkmark$ & $\checkmark$& $\checkmark$ & & $\checkmark$ & $\checkmark$ & [Lagaillardie, Neykova and Yoshida 2022]
\\
DistributedLogging & & $\checkmark$ & $\checkmark$ & $\checkmark$ & & & $\checkmark$ & [Lagaillardie, Neykova and Yoshida 2022]
\\
Fibonacci & & $\checkmark$ & $\checkmark$ & $\checkmark$ & & & $\checkmark$ & [Hu and Yoshida, 2016]
\\
SMTP & & $\checkmark$ & $\checkmark$ & $\checkmark$ & $\checkmark$ & & $\checkmark$ & [Hu and Yoshida, 2016]
\\
TwoBuyer & $\checkmark$ & $\checkmark$  & & $\checkmark$ & $\checkmark$ & $\checkmark$ & $\checkmark$ & [Honda, Yoshida, Carbone, 2008]
\\
TravelAgency & $\checkmark$ & $\checkmark$ & $\checkmark$ & $\checkmark$ & & & $\checkmark$ & [Hu, Yoshida and Honda 2008]
\\
OnlineWallet & $\checkmark$ & $\checkmark$ & $\checkmark$ & $\checkmark$ & & $\checkmark$ & $\checkmark$ & [Neykova, Yoshida and Hu, 2013]
\end{tabular}
\end{table}

\emph{Distributed system constructs.}
A key motivation for \mMST{} is to provide a core construct that can express a
range of important constructs -- building blocks of many practical distributed
systems -- that were previously only supported by bespoke and
disparate MST extensions.
These include exceptions~\cite{DBLP:conf/concur/CarboneHY08},
interrupts~\cite{DBLP:journals/fmsd/DemangeonHHNY15},
timeouts~\cite{DBLP:conf/coordination/PearsBK23,DBLP:conf/ecoop/HouLY24}, and
failure handling~\cite{DBLP:conf/esop/VieringCEHZ18}.
A basic timeout pattern was illustrated in \Cref{sec:overview}; exceptions are
similar.
We illustrate interrupts and failure handling below.

\smallskip
\emph{Failure handling}.  We give an example of MC drawn from the major topic of failure
handling and fault-tolerance in
MST~\cite{DBLP:conf/esop/VieringCEHZ18,DBLP:journals/pacmpl/VieringHEZ21,DBLP:conf/ecoop/BarwellHY023,DBLP:journals/lmcs/PetersNW23,DBLP:conf/forte/BrunD24}.
%
%
The existing work typically models application protocols assuming that failure
detection (FD) is implicitly provided by the runtime infrastructure.
With MC as a core construct, we can now explicitly model such mechanisms and
how role behaviours may depend on them.
\Cref{fig:failure} (left) gives a small example, where Worker \CODE{W} is a
failure-prone role and the observer \CODE{FD} represents the FD service, based
on the use cases of
\citet{DBLP:conf/esop/VieringCEHZ18,DBLP:journals/pacmpl/VieringHEZ21} featuring
heartbeat-based FD.
We specify it as a recursive MC where in each iteration (i.e.,
unfolding of the recursive type) \CODE{FD} waits to receive a heartbeat
(\CODE{HB}) from \CODE{W} with the option to send a \CODE{Crash}
notification to \CODE{M} (e.g., upon a timeout, connection error, or other
failure condition).
The 
annotation 
on L\ref{ln:failed} is given by a simple variant of our core theory where
protocol validation prohibits further usage of a role considered failed (i.e.,
\CODE{W} cannot occur again in the continuation of the protocol after
L\ref{ln:failed}).
The \CODE{Timeout} message is useful if the connection is actually still live,
allowing \CODE{W} to handle its own (reported) demise, but can be considered
redundant otherwise.
%
%
Following our theory, protocol validation allows the awareness of \CODE{W}
in the LHS to be transitively relayed from \CODE{FD}
to \CODE{M} via \CODE{OK}, then to \CODE{W} via \CODE{more}.

\begin{figure}[t]
\begin{minipage}[t]{0.5\textwidth}
\begin{lstlisting}[numbers=left, numbersep=3pt, language=scribble, basicstyle=\fontsize{8}{10}\ttfamily]
global protocol FailH(role M, role W, role FD) {  
  init(Data) from M to W;
  rec X {
    mixed {  // LHS
      HB() from W to FD;
      OK() from FD to M;
      result(Data) from W to M;
      more(Data) from M to W;
      continue X;
    } or {  // RHS
      Timeout() from FD to W; @'failed W'  (*\label{ln:failed}*)
      Crash()  from FD to M;
} } }
\end{lstlisting}
\end{minipage}%
\begin{minipage}[t]{0.45\textwidth}
\begin{lstlisting}[numbers=left, numbersep=3pt, language=scribble, basicstyle=\fontsize{8}{10}\ttfamily]
@'explicit-observer-left-commits'
global protocol Interr(role P, role Q) {
  mixed {  // LHS
    Start() from Q to P;
    rec X {
      choice at Q {
        More() from Q to P; continue X;
      } or {
        Stop() from Q to P*; (*\label{ln:explicit}*) // P commits
        Ack() from P to Q;
    } }
  } or {  // RHS
    Interrupt() from P to Q; } }
\end{lstlisting}
\end{minipage}
\caption{Heartbeat-based failure detection (left), and an asynchronous interrupt (right).
}
\label{fig:failure}
\end{figure}

\smallskip
\emph{Interrupts.}
%
%
%
Recall the
interrupt patterns
using MC
in \Cref{ex:interrupt}.
The interrupt pattern can be expressed in our Scribble extension as in
Figure~\ref{fig:failure} (right).

We make an option for the user to explicitly indicate the
committing interactions for observers in the LHS of an MC (as opposed to the
observer implicitly committing on the first such message received as in the
core theory).
The explicit committing interaction is marked by a \CODE{*} on the observer
role occurrence, e.g., \CODE{P*} on Line~\ref{ln:explicit}.

\subsection{Case Study: RabbitMQ}

We apply our toolchain to a non-trivial, real-world case study, RabbitMQ. The toolchain can generate code for each participant in the chosen subset of the AMQP protocol, while also allowing for interoperability: the code generated for one participant (e.g., Consumer) can interoperate with pre-existing RabbitMQ implementations.

The Advanced Message Queuing Protocol\footnote{\url{https://www.amqp.org/}} (AMQP) is an open-standard protocol designed for message-oriented middleware. RabbitMQ, a widely-used open-source message broker adhering to the AMQP standard, leverages the Erlang client library \CODE{amqp\_client}\footnote{\url{https://github.com/rabbitmq/rabbitmq-server/tree/main/deps/amqp_client}} to facilitate interaction between Erlang and Elixir applications and RabbitMQ nodes.
Within this ecosystem, the \CODE{amqp\_selective\_consumer} module, implemented alongside its behaviour \CODE{amqp\_gen\_consumer}, plays a crucial role in managing message consumption with precise control over delivery and cancellation. Notably, \CODE{amqp\_selective\_consumer} is already implemented as a state machine using Erlang's \CODE{gen\_server} behaviour; however, this implementation is relatively ad-hoc. We {replace} the \CODE{amqp\_selective\_consumer}, and its implemented behaviour \CODE{amqp\_gen\_consumer} from the RabbitMQ Erlang client library, \CODE{amqp\_client} with a behaviour and a callback module generated from a session type representation of the protocol.

We {model} the relevant part of the AMQP protocol in Scribble, \cref{fig:amqp}, capturing the interactions between the consumer \CODE{C}, channel \CODE{H}, and server \CODE{S} roles for selective message delivery. The consumer registers with the channel and initiates a \CODE{basic\_consume} request, which is forwarded to the server. The server responds with \CODE{basic\_consume\_ok}, confirming the consumer's subscription. The \CODE{SelectiveMessageDelivery} recursive block models ongoing message exchanges, using an MC construct to represent message processing on one hand and cancellation on the other.  MC is essential for accurately modelling the concurrency and asynchronous behaviour in AMQP. It allows the protocol to express that either the server may deliver a new message -- \CODE{basic\_deliver}, or the consumer may choose to cancel the subscription -- \CODE{basic\_cancel}.

Our toolchain validates this global type, projects it to a local type for each role, and constructs an EFSM representation of each local type. For the consumer role, \CODE{C}, the toolchain generates a correct-by-construction behaviour and callback modules.
The generated modules replace \CODE{amqp\_selective\_consumer} and \CODE{amqp\_gen\_consumer}, implementing the same interfaces expected by other components. All features of \CODE{amqp\_selective\_consumer} are preserved, including message consumption, and cancellation. We use RabbitMQ's existing test suite to validate the new \CODE{amqp\_selective\_consumer} implementation.

\begin{figure}[t]

\begin{tabular}{l|l}
\begin{minipage}{0.5\textwidth}
{\begin{lstlisting}[language=Scribble, basicstyle=\fontsize{7}{10}\ttfamily]
// roles: C =$~$Consumer, H =$~$Channel, S =$~$Server
global protocol AMQP(role C, role H, role S) {
  register_default_consumer() from C to H;
  basic_consume(consumer_tag, nowait) from C to H;
  basic_consume(consumer_tag, nowait) from H to S;
  basic_consume_ok(consumer_tag) from S to H;
  basic_consume_ok(consumer_tag) from H to C;
  rec SelectiveMessageDelivery {
    mixed {
      basic_deliver(consumer_tag, delivery_tag,
          exchange, routing_key) from S to H;
      // Continued on right column...
\end{lstlisting}}
\end{minipage}
&
\begin{minipage}{0.49\textwidth}
{\begin{lstlisting}[language=Scribble, basicstyle=\fontsize{7}{10}\ttfamily]
  process_message() from H to C;
  basic_deliver(consumer_tag, delivery_tag,
      exchange, routing_key) from H to C;
  processing_complete(delivery_tag) from C to H;
  update_delivery_state(delivery_tag) from H to S;
  continue SelectiveMessageDelivery;
} or {
  basic_cancel(consumer_tag, nowait) from C to H;
  basic_cancel(consumer_tag, nowait) from H to S;
  basic_cancel_ok(consumer_tag) from S to H;
  basic_cancel_ok(consumer_tag) from H to C;
} } }
\end{lstlisting}}
\end{minipage}
\\
\end{tabular}
\caption{Subset of AMQP using a recursive MC in our extended Scribble.}
\label{fig:amqp}
\end{figure}

\section{Related Work, Limitations and Future Work}
\label{sec:related}

This paper presents the first \emph{asynchronous multiparty} session type system with a core construct for \emph{mixed} choice. There are two main lines of related work in the literature: session types with synchronous mixed choices, and MST extensions for bespoke exception-like constructs.

The literature includes several works on \emph{synchronous} mixed choices for binary \cite{DBLP:conf/esop/VasconcelosCAM20, DBLP:journals/corr/abs-2004-01324, DBLP:journals/tcs/CasalMV22} and multiparty~\cite{DBLP:journals/corr/abs-1203-0780, DBLP:conf/esop/JongmansY20,
DBLP:conf/ecoop/JongmansF23, DBLP:journals/corr/abs-2405-08104} session types.
In synchronous settings, mixed choices behave very similarly to regular (\emph{non}-mixed) choice: both can be modelled by an
\emph{atomic} reduction step (e.g., $G_1 \mathop{+} G_2 \mathbin{\rightarrow}
G_i'$ for $i \mathop{\in} \{1, 2\}$ and $G_i \mathbin{\rightarrow} G_i'$).
Reasoning about safety of \emph{asynchronous} mixed choices (MC) and
disciplining the inherent race conditions is a substantially different
endeavour, as shown by our developments through the
design of our MC,
static validation,
runtime mechanisms
and metatheory.
We note the work of \citet{DBLP:conf/coordination/PearsBK23} on asynchronous \emph{binary}
session types with mixed choice, where each choice has a \emph{timing} constraint.
Their system allows choices with a mix of input and
output actions, but they statically enforce that actions in
different directions are never viable at the same point in \emph{time}.
Similar approaches based on timed session types~\cite{DBLP:conf/ecoop/HouLY24} share this limitation.
By contrast, we deal with bona fide asynchronous MC where both communication
directions w.r.t.\ all pairs of participants are concurrently viable.

This paper aims to distill the \emph{essence} of mixed choice in
asynchronous (M)ST, various aspects of which were studied in several areas; e.g.,
interaction exceptions~\cite{DBLP:conf/concur/CarboneHY08} (binary),
interaction handlers~\cite{DBLP:journals/mscs/CapecchiGY16} (with synchronous triggers),
interruptible blocks~\cite{DBLP:journals/fmsd/DemangeonHHNY15} (where default/interrupted blocks share common continuations),
and
internal exceptions~\cite{DBLP:journals/pacmpl/FowlerLMD19} (local failure control, no type construct).
We have shown how MC can express multiparty asynchronous timeouts
and interrupts; exceptions are similar.

A crucial area for real-world applicability of session types 
is support for failure handling, where aspects of
mixed choice arise intrinsically. 
\citet{DBLP:conf/esop/VieringCEHZ18,DBLP:journals/pacmpl/VieringHEZ21}
developed an MST system with try-handle constructs for handling
\emph{partial} failures of a protocol due to (suspected) participant crashes.
In these works a process may be faced with a
choice between outputs in the normal protocol flow mixed with potential inputs
related to failure detection/notification.
The (supposed) failure of a participant rules out all interactions with that
participant thereafter. By contrast, our MC has a
finer-grained notion of commitment that is \emph{per instance} of an MC (not per protocol).
\Cref{sec:evaluation} demonstrated such a failure handling pattern using our MC.
\citet{DBLP:conf/ecoop/BarwellHY023} present an approach to failure handling
in MST that avoids syntactic extensions
(cf.~\cite{DBLP:conf/concur/BettiniCDLDY08long}).
Their system designates a special message label $\mathsf{crash}$ to denote
failure of the ``sending'' participant,  which otherwise behaves as
a regular label in choice constructs.  E.g.,
$\p \mathbin{\rightarrowtriangle} \q : \{ a. \,\GREY{\GT_1}, \mathsf{crash}. \GREY{\GT_2}\}$
specifies that $\q$ awaits either message $a$ from $\p$ or
notification of $\p$'s crash.
As in the related work above, crash events may occur concurrently with the I/O actions of the main protocol flow and their system incorporates some related machinery (e.g., queue cleaning);
however, neither their user-level types nor processes have explicit constructs
for mixed choice.
See \Cref{asec:related} for further notes on the above works.

Communicating systems featuring mixed choice have been studied
outside of session types.
Communicating automata~\cite{10.1145/322374.322380} for instance do not rule out
mixed choices but progress is, in general, undecidable.
\citet{DBLP:conf/icse/LangeNTY18} present a tool that infers
\emph{behavioural types}~\cite{DBLP:journals/csur/HuttelLVCCDMPRT16}
for channel-based, shared-memory concurrency programs in Go.
Their system permits mixed choice for the \CODE{select} construct and assumes 
finitely buffered channels, enabling decidable model
checking (e.g., deadlock-freedom) of the inferred types.
Our work instead supports asynchronous communication over unbounded channels.

Recently, \citet{DBLP:conf/cav/LiSWZ23} presented the first sound \emph{and complete}
projection method for global types generalised with \emph{sender-driven} choice
($\p \mathbin{\rightarrow} \{\q_i:m_i.G_i\}_{I}$).
Their automata-theoretic approach separates local machine
synthesis from implementability checking, but, like \citet{DBLP:conf/concur/MajumdarMSZ21},  do not support mixed choice.
By contrast, our work builds on  classical MST with regular
(non-mixed) choices restricted to directed choice and syntactic projection.
This is because our asynchronous MC is influenced by the other works discussed
above on, e.g., interrupts and failure handling.

\citet{DBLP:journals/jlap/CicconeDP24} presented an MST system for \emph{fair
termination} in a synchronous session $\pi$-calculus without mixed
choice.
Their system guarantees processes (with multiple sessions) will fairly
terminate by combining: a validation that session types always potentially
terminate, a restricted fusing of session initiation and process spawning
\cite{DBLP:journals/jfp/Wadler14,DBLP:journals/mscs/CairesPT16}, a
notion of ranking that limits processes to finite behaviours, a
liveness-preserving subtyping relation, and a fairness assumption on executions
(potential termination leads to actual termination).
There may be connections between our notion of clear termination, that is per
MC, with their notion of (whole) protocol termination; at present, our system
differs in that we allow a session to be unbounded provided each MC individually
satisfies awareness.
\citet{DBLP:conf/ecoop/PadovaniZ25} recently developed fair termination for
asynchronous binary sessions 
(as mentioned in Sec.~\ref{sec:localproperties}).

\emph{Limitations and future work.}
Our system builds on classical MST with regular choices restricted to directed
choice, and syntactic projection and
merge~\cite{DBLP:conf/ecoop/Stutz23,DBLP:journals/pacmpl/ScalasY19}.
This yields a projection that is sound but not complete in the base MST
constructs (directed choice, recursion), let alone with MC.
In future work, we plan to investigate (conservative forms of) mixed
choice in the generalised automata-theoretic setting of
\citet{DBLP:conf/cav/LiSWZ23}.
The challenges include reconciling their language-based approach
with our mechanisms (e.g., stale message purging) and metatheory (e.g.,
operational correspondence and preservation of projection).
Another issue is that their projection yields state machines that are more
general than local types: accepting
(terminating) states may have  outgoing transitions, and choice branches may be
unbalanced across roles.
Such generality must be reconciled with the objective of our
system (and classical MST) that all projected behaviours, including MCs, 
be realisable as fully distributed processes; e.g., we typically aim to rule out
protocols where termination could be non-deterministic (cf.\ Def.~\ref{def:aware}).
\PURPLE{Building on the above, \citet{DBLP:journals/corr/abs-2501-16977}
recently extended projection to a larger class of automata-based global
specifications, but show that the projectability of \emph{mixed} choice in
their more general setting is undecidable.
Their negative result motivates pragmatic approaches to supporting mixed
choice such as in this paper.
}
The automata-based system of \citet{DBLP:conf/cav/LangeY19} may also provide
avenues for lifting directed choice and generalising our protocol validation.

One of our present limitations 
is that our system does not incorporate \emph{delegation}~\cite{HondaYC08,DBLP:conf/concur/BettiniCDLDY08long}.
Considering our MC concepts (e.g., stale message purging and path
identifiers) in a setting with delegation is a topic for future work.

An interesting question is how fair multiparty termination
\cite{DBLP:journals/jlap/CicconeDP24} may be extended to asynchronous mixed
choices.
One direction could be to investigate fair termination for our notion of MC
global/local types and formalise a process-level language.
Unlike the `multithreaded' $\pi$-calculi in the mentioned works, however, the
model of concurrency in our practical Erlang setting is
event-driven~\cite{DBLP:conf/ecoop/HuKPYH10,DBLP:journals/pacmpl/VieringHEZ21}
and does not have a specific primitive for fusing session initiation and
process spawning.
Adapting their notion of static typing of process ranks may be a challenge in
languages such as Erlang.

\section*{Data-Availability Statement}

The source code of our toolchain, examples and RabbitMQ case study is
available online.\footnote{\url{https://github.com/rhu1/scribble-gt-scala/tree/artifact}}
It will be submitted for review as an artifact.

\begin{acks}
This work was partially funded by EPSRC project EP/T014512/1, EP/T014628/1 (STARDUST).
\end{acks}
\bibliographystyle{ACM-Reference-Format}
\bibliography{main}

%
 \pagebreak
 \appendix

 \begin{tabular}{ll}
 \Cref{asec:related}
 & Additional notes on related work (cf.\ \Cref{sec:related}).
 \\
 \Cref{asec:multiple}& Additional notes on handling multiple sessions
 \\
 \Cref{app:committing}
 & Full definitions of committing and non-committing label sets (cf.\ \Cref{sec:committing}).
 \\
 \Cref{asec:progress}
 & Progress of global types: Omitted definitions and proofs (cf. \Cref{sec:globalprogress}).
 \\
 \Cref{asec:localtypes}
 & Local types: Full definitions (cf.\ \Cref{sec:localsemantics,sec:projection}).
 \\
 \Cref{app:correspondence}
 & Operational correspondence: Omitted definitions and proofs (cf.\ \Cref{sec:correspondence}).
 \\
 \Cref{app:orphan}
 & Orphan messages: Omitted definitions and proofs (cf.\ \Cref{sec:localproperties}).
 \end{tabular}

\section{Additional Notes on Related Work}
\label{asec:related}



\paragraph{Mixed choice in \textbf{binary} sessions.}

\citet{DBLP:conf/esop/VasconcelosCAM20} developed a binary session type system
for synchronous sessions with mixed choice.
\citet{DBLP:journals/corr/abs-2004-01324,DBLP:journals/tcs/CasalMV22} further
established a type preservation property for mixed sessions, an embedding of
classical sessions~\cite{Vasconcelos12} into the mixed sessions, and a partial
encoding in the reverse direction.
%
%
All of these works are limited to \emph{binary} sessions with
\emph{synchronous} communication, where mixed choice subsumes non-mixed
choice.
By contrast, mixed choice in (safe) asynchronous sessions gives rise to
(transient) \emph{inconsistencies} between the views of the protocol from
different participants, which are precluded by the power of
synchrony.
%

\citet{DBLP:conf/coordination/PearsBK23} developed a system with binary
session types for mixed choice with \emph{timing} constraints for
\emph{asynchronous} sessions.
Their system syntactically allows two-party choices with a mix of input and
output actions with the (statically enforced) restriction that actions of
different directions are \emph{never} viable at the same point in \emph{time}.
By contrast the entire purpose of our system is to permit and safely deal with
{truly} asynchronous mixed choices where both communication directions
w.r.t.\ all pairs of participants \emph{are} concurrently viable.

In the binary settings of the above works, note that the notion of
\emph{fidelity} is in a sense moot since two-party local types effectively
coincide with their (conceptually) associated global type.
By constrast, multiparty local type projections are, in general,
\emph{partial} w.r.t.\ the global type, necessitating our developments to
establish an operational correspondence between asynchronous global and local
types featuring mixed choice
(cf.\ \Cref{sec:localtypes}).
%

\paragraph{Mixed choice in \textbf{synchronous} multiparty sessions.}

In MST, existing work on mixed choice has been limited to \emph{synchronous}
sessions.
A consequence is that the constructs for mixed choice in all the following
works are \emph{symmetric} in nature, unlike the \emph{asymmetric} construct
we have developed in this paper for safety in asynchronous sessions.

\citet{DBLP:journals/corr/abs-1203-0780} developed a set-theoretic semantic
framework for global types and multiparty sessions based on a trace semantics.
Although their global types include the syntax for $G \vee G'$, the choice
constructs in their local types are further syntactically constrainted to
$T \oplus T'$ or $T + T$ representing the traditional internal-\emph{only}
and external-\emph{only} choices.
Moreover, their trace semantics does not model the intermediary (and
potentially inconsistent) states that may arise from asynchronous
mixed choice, and each trace element denotes an \emph{atomic} interaction
between the sender and receiver, in contrast to the intermediary states
represented by, e.g., $\rightsquigarrowtriangle$ types (e.g.,
\cref{sec:globaltypes}) that expose the inherent race conditions.
%

\citet{DBLP:conf/esop/JongmansY20} studied weak bisimulation between global
and local types featuring the `$+$' operator for choice in a synchronous
system.
%
%
They informally remark on encoding a form of asynchrony by representing
buffered channels explicitly as roles, e.g., encoding an asynchronous
$p \rightarrowtriangle_a q$
interaction as the pair of synchronous interactions
$p \rightarrowtriangle_s b_{pq} . b_{pq} \rightarrow_s q$
where $b_{pq}$ is a role that represents a message buffer from
$p$ to $q$.
However, this does not amount to asynchrony as we model in this paper because
a mixed choice becomes
$(p \rightarrowtriangle_s b_{pq} . \GREY{G_1'})
+
(q \rightarrowtriangle_s b_{qp} . \GREY{G_2'})$
where by their (synchronous) semantics either output action by $p$ or $q$
instantly commits \emph{all} roles to that branch and precludes the other
action from occurring, thus precluding any intermediary states and the
problematic race conditions.
%
%

\citet{DBLP:conf/ecoop/JongmansF23} present rules for extending a similar
simulation approach to processes featuring the same formulation of `$+$' for
(mixed) choice as~\citet{DBLP:conf/esop/JongmansY20}.
Their system is again synchronous with the same operational semantics; all the
preceding comparison points with~\citet{DBLP:conf/esop/JongmansY20} apply
here.
%

\citet{DBLP:journals/corr/abs-2405-08104} present a typing system between
local types and processes with mixed choice.
Their system is synchronous and their work does not include global types nor
operational correspondence between global and local types as in our system.
%
%


We note, however, that our present paper shares some simplifications with the
above works.
For example, our system focuses on modelling the semantics of a \emph{single}
session; this is the same as in all of the MST works mentioned above,
%
Consequently, none of these works nor our present paper has studied
mixed choice for multiparty sessions in the presence of session
\emph{delegation}~\cite{HondaYC08}, where a session message carries another
session \emph{channel} as a payload.
%
%
%

\paragraph{Exception-like communications patterns in session types.}

As discussed in~\Cref{sec:intro}, mixed choice is at the heart of many crucial
communication patterns in real-world distributed systems.
One motivation for our paper is to distil the essence of mixed choice in
asynchronous MST, various aspects of which have been studied in several
prominent areas.
%
%

\smallskip
\textbf{Exceptions and interrupts.}
\citet{DBLP:conf/concur/CarboneHY08} developed a \emph{binary} session type
system for processes with \emph{interaction exception} handlers.
Their \emph{try-catch} type $\alpha\{\![\beta]\!\}$ specifies a normal
protocol $\alpha$, and an exceptional protocol $\beta$
triggered by throwing an exception.
This allows situations where a process may be faced with, say, an input in its
normal protocol but an output in the triggered exception handler, and dually
for its binary peer.
Their approach introduces a notion of \emph{meta reduction}
$P \mathop{\Searrow} P'$ to deal with propagating exceptions, nested exceptions
and queue cleaning, which is related to our garbage collection.
%

\citet{DBLP:journals/mscs/CapecchiGY16} extend MST with
\emph{interaction handlers} that require reasoning about concurrency of
exceptions raised by multiple peers.
Their approach introduces an \emph{exception environment} $\Sigma$ into the reduction
relation $\Sigma \vdash P \rightarrow \Sigma' \vdash P'$,
where $\Sigma$ records the raised exceptions.
It is notable, however, that $\Sigma$ is operated on \emph{synchronously} and
\emph{atomically} by all multiparty participants, i.e., raising an exception
notifies all parties instantly.
%

\citet{DBLP:journals/fmsd/DemangeonHHNY15} proposed types for
\emph{interruptible blocks} in multiparty sessions.
In their type
$\{\!| G |\!\} \langle \ell\text{ by }r\rangle; G'$,
the $G$ can be
interrupted
by $r$ at any point.
Whether or not $G$ is interrupted it is followed by $G'$, unlike our present
paper that supports safe mixed choice between \emph{different} continuations.
%
%
Their restriction permits a \emph{synchronous} operational semantics for
\emph{global} types (the \emph{derivative} relation) with \emph{atomic} steps
for both interactions and the signalling of interrupts.
%

\citet{DBLP:journals/pacmpl/FowlerLMD19} develop a functional language with
binary session types and primitives for raising and handling exceptions.
Their system focuses on exceptions as a local control flow feature
\emph{external} to their session types, i.e., there is no \emph{type}
construct for exceptions, unlike our mixed choice in this paper or all the
other works discussed above.
%

\citet{DBLP:conf/esop/BrunD23,DBLP:conf/forte/BrunD24} recently developed
timeout branches for a low-level MST where messages can be arbitrarily
reordered and lost, which is a radically different communication model than in
our paper and the other works mentioned above.
Consequently, their notion of safety is weaker than in standard MST: they
enforce that timeout branches are always defined, and ensure that \emph{if} a
message is received then it has the expected payload.

\smallskip
\textbf{Failure handling.}
%
A crucial area towards the application of session types to real-world
distributed systems is support for failure handling, where aspects of
mixed choice arise naturally.

\citet{DBLP:conf/esop/VieringCEHZ18} develop an asynchronous MST system with a
\emph{try-handle} construct for handling \emph{partial} failures of a
protocol due to participant crashes.
They target distributed system with central \emph{coordinators} for
\emph{reliable} failure detection.
By contrast, \citet{DBLP:journals/pacmpl/VieringHEZ21} developed an
asynchronous MST system that extends
\emph{subsessions}~\cite{DBLP:conf/concur/DemangeonH12} with handlers for
\emph{peer}-based \emph{unreliable} failure detection.
These works both treat patterns where a process may be faced with a
choice between outputs in the normal protocol flow mixed with potential inputs
related to failure detection/notification.
Participant failure handling is an instance of mixed choice where
the (supposed) failure of a participant rules out all interactions with that
participant henceforth, unlike our mixed choice in this paper where the
fine-grained notion of commitment is \emph{per instance} of a mixed choice (a
participant failure could be considered a coarse-grained commitment to
\emph{all} active and future instances of a particular mixed choice).

\paragraph{Practical MST frameworks based on API generation from global protocols.}

Our paper shares the motivation of a range of works on developing MST
frameworks for practical applications, e.g.,
\cite{10.1007/978-3-642-40787-1_8,DBLP:journals/pacmpl/CastroHJNY19,DBLP:journals/pacmpl/00020HNY20}.
Like the mentioned works, our overall framework has a two-stage design: (i)
formal metatheory of global and local types (eschewing a process-level
language) for validating protocol specifications, and (ii) a practical
methodology for implementing MST-based sessions.
Also like the mentioned works and others based on
Scribble~\cite{DBLP:conf/fase/HuY16}, our practical methodology provides a
toolchain for formally-grounded protocol validation and projection (based on
the theory), and correct-by-construction code generation and/or runtime
mechanisms for processes.
Some works offload aspects of the protocol validation to supplementary methods
outside the core syntactic type system, including model
checkers~\cite{DBLP:journals/pacmpl/ScalasY19} and SMT
solvers~\cite{DBLP:conf/cc/NeykovaHYA18,DBLP:journals/pacmpl/CastroHJNY19,DBLP:journals/pacmpl/00020HNY20}.

Regarding the code generation and/or runtime mechanisms aspect,
\citet{10.1007/978-3-642-40787-1_8} developed a tool for generating
protocol-specific runtime monitors for sessions in Python;
\citet{DBLP:journals/pacmpl/CastroHJNY19} employed API generation for Go backed
up by Z3 for solving indexing constraints in role-parametric types;
and \citet{DBLP:journals/pacmpl/00020HNY20} developed types generation for static
refinement typing in F$^\star$.
In this paper, we develop code generation for protocol- and role-specific
\CODE{gen\_statem} callbacks and runtime mechanisms for automated stale message
purging.

As discussed, a key motivation for this paper is to develop a core construct
for MC to capture and unify the essence of practical constructs such as
exceptions, timeouts and failure handling.
Ultimately, a goal for the research community is to develop a complete system
that can integrate the full breadth of features needed by many
practical applications in addition to MC patterns, such as parameterisation of
protocols~\cite{DBLP:journals/corr/abs-1208-6483,DBLP:journals/pacmpl/CastroHJNY19}
and dynamic
topologies~\cite{DBLP:conf/fase/HuY17,DBLP:conf/ecoop/Castro-PerezY23}.

Other MST works that focus on a formal theory of global-local types without
a process-level language include works on richer global types and completeness
of projection~\cite{DBLP:conf/concur/MajumdarMSZ21,DBLP:conf/cav/LiSWZ23}, and
the semantics of multiparty sessions~\cite{DBLP:journals/corr/abs-1203-0780}.

\section{Handling Multiple Sessions}
\label{asec:multiple}
The key properties in our metatheory are the progress of valid global types,
and the operational correspondence (fidelity) between a valid global type and
its distributed local type projections; the latter entails a preservation
property for projection, and transfers global progress to the local
level.
These results pertain formally to a single multiparty session.
As such, our toolchain is primarily designed at present to generate a set of
Erlang modules from one source protocol and ensure the aforementioned
(single-session) properties.
Nevertheless, a programmer may (e.g.) use the modules of multiple, separate
protocols to implement a multi-session program.
We can informally outline the pragmatic conditions for (multi-session) progress
in our practical framework.
%
%
There are two main facets.

\paragraph{Local computations and inter-session dependencies.}
The first facet concerns the preconditions (i.e., assumptions) of our practical
framework regarding \emph{local} computations and events.
%
%
\Cref{sec:overview} described the usage contract of our toolchain regarding the
generated Erlang modules: the user must not modify the generated protocol- and
role-specific module called the Role Module (RM), and can only modify the
generated template callback module called the Callback Module (CM) according to
the pre-generated structures of the EFSM and callback functions.
Two key subconditions related to the latter are that the user must ensure:

\begin{itemize}[leftmargin=*]
\item
Any internal or local computation event required to fire a pending callback 
will eventually occur.
\item
Beyond its specific session I/O actions, every fired callback performs only
non-blocking actions (e.g., local computations), and eventually terminates
successfully to cede control back to the Erlang runtime (in order to fire the
next callback).
\end{itemize}

\noindent
Note, these conditions are independent of whether a program has a single
session or multiple sessions.
In short, it is the user's responsibility to ensure the correctness of local
computations, i.e., all Erlang code beyond session I/O and callback actions,
including arithmetic/logical expressions, general data processing and library
calls.
%
%
%

To implement a multi-session program, the user generates the (separate) RM and
CM modules for the relevant roles of each protocol, and implements the
necessary callbacks of each CM.
%
%
Inter-session dependencies that are \emph{internal} to a local program can be
expressed in various ways according to the dependency.
To illustrate, take the two protocols in \cref{fig:mini-protocols}.
Consider implementing one \CODE{Alice} program to participate in two sessions,
one for each protocol.
%
%
\Cref{fig:erlang-multi} gives two rudimentary examples of cross-session
dependencies (we mention alternative and safer approaches below under ``Future
work'').

\begin{itemize}[leftmargin=*]
\item
(Left) The start of one session is chained to the completion of the other.
Assume an \CODE{Alice} program that first starts the runtime handling for only
the \CODE{Heartbeat} session.
Eventually, the shown \CODE{s7} callback is fired when \CODE{Alice} is ready to
fire \CODE{pong} (some time after the preceding handling of \CODE{ping} has
finished), leading to the (local, non-blocking) \CODE{start\_link} action by
\CODE{Alice} to start handling the \CODE{Request} session.
\CODE{Bob} may asynchronously send \CODE{request} before or after that point;
even if before, \CODE{Alice} consumes the \CODE{request} message only after
\CODE{ping} and \CODE{s7}.


\item
(Right) The handling of an event is postponed until some local condition is
fulfilled.
Assume an \CODE{Alice} program that starts the runtime handling for both
sessions.
Eventually, the shown \CODE{s5} callback is fired when \CODE{Alice} receives
\CODE{request}.
If the local condition \CODE{LocalCond} is not yet fulfilled, \CODE{Alice}
postpones the handling of the \CODE{request} message by deferring the active
handler (in this example for 50ms).
%
\CODE{LocalCond} may be set by some arbitrary local computation -- or
a local event triggered by (e.g.) the \CODE{ping} handler in the other
session -- either way, our framework does rely on the programmer to ensure the
correctness of internal computations/events, as stated earlier.
%



%
\end{itemize}

\begin{figure}[t]
\begin{tabular}{ll}
\begin{minipage}{0.5\textwidth}
{\begin{lstlisting}[language=Scribble, numbers=left, basicstyle=\ttfamily\scriptsize]
global protocol Heartbeat(role Alice, role Carol){
    ping() from Carol to Alice;
    pong() from Alice to Carol;
}
\end{lstlisting}}
\end{minipage}
&
\begin{minipage}{0.46\textwidth}
{\begin{lstlisting}[language=Scribble, numbers=left, basicstyle=\ttfamily\scriptsize]
global protocol Request(role Alice, role Bob) {
    request()  from Bob to Alice;
    response() from Alice to Bob;
}
\end{lstlisting}}
\end{minipage}
\\
\end{tabular}
\caption{Two simple protocols for an example multi-session program.}
\label{fig:mini-protocols}
\end{figure}

\begin{figure}[t]
\begin{tabular}{ll}
\begin{minipage}{0.40\textwidth}
{
\begin{lstlisting}[numbers=left,language=Erlang, numbersep=3pt, basicstyle=\ttfamily\scriptsize]
s7(internal, {pong}, Data) ->
  gen_alice:send_s7_pong(CarolPid, Data),
  _ = gen_alice2:start_link(alice2, []);
  {stop, normal, Data}.
$~$
$~$
$~$
\end{lstlisting}
}
\end{minipage}
&
\begin{minipage}{0.6\textwidth}
{
\begin{lstlisting}[numbers=left,language=Erlang, numbersep=3pt, basicstyle=\ttfamily\scriptsize]
s5(cast, {BobPid, {request}}, Data) ->
  case LocalCond of
    true ->
      {next_state, s7, Data, [{next_event, internal, {response}}]};
    _ ->
       defer({BobPid, {request}}, 50), {keep_state, Data}
  end.
\end{lstlisting}}
\end{minipage}
\\
\end{tabular}
\caption{Basic inter-session dependencies internal to a local Erlang program: trigger vs.\ defer}
\label{fig:erlang-multi}
\end{figure}

\paragraph{Event-driven sessions and progress.}
The second main facet is that our Erlang sessions are \emph{event-driven} (ED),
rather than the typical `multithreaded' model of $\pi$-calculi.
The key points (in addition to those stated earlier) are that:

\begin{itemize}[leftmargin=*]
\item
The Erlang runtime fires a callback and activates the relevant session only when the
expected event has occurred and is ready for consumption.  Otherwise sessions
are passively suspended, meaning that no session ever actively blocks another
session's callback from firing.
%

\item
Interleaving of multiple sessions is enacted semantically by the Erlang
runtime interleaving the firing of callbacks on different sessions, one by one,
as their events occur.
That is as opposed to the syntactic interleaving of the typical I/O prefixes
within a $\pi$-calculus process, where a prefix on one session can actively
block a prefix on another session from being executed (which could, e.g., cause
a deadlock cycle).
\end{itemize}
%

\noindent
The above concepts are found in many practical systems (e.g., Erlang, Akka,
Java NIO, etc), and have been previously formalised in the setting of
(linearly-typed) ED session
$\pi$-calculi~\cite{DBLP:conf/ecoop/HuKPYH10,DBLP:journals/mscs/KouzapasYHH16,DBLP:journals/pacmpl/VieringHEZ21}.
A global progress property related to the above points was formally established in
the specific setting of \citet{DBLP:journals/pacmpl/VieringHEZ21}.

Altogether, (multi-session) progress in our Erlang programs can be understood by
considering the combination of these points: (1) our formal result that every
distributed MC protocol individually satisfies progress
(\Cref{cor:gprog,thm:oc}); (2) the characteristics of ED sessions;
(3) our assumptions on local computations in the practical programs; and (4)
inter-session dependencies within a local program expressed by local
computations/events.

Some multi-session programs may involve only (1)-(3), e.g., as in some
client-server applications that spawn a fully independent session per client;
in such cases, the ED framework can effectively transfer global type progress to
the multi-session process level directly.
Some cases of (4) can also be limited to ensure progress, such as when sessions
are strictly chained to spawn and run consecutively, or spawned in a restricted
tree topology of parent-child subsessions (as found in systems based on linear
logic~\cite{DBLP:journals/jfp/Wadler14,DBLP:journals/mscs/CairesPT16}).
As mentioned, however, our practical framework does rely on the user to ensure
the correctness of internal computations/events, including more advanced
inter-session dependencies, in full generality.


\paragraph{Future work.}
In future work, we will consider extending our toolchain to support the
specification of inter-\emph{protocol} dependencies via composition constraints
in the style of \citet{DBLP:journals/programming/BocchiOV23}.
This will allow ad hoc inter-session dependencies at the process level to
instead be expressed at the \emph{protocol level}.
The toolchain can compose such protocols into a single EFSM and generate APIs
that safely embed the dependencies correctly by construction, rather than
relying on the programmer to express them via local computations as illustrated
above.
%

We also plan to investigate formalising the event-driven (ED) concurrency of
Erlang's \CODE{gen\_statem} and its correctness properties at the process
level; however, it should be noted that Erlang is dynamically typed by default.
By contrast, \citet{DBLP:journals/jlap/CicconeDP24} recently proved a formal
multi-session progress property for a `multithreaded' session $\pi$-calculus
system based on \emph{fair termination} (see \Cref{sec:related}).
There may be connections between their notion of fair termination and our
practical condition that callbacks and local computations should eventually
terminate: their work may give directions for formalising ED callback
termination.
\citet{DBLP:journals/pacmpl/VieringHEZ21} established a progress property for
an ED $\pi$-calculus involving multiple sessions (their principles for ED
progress are similar to those we described in the earlier paragraph).
Earlier
works~\cite{DBLP:conf/coordination/PadovaniVV14,DBLP:journals/mscs/CoppoDYP16}
developed multi-session progress properties using additional
analysis mechanisms on top of session typing.

 \newcommand{\comSetGenD}[2]{\Downarrow^c\hspace{-1mm}(#1, {\color{gray} #2})}

\section{Committing and non-committing sets - full definitions}
\label{app:committing}

The auxiliary function for committing sets is defined as follows:
\[
\begin{array}{lll}
\comSet{\GactShort{p}{q}{\St{}}}{C}{c}  =   \begin{cases}
    \labG{\St{}} ~\cup  \displaystyle \bigcup_{\GT \in \tyG{\St{}}} \hspace{-0.4cm} \comSet{\GT}{C\cup\{\role q\}}{c}  & \quad  \p \in C  { ~\land~\q\not \in C}
    \\[0.6cm]
     \displaystyle \bigcup_{\GT\in \tyG{\St{}}}\hspace{-0.4cm}\comSet{\GT}{C}{c}  & \quad \textit{otherwise}
    \end{cases}
\\[1cm]
 \comSet{\GtoDef{\GT_1}{c'}{\lset}{\rset}{\GT_2}}{C}{c}  ~=    ~ \begin{cases}
\comSet{\GT_1}{C}{c} ~~\cup  \comSet{\GT_2}{C}{c} & c\neq c'
\hspace{1.8cm}  \comSet{\mu \mathtt t .\GT}{C}{c}=\comSet{\GT}{C}{c}
\\
\emptyset & \textit{otherwise}\hspace{1.3cm}    \comSet{\tend}{C}{c}= \comSet{\mathtt t }{C}{c} =\emptyset
 \end{cases}
\end{array}
\]

The dual auxiliary function for non-committing sets is defined as follows:

\[
\begin{array}{lll}
\newcomSet{\GactShort{p}{q}{\St{}}}{C}{c}  =   \begin{cases}
     \displaystyle \bigcup_{\GT \in \tyG{\St{}}} \hspace{-0.4cm} \newcomSet{\GT}{C\cup\{\role q\}}{c}  & \quad  \p \in C  { ~\land~\q\not \in C}
    \\[0.6cm]
     \labG{\St{}} ~\cup \displaystyle \bigcup_{\GT\in \tyG{\St{}}}\hspace{-0.4cm}\newcomSet{\GT}{C}{c}  & \quad \textit{otherwise}
    \end{cases}
\\[1cm]
 \newcomSet{\GtoDef{\GT_1}{c'}{\lset}{\rset}{\GT_2}}{C}{c}  ~=    ~ \begin{cases}
     \newcomSet{\GT_1}{C}{c} ~~\cup  \newcomSet{\GT_2}{C}{c}  & c\neq c' \hspace{1.7cm} \newcomSet{\mu \mathtt t .\GT}{C}{c}=\newcomSet{\GT}{C}{c}\\
     \emptyset & \textit{otherwise} \hspace{1.3cm}   \newcomSet{\tend}{C}{c}= \newcomSet{\mathtt t }{C}{c} = \emptyset
 \end{cases}
\end{array}
\]

\subsection{Extension using observers committing sets}
\label{asec:committing2}

Let $\GT$ be an initial type and $\GactShort{q}{p}{\, \St{1}}\mix^{\color{magenta}c}\,\GactShort{p}{q}{\, \St{2}}$ in $\unfold{\GT}$ a MC definition in $\unfold{\GT}$.
Fix a special set of labels $\GC$ of committing actions for the LHS of MC $c$ by its observer $\p$. We define the $\GC$-committing set of $c$ as
\[\begin{array}{lll}
\labG{\St{1}\, \cap\, \GC} ~\cup~ \labG{\St{2}}
\\ \quad \cup ~
\bigcup_{G\in \tyG{\St{1}\cap \GC}}\comSetGenD{G}{C\cup \{\role p\}}
\\ \quad    \cup ~
\bigcup_{G\in \tyG{\St{1}\setminus \GC}} \comSetGenD{G}{C}
\\ \quad \cup~
\bigcup_{G\in \tyG{\St{2}}}
 \comSetGenD{G}{C\cup\{\role p, \role q\}}
 \end{array}\]

where $\comSetGenD{\GT}{C}$ is defined as:
\[
\begin{array}{lll}
%
\small
 %
 %
  \comSetGenD{\GactShort{p}{q}{\{a_i.\GT_i\}_{i\in I}}}{C}  = \\  \qquad\qquad\begin{cases}
    \bigcup_{i\in I} \{a_i\} ~\cup \comSetGenD{G_i}{C\cup\{\role q\}}  & \p \in C    \land \q\not\in C
    \\
(\bigcup_{i\in I} \{a_i\}\cap \GC ) \cup
\{ \comSetGenD{G_i}{C\cup\{\role q\}}  ~|~ a_i \in \GC\}
\cup
\{ \comSetGenD{G_i}{C} ~|~ a_i \not\in \GC \} & otw
    \end{cases}   \\[0.6cm]
     %
    %
     %
     %
  \comSetGenD{\mu \mathtt t .\GT}{C} = \comSetGenD{\GT}{C} \\[0.4cm]
 \comSetGenD{\mathtt t}{C} = \comSetGenD{\tend}{C}=\emptyset
    %
\end{array}
\]

The definition of eventual dependency based on $\GC$ is as follows: $\role q$ \emph{eventually depends on}
$\p$ in $\GT$ (wrt $\GC$), written $\wdep{\role p}{\role q}{\GT}$, if  $\GT\trans{\vec{\ell}}\GT'$ implies $\exists a\in\GC. ~\p\q!a \in \vec{\ell}~\lor ~ \GT' \myreaches{}\trans{\p\q!a}$.

 \section{Progress of Global Types: extended definitions and proofs}
\label{asec:globalprogress}

Progress is proved via a number of properties. The principal ones have been discussed in the main text. Here, we give full definitions of all properties needed in the proof of progress.
First, in Section~\ref{asec:wellnested} we discuss well-nestedness, an auxiliary property on the structure of MC, preserved by transition.
In Section~\ref{asec:awareness} we prove preservation of awareness and in Section~\ref{asec:balance} we discuss preservation of {balance}. We then elaborate on their role to yield the progress property via an invariant property called \emph{coherence} in Section~\ref{asec:coherence}.
We finally wrap up the progress property in Section~\ref{asec:progress}.

\begin{remark}[Instance annotations]To facilitate reasoning on the instances of MC on a global type we annotate MC instance with an instance identifier $n$.
This annotation will be used only to prove global properties. The properties non correspondence rely on the non-annotated semantics.
\end{remark}

\subsection{Global types with annotated MC}
The syntax of global types with annotated MC is defined by the grammar below:
\[\begin{array}{lll}
  \GT  ::=  \GactShort{p}{q}{\St{}}
     \mid  \GmsgShort{p}{q}{k}{\St{}}
     \mid  \mu \rv . \GT
     \mid  \rv
    \mid \tend
    \mid \GtoDef{\GactShort{q}{p}{\St{1}}}{\tdef{c}{p}}{\lset}{\rset}{\GactShort{p}{q}{\St{2}}}
     \mid  \GtoP{\GT_1}{ \inst:\role p}{\lset}{\rset}{\GT_2}
\end{array}
\]

We annotate MC definitions with a unique MC name $c$.
Active MC are annotated with $\inst = c,n$, which are pairs of MC names and counters $n\in\ndom$. Since a MC may be in the body of a recursive type, many instances of that MC may be generated upon recursive unfolding. We use $\inst$ to identify these instances. For readability, we omit annotations $\role p$, $c,n$, $\inst$, $\lset$, or $\rset$ when not needed.

The semantics of global types is defined as a Labelled Transition System over terms $\GT$ with labels
\[ \ell :=  \lsnd{p}{q}{a} \mid \lrcv{p}{q}{a} \mid \nu c,n\]

The semantics uses a mapping $\Theta$ from global types and MC names to $\ndom$, that gives the greatest counter among all active MC $c$ in $\GT$
\[\Theta(\GT, c) = \mathtt{max}(\{0\} \cup \{n \, |\, \text{$\GT$ has active MC with identifier ${c},{n}$}\}) \]
$\Theta$ is used to ensure that active MC are uniquely identified by  $c,n$.

\begin{figure}
\small
\[
\begin{array}{cc}
\begin{array}{cc}
	%
    %
	 \gspairf{\Gact} \trans{\lsnd{p}{q}{a_k}} \Gmsg~~~  (k\in I)~~
    \mathtt{[Snd]}\\[0.3cm]
	\gspairf   \Gmsg \trans{\lrcv{p}{q}{a_k}} \GT_k  ~~~\mathtt{[Rcv]}~~~ \qquad
	%
	   \inferrule{\forall i \in I \qquad \gspairf\GT_i \trans{\tlab}\GT'_i \qquad \role p, \role q \not \in \sbj{\tlab} 
    }{\gspairf\Gact \trans{\tlab} \GactP{p}{q}{a_i}{\GT'_i}{i\in I}}
    ~~\mathtt{[Cont1]}\\[0.7cm]
     \inferrule{\gspairf{\GT[\mu \rv. \GT / \rv]}\trans{\tlab}{\GT'} }{\gspairf\mu \rv . \GT \trans{\tlab} {\GT' }}~~\mathtt{[Rec]}
     \qquad~~
    \inferrule{ \gspairf\GT_k \trans{\tlab}{\GT'_k} \qquad \role q \not \in \sbj{\tlab} \qquad \forall i\in I\setminus k.\, \GT_i=\GT_i'
    }{\gspairf\Gmsg \trans{\tlab} \GmsgP{p}{q}{k}{a_i}{\GT'_i}{i\in I}}
    ~~
      \mathtt{[Cont2]}\\[0.7cm]
    \hline \\
\end{array}\\
%
%
%
%
%
\begin{array}{rr}
\inferrule{ \Theta(\GtoDef{  \GT_1}{\tdef{c}{p}}{\lset}{\rset}{  \GT_2},c)<  n
}
	{ \gspairf{\GtoDef{  \GT_1}{\tdef{c}{p}}{\lset}{\rset}{  \GT_2}}
\trans{\nu c, n} {\GtoP{  \GT_1}{\tact{c}{n}{p}}{\emptyset}{\emptyset}{  \GT_2}}} ~~\mathtt{[Inst]}
%
 %
 &
 \inferrule{\gspairf \GT_l \trans{\nu c, n} {\GT_l'} \quad \RolesG{\underline{\GT}} \neq \rset \quad \Theta(\GT,c)< n}
	{ \gspairf{\GT = \GtoP{\GT_l}{}{\lset}{\rset}{\GT_r} }  \trans{\nu c, n} {\GtoP{\GT'_l}{}{\lset}{\rset}{\GT_r}}}
 ~~ \mathtt{[Ctx1]}\\[0.9cm]
 \inferrule{\gspairf\GT_r  \trans{\nu c, n} {\GT_r'}\quad \lset = \emptyset \quad \Theta(\GT,c)< n}
	{ \gspairf{\GT = \GtoP{\GT_l}{~}{\lset}{\rset}{\GT_r} }  \trans{\nu c, n} {\GtoP{\GT_l}{~}{\lset}{\rset}{\GT_r'}}}
 ~~ \mathtt{[Ctx2]}
%
&
	\inferrule{\gspairf \GT_l \trans{\lsnd{p}{q}{a}} {\GT_l'}
 \quad
 \color{olive}\role p \not \in \rset }
	{ \gspairf{\GtoP{\GT_l}{}{\lset}{\rset}{\GT_r}}\trans{\lsnd{p}{q}{a}} {\GtoP{\GT'_l}{}{\lset}{\rset}{\GT_r}}
    }
    ~~ \mathtt{[LSnd]}\\[0.9cm]
 \inferrule{\gspairf\GT_l \trans{\lrcv{p}{q}{a}} {\GT_l'} \qquad
 \color{olive}\role q \not \in \rset \qquad a\in \comSetG{
 \underline{\GT}}
 }
	{ \gspairf{\GtoP{\GT_l}{}{\lset}{\rset}{\GT_r}} \trans{\lrcv{p}{q}{a}} {\GtoP{\GT'_l}{}{\lset\cup\{\role q\}}{\rset}{\GT_r}}}~~ \mathtt{[LRcv1]}
 &
	\inferrule{\gspairf\GT_l \trans{\lrcv{p}{q}{a}} {\GT_l'}\qquad
 \color{olive}\role q \not \in \rset \qquad  a\not \in \comSetG{
\underline{\GT}}}
	{ \gspairf{\GtoP{\GT_l}{}{\lset}{\rset}{\GT_r}} \trans{\lrcv{p}{q}{a}} {\GtoP{\GT'_l}{}{\lset}{\rset}{\GT_r}}}~~\mathtt{[LRcv2]}\\[0.9cm]
	\inferrule{\gspairf\GT_r \trans{\lsnd{p}{q}{a}} {\GT_r'} \qquad
 \color{olive}\role p \not \in \lset }
	{ \gspairf\GtoP{\GT_l}{}{\lset}{\rset}{\GT_r} \trans{\lsnd{p}{q}{a}} {\GtoP{\GT_l}{}{\lset}{\rset\cup\{\role p\}}{\GT'_r}}}~~ \mathtt{[RSnd]}
&
\inferrule{\gspairf\GT_r \trans{\lrcv{p}{q}{a}} {\GT_r'}\quad \color{olive}\role q \not \in \lset }
  { \gspairf\GtoP{\GT_l}{}{\lset}{\rset}{\GT_r} \trans{\lrcv{p}{q}{a}} {\GtoP{\GT_l}{}{\lset}{\rset\cup\{\role q\}}{\GT'_r}}}~~  \mathtt{[RRcv]}
\end{array}
\end{array}
\]
\caption{Global semantics: standard rules (top) and new rules for MC (bottom)}
\label{fig:gsemannotated}
\end{figure}

The rules for annotated semantics are given in~\Cref{fig:gsemannotated}. We comment on the differences with the rules given in the paper (in~\Cref{fig:gsem}). \glab{Inst} instantiates a MC uning $\Theta$. \glab{Ctx1} and \glab{Ctx2} handle nested instantiations, using  $\Theta$ to ensure that $n$ is strictly greater than any other counter for $c$, and label $\nu c,n$ to propagate this requirement across the derivation tree (to ensure uniqueness). The other rules are unchanged with respect to~\Cref{fig:gsem}.

\subsection{Unique instances}
\label{asec:ui}

\begin{restatable}[Unique instances]{proposition}{UIinvariant}
\label{UIinvariant}
If $\GT_1 \blacktriangleright^{c,n}{\GT_2}$ and
${\GT_1'}\blacktriangleright^{c,m}{\GT_2'}$ are distinguished subterms of a reachable $\GT$ then $n\neq m$.
\end{restatable}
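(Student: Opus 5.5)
Proof plan. We argue by induction on the length of the transition sequence $\GT_0 \trans{\vec\ell} \GT$ from an initial $\GT_0$. We prove the stronger invariant $\mathcal{I}(\GT)$: all active MC subterms of $\GT$ carry pairwise distinct identifiers $c,n$ (distinct positions with the same $c$ have distinct $n$). The stated property is exactly this invariant.

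\emph{Base case.} An initial global type has no active MC, so $\mathcal{I}$ holds vacuously.

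\emph{Inductive step.} Assume $\mathcal{I}(\GT)$ and $\GT \trans{\ell} \GT'$; we show $\mathcal{I}(\GT')$ by induction on the derivation of the transition, splitting on the label.

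\begin{itemize}
\item \emph{Communication labels} ($\lsnd{p}{q}{a}$ or $\lrcv{p}{q}{a}$). No new active MC is created, except possibly through $\mathtt{[Rec]}$. Since active MC do not occur under $\mu$ in reachable types (only MC definitions do), unfolding copies only MC definitions. The context rules only rewrite subterms in place and change the annotations $\lset,\rset$. Hence the multiset of active identifiers in $\GT'$ is contained in that of $\GT$. The one subtlety is $\mathtt{[Cont1]}$: it steps every branch $\GT_i$ in parallel. Active MC in different branches are distinguished subterms of $\GT$ and therefore already carry distinct identifiers, so distinctness is inherited.

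\item \emph{Instantiation label} $\nu c,n$. Exactly one MC definition becomes active with identifier $c,n$, created by $\mathtt{[Inst]}$ and propagated through $\mathtt{[Ctx1]}$, $\mathtt{[Ctx2]}$ and the standard context rules. The key lemma, proved by induction on the derivation, is the following: if $\GT \trans{\nu c,n} \GT'$, then $n > m$ for every active MC $c,m$ occurring in $\GT$. At each rule enforcing $\Theta(\cdot,c) < n$, this covers all active instances of $c$ in the current subterm. The constraint is also available at the outer layers where the label passes through $\mathtt{[Ctx1]}$ or $\mathtt{[Ctx2]}$, because the same $n$ is carried in the label.
\end{itemize}

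\emph{Main obstacle.} The difficulty lies in the standard rules that do not check $\Theta$: $\mathtt{[Cont1]}$, $\mathtt{[Cont2]}$ and $\mathtt{[Rec]}$. When the label passes through $\mathtt{[Cont1]}$, the same $\nu c,n$ is performed in every branch $\GT_i$. This creates several new instances $c,n$ at distinguished positions, which would violate the statement as literally read.

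\begin{itemize}
\item \emph{First attempt.} Check whether $\mathtt{[Cont1]}$ actually applies to $\nu$ labels. Its side condition $\p,\q \notin \sbj{\ell}$ is vacuous for $\nu$, since $\sbj{\nu} = \emptyset$. So it does apply.
\item \emph{Fallback.} Refine the notion of ``distinguished'' to positions not lying in sibling branches of a common choice $\GactShort{p}{q}{\St{}}$. Such copies can never coexist in a single execution, because exactly one branch is selected by $\mathtt{[Snd]}$, which discards the others. The invariant then states uniqueness along ``live'' paths, and the argument above goes through: only one branch survives each $\mathtt{[Snd]}$ step.
\item \emph{Gaps in $\Theta$.} Where a side condition on $\Theta$ is missing, for example an active MC sitting in a sibling continuation of $\mathtt{[Cont2]}$, we instead use that $\Theta$ is computed over all of $\GT$. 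Since the label $n$ is fixed globally along the derivation, we argue that the root-level premise bounds every existing instance. If no rule at the root checks $\Theta$, we strengthen the inductive hypothesis to carry the global maximum.
\end{itemize}
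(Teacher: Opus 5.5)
Your diagnosis of $\mathtt{[Cont1]}$ is correct. The proposition is false as stated, and this is a flaw in the paper, not in your reasoning. The paper's proof uses the same induction on the transition derivation, with $\mathtt{[Inst]}$ as base case. It fails exactly where it says the $\mathtt{[Cont1]}$ case ``holds directly by induction'': induction on each premise gives uniqueness inside each branch, never across branches.

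Here is a concrete witness. Let $\GT_0=\mu\mathtt t.\,(\GactShort{q}{p}{\{a.\,\GactShort{p}{r}{\{x.\,\mathtt t,\ y.\,\mathtt t\}}\}}\mixi{c}\GactShort{p}{q}{\{b.\,\mathtt t\}})$, which is initial and well-formed.
\begin{itemize}
\item $\GT_0\trans{\nu c,1}$ by $\mathtt{[Rec]}$ and $\mathtt{[Inst]}$.
\item Next, $\nu c,2$ is derivable by $\mathtt{[Ctx1]}$ (premise $\Theta=1<2$) over two $\mathtt{[Cont1]}$ steps. The inner one has as premises the instantiations of the copy of $\GT_0$ in branch $x$ and of the copy in branch $y$. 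Both become active MCs annotated $c,2$, and no premise compares them.
\item This survives even the paper's literal definition of ``distinguished'', which, read literally, separates only syntactically different subterms. A further $\nu c,3$, realised via $\mathtt{[Ctx1]}$ inside the $x$-copy and via $\mathtt{[Ctx2]}$ inside the $y$-copy, makes the two $c,2$ terms differ.
\end{itemize}
So a refinement like your fallback is unavoidable. Note that the last sentence of your first bullet (sibling actives ``already carry distinct identifiers'') is precisely what fails.

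The fallback needs two repairs.

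\textbf{First repair.} $\mathtt{[Snd]}$ does not discard the unselected branches: it yields $\p\mathbin{\rightsquigarrowtriangle}\q:k\,\St{}$ with all of them kept, and only $\mathtt{[Rcv]}$ drops them. Your refinement still works, but for a different reason: the contexts $\Ctname$ enter an in-transit term only through the selected branch $a_k$, so unselected branches stop being subterms. You should say this explicitly. With raw positions, after $\q\p!a$, $\q\p?a$, $\p\role{r}!x$ in the witness, the two $c,2$ copies are no longer siblings under a communication, and they would violate your invariant as phrased.

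\textbf{Second repair.} The $\Theta$ argument is not a proof as written.
\begin{itemize}
\item $\Theta$ is never computed over the whole of $\GT$. In $\mathtt{[Inst]}$ it is taken on the definition itself, hence it is $0$; the paper's base case reads it as a global counter, which the rules do not implement.
\item When no active MC encloses the instantiation point, there is no root-level premise at all.
\item Carrying a global maximum in the induction hypothesis cannot supply a constraint the rules lack.
\end{itemize}
The missing idea is coverage. For a label $\nu c,n$, the only rule at a communication is $\mathtt{[Cont1]}$, which demands a premise for every branch. At an in-transit term the only rule is $\mathtt{[Cont2]}$, on the selected branch. No active MC sits under $\mu$ or inside a definition in a reachable type. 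Hence every topmost active MC that is a subterm of $\GT$ is itself crossed by the derivation via $\mathtt{[Ctx1]}$ or $\mathtt{[Ctx2]}$, and its premise $\Theta(\cdot,c)<n$ bounds every $c,m$ inside it.

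So your key lemma does hold for all subterms, with no global maximum needed. Equal identifiers can then arise only among copies created in a single step. These are separated at $\mathtt{[Cont1]}$ nodes, the only rule with several premises, i.e.\ they lie in different branches of a communication — exactly the pairs your refinement excludes. Communication steps preserve the refined invariant, since $\mathtt{[Snd]}$ pushes one copy of each such pair out of the subterms.
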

\begin{proof}
  Fix $\Theta$ appropriate for $\GT$ such that
  $\gspair{\GT}{\Theta}\trans{}{\GT'}$.
  The proof is by induction on the derivation, proceeding by case analysis on the last rule used.

  \paragraph{Base case} There is only one base case by rule $\mathtt{[Inst]}$. $\mathtt{[Inst]}$ that increments $\Theta(c)$ by one unit. Since $\Theta$ is appropriate for $\GT$ (hypothesis) then $\Theta(c)+1$ will be strictly greater than any instance number occurring in $\GT$ for $c$.

  \paragraph{Inductive cases}
  In case of transitions by $\mathtt{[Cont1]}$, $\mathtt{[Cont2]}$,  $\mathtt{[Rec]}$, and $\mathtt{[Ctx1]}$ (and symmetric) the thesis holds directly by induction.
 Remarkably, $\mathtt{[Ctx2]}$ ensures that an instantiation with $n$ happens in only one side of a timeout for any one transition. Observe that rule $\mathtt{[RTAct]}$, allowing the two sides to move in the same transition, cannot be applied to instantiate a timeout because of premise $\ell\not\in\{\nu c,n\,|\, c\in \mathcal{C} \land n\in\mathbb{N}^+ \}$.
  \end{proof}

\subsection{Monotonicity}
\label{asec:monotonicity}

\begin{restatable}[Monotonicity]{proposition}{lemmonoton}
\label{lem:monoton}
For any global type of the form $\Ctname[\GtoP{\GT_1}{\inst}{\lset}{\rset}{\GT_2}]$:
\begin{align}\Ctname[\GtoP{\GT_1}{\inst}{\lset}{\rset}{\GT_2}]\trans{}\gspairl{\Ctname'[\GtoP{\GT_1'}{\inst}{\lset'}{\rset'}{\GT_2'}]}{\Theta'}\quad \Rightarrow
\lset'\supseteq\lset ~~\land ~~ \rset'\supseteq\rset\nonumber\end{align}
\end{restatable}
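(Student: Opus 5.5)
The plan is to argue by induction on the derivation of the transition $\Ctname[\GtoP{\GT_1}{\inst}{\lset}{\rset}{\GT_2}]\trans{}\Ctname'[\GtoP{\GT_1'}{\inst}{\lset'}{\rset'}{\GT_2'}]$, with a case analysis on the last rule applied. The statement should be read as tracking the same instance $\inst$ on both sides. By \Cref{UIinvariant}, the annotation $\inst=c,n$ identifies a unique active MC subterm of a reachable type, so the instance on the right-hand side is well defined as ``the residual of'' the instance on the left. I strengthen the induction hypothesis to cover every active MC subterm of the source type, not just the outermost one. This lets the hypothesis be applied directly to premises, whose types are subterms of the conclusion.

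First I handle the rules whose conclusion does not itself rewrite an annotation set.
\begin{itemize}
\item \glab{Snd} and \glab{Rcv} only remove or add an in-transit prefix. Every active MC below the prefix is carried over syntactically unchanged, with the same $\lset,\rset$.
\item \glab{Inst} creates a fresh instance $c,n$ with $n$ exceeding $\Theta(\cdot,c)$. So no pre-existing instance is affected, and pre-existing instances inside $\GT_1,\GT_2$ cannot exist because a definition contains no active MCs.
\item \glab{Cont1}, \glab{Cont2}, \glab{Ctx1} and \glab{Ctx2} pass the step to a premise on a subterm. The tracked instance is either inside that subterm, where the induction hypothesis gives the inclusions, or elsewhere, where it is copied verbatim.
\item In \glab{Cont1} the same step is applied in every branch. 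An instance $\inst$ occurring in several branches would contradict uniqueness, so it is handled by the hypothesis on the single relevant branch.
\item For \glab{Rec}, unfolding $\mu\rv.\GT$ only duplicates MC definitions, never active MCs; initial and reachable types have active MCs only outside recursion bodies, which I would note as an easy auxiliary invariant. Hence the tracked instance lies in the unfolded body unchanged, and I conclude by induction on the premise.
\end{itemize}

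The substantive cases are the MC rules \glab{LSnd}, \glab{LRcv1}, \glab{LRcv2}, \glab{RSnd} and \glab{RRcv} applied at some active MC $\GtoP{\GT_l}{\inst'}{L}{R}{\GT_r}$. If $\inst'=\inst$, then by inspection the new sets are $L$, $L\cup\{\q\}$ or $L$ on the left and $R$ or $R\cup\{\role p\}$ on the right. Each is a superset of the old one, which gives the claim immediately. If $\inst'\neq\inst$, the tracked instance is either outside this MC, hence unchanged, or inside $\GT_l$ or $\GT_r$. In the latter case I apply the induction hypothesis to the premise $\GT_l\trans{}\GT_l'$ or $\GT_r\trans{}\GT_r'$.

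I expect the main obstacle to be the bookkeeping that identifies ``the same'' instance across a step, particularly under \glab{Rec} and \glab{Cont1}, where subterms are duplicated. I would resolve this by relying on the instance annotations together with \Cref{UIinvariant}, and by the observation that active MCs never occur under a $\mu$. With that settled, every case is a direct syntactic inspection showing that sets are only ever extended by union.
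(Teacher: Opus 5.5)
Your proposal is correct and follows essentially the paper's argument. Both proofs do a case analysis on the transition rules: \glab{LRcv1}, \glab{RSnd} and \glab{RRcv} only extend $\lset,\rset$ by union, every other rule copies or propagates them, and the recursion and MC-definition cases are vacuous because no active MC occurs under a $\mu$ or inside a definition. The differences are bookkeeping only: you induct on the derivation and identify ``the same'' instance via \Cref{UIinvariant}, whereas the paper inducts on the context $\Ctname$ and tracks the MC by its position.
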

Monotonicity extends trivially to MC definitions, which have no commitments and are instantiated into active MC with empty commitment sets.

\begin{proof}

We show that, given a MC with identifier $\inst$, a transition can only extend its commitment sets $\lset$ and $\rset$. Let $\GT = \Ctname[\GtoDef{\GT_1}{\tdef{c}{p}}{\lset}{\rset}{\GT_2}]$ or
$\GT = \Ctname[\GtoP{\GT_1}{\tact{c}{n}{p}}{\lset}{\rset}{\GT_2}]$. We proceed by induction on $\Ctname$. In the base case, $\Ctname = \hole$, we proceed by analysis on the last rule used for the transition, which can be: $\mathtt{[LSnd]}$, $\mathtt{[LRcv1]}$, $\mathtt{[LRcv2]}$, $\mathtt{[RSnd]}$, or $\mathtt{[RRcv]}$, or $\mathtt{[RTAct]}$. None of these rules is removing elements from $\rset$ and $\lset$ hence the thesis.

In the inductive case, if $\Ctname$ is an interaction it can move by either (a) $\mathtt{[Snd]}$ or $\mathtt{[Rcv]}$ leaving the continuations unchanged and yielding the thesis, or (b) $\mathtt{[Cont1]}$ or $\mathtt{[Cont2]}$ yielding the thesis by induction.

If $\Ctname = \GtoP{\hole }{\tact{c'}{n'}{p'}}{\lset''}{\rset''}{\GT_2''}$ then one of the following three cases can happen: (a) $\GT_2''$ moves by $\mathtt{[RSnd]}$ or $\mathtt{[RRcv]}$ not affecting the right-hand side of timeout $\tact{c'}{n'}{p'}$ hence yielding the thesis; (b) the left-hand side moves and the thesis follows by induction; (c) both sides move by $\mathtt{[RTAct]}$ with the thesis following by induction.
The case for $\Ctname = \GtoP{\GT_1'' }{\tact{c'}{n'}{p'}}{\lset''}{\rset''}{\hole}$ is symmetric.

The two cases for not initialized timeout context and recursion are vacuous since all timeouts are initially timeout definitions and initialization happens in nesting order.
The case for recursion is vacuous since a recursion only has timeout definitions.
\end{proof}

\subsection{Well-nestedness}
\label{asec:wellnested}

We define a structural invariant on MCs with respect to $\lset$ and $\rset$.

\begin{definition}[Well-nested $\GT$]\label{def:wn}
$\GT$ is well-nested if $\GT = \Ct{}{ \GtoP{\GT_1 }{\role q}{\lset}{\rset}{\GT_2}}$ implies
\begin{enumerate}
\item $\lset \not = \emptyset ~\lor~ \GT_1 = \GactShort{p}{q}{\St{}}~ \lor ~\GT_1 = \GmsgShort{p}{q}{k}{\St{}}$ for some $\role p$, $\St{}$, and
\item  $\rset \not = \emptyset ~\lor ~\GT_2 = \GactShort{p}{q}{\St{}}$ for some $\role p$, $\St{}$.
\end{enumerate}
\end{definition}

\begin{lemma}[Well-nested preservation]\label{lem:wn}
If $\GT$ is well-nested and $\gspair{\GT}{\Theta}  \trans{\ell}{\GT'}$ then $\GT'$ is well-nested.
\end{lemma}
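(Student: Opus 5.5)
We prove the statement by induction on the derivation of $\gspair{\GT}{\Theta}\trans{\ell}{\GT'}$, with a case analysis on the last rule applied. We fix an arbitrary active MC $\GtoP{\GT_1'}{\role q}{\lset'}{\rset'}{\GT_2'}$ occurring in $\GT'$ as $\Ct{}{\cdot}$, and show that it satisfies conditions (1) and (2) of \Cref{def:wn}. The active MC of $\GT'$ split into three groups. The first group consists of MC untouched by the transition; these are literally subterms of $\GT$, up to an enclosing context that the step rewrote. The second group is the MC freshly created by $\mathtt{[Inst]}$. The third group consists of the MC through which the transition was propagated by $\mathtt{[LSnd]}$, $\mathtt{[LRcv1/2]}$, $\mathtt{[RSnd]}$, $\mathtt{[RRcv]}$, $\mathtt{[Ctx1]}$ or $\mathtt{[Ctx2]}$. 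For the first group, well-nestedness is inherited from $\GT$. Monotonicity (\Cref{lem:monoton}) is useful here: once $\lset\neq\emptyset$ (resp.\ $\rset\neq\emptyset$), this persists, so the corresponding condition stays true whatever the side later becomes.

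For the base case $\mathtt{[Inst]}$, the MC definition has, by the grammar, the shape $\GtoDef{\GactShort{p}{q}{\St{1}}}{}{}{}{\GactShort{q}{p}{\St{2}}}$ with observer $\role q$. Hence the new active MC $\GtoP{\GactShort{p}{q}{\St{1}}}{\role q}{\emptyset}{\emptyset}{\GactShort{q}{p}{\St{2}}}$ satisfies (1) via the first disjunct $\GT_1=\GactShort{p}{q}{\St{}}$. For (2), we must check that the RHS is an interaction with the observer as sender. The formulation of (2) reads ``$\GT_2=\GactShort{p}{q}{\St{}}$ for some $\role p$'', which we interpret up to the convention that the observer is the sender on the RHS; we will state this convention explicitly. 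The axioms $\mathtt{[Snd]}$ and $\mathtt{[Rcv]}$ create no new active MC. Under $\mathtt{[Rcv]}$, only a subterm $\GT_k$ survives, and every active MC in it was already in $\GT$. Under $\mathtt{[Snd]}$, the continuations are unchanged.

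The inductive cases proceed as follows. For $\mathtt{[Cont1]}$, $\mathtt{[Cont2]}$ and $\mathtt{[Rec]}$, we apply the induction hypothesis to the premises. For $\mathtt{[Rec]}$, we also note that $\GT[\mu\rv.\GT/\rv]$ contains no active MC, since recursion bodies contain only MC definitions. The premise of $\mathtt{[Cont1]}$ involves several continuations, so we apply the hypothesis to each $\GT_i$. The main work is at the MC rules themselves, for $\GtoP{\GT_l}{\role q}{\lset}{\rset}{\GT_r}$ stepping on one side.
\begin{itemize}
\item $\mathtt{[RSnd]}$ and $\mathtt{[RRcv]}$ make $\rset'\neq\emptyset$, which settles (2). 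Condition (1) is unaffected because $\GT_l$ and $\lset$ are unchanged, and the inner MC of $\GT_r'$ are handled by the induction hypothesis.
\item $\mathtt{[LRcv1]}$ gives $\lset'\neq\emptyset$, which settles (1).
\item $\mathtt{[LSnd]}$ with $\lset=\emptyset$ is the delicate case. By (1) for $\GT$, $\GT_l$ is $\GactShort{p}{q}{\St{}}$ or $\GmsgShort{p}{q}{k}{\St{}}$. A send by $\role p$ via $\mathtt{[Snd]}$ turns the interaction into the in-transit form, so (1) is preserved. A send by a third role via $\mathtt{[Cont1]}$ or $\mathtt{[Cont2]}$ leaves the head prefix intact, so (1) is again preserved.
\item $\mathtt{[LRcv2]}$ with $\lset=\emptyset$ is analogous, but it hides the \textbf{main obstacle}. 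If the head message $\GmsgShort{p}{q}{k}{\St{}}$ itself is consumed by the observer $\role q$, then the head disappears. We must therefore show that this receive is committing, so that $\mathtt{[LRcv1]}$ applies rather than $\mathtt{[LRcv2]}$. This follows from \Cref{def:committ}: $\labG{\St{1}}\subseteq\comSetG{\underline{\GT}}$. To carry this over to instances obtained by unfolding, we use the outermost-occurrence clause in \Cref{def:committ}, together with well-formedness (\Cref{well-formed}) and the fact that labels of the same MC name coincide across unfoldings.
\item $\mathtt{[Ctx1]}$ and $\mathtt{[Ctx2]}$ only instantiate inner MC definitions. A head prefix is an interaction, not an MC definition, so these rules leave the head prefix of the outer MC untouched. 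The outer MC conditions are thus preserved, and the inner MC are handled by the induction hypothesis.
\end{itemize}
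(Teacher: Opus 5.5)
Your structure matches the paper's proof. The paper inducts on the context around the chosen active MC rather than on the derivation, but the cases coincide. Your reading of condition (2), with the observer as sender, is the intended one. You also correctly single out the only delicate case: the observer consuming the head of an uncommitted LHS, which must go through $\mathtt{[LRcv1]}$.

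The gap is in how you justify that step. You cite $\labG{\St{1}}\subseteq\comSetG{\underline{\GT}}$, where $\St{1}$ belongs to the \emph{definition} of $c$. The consumed label, however, belongs to the \emph{current} head of $\GT_l$, and well-nestedness only constrains the shape of that head, not its labels. For an arbitrary well-nested $\GT$ the step, and in fact the statement, fails. Take a label $e\notin\comSetG{\underline{\GT}}$ and the term $\p \mathbin{\rightsquigarrowtriangle} \q : e\,\{e.\tend\} \blacktriangleright_{\emptyset,\emptyset}^{c} \q \mathbin{\rightarrowtriangle} \p : d.\tend$ with observer $\q$. It is well-nested, but the receive of $e$ goes through $\mathtt{[LRcv2]}$ and leaves LHS $\tend$ with $\lset=\emptyset$, violating (1). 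The paper's proof, which just asserts that this step ``is for $\mathtt{[LRcv1]}$ since $\lset=\emptyset$'', has the same hole.

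To close it, prove the lemma for $\GT$ reachable from an initial type. That is the only setting in which it is used (via coherence, for reachable types). Then carry an extra invariant through the induction: in every active MC $c$ with $\lset=\emptyset$, the head of the LHS carries exactly the labels $\labG{\St{1}}$ of the LHS prefix of the definition of $c$.

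The invariant holds right after $\mathtt{[Inst]}$. This is exactly where your remark is needed: all unfolded copies of $c$ share these labels, and they lie in the committing set computed at the outermost occurrence. Well-formedness is not needed.

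The invariant is preserved by every step:
\begin{itemize}
\item $\mathtt{[Snd]}$ keeps the label set when it turns the interaction into a message in transit.
\item Every other step of $\GT_l$, including instantiation steps lifted by $\mathtt{[Ctx1]}$, goes through $\mathtt{[Cont1]}$ or $\mathtt{[Cont2]}$ and only rewrites continuations.
\item The head can only be removed by the observer's receive. By the invariant that receive is committing, so it makes $\lset\neq\emptyset$, after which the invariant is vacuous.
\end{itemize}

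Your $\mathtt{[Rec]}$ remark (no active MC under $\mu$) likewise holds only for reachable types, but nothing is lost there. Substituting for $\rv$ cannot break (1) or (2), because an uncommitted side never has a variable as its head.
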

\begin{proof}
Since $\GT$ is well-nested, fix $\Ctname$ to be any context such that $\GT = \Ct{}{ \GtoP{\GT_1 }{\tact{c}{n}{q}}{\lset}{\rset}{\GT_2}}$.
We first consider case (1) of Definition~\ref{def:wn}.
\begin{itemize}
\item Base case $\Ctname=\hole$.
\begin{itemize}
\item if $\GT$ is of the form $\GactShort{p}{q}{\St{}}$ or $\GmsgShort{p}{q}{k}{\St{}}$ we have two cases: (1) the prefix moves, which leaves $\St{}$ unchanged hence done, (2) $\CT$ moves and the thesis is by inner induction.
\item if  $\GT =  \GtoP{ \GT_1}{\tact{c}{n}{q}}{\lset}{\rset}{\GT_2}$ and $\lset\not=\emptyset$ then the thesis follows observing that no rules remove roles from $\lset$ and hence this set will still be non-empty after transition. If $\lset = \emptyset$ then by well-nestedness of $\GT$ we have one of the following cases:
\begin{itemize}
\item $\GT =  \GtoP{ \GactShort{p}{q}{\St{}}}{\tact{c}{n}{q}}{\lset}{\rset}{\GT_2}$. If $\GT_2$ or $\St{}$ moves, the thesis is by inner induction. If the prefix moves then it moves by $[\mathtt{LSnd}]$ to $ \GtoP{ \GmsgShort{s}{r}{k}{\St{}}}{\tact{c}{n}{p}}{\lset}{\rset}{\GT_2}$ for some $k$, which is still well-nested.
\item $\GT =  \GtoP{ \GmsgShort{s}{r}{k}{\St{}}}{q}{\lset}{\rset}{\GT_2}$. Again, if $\GT_2$ or $\CT$ moves, the thesis is by inner induction. If the prefix moves, it is for $[\mathtt{LRcv1}]$ since $\lset=\emptyset$ to
$\GtoP{ \St{}}{\tact{c}{n}{q}}{\lset\cup\{\role{q}\}}{\rset}{\GT_2}$ which is well-nested since $\lset\cup\{\role{q}\}\not=\emptyset$.
\end{itemize}
\item If $\GT =  \GtoP{ \GmsgShort{p}{q}{k}{\St{}}}{q}{\lset}{\rset}{\GT_2}$ the case is similar to the above.
\end{itemize}
\item Inductive cases $\Ctname = \GactShort{p}{q}{\St{}}\cup\{a.\hole\}$ or $\GmsgShort{p}{q}{k}{\St{}}\cup\{a_k.\hole\}$,
either by outer induction of $\St{}$ (if $\St{}$ changes) or inner induction on $\GT$ if the hole changes.
\item Inductive case $\Ctname = \GtoP{\hole}{\tact{c}{n}{q}}{\lset}{\rset}{ \GT}$. If $\lset\not=\emptyset$ then the thesis follows observing that no rules remove roles from $\lset$ and hence this set will still be non-empty after transition. If $\lset = \emptyset$ then $\GT =  \GtoP{\GactShort{s}{r}{\St{}}}{\tact{c}{n}{q}}{\lset}{\rset}{ \GT'}$ or  $\GT =  \GtoP{\GmsgShort{s}{r}{k}{\St{}}}{\tact{c}{n}{q}}{\lset}{\rset}{ \GT'}$ by well-nestedness of $\GT$. The case is similar to the second base case for $\Ctname=\hole$.
\item Inductive case $\Ctname = \GtoP{ \GT'}{\tact{c}{n}{q}}{\lset}{\rset}{\hole }$. Similar to the above.
\item Inductive cases $\GT = \GtoDef{\hole}{\tdef{c}{p}}{\lset}{\rset}{\GT}$ or $\GT = \GtoDef{\GT}{\tdef{c}{p}}{\lset}{\rset}{\hole }$ : the move can only be by $[\mathtt{Inst}]$ which does not change the structure of the process, hence done.
\item Inductive case $\GT  =  \mu \rv . \hole $ : the move can only be by $[\mathtt{Rec}]$, and by its premise the type in the hole preserves well-nestedness. The thesis is by induction.
\end{itemize}
The case for Definition~\ref{def:wn}(2) is similar to the one for Definition~\ref{def:wn}(1): either $\rset=\emptyset$ with some participants on the right-hand side of the MC, or  the next committing role is included in $\rset$ by $\mathtt{[Rsnd]}$.
\end{proof}

\subsection{Preservation of awareness}
\label{asec:nested}\label{asec:awareness}

Awareness is preserved by transition. Before proving this property we give an auxiliary proposition. \Cref{pro:dep} states that if all roles in $\GT$ depend on $\role p$ then the first action of $\GT$ is an action by $\role p$.

\begin{proposition}\label{pro:dep}
If $\dep{\role p}{}{\GT}$ and $\gspairl{\GT}{}\trans{\ell}$ then $\sbj{\ell}=\role p$.
\end{proposition}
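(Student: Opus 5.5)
The argument is essentially a direct unfolding of the definition of strict dependence, instantiated with the empty prefix. I read $\dep{\role p}{}{\GT}$ as: $\dep{\role p}{\role r}{\GT}$ holds for every $\role r\in\RolesG{\GT}\setminus\{\role p\}$. I read $\ell$ as a communication label $\lsnd{r}{s}{a}$ or $\lrcv{s}{r}{a}$. For an instantiation label $\nlab$ the subject is $\emptyset$, so the statement can only be meant for communication actions; I would state this restriction explicitly.

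Assume $\GT\trans{\ell}$ with $\sbj{\ell}=\role r$ and, towards a contradiction, $\role r\neq\role p$. Take $\vec{\ell}$ to be the empty vector and $\ell_{\role r}=\ell$ in \Cref{def:dep}; then $\GT\trans{\vec{\ell}\,\ell_{\role r}}$ with $\sbj{\ell_{\role r}}=\role r$. Provided $\role r\in\RolesG{\GT}$, the hypothesis gives $\dep{\role p}{\role r}{\GT}$. That yields some $\ell'\in\vec{\ell}$ with $\sbj{\ell'}=\role p$, which is impossible since $\vec{\ell}$ is empty. Hence $\role r=\role p$.

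The only real step is the auxiliary fact that the subject of any communication transition of $\GT$ belongs to $\RolesG{\GT}$. I would prove it by induction on the derivation of $\GT\trans{\ell}\GT'$, casing on the last rule.
\begin{itemize}
\item \glab{Snd} and \glab{Rcv} are immediate.
\item \glab{Cont1}, \glab{Cont2} and \glab{Rec} follow by induction, since the roles of the continuations, or of the unfolding, are included in those of $\GT$.
\item The MC rules are the interesting cases, because $\RolesG{\GtoP{\GT_1}{}{\lset}{\rset}{\GT_2}}$ removes committed roles. Each rule's side condition compensates exactly. In \glab{LSnd}, \glab{LRcv1} and \glab{LRcv2} the subject is not in $\rset$, and by induction it lies in $\RolesG{\GT_1}$, hence in $\RolesG{\GT_1}\setminus\rset$. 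Symmetrically, in \glab{RSnd} and \glab{RRcv} the subject is not in $\lset$, so it lies in $\RolesG{\GT_2}\setminus\lset$.
\item \glab{Inst}, \glab{Ctx1} and \glab{Ctx2} carry only $\nlab$ labels, so they are excluded.
\end{itemize}
This membership lemma, specifically the bookkeeping of $\lset$ and $\rset$ in the active-MC cases, is the only place where care is needed. Once it is in place, the proposition follows immediately from the empty-prefix argument.
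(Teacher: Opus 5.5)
Your proposal is correct. It is essentially the argument the paper has in mind: the paper gives no written proof and treats the proposition as immediate from the definition of strict dependence, which is exactly your empty-prefix instantiation. Your two additions make explicit what the paper leaves implicit. The restriction to communication labels is genuinely needed, since $\mathtt{[Cont1]}$ with $\mathtt{[Inst]}$ lets a type whose roles all strictly depend on $\role p$ take an instantiation step with empty subject. The lemma that the subject of any communication step lies in $\RolesG{\GT}$ holds because the $\lset,\rset$ side conditions of the MC rules are exactly what it requires.
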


\begin{restatable}[Awareness Preservation]{proposition}{AP}
\label{awarepreservation}
\(\gspairl{\GT}{} \text{ is aware } ~\land ~\gspairl{\GT}{}\trans{}\gspairl{\GT'}{\Theta'}~~\Rightarrow~~\gspairl{\GT'}{} \text{ is aware. }\)
\end{restatable}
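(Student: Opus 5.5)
The argument is by induction on the derivation of $\GT\trans{\ell}\GT'$, with a case analysis on the last rule applied. Awareness is a condition on every MC definition and active MC in $\unfold{\GT}$, so we fix an arbitrary MC in $\unfold{\GT'}$ and trace it back to the MC of $\GT$ it comes from. There are three possibilities. It may be an untouched subterm, whose awareness is inherited directly. It may be an MC definition introduced by unfolding. It may be an MC whose sides or commitment sets changed during the step.

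\textbf{Definitions and unfolding.} Unfolding under $\mathtt{[Rec]}$ only produces copies of MC definitions that already occur in $\unfold{\GT}$. For these we use that awareness of a definition depends only on its body. A definition $\GtoDef{\GT_1}{\role p}{}{}{\GT_2}$ becomes active through $\mathtt{[Inst]}$, $\mathtt{[Ctx1]}$ or $\mathtt{[Ctx2]}$, yielding $\GtoP{\GT_1}{\role p}{\emptyset}{\emptyset}{\GT_2}$. In that case both clauses of Definition~\ref{def:aware} follow from the definition-level conditions $\dep{\role p}{\role r}{\GT_2}$ and $\wdep{\role p}{\role r}{\GT_1}$. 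The active MC has the same set of roles, because $\lset=\rset=\emptyset$.

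\textbf{Steps inside an active MC $\GtoP{\GT_1}{\role p}{\lset}{\rset}{\GT_2}$.} By monotonicity (Proposition~\ref{lem:monoton}) the sets $\lset$ and $\rset$ can only grow. The hypotheses ``$\rset=\emptyset$'' and ``$\lset=\emptyset$'' therefore only become harder to meet, and once a set becomes nonempty the corresponding obligation disappears. We then split on which clause still applies.

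\emph{Single-decision (1).} Suppose $\rset$ is still empty after the step. Then the step did not occur in $\GT_2$, since $\mathtt{[RSnd]}$ and $\mathtt{[RRcv]}$ add their subject to $\rset$. So either $\GT_2$ is unchanged, or the step is a $\nlab$ step inside $\GT_2$ through $\mathtt{[Ctx2]}$. Strict dependence $\dep{\role p}{\role r}{\cdot}$ is defined over all traces. A $\nlab$ step has empty subject, so every trace of the new $\GT_2'$, prefixed with that $\nlab$, is a trace of $\GT_2$. Hence $\dep{\role p}{\role r}{\GT_2'}$ holds.

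\emph{Clear-termination (2).} Suppose $\lset$ is still empty after the step. Then the step in $\GT_1$, if there is one, is a send, a non-committing receive, or a $\nlab$. In each case it is a prefix of the traces of $\GT_1$. Eventual dependence $\wdep{\role p}{\role r}{\GT_1}$ is universally quantified over traces, so it transfers to the residual, unless the prefixed step was itself the send $\role p\role r!a$. That cannot be the send in question, because by Proposition~\ref{pro:dep} and well-nestedness (Lemma~\ref{lem:wn}) the observer's first LHS action is a receive, which, being in the committing set, would add $\role p$ to $\lset$. Divergence of $\role r$ likewise transfers, since it is also stated over all reachable states. The set of roles that must be checked only shrinks, because $\RolesG{\cdot}$ of an active MC removes committed roles.

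\textbf{Remaining cases and the main obstacle.} Steps in a context ($\mathtt{[Cont1]}$, $\mathtt{[Cont2]}$, prefix actions) follow by the induction hypothesis, as do nested MCs. The main obstacle is the interaction between unfolding and nesting. An MC inside $\unfold{\GT'}$ may lie inside the residual of an outer block, and the outer MC's commitment sets can change the global type's behaviour through the global side conditions of $\mathtt{[Ctx1]}$, $\mathtt{[Ctx2]}$, $\mathtt{[LSnd]}$ and the other MC rules. This could in principle kill traces of an inner block that the dependence relations quantify over. To handle this, I would first prove an auxiliary lemma: the traces of a subterm in context are a subset of the traces of the subterm in isolation. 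Strict dependence is preserved under restricting to fewer traces, so single-decision goes through. For eventual dependence, restricting traces can remove the continuations that contain $\role p\role r!a$, so the lemma alone is not enough. There I would additionally use balance and coherence to argue that the blocking conditions never disable the observer's sends while $\lset=\emptyset$.
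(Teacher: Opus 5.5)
Your case analysis is essentially the paper's proof. You use induction on the derivation, use monotonicity to make a clause of \Cref{def:aware} vacuous once its commitment set is nonempty, and transfer $\dep{\role p}{\role r}{\GT_2}$, $\wdep{\role p}{\role r}{\GT_1}$ and divergence to the residual blocks. Your empty-subject argument for the $\nlab$ steps is the precise version of the paper's one-line $\mathtt{[Ctx1]}$ case.

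You are more careful than the paper in the $\mathtt{[LSnd]}$ case, where you rule out that the step is itself the send $\role p\role r!a$ witnessing $\wdep{\role p}{\role r}{\GT_1}$. What does the work there is well-nestedness together with the fact that the labels of $\St{1}$ are committing. \Cref{pro:dep} is a statement about the RHS and contributes nothing. Invoking \Cref{lem:wn} silently assumes that $\GT$ is reachable, and that assumption is genuinely needed. The unreachable but aware term $\GtoP{\GactShort{p}{r}{a.\tend}}{\role{p}}{\emptyset}{\emptyset}{\GactShort{p}{r}{b.\tend}}$ loses clear-termination after the $\mathtt{[LSnd]}$ step $\role p\role r!a$.

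Your last paragraph, however, fights an obstacle that does not exist, and its remedy is not available. In \Cref{def:aware} the relations $\dep{\role p}{\role r}{\GT_2}$ and $\wdep{\role p}{\role r}{\GT_1}$ quantify over the transitions of the blocks as standalone terms, relative to the same base type $\underline{\GT}$. They do not quantify over the blocks' behaviour inside an enclosing context. Moreover, every rule that moves a subterm in context has a standalone transition of that subterm as its premise: $\mathtt{[Cont1]}$, $\mathtt{[Cont2]}$, $\mathtt{[Rec]}$ and all the MC rules. So a nested MC is covered by the induction hypothesis applied to that premise, and the side conditions of enclosing MCs never enter.

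Lifting standalone traces into a context is a real issue, but it belongs to the progress proof (\Cref{thm:progress}), where coherence is assumed. Here balance and coherence are not hypotheses of the proposition, and they are not needed. Drop that paragraph and your argument is complete at the paper's level of detail.
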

\begin{proof} The proof is by induction on the derivation, proceeding by case analysis on the last rule used.

\paragraph{Base cases}
If the transition is by $\mathtt{[Snd]}$, any MC in any continuation $\GT_i$, which is aware in $\GT$ by hypothesis,  remains unchanged hence aware. The case for $\mathtt{[Rcv]}$ is similar. The case for $\mathtt{[Inst]}$ only changes the outer MC definition into an active MC with empty L-set and R-set. If L-set and R-set are empty awareness for MC definitions is equivalent to awareness for active MC. The thesis is therefore straightforward by awareness of $\GT$.

\paragraph{Inductive cases}
Cases for $[\mathtt{Cont1}]$ and $[\mathtt{Cont2}]$ are straightforward by induction.

For $[\mathtt{Rec}]$, observe that a recursive type has a transition $\ell$ if and only if its one-time unfolding has a transition $\ell$, and they both reach the same state. Therefore, if $\GT = \mu \mathtt t . \GT''$ is aware then also $\GT''\, [\mu \mathtt t.\GT'' / \mathtt t]$ is aware and after a transition they reach the same state $\GT'$. By inductive hypothesis and awareness of $\GT''\, [\mu \mathtt t.\GT'' / \mathtt t]$ it follows that $\GT'$ is aware hence done.

For $[\mathtt{Ctx1}]$  then $\GT = {\GtoP{\GT_{\mathtt l}}{~}{\lset}{\rset}{\GT_{\mathtt r}}}$ and by induction if $\GT_{\mathtt l}$ makes a move to $\GT_{\mathtt l}'$ then $\GT_l'$ is aware. So $\GT_{\mathtt l}'$ and $\GT_{\mathtt r}$ are aware (the second directly by hypothesis) meaning that all MC in them are aware. It remains to show that the outermost MC $\GtoP{\GT_{\mathtt l}}{~}{\lset}{\rset}{\GT_{\mathtt r}}$ is still aware after the transition.

This follows by the fact that the move with label $\nu c, n$ leaves LR-sets, and participants and dependencies unchanged from $\GT_{\mathtt l}$ to $\GT_{\mathtt l}'$. The symmetric case is similar.

For $[\mathtt{LSnd}]$ assume $\GT =\GtoP{\GT_{{\mathtt l}}}{~}{\lset}{\rset}{\GT_r}$
 and $\GT'=\GtoP{\GT_{\mathtt l}'}{~}{\lset}{\rset}{\GT_r}$ and the observer of $\GT$ is $\role p$.
Single decision and clear termination of $\GT_{\mathtt l}$ follow by induction and awareness of $\GT_r$ follows by hypothesis. Single decision of $\GT$ follows by single decision of $\GT_{\mathtt l}$ and by the fact that $\GT_r$ is unchanged by the transition. We only need to show clear termination of the outermost MC in $\GT'$.
If $\lset \not = \emptyset$ then the outermost MC in $\GT'$ is clear termination since the premise in the implication of case (2) of Definition~\ref{def:aware} is negative.
Assume now $\lset = \emptyset$. By clear-termination of $\GT$ at least one of the following holds: (a) all roles diverge in $\GT_1$, (b) every state reached by $\GT_1$ and any role in $\GT_1$, it is possible to reach a state where that roles makes an action.
(a) is clearly preserved by reduction (an infinite execution remains infinite if we remove a finite prefix). (b) is also preserved since rule $[\mathtt{LSnd}]$ does not change $\lset$ and hence future committing action that makes $\GT_1$ clear-termination is also available in $\GT_1'$.


For $[\mathtt{LRcv1}]$ assume $\GT = \GtoP{\GT_l}{\tact{c}{n}{r}}{\lset}{\rset}{\GT_r}$ and the reached state has type  $\GT'=\GtoP{\GT'_l}{\tact{c}{n}{r}}{\lset\cup\{\role q\}}{\rset}{\GT_r}$.
To show clear-termination, observe that the reached state has L-set $\lset\cup\{\role q\}\not = \emptyset$ and hence clear-termination of $\GT'$ follows immediately (by negative premise).
Single decision is trivial since the right-hand side of the MC remains unchanged (hence still enjoys single decision).

Case $[\mathtt{LRcv2}]$ (single-decision and clear-termination) is similar to $[\mathtt{LSnd}]$ except we now know that the action does not have the observer as subject by premise of the rule $[\mathtt{LRcv2}]$.

Cases $[\mathtt{RSnd}]$ and $[\mathtt{RRcv}]$: clear termination follows by hypothesis since the left-hand side of the MC does not change, single decision follows by negative premise since the L-set in the reached state includes the subject of $\ell$ and hence is not empty.
\end{proof}

\subsection{Preservation of balance}
\label{asec:balance}
Balance is not, in general, preserved by transition. For example, the global type on the left of (\ref{eq:balance}) is balanced and it reduces to a state, on the right of (\ref{eq:balance}), that is not since $\role q$ is nor in the roles of the LHS nor in the left commit set.
\begin{equation}\label{eq:balance}
 \GtoP{\GactShort{q}{p}{a_k. \tend}}{\role{p}}{\emptyset}{\emptyset}{\GactShort{p}{q}{\St{}}} ~~ \trans{\role p\role q!a_k}~~
  \GtoP{\GmsgShort{q}{p}{k\,  a_k.\tend}}{\role p}{\emptyset}{\emptyset}  {\GactShort{p}{q}{\St{}}}
\end{equation}

However, in presence of awareness, balance is preserved by transition.

\begin{proposition}[Balance \& Preservation]\label{Finvariant}
\(\gspairl{\GT}{} \text{ is aware and balanced } ~\land ~\gspairl{\GT}{}\trans{}\gspairl{\GT'}{\Theta'}~~\Rightarrow~~\gspairl{\GT'}{} \text{ is balanced. }\)
\end{proposition}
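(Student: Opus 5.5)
I will prove the statement by induction on the derivation of $\gspairl{\GT}{}\trans{\ell}\gspairl{\GT'}{\Theta'}$, with a case analysis on the last rule applied, in the same style as the proof of \Cref{awarepreservation}. The invariant is the three clauses of \Cref{wellset}, read over subterms of $\trunc{\unfold{\GT'}}$. Recursive subterms are truncated to $\tend$, so the subterms that can change are exactly those lying on the path of the derivation. All other subterms are inherited unchanged from $\GT$, or, for \glab{Rec}, from the unfolding $\GT[\mu\rv.\GT/\rv]$.

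The key auxiliary fact I will use is that the role sets of continuations only shrink in a controlled way under a step. If $\GT_0\trans{\ell}\GT_0'$, then $\RolesG{\GT_0'}\subseteq\RolesG{\GT_0}$, and any role that disappears is in $\sbj{\ell}$ or is a role that just committed (moving into $\lset$ or $\rset$ of some active MC).

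The standard cases are then routine.
\begin{itemize}
\item \glab{Snd} and \glab{Rcv} leave the continuations untouched, so clause (1) is inherited.
\item For \glab{Cont1} and \glab{Cont2}, the induction hypothesis applies to each branch. Clause (1) is preserved because every branch makes the same step $\ell$ with $\p,\q\notin\sbj{\ell}$, so the role sets of all branches change uniformly.
\item \glab{Inst} turns $\GT_1\mix\GT_2$ with $\RolesG{\GT_1}=\RolesG{\GT_2}$ into an active MC with $\lset=\rset=\emptyset$, so clause (3) follows from clause (2).
\item For \glab{Rec}, I will note that balance of $\GT$ is defined on $\unfold{\GT}$ and therefore already covers the one-step unfolding.
\item \glab{Ctx1} and \glab{Ctx2} do not change role sets, because $\nlab$ has empty subject and the roles of a definition equal those of its instantiation.
\end{itemize}

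The real work is clause (3) for the rules acting directly on an active MC $\GtoP{\GT_l}{\role p}{\lset}{\rset}{\GT_r}$, and this is where awareness is essential; example (\ref{eq:balance}) shows balance alone fails.
\begin{itemize}
\item For \glab{LRcv1} and \glab{RRcv}, the subject $\role q$ is added to $\lset$ or to $\rset$ respectively. So even if $\role q$ leaves the role set of that side, the corresponding union $\RolesG{\cdot}\cup\lset$ or $\RolesG{\cdot}\cup\rset$ is unchanged. On the opposite side, $\role q$ is now committed and removed by the definition of $\RolesG{}$ for active MCs, which is consistent.
\item \glab{RSnd} adds its subject to $\rset$, so it is handled the same way.
\item The problematic rules are \glab{LSnd} and \glab{LRcv2}, which change $\GT_l$ without enlarging $\lset$. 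Suppose a role $\role r$ with $\role r\notin\lset$ drops out of $\RolesG{\GT_l}$ while $\role r$ is still in $\RolesG{\GT_r}\cup\rset$. Then $\role r$ has no further action in $\GT_l'$: it has terminated on the LHS.
\end{itemize}

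In that case I will invoke clear-termination (\Cref{def:aware}(2)). Since $\lset=\emptyset$, when $\lset$ is nonempty the argument below is instead driven by the sets already recorded. Clear-termination gives either $\wdep{\role p}{\role r}{\GT_l}$, or that $\role r$ diverges in $\GT_l$. Divergence contradicts $\role r$ vanishing. Eventual dependence means $\role r$ must still receive a committing message from $\role p$, so $\role r$ is still in $\RolesG{\GT_l'}$, because the pending receive keeps it active there. Either way, no role disappears from the LHS without being added to $\lset$.

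The counterexample (\ref{eq:balance}) is exactly the situation that awareness excludes. Well-nestedness (\Cref{lem:wn}) handles the case where $\GT_l$ starts with an observer interaction. Finally, nested MCs inside $\GT_l$ or $\GT_r$ are handled by the induction hypothesis applied with the awareness of those inner MCs, which holds by \Cref{awarepreservation}.

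I expect the main obstacle to be making the \glab{LSnd}/\glab{LRcv2} argument precise in the presence of in-transit messages and truncated recursion. I will first try to prove a small lemma: if $\wdep{\role p}{\role r}{\GT_l}$ and $\GT_l\trans{\ell}\GT_l'$ with $\ell$ not a committing receive by $\role r$, then $\role r\in\RolesG{\GT_l'}$.
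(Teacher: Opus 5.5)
Your skeleton is the one the paper sketches: induction on the transition, with clear-termination as the reason an uncommitted role cannot silently vanish from the LHS. However, the decisive case is not covered.

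\textbf{The case $\lset\neq\emptyset$ in \glab{LSnd}/\glab{LRcv2}.} You can invoke clear-termination only when $\lset=\emptyset$. Once the observer has committed, clause (2) of \Cref{def:aware} is vacuous, so awareness of the current state says nothing about uncommitted roles on the LHS. The remark about ``the sets already recorded'' is not an argument, and none can be given from the stated hypotheses. With observer $\p$, take
\[
\GT=(\role r \mathbin{\rightarrowtriangle} \role s : b.\tend)\;\blacktriangleright_{\{\p\},\emptyset}\;(\p \mathbin{\rightarrowtriangle} \role r : x.\,\p \mathbin{\rightarrowtriangle} \role s : y.\tend).
\]
It is aware (single-decision holds, clear-termination is vacuous), coherent, and balanced, since $\{\role r,\role s\}\cup\{\p\}=\{\p,\role r,\role s\}$. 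Yet after $\role r\role s!b$ by \glab{LSnd} the LHS is $\role r \mathbin{\rightsquigarrowtriangle} \role s : b\{b.\tend\}$. Under the convention of (\ref{eq:balance}) its roles are $\{\role s\}$, so clause (3) of \Cref{wellset} fails.

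What is missing is an invariant recording what clear-termination guaranteed when the MC was instantiated, while $\lset$ was still empty. For example: in an active MC with $\rset=\emptyset$, every role $\role r\notin\lset$ occurring in the LHS either diverges there or still has an input from the observer ahead of it. This must be established at \glab{Inst} and shown preserved by every rule. The preservation argument uses two facts: the observer's first LHS action is its committing receive, and an input from the committed observer to a role not yet in $\lset$ carries a committing label (\Cref{def:committ}), so that role enters $\lset$ when it consumes it. This is how the paper's remark that clear-termination ``always ensures'' uncommitted roles still appear in the LHS has to be read. It only holds for types reachable from an initial aware one, so the statement must be restricted accordingly.

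\textbf{The \glab{Cont1} step for clause (1).} As written this step is also wrong. All branches making the same step $\ell$ does not make their role sets change uniformly, because a third party can perform its last action in one branch but not in another. For example,
$\p \mathbin{\rightarrowtriangle} \q : \{a_1.\,\role r \mathbin{\rightarrowtriangle} \role s : b.\tend,\ a_2.\,\role r \mathbin{\rightarrowtriangle} \role s : b.\,\role s \mathbin{\rightarrowtriangle} \role r : c.\tend\}$
is initial, trivially aware, and balanced. After $\role r\role s!b$ via \glab{Cont1}, however, $\role r$ has disappeared from the first branch only, violating clause (1).

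What excludes this is projectability. Merge forces the branches to give $\role r$ the same behaviour up to the labels accepted at its inputs, and the paper itself notes that clause (1) is ``normally entailed by projection''. You therefore need projectability as an extra hypothesis and must use it in this case. The paper's own sketch glosses over both points. A correct version of the proposition should be stated for projectable types reachable from an initial, aware, and balanced one.
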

The proof of Proposition~\ref{Finvariant} is mechanical by induction on the transition. One key point to observe is that clear-termination always ensures that any role $\role r$ not yet committed in a MC still appears in the LHS (either because it has to receive a committing message, or because it diverges in the LHS). In the first case each role will naturally occur until it is committed, in the second case it will occur in the unfolding. This is key to preserve balance.

%

\subsection{Coherence}
\label{asec:coherence}

\begin{definition}[Coherent $\GT$]\label{def:coherence}\label{asec:pp}

$\GT$ is \emph{coherent} if: \( \GT = \Ct{C}{
\GtoP{\GT_1}{}{\lset}{\rset}{\GT_2}}\,  \implies \, \lset=\emptyset\, \lor\, \rset=\emptyset\)
\end{definition}
Informally, coherence of $\GT$ requires that no role is committed to different sides of the same active MC in $\GT$. Coherence is preserved by transition and hence is invariant for all states reachable from an initial global types enjoying balance and awareness.

The proof of coherence is given after a few auxiliaries.

\begin{definition}[Ready roles]\label{def:ready}
We say that a role is \emph{ready} in $\GT$,  written $\role p\in \ready{\GT}$, if there exists $\ell$ such that $\gspairl{\GT}{\Theta}\trans{\ell}$ and $\sbj{\ell}=\role p$.
\end{definition}
Namely, a role $\role p$ is ready in $\GT$ if
$\role p$ can immediately make a send or receive action.

The following proposition can be proved mechanically by induction on the syntax of $\GT_1$, observing that the definition of committing set always adds the observer of a MC first, to the committing set of a mixed choice, before all other causally related actions.
\begin{proposition}\label{pro:committing}
Given a MC $\GT_1\mix\GT_2$ with observer $\p$ and (finite or infinite) execution $\GT_1\trans{\vec{l}}$, the first committing action in vector $\ell$ (if any is committing) has subject $\p$.
\end{proposition}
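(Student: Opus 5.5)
The plan is to reduce the statement to a purely syntactic fact about the committing set, namely that every committing label has a causal chain back to an action of the observer $\p$, and then to argue along the execution. Fix the MC $\GT_1\mix\GT_2$ with observer $\p$, so that $\GT_1=\GactShort{q}{p}{\St{1}}$ and $\GT_2=\GactShort{p}{q}{\St{2}}$. The actions of $\GT_1$ that can be committing fall into two kinds. The first kind is a receive by $\p$ of a label in $\labG{\St{1}}$. The second kind is a receive $\lrcv{s}{r}{a}$ with $a\in\comSet{\GT'}{C}{c}$, where $\GT'$ is a continuation and, by \Cref{def:committ}, the sender $\role s$ already belongs to the tracked committed set $C$.

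The central invariant will be an induction on the syntax of $\GT_1$, generalised to the auxiliary function $\comSet{\GT}{C}{c}$. The generalised claim is the following. Take an execution of $\GT$ in which every role of $C$ has already performed, on the causal path leading to $\GT$, an action that depends on a prior committing action of $\p$. Then every committing action in that execution is preceded by such an action of its sender. The base case $C=\{\p\}$ arises on the LHS right after the initial prefix $\q\to\p$. Note that the first action by $\p$ on the LHS is precisely the receive of a label of $\St{1}$, which is itself committing.

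The key step is to treat the interaction case $\GactShort{s}{r}{\St{}}$ with $\role s\in C$ and $\role r\notin C$. Here the labels of $\St{}$ are added to the set and $\role r$ joins $C$. I would show that the send by $\role s$ occurs after $\role s$'s commitment chain, and that the receive by $\role r$ occurs after that send. Hence the receive cannot be the first committing action in the vector unless $\p$'s committing receive already appears earlier. In the other interaction case, and in the cases for recursion, $\tend$, and nested MC definitions $c'\neq c$, no new label is marked, so the claim follows directly by induction. Recursive occurrences of the same name $c$ contribute $\emptyset$ and need no treatment.

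One subtlety concerns the initial prefix of $\GT_1$ itself. The send $\lsnd{q}{p}{a}$ is not committing, since only receive actions are marked (see \glab{LRcv1} and \glab{LSnd}). The matching receive has subject $\p$. So if it is the first committing action, we are done. For the RHS the statement is immediate from \Cref{def:committ}, because every first action there is by $\p$ or depends on $\p$'s send.

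The step I expect to be the main obstacle is the interaction with interleavings and with label reuse. The execution vector may interleave asynchronous actions, including actions unfolded from recursion, so that a label appearing in the committing set could occur at an earlier, syntactically unrelated position. I would handle this with well-formedness (\Cref{well-formed}): every occurrence of such a label is committing in its position, and therefore carries its own causal chain to $\p$. I would handle unfolding by \Cref{ex:unf}-style reasoning on $\unfold{\GT}$. Given these, the first committing action in the vector must belong to the minimal element of the causal order, and that element is $\p$'s receive.
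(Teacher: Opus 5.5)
Your proposal is correct and takes essentially the same route as the paper. The paper proves the statement by induction on the syntax of $\GT_1$, observing that the committing set adds the observer's receive first and every other committing label only along causal chains rooted in it.

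Your explicit appeal to well-formedness is a point the paper's one-line sketch leaves implicit, and it is genuinely needed, since commitment in \glab{LRcv1} is checked by label rather than by position (without it, $\GT_\textrm{\ding{55}}$ is a counterexample). The same argument must also cover receives inside recursive copies of $c$. Those copies contribute nothing to $\comSetGc{\GT}{c}$, but their receives are still checked against it during the execution, so your remark that they "need no treatment" should be softened accordingly.
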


\begin{restatable}[LR-initiation]{proposition}{APP}
\label{awareproperty}\label{auxx}
Consider an \emph{aware} active MC of the form $\Ctname [\GtoP{\GT_1}{\role p}{\lset}{\rset}{\GT_2}]$
\begin{enumerate}
    \item If $\role r\in \ready{\GT_2}\setminus \{\role p\}$ then $\role p \in \rset$.
    \item If $\role p\not\in \rset $ then $\rset=\emptyset$.
    \item If $\role p\not\in \lset $ then $\lset=\emptyset$.
\end{enumerate}
\end{restatable}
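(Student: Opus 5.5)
The plan is to prove the three items as invariants over reachable global types. I will argue by induction on the length of the transition sequence from the instantiation of the MC, and use Monotonicity (\Cref{lem:monoton}) to carry commitments forward. I read the statement as applying to aware active MCs occurring in a type reachable from an initial, aware global type. This is the setting where it is used, and it is necessary, since (2) and (3) are not purely syntactic. By Awareness Preservation (\Cref{awarepreservation}), every intermediate active MC for the same instance is aware, so single-decision and clear-termination are available at every step.

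\emph{Item (2)} comes first, since (1) depends on it. The MC is created by $\mathtt{[Inst]}$ with $\rset=\emptyset$, so the invariant ``$\rset\neq\emptyset \Rightarrow \p\in\rset$'' holds initially. Only $\mathtt{[RSnd]}$ and $\mathtt{[RRcv]}$ enlarge $\rset$, and each adds the subject $s$ of a transition $\GT_2\trans{\ell}$. Suppose $\rset=\emptyset$ just before such a step. Single-decision gives $\dep{\p}{\role r}{\GT_2}$ for every $\role r\neq\p$. By \Cref{pro:dep}, the first action of $\GT_2$ then has subject $\p$, so $s=\p$. Afterwards $\p\in\rset$, and by Monotonicity it stays there. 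Transitions under other rules, including context rules acting inside $\GT_1$ or $\GT_2$, leave $\rset$ untouched. Hence $\p\notin\rset$ forces $\rset=\emptyset$.

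\emph{Item (1)} then follows directly. Let $\role r\in\ready{\GT_2}\setminus\{\p\}$, and suppose $\p\notin\rset$. By (2), $\rset=\emptyset$, so single-decision yields $\dep{\p}{\role r}{\GT_2}$. \Cref{pro:dep} then says every immediate transition of $\GT_2$ has subject $\p$, which contradicts readiness of $\role r$. Therefore $\p\in\rset$.

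\emph{Item (3)} follows the same invariant pattern, with ``$\lset\neq\emptyset\Rightarrow\p\in\lset$'', but on the LHS. The only rule enlarging $\lset$ is $\mathtt{[LRcv1]}$, whose label belongs to $\comSetG{\underline{\GT}}$. Consider the execution of the LHS since instantiation. The first $\mathtt{[LRcv1]}$ step is the first committing action in the LHS projection of that execution. By \Cref{pro:committing}, that action has subject $\p$.

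I expect the main obstacle to be this appeal to \Cref{pro:committing}. Committing sets are computed statically on $\unfold{\underline{\GT}}$, while the runtime LHS may be a deeply evolved, unfolded residue in which labels can recur. I would rely on well-formedness (\Cref{well-formed}): every occurrence of a label is uniformly committing or non-committing. With that, an induction on the syntax of the LHS block, following the auxiliary function $\comSet{\cdot}{C}{c}$, shows that a committing receive by $\role q\neq\p$ requires its sender already to be in the tracked committed set $C$. That set is seeded only by $\p$ on the LHS. Label capture through recursion is handled by the unfolding, as in \Cref{ex:unf}. Nested occurrences of the same $c$ yield $\emptyset$ and so never contribute spurious labels.
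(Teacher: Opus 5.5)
Your proposal is correct and follows essentially the same route as the paper's sketch, which likewise tracks the instance from its $\mathtt{[Inst]}$ step through an `initialized' meta-state (empty commitment set) to a committed one containing the observer, using single-decision for the first R-commitment, \Cref{pro:committing} for the first $\mathtt{[LRcv1]}$ step, and \Cref{lem:monoton} for persistence; your derivation of (1) from (2) and your appeal to \Cref{awarepreservation} only make explicit steps the paper leaves implicit. One small imprecision in (1): single-decision quantifies only over $\RolesG{\GT}\setminus\role p$, so a role $\role r\in\lset$ that has already terminated on the LHS is not covered directly, but since no rule can change $\GT_2$ once $\lset\neq\emptyset$ and $\rset=\emptyset$, the dependence $\dep{\role p}{\role r}{\GT_2}$ given by awareness just before the first L-commitment still applies.
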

\begin{proof}(sketch)
To prove (1,2) observe that the execution of any \good\ state $\GT$ is in one of the following four meta-states:
\begin{enumerate}
\item (initial): ${\GtoDef{\GT_1}{\tdef{c}{p}}{\lset}{\rset}{\GT_2}}$ is a subterm of $ \GT$ and  $\GT$ has no active MC instances $c,n$;
\item (initialized) ${\GtoP{\GT_1}{\tact{c}{n}{p}}{\lset}{\emptyset}{\GT_2}}$ is a subterm of $ \GT$;
\item (R-committed) ${\GtoP{\GT_1}{\tact{c}{n}{p}}{\lset}{\rset \cup \{\role p\}}{\GT_2}}$.
\end{enumerate}
In state (1) $\GT$  can make:
\begin{itemize}
    \item transitions that are not with label $\nu c,n'$ with $n'\in\mathbb{N}$. This can be by $[\mathtt{Snd}]$, $[\mathtt{Rcv}]$, $[\mathtt{Cont1}]$, $[\mathtt{Cont2}]$, $[\mathtt{Rec}]$ that will preserve state (1),
    \item a transition with label $\nu c,n'$ with $n'<n$ and move to state (1),
    \item a transition with label $\nu c,n$ and move to state (2).
\end{itemize}

In state (2) $\GT$  can make
\begin{itemize}
    \item non-committing actions by $[\mathtt{LSnd}]$, $[\mathtt{LRcv2}]$, $[\mathtt{RTAct}]$ that preserve state (2),
    \item an action committing on the left by $[\mathtt{LRcv1}]$ which also preserves state (2),
    \item an action committing on the right. Since $\rset = \emptyset$ and the first committing action on the RHS is by the observer (by \textbf{awareness - single decision})
    hence reaches a state with subterm  ${\GtoP{\GT_1}{\tact{c}{n}{p}}{\lset}{\{\role p\}}{\GT_2}}$ and which is in state (3).
\end{itemize}

Any transition from (3) lead to state (3) by \textbf{monotonicity} (\Cref{lem:monoton}).

The case (3) that $\role p\not\in \lset $ implies $\lset=\emptyset$ can be proved similarly with meta states
\begin{enumerate}
\item (initial): ${\GtoDef{\GT_1}{\tdef{c}{p}}{\lset}{\rset}{\GT_2}}$ is a subterm of $ \GT$ but $ \GT$ has no MC $c,n$;
\item (initialized) ${\GtoP{\GT_1}{\tact{c}{n}{p}}{\emptyset}{\rset}{\GT_2}}$ is a subterm of $ \GT$;
\item (L-committed) ${\GtoP{\GT_1}{\tact{c}{n}{p}}{\lset\cup \{\role p\}}{\rset }{\GT_2}}$.
\end{enumerate}
In this case the only committing action on the left-hand side (main block) would be for rule $[\mathtt{LRcv1}]$ from meta-state (2) to meta-state (3) which is the only rule allowing for a committing action (i.e., altering $\lset$).
By Proposition~\ref{pro:committing} this action is by the observer.
Preservation of meta-state (3) is by \textbf{monotonicity} (\Cref{lem:monoton}).
\end{proof}

\Cref{awareproperty} shows that the observer of a MC (1) is always the one making the first action on the RHS and (2,3) is always the first one to commit to any side.

\begin{restatable}[Coherence Preservation]{lemma}{CP}
\label{lem:cohpre}
\(\GT \text{ is coherent and aware } \land ~\gspairl{\GT}{} \trans{}\gspairl{\GT'}\qquad \quad\Rightarrow \quad \GT' \text{ is coherent}\)
\end{restatable}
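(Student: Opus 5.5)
We prove the lemma by induction on the derivation of $\gspairl{\GT}{}\trans{\ell}\gspairl{\GT'}{}$, doing a case analysis on the last rule applied, in the same style as \Cref{awarepreservation} and \Cref{lem:wn}. Coherence is a condition on every active MC occurring in $\GT'$. So for each rule we split the active MCs of $\GT'$ into three kinds. First, those untouched by the transition: they appear unchanged in $\GT$ and are coherent by hypothesis. Second, those inside the premise's sub-derivation: coherent by the induction hypothesis, since awareness and coherence of a subterm follow from those of the whole term. Third, at most one MC whose sets $\lset,\rset$ are actually modified by the step. Only this third kind needs real work.

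\textbf{Routine cases.} Rules $\mathtt{[Snd]}$ and $\mathtt{[Rcv]}$ only strip a prefix and leave the continuations unchanged. $\mathtt{[Cont1]}$ and $\mathtt{[Cont2]}$ follow from the induction hypothesis. $\mathtt{[Rec]}$ holds because unfolding only duplicates MC definitions, not active MCs, so $\GT[\mu\rv.\GT/\rv]$ is coherent whenever $\mu\rv.\GT$ is. $\mathtt{[Inst]}$ creates an active MC with $\lset=\rset=\emptyset$, which is trivially coherent. $\mathtt{[Ctx1]}$ and $\mathtt{[Ctx2]}$ follow from the induction hypothesis on the side that moves; the outer sets are unchanged. $\mathtt{[LSnd]}$ and $\mathtt{[LRcv2]}$ leave the outer sets unchanged and use the induction hypothesis inside.

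\textbf{Committing cases.} These are the key cases: $\mathtt{[LRcv1]}$, $\mathtt{[RSnd]}$ and $\mathtt{[RRcv]}$, which add a role $\role q$ to $\lset$ or to $\rset$ of the outer MC with observer $\role p$.

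Take $\mathtt{[LRcv1]}$, so $\lset'=\lset\cup\{\role q\}$. By coherence of $\GT$, either $\rset=\emptyset$ or $\lset=\emptyset$. If $\rset=\emptyset$ we are done. Otherwise $\lset=\emptyset$ and $\rset\neq\emptyset$. By \Cref{awareproperty}(2), $\rset\neq\emptyset$ gives $\role p\in\rset$. By \Cref{pro:committing}, the first committing action on the LHS has the observer as subject, so $\role q=\role p$. But the premise of $\mathtt{[LRcv1]}$ requires $\role q\notin\rset$, which contradicts $\role p\in\rset$. Hence this case cannot occur.

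The cases $\mathtt{[RSnd]}$ and $\mathtt{[RRcv]}$ are symmetric. Suppose $\lset\neq\emptyset$, so $\rset=\emptyset$ by coherence. By \Cref{awareproperty}(3), $\role p\in\lset$. Since $\rset=\emptyset$, single decision ($\dep{\role p}{\role r}{\GT_2}$ for all $\role r\neq\role p$), together with \Cref{pro:dep} or \Cref{awareproperty}(1), forces the subject of the first RHS action to be $\role p$. The side conditions $\role p\notin\lset$ of $\mathtt{[RSnd]}$ and $\role q\notin\lset$ of $\mathtt{[RRcv]}$ then give a contradiction.

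\textbf{Expected obstacle.} The main difficulty is applying \Cref{pro:committing} in the $\mathtt{[LRcv1]}$ case. That proposition talks about executions starting from the original LHS, whereas here $\GT_l$ is an intermediate residual. We must argue that, because $\lset=\emptyset$, no committing LHS action has happened yet, so the current one is the first one along the execution from the instantiated $\GT_1$. The committing set being computed on the base type $\underline{\GT}$ matters here: by well-formedness (\Cref{well-formed}), a label $a\in\comSetG{\underline{\GT}}$ cannot also label a non-committing occurrence, so the committing action really is the observer's first receive. Nested instances of the same MC name produced by unfolding must be separated using unique instances (\Cref{UIinvariant}).
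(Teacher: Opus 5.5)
Your proposal is correct and follows essentially the same route as the paper: a case analysis on the last transition rule. In that analysis only $\mathtt{[LRcv1]}$, $\mathtt{[RSnd]}$ and $\mathtt{[RRcv]}$ are non-trivial, and they are discharged using coherence of $\GT$ together with LR-initiation (\Cref{awareproperty}). The differences are minor: you induct on the derivation rather than on the context, and for $\mathtt{[LRcv1]}$ you argue by contradiction via \Cref{pro:committing} where the paper splits on whether the sender is already left-committed; like the paper, you implicitly need $\GT$ to be reachable from an initial type for LR-initiation and the ``first committing action'' reasoning to apply.
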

\begin{proof}
Coherence relies on the dependency between actions guaranteed by awareness.

Let $\GT = \Ct{C}{\GT_h}$ for some $\GT_h$. We proceed by induction on the syntax of $\Ctname$ and  inner induction on the syntax of $\GT_h$.

\paragraph{Case $\Ctname = \hole$ (base case outer induction).} In this case $\GT_h=\GT$.
 If $\GT = \tend$ the thesis is immediate.
 If $\GT$ is a communication the thesis follow by inner induction on $\GT$.
 If $\GT$ is a recursion then action $\ell$ is by rule $\mathtt{[Rec]}$ and $\GT' = {\GT'}[\mu \rv. \GT / \rv]$. $ \GT'$ is coherent by inner induction, and  $\mu \rv. \GT$ is coherent by hypothesis. It follows that $\GT'$ is coherent, hence done.
 The interesting case is for  \[\GT = \GtoP{\GT_1}{\tact{c}{n}{p}}{\lset}{\rset}{\GT_2}\] We proceed by case analysis on the last rule used to derive transition.
 The last rule applied is one of the following:
\begin{itemize}
\item {$\mathtt{[Snd]}$ or $\mathtt{[Rcv]}$}  : the thesis follows by the coherence hypothesis and the fact that $\lset$ and $\rset$ are not modified by these rules.
\item {$\mathtt{[Inst]}$},  $\mathtt{[Ctx1]}$ or $\mathtt{[Ctx2]}$ :
These rules do not change $\lset$ and $\rset$ so in case of $\mathtt{[Inst]}$ the thesis is immediate, in case of $\mathtt{[Ctx1]}$ and $\mathtt{[Ctx2]}$ it is directly by induction.
\item {$\mathtt{[LSnd]}$ or $\mathtt{[LRcv2]}$} : $\gspairl{\GT_1}{} \trans{\ell} \gspairl{\GT_1'}{\Theta'}$ with $\GT_1'$ coherent by (inner) induction. The transition does not change $\lset$, $\rset$, and $\GT_2$. Then $\GT '= \GtoP{\GT_1'}{\tact{c}{n}{p}}{\lset}{\rset}{\GT_2}$ is also coherent, hence done.
\item {$\mathtt{[LRcv1]}$} : $\gspairl{\GT_1}{} \trans{\lrcv{s}{r}{a}} \gspairl{\GT_1'}{\Theta'}$ with $\GT_1'$ coherent by (inner) induction and $a$ committing in $\GT$. Since $a$ is a committing receive action, the corresponding send action is also committing, the We have two cases:
\begin{itemize}
    \item if $\role{s} \in \lset$ then $\rset = \emptyset$ by coherence of $\GT$.
    \item if $\role{s} \not \in \lset$  then $\role s = \role p$ (i.e., $\role s$ is the observer). By premise of {$\mathtt{[LRcv1]}$} $\role s\not in \role r$ and by ~\Cref{auxx}-- \textbf{LR-initiation} (2) if the observer is not in $\rset$ then $\rset=\emptyset$.
\end{itemize}
In either of the cases above $\rset = \emptyset$, hence $\GT' = \GtoP{\GT_1'}{\tact{c}{n}{p}}{\lset\cup\{\role{q}\}}{\emptyset}{\GT_2}$ is coherent.
\item {$\mathtt{[RSnd]}$}  : in this case
$\GT' = \GtoP{\GT_1}{\tact{c}{n}{p}}{\lset}{\rset\cup\{\role{r}\}}{\GT_2'}$. Since the rule adds an element to $\rset$ we need to show that $\lset$ is empty.
We have two cases:
\begin{itemize}
    \item if $\role r = \role p$, since $\role p\not \in\lset$ (by premise of $\mathtt{[RSnd]}$) then $\lset=\emptyset$ (by Lemma~\ref{auxx}(3))
    \item if $\role r \not = \role p$ then since $\role r$ is R-acting we have $\role p\in\rset$ (by Lemma~\ref{auxx}(1)). By coherence of $\GT$ and $\role p\in\rset$ it follows $\lset =\emptyset$.
\end{itemize}
Since $\lset$ is unchanged in $\GT'$ then $\GT'$ is coherent.
\item { $\mathtt{[RRcv]}$} : Similar to the case for $\mathtt{[RSnd]}$.
\item {$\mathtt{[RTAct]}$ }: $\gspairf{\GT_1} \trans{\ell} \gspair{\GT_1'}{\Theta'}$ and $\gspairf{\GT_2} \trans{\ell} \gspair{\GT_2'}{\Theta'}$. Since $\GT_1'$ and $\GT_2'$ are coherent by induction and $\lset$ and $\rset$ remain unchanged, then
$\GT' = \GtoP{\GT_1'}{\tact{c}{n}{p}}{\lset}{\rset}{\GT_2'}$ is also coherent.
\item {$\mathtt{[Cont1]}$ or $\mathtt{[Cont2]}$}: immediate by (inner) induction.
\end{itemize}

\paragraph{Case $\Ctname$ is a prefix.} If $\Ctname$ is a prefix, either communication or message in transit, the transition of $\GT$ is by $\mathtt{[Snd]}$ or $\mathtt{[Rcv]}$, this case is similar to the first case for $\Ctname = \hole$. The case for transition by $\mathtt{[Cont1]}$ or $\mathtt{[Cont2]}$ is also similar, by inner induction.

\paragraph{Case $\Ctname = \GtoP{\hole  }{\tact{c}{n}{r}}{\lset}{\rset}{\GT_2} $.} By hypothesis, either $\lset=\emptyset$ or $\rset=\emptyset$. If
By Lemma~\ref{lem:wn} (\textbf{well-nestedness}) the form of $\Ctname$ can be assumed to be $
\GtoP{\GactShort{p}{q}{\St{}}\cup\{a.\hole\} }{\tact{c}{n}{p}}{\lset}{\rset}{\GT_2}$.
Proceed by case analysis on the last rule used in the derivation:
\begin{itemize}
\item $\mathtt{[Lsnd]}$ : in the premise, either $\Ctname$ moves by $\mathtt{[Lsnd]}$ or ${\GT}$ (the global type in the hole) moves. In the first case, only the MCs in the selected branch appear in the context after transition and, in those MCs, the sets do not change. The thesis then follows by coherence of $\GT$. If $\gspair{\GT_h}{\Theta} \trans{\ell} \gspair{\GT_h'}{\Theta'}$ by inner induction $\GT_h'$ is coherent, hence, since $\lset$, $\rset$, and $\GT_2$ are coherent by hypothesis, also $\GT'$ is coherent.
\item $\mathtt{[LRcv1]}$ : if $\Ctname$ moves by $\mathtt{[Snd]}$ this case is similar to the one above for $\mathtt{[LSnd]}$.
\item $\mathtt{[LRcv2]}$ : Similar to $\mathtt{[LSnd]}$.
\item $\mathtt{[RSnd]}$ : since this rules adds $\role{q}$ to $\lset$ by~\Cref{auxx} -- \textbf{LR-initiation} (3) --  it must be $\role p\in \lset$ and hence $\rset=\emptyset$ for coherence of $\GT$.
\item $\mathtt{[RRcv]}$ : This rule adds an element of the right-hand side set.
By~\Cref{auxx} -- \textbf{LR-initiation} (1) if $\rset$ is not empty then $\role q \in \rset$ hence  $\lset=\emptyset$.
\end{itemize}
\paragraph{Case $\Ctname = \GtoP{\GT_1 }{r}{\lset}{\rset}{ \GactShort{p}{q}{\St{}}\cup\{a.\hole\} } $.}
Similar to the symmetric one.
\paragraph{Case $\Ctname = \GtoDef{\hole}{\tdef{c}{p}}{\lset}{\rset}{\GT_2} $ or
$\Ctname = \GtoDef{\GT_1}{\tdef{c}{p}}{\lset}{\rset}{\hole } $}. Immediate by hypothesis since $\GT$ only allows an action by [Inst] which does not change $\lset$ and $\rset$.

\paragraph{Case $ \mu \rv . \hole $} By induction.

\end{proof}

Next, Since initial states are coherent, then all states reachable from an initial one are coherent.

\begin{corollary}[Coherence Invariant]
\label{coherenceinvariantI}
Any $\GT$ reachable from an initial global type that enjoys balance and awareness is coherent.
\end{corollary}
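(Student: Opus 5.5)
The plan is to argue by induction on the length of the transition sequence $\GT_0 \trans{\ell_1}\cdots\trans{\ell_n}\GT_n = \GT$, where $\GT_0$ is initial, aware and balanced. The invariant carried along is the conjunction ``aware, balanced, well-nested and coherent''. The point of carrying all four is that \Cref{lem:cohpre} needs awareness of the source state, and awareness is only guaranteed at later states through the preservation results.

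\textbf{Base case.} We show $\GT_0$ satisfies all four properties. It is aware and balanced by hypothesis. It is coherent vacuously: an initial global type has no active MC, so the antecedent of \Cref{def:coherence} never fires. It is well-nested for the same reason, since \Cref{def:wn} also only constrains active MCs.

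\textbf{Inductive step.} Assume $\GT_i$ is aware, balanced, well-nested and coherent, and that $\GT_i\trans{\ell_{i+1}}\GT_{i+1}$. We then invoke, in order:
\begin{enumerate}
\item \Cref{awarepreservation}, giving that $\GT_{i+1}$ is aware;
\item \Cref{Finvariant}, which requires both awareness and balance of $\GT_i$, giving that $\GT_{i+1}$ is balanced;
\item \Cref{lem:wn}, giving that $\GT_{i+1}$ is well-nested;
\item \Cref{lem:cohpre}, which requires coherence and awareness of $\GT_i$, giving that $\GT_{i+1}$ is coherent.
\end{enumerate}
Well-nestedness must be in the invariant because the proof of \Cref{lem:cohpre} uses it in the context case $\GtoP{\hole}{}{\lset}{\rset}{\GT_2}$. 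Likewise, the LR-initiation facts (\Cref{awareproperty}) used inside that proof need awareness of the current state, which the invariant supplies.

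\textbf{Main obstacle.} The only delicate point is bookkeeping about annotations. The preservation lemmas are stated for the annotated semantics with $\Theta$, whereas ``reachable'' in the corollary refers to the unannotated LTS of \Cref{fig:gsem}. I would first note that every unannotated run lifts to an annotated run by choosing fresh counters $n>\Theta(\GT,c)$ at each $\nu$ step. The lift is possible because \glab{Inst}, \glab{Ctx1} and \glab{Ctx2} have the same premises in both versions apart from the freshness side condition, which can always be satisfied. Erasing the annotations preserves coherence, since coherence mentions only $\lset$ and $\rset$. So it suffices to prove the statement for annotated runs, and there the induction above applies directly.
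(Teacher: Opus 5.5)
Your core argument is the paper's own. The paper dispatches this corollary in a line: an initial type has no active MC, so it is trivially coherent, and \Cref{lem:cohpre}, with awareness kept alive by \Cref{awarepreservation}, propagates coherence along any run. Your induction is exactly that, with the bookkeeping made explicit. Adding well-nestedness to the invariant is a genuine sharpening, since the proof of \Cref{lem:cohpre} appeals to it without listing it as a hypothesis. Carrying balance is harmless but not needed, as \Cref{lem:cohpre} uses only coherence and awareness.

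Two of your side justifications are inaccurate, although neither affects the result.

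\textbf{The annotated lift.} Lifting unannotated runs to annotated ones can fail. Besides \glab{Inst}, \glab{Ctx1} and \glab{Ctx2}, the rule \glab{Cont1} matters: it requires every branch to fire the \emph{same} label, and in the annotated LTS that label $\nu c,n$ fixes the MC name. Consider an unannotated $\nu$-step in which $\p \mathbin{\rightarrowtriangle} \q : \{a_1.\GT_1, a_2.\GT_2\}$ instantiates differently named MC definitions in $\GT_1$ and $\GT_2$ at once. It has no annotated counterpart, whatever counters you choose. The detour is unnecessary anyway. A $\nu$-step never alters an existing pair $\lset,\rset$; it only creates active MCs with empty sets. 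None of the preservation proofs uses the counters or the name carried by the $\nu$ label, so they apply verbatim to the unannotated LTS.

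\textbf{LR-initiation.} \Cref{awareproperty}(2) is not supplied by your invariant. An active MC whose $\rset$ is non-empty but omits the observer can be aware, balanced, well-nested and coherent, yet violate it. The paper proves LR-initiation by tracking meta-states along a run from an initial type. So what licenses its use inside \Cref{lem:cohpre} is that each $\GT_i$ is reachable, which your induction does guarantee.
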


 \subsection{Progress}
\label{asec:progress}

Progress is proved by first looking at single steps (\Cref{thm:progress}). The diagram below gives a highlight of how the auxiliary lemmata and definitions contribute to \Cref{thm:progress}. The diagram outlines the dependencies of progress from the definitions (shown in blue) and lemmata given before. 

The final result \Cref{cor:gprog} follows from \Cref{thm:progress} observing that initial well set global types are aware and balanced, and that these properties are preserved by transition.

\[\centering
\begin{tikzpicture}[node distance=2cm, auto]
    \node (a) [ align=center]{\color{blue}Awareness\\(\Cref{def:aware})};
    \node (b) [right of=a, node distance=3cm, align=center] {Coherence\\ preservation\\(\Cref{lem:cohpre})};
    \node (c) [right of=b, node distance=3cm, align=center] {LR-initiation\\(\Cref{auxx})};
    \node (d) [below of=b, node distance=2cm, align=center] {Progress\\(\Cref{thm:progress})};
    \node (e) [left of=d, node distance=3cm, align=center] {\color{blue}Balance\\ (\Cref{wellset})};
    \node (f) [above of=a, node distance=2cm, align=center] {Well-nestedness\\ (\Cref{lem:wn})};
    \node (g) [above of=c, node distance=2cm, align=center] {Monotonicity\\ (\Cref{lem:monoton})};
    \node (h) [right of=d, node distance=3cm, align=center] {\color{blue}Coherence\\ (\Cref{def:coherence})};
    
    \draw[->] (a) -- (b);
    \draw[->] (c) -- (b);
    \draw[->] (b) -- (d);
    \draw[->] (e) -- (d);
    \draw[->] (h) -- (d);
    \draw[->] (g) -- (b);
     \draw[->] (f) -- (b);
     \draw[->] (a) -- (d);
\end{tikzpicture}
\]

\begin{restatable}[Progress]{lemma}{Progress}
\label{thm:progress}
If $\gspairl{\GT}{}$ is aware,  balanced, and coherent then $\GT$ enjoys progress.
\end{restatable}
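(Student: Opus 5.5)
The plan is to reduce global progress to a \emph{one-step} claim and then close under reachability. Awareness is preserved by transition (\Cref{awarepreservation}), balance is preserved in the presence of awareness (\Cref{Finvariant}), and coherence is preserved given awareness (\Cref{lem:cohpre}). Hence every $\GT'$ reachable from $\GT$ is again aware, balanced and coherent. It therefore suffices to prove the following: if $\GT$ is aware, balanced and coherent and $\role r\in\RolesG{\GT}$, then $\GT\myreaches{}\trans{\ell}$ with $\sbj{\ell}=\role r$. I would prove this by structural induction on $\GT$, using well-nestedness (\Cref{lem:wn}) to fix the shape of the two sides of active MCs. Where recursion is involved, I would use a measure given by the depth of the first occurrence of $\role r$ in $\trunc{\unfold{\GT}}$.

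The non-MC cases are standard, following \cite{DenielouY13}. $\tend$ and $\rv$ are vacuous. $\mu\rv.\GT$ reduces to its unfolding via $\mathtt{[Rec]}$. For $\GactShort{p}{q}{\St{}}$: if $\role r=\p$, then $\mathtt{[Snd]}$ applies immediately. If $\role r=\q$, we fire $\mathtt{[Snd]}$ first and then $\mathtt{[Rcv]}$. Otherwise, balance clause (1) puts $\role r$ in $\RolesG{\GT_i}$ for every branch, so we first perform the prefix actions and then apply the inductive hypothesis to the chosen $\GT_i$. Doing the prefix first avoids needing $\mathtt{[Cont1]}$ to move all branches simultaneously. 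In-transit messages are treated the same way. An MC definition first takes $\mathtt{[Inst]}$, which reduces it to the active case with $\lset=\rset=\emptyset$; for nested definitions we use $\mathtt{[Ctx1]}$/$\mathtt{[Ctx2]}$, whose side conditions hold there.

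The core is the active case $\GT=\GtoP{\GT_1}{\role p}{\lset}{\rset}{\GT_2}$, split three ways by coherence.

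\emph{Case (i), $\rset=\emptyset$ with $\role r\notin\lset$.} Here $\role r\in\RolesG{\GT_1}$ or $\role r\in\RolesG{\GT_2}$, and balance (3) makes these coincide up to $\lset$, so $\role r\in\RolesG{\GT_1}$. Applying the inductive hypothesis to $\GT_1$ gives actions that lift through $\mathtt{[LSnd]}$, $\mathtt{[LRcv1]}$ and $\mathtt{[LRcv2]}$. The premises of these rules ($\role r\notin\rset$, and similarly for the other subjects) hold trivially because $\rset=\emptyset$.

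\emph{Case (ii), $\lset\neq\emptyset$.} Coherence gives $\rset=\emptyset$, and $\role r\in\RolesG{\GT_1}$ by the definition of $\RolesG{\cdot}$. We argue on $\GT_1$ as in case (i).

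\emph{Case (iii), $\rset\neq\emptyset$.} Coherence gives $\lset=\emptyset$, and \Cref{auxx} gives $\role p\in\rset$. If $\role r\in\RolesG{\GT_2}$, we progress in $\GT_2$ via $\mathtt{[RSnd]}$/$\mathtt{[RRcv]}$, whose premise $\role r\notin\lset$ holds trivially. Otherwise $\role r\in\RolesG{\GT_1}\setminus\rset$. By balance (3), $\role r\in\RolesG{\GT_2}\cup\rset$, so in fact $\role r\in\RolesG{\GT_2}$ and we are back in the previous subcase. Note that the sub-derivation inside $\GT_2$ may need to pass through intermediate states, some of which require other roles to move; this is exactly what the inductive hypothesis supplies.

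The main obstacle is making the inductive hypothesis usable underneath an active MC. The hypothesis yields a run of $\GT_1$ (respectively $\GT_2$) in isolation, and each step must be liftable. This fails in two situations: when a step in $\GT_1$ has a subject already in $\rset$, or when an $\nlab$ step is blocked by the side conditions of $\mathtt{[Ctx1]}$/$\mathtt{[Ctx2]}$. I expect to handle this by strengthening the inductive statement to ``$\role r$ can act via a run whose subjects avoid a given set of committed roles''. Coherence guarantees that, in the relevant side, the opposite commitment set is empty, so the avoidance constraint is vacuous there. Recursion under MCs will additionally require the truncation in balance to guarantee that the induction measure decreases; I would try the depth measure above first.
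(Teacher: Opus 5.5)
Your proposal is correct and follows essentially the paper's argument: case analysis on the syntax with induction, balance (1)/(3) to place $\role r$ on the appropriate side, and coherence to make the commitment side conditions of $\mathtt{[LSnd]}$, $\mathtt{[LRcv1/2]}$, $\mathtt{[RSnd]}$, $\mathtt{[RRcv]}$ and $\mathtt{[Ctx1/2]}$ vacuous when lifting the inductive run. Closure under reachability comes from the preservation results, as the paper does around \Cref{cor:gprog}.

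The main difference is your handling of $\lset=\rset=\emptyset$. You simply stay on the LHS, which is valid because LHS steps never enlarge $\rset$, and this is simpler than the paper's split on whether the run of $\GT_i$ contains a committing action. Your appeals to well-nestedness and to \Cref{auxx} are never actually used. Drop \Cref{auxx} in particular: its proof reasons about reachability from an initial state, which is not among the lemma's hypotheses.
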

\begin{proof}
We proceed by case analysis on the syntax of $\GT$. 

If $\GT = \GactShort{p}{q}{\{a_i.\GT_i\}_{i\in I}}$ since $\role r\in \RolesG{\GT}$ then either $\role r\in \{\role p,\role q\}$ or  $\role r\in \GT_i$ for all $i$ (by balance of $\GT$).
If $\role r\in \{\role p,\role q\}$ then the thesis holds immediately after either a step $!\role p\role q a_i$ or a step  $!\role p\role q a_i$ followed by $?\role p\role q a_i$.
Assume $\role r \in \GT_i$ for all $i\in I$. Fix $j\in I$, $\GT_j$ is coherent and balanced by inductive definitions of coherence and  balance. By induction,
\begin{equation}\label{eq:pro1}
\GT_j \myreaches{}\hspace{-0.2cm}\trans{\ell} \text{ with }\sbj{\ell}=\role r
\end{equation}
hence, by applying $[\mathtt{Snd}]$ and $[\mathtt{Rcv}]$ to $\GT$ and then (\ref{eq:pro1}) to $\GT_j$:
\[\GT\trans{!\role p\role q a_j} \GmsgShort{p}{q}{j}{\{a_i.\GT_i\}_{i\in I}}
\trans{?\role p\role q a_j} \GT_j \myreaches{}\hspace{-0.2cm}\trans{\ell} \text{ with }\sbj{\ell}=\role r
\]
Which satisfies the thesis. 
The case for $\GT = \GmsgShort{p}{q}{k}{\{a_i.\GT_i\}_{i\in I}}$ is similar.


If $\GT = \GtoP{\GT_1}{\tact{c}{n}{p}}{\lset}{\rset}{\GT_2}$, by \textbf{coherence} of $\GT$ (hypothesis) we can assume  $\lset=\emptyset$ or $\rset=\emptyset$. So we have three cases. 
\begin{enumerate}
\item assume $\lset = \emptyset \land \rset \neq \emptyset$. 
Since $\role r\in \RolesG{\GT} $ then 
\begin{equation}\label{eq:p0}
\role r\in \RolesG{\GT_1}\setminus \rset \cup\role r\in \RolesG{\GT_2}\setminus \emptyset \qquad \text{(by definition of $\RolesG{\GT}$)}
\end{equation}
and 
\begin{equation}\label{eq:p1}
\RolesG{\GT_1}\cup \emptyset = \RolesG{\GT_2}\cup \rset 
\qquad 
\text{(by \textbf{balance})}
\end{equation}
By combining (\ref{eq:p0}) and (\ref{eq:p1}) we know   
$\role r\in \RolesG{\GT_2}$. 

Observe that $\GT_2$ is participating, aware, coherent (by inductive definitions, and universal quantification of awareness over all MCs) and hence, by induction, $\GT_2$ enjoys progress:
\[\GT_2 \myreaches{}\hspace{-0.2cm}\trans{\ell} \text{ with }\sbj{\ell}=\role r
\]
We need to check that all actions of $\GT_2$ that bring to action $\ell$, and $\ell$ itself, can also be executed by the MC. By inspection of the rules, actions on the right-hand side of the MC can be lifted to the MC context
using one of the following rules: $\mathtt{[RSnd]}$,  $\mathtt{[RRcv]}$, or $\mathtt{[Ctx1]}$. 
Rule $\mathtt{[RRcv]}$ can only be applied. Rules $\mathtt{[RSnd]}$ and $\mathtt{[Ctx2]}$ can be applied if the subject of the action is not in $\lset$. This is the case by \textbf{coherence preservation}.


\item assume $\rset = \emptyset$ and $\lset\neq \emptyset$. By definition of $\RolesG{\GT}$  we have that $\role r\in \RolesG{\GT}$ implies $\role r\in \RolesG{\GT_1}\setminus \emptyset$ or $\role r\in \RolesG{\GT_2}\setminus \lset$. 
As in the symmetric case, by \textbf{balance} of $\GT$, we can infer $\role r\in \RolesG{\GT_1}$.
By coherence and balance of $\GT$ also $\GT_1$ is coherent, and enjoys balance. 
By induction, $\GT_1$ enjoys progress:
\[\GT_1 \myreaches{}\hspace{-0.2cm}\trans{\ell} \text{ with }\sbj{\ell}=\role r
\]


First observe that by \textbf{clear-termination} if $\role r\in \Roles{\GT_2}$ then it is also in and

To lift each action of $\GT_1$ to $\GT$ we need to use, for each transition, one of the rules below (as last applied rule) $\mathtt{[LSnd]}$, $\mathtt{[LRcv1]}$, $\mathtt{[LRcv2]}$, or $\mathtt{[Ctx1]}$. 
Rule $\mathtt{[LSnd]}$ requires that the subject of the action is not in $\rset$ which holds by since $\rset=\emptyset$ in $\GT$ and this is preserved by coherence since $\lset\not = \emptyset$. Similarly, $\mathtt{[Ctx1]}$ requires that $\rset$ does not coincide with the set of roles of the base type which is true since $\rset=\emptyset$. 

The only premise that would disallow the same transition to $\GT$ is that the subject of that action must not be in $\rset$ in $\mathtt{[LSnd]}$, $\mathtt{[LRcv1]}$, and $\mathtt{[Lrcv2]}$, which follows since in this case we assume $\rset =\emptyset$. Similarly, the premise of $\mathtt{[Ctx1]}$ that $\RolesG{\underline{\GT}}\not = \rset$ (all roles of the base type have committed on the right), holds since $\rset=\emptyset$.
\item assume $\rset = \lset = \emptyset$. By  \textbf{balance} of $\GT$, $\role r$ is in both $\RolesG{\GT_1}$ and $\RolesG{\GT_2}$.
By coherence and balance of $\GT$ also $\GT_1$ and $\GT_2$ are both coherent, and enjoy balance. 
By induction, for $i\in\{1,2\}$, $\GT_i$ enjoys progress:
\[\GT_i \myreaches{}\hspace{-0.2cm}\trans{\ell} \text{ with }\sbj{\ell}=\role r
\]
We have two cases: (1) all actions in   $\myreaches{}$ are non-committing, or (2) there is at least one committing action in $\myreaches{}$.
In case (1), the $\lset$ and $\rset$ of the intermediate states will be left empty, and hence all actions can be lifted to $\GT$ as the only premises for an action on the RHS (resp. LHS) is that the subject of the action is not in $\lset$ (resp. $\rset$). Similarly for the context rules for MC initialization. 
In case (2), without loss of generality, assume 
$\GT_i \myreaches{}$ is broken into the sequence $\GT_i \myreaches{} \trans{\ell_c}\myreaches{}$ such that all actions before $\ell_c$ are non-committing and 
$\ell_c$ is committing. Let $\GT_c$ be the state reached after $\ell_c$.
All actions before $\ell_c$ can be lifted to $\GT$ (it can be shown using the arguments in case (1) above).  Action $\ell_c$ itself can be lifted to side $i$ as the corresponding commitment set ($\rset$ or $\lset$) is empty leading to a coherent MC where one commitment set is empty and one is the singleton $\{\sbj{\ell_c}\}$. We have reduced this case to one of the cases $\lset=\emptyset \land \rset\neq \emptyset$ or $\lset\neq\emptyset \land \rset= \emptyset$ and the proof can be concluded using the arguments of these cases. 
\end{enumerate}

The other cases are either trivial or straightforward by induction. 
\end{proof}

 \section{Local Types: full definitions}
\label{asec:localtypes}

\subsection{Operational Semantics: complete rules}
\label{app:locallts}

We give the full transition rules in \Cref{afig:lsem} as some rules ($\lrule{RCtxt}$, $\lrule{NLCtxt}$, and $\lrule{NRCtxt}$) are omitted in the paper.

\begin{figure}
\small
$ \arraycolsep=1pt
\begin{array}{cr}
\inferrule{%
k \in I
\qquad
m = \msg{p}{a_k}{\pth}
}{%
\tstate{\ltheta}{\lenv}{
(\p, \tseli{\q}{a} \ST_i, \sigma), (\q, \ST, \sigma'[\p \mapsto \vv{m}])}
\trans{\p\q!\llab_k}
(\p, \ST_k, \sigma), (\q, \ST,\sigma'[\p \mapsto \vv{m}\cdot m])
}
&
{\footnotesize \lrule{Snd}}
\\
\\
\inferrule{%
k \in I
\qquad
\vv{m} = \vv{m}_1 \cdot \msg{p}{a_k}{\pth} \cdot {\vv{m}}_2
\qquad
(a, \pth) \not\in \vv{m}_1
}{%
\tstate{\ltheta}{\lenv}{
(\q, \tbrai{\p}{a}\ST_i, \sigma[\p \mapsto \vv{m}])}
\trans{\p\q?\llab_k}
(\q, \ST_k, \sigma[\p \mapsto \vv{m}_1 \cdot {\vv{m}}_2])
}
&
{\footnotesize \lrule{Rcv}}
\\
\\
\inferrule{
\tstate{\ltheta}{\lenv}{
(\p, \ST[\mu \mathtt t . \ST / \mathtt t], \sigma), Y}
\trans{\ell}
Y'
}{
\tstate{\ltheta}{\lenv}{
(\p, \mu \mathtt t . \ST, \sigma), Y}
\trans{\ell}
Y'
}
\qquad
\tstate{\ltheta}{\lenv}{
(\p, \ST \mixi{c} \ST', \sigma)}
\trans{\lnewlab}
(\p, \ST\lmixi{c} \ST', \sigma)
&
{\footnotesize \lrule{Rec/New}}
\\
\\
\inferrule{
{\begin{array}{l}
\tstate{\ltheta}{\lenv . \pleft }{
(\p, \ST_1 , \sigma), Y}
\trans{\p\q!\llab}
(\p, \ST'_1, \sigma), Y'
\end{array}}
}{
\tstate{\ltheta}{\lenv }{
(\p, \ST_1 \lmix \ST_2, \sigma), Y}
\trans{\p\q!\llab}
(\p, \ST'_1 \lmix \ST_2, \sigma), Y'
}
~
~
~
\inferrule{
{\begin{array}{l}
\tstate{\ltheta}{\lenv . \pright}{
(\p, \ST_2 , \sigma), Y}
\trans{\p\q!\llab}
(\p, \ST'_2, \sigma), Y'
\end{array}}
}{
\tstate{\ltheta}{\lenv }{
(\p, \ST_1 \lmix \ST_2, \sigma), Y}
\trans{\p\q!\llab}
(\p, \bullet \lmix \ST'_2, \sigma),Y'
}
&
{\footnotesize \lrule{LSnd/RSnd}}
\\\\
\inferrule{
{\begin{array}{l}
\tstate{\ltheta}{\lenv . \pleft }{
(\p, \ST_1 , \sigma)}
\trans{\p\q?\llab}
(\p, \ST'_1, \sigma')
\end{array}}
\quad
{\color{olive} a\in \comSetG{
 \underline{\GT}}}
}{
\tstate{\ltheta}{\lenv}{
(\p, \ST_1 \lmix \ST_2, \sigma)}
\trans{\p\q?\llab}
(\p, \ST'_1 \lmix\bullet, \sigma')
}
\quad
\inferrule{
{\begin{array}{l}
\tstate{\ltheta}{\lenv . \pleft }{
(\p, \ST_1 , \sigma)}
\trans{ \p\q?\llab}
(\p, \ST'_1, \sigma')
\end{array}}
\quad
{\color{olive} a\not \in \comSetG{
 \underline{\GT}}
}}{
\tstate{\ltheta}{\lenv }{
(\p, \ST_1 \lmix \ST_2, \sigma)}
\trans{ \p\q?\llab}
(\p, \ST'_1 \lmix \ST_2, \sigma')
}
&
{\footnotesize \lrule{LRcv1/2}}
\\[0.8cm]
\inferrule{
{\begin{array}{l}
\tstate{\ltheta}{\lenv . \pright }{
(\q, \ST_2 , \sigma)}
\trans{\p\q?\llab}
(\q, \ST'_2, \sigma')
\end{array}}
}{
\tstate{\ltheta}{\lenv }{
(\q, \ST_1 \lmix \ST_2, \sigma)}
\trans{\p\q?\llab}
(\q, \bullet \lmix \ST'_2, \sigma')
}
&
{\footnotesize \lrule{RRcv}}
\\[0.8cm]
\inferrule{
\tstate{\ltheta}{ \pth . \pleft  }{(\p, \ST, \sigma), Y}
\trans{\ell}
(\p, \ST', \sigma'), Y'
}{
\tstate{\ltheta}{\lenv }{
(\p, \ST \lmixi{} \bullet, \sigma), Y }
\trans{\ell}
(\p, \ST' \lmixi{} \bullet, \sigma'), Y'
}
\quad
\inferrule{
\tstate{\ltheta}{ \pth . \pright  }{(\p, \ST, \sigma), Y}
\trans{\ell}
(\p, \ST', \sigma'), Y'
}{
\tstate{\ltheta}{\lenv }{
(\p, \bullet \lmixi{} \ST, \sigma), Y }
\trans{\ell}
(\p, \bullet \lmixi{} \ST', \sigma'), Y'
}
&
{\footnotesize \lrule{LCtxt/RCtxt}}
\\[0.8cm]
\inferrule{
\tstate{\ltheta}{ \pth . \pleft  }{(\p, \ST_1, \sigma), Y}
\trans{\lnewlab}
(\p, \ST_1', \sigma), Y
}{
\tstate{\ltheta}{\lenv }{
(\p, \ST_1 \lmixi{} \ST_2, \sigma), Y }
\trans{\lnewlab}
(\p, \ST_1' \lmixi{} \ST_2, \sigma), Y
}
\quad
\inferrule{
\tstate{\ltheta}{ \pth . \pright  }{(\p, \ST_2, \sigma), Y}
\trans{\lnewlab}
(\p, \ST_2', \sigma), Y
}{
\tstate{\ltheta}{\lenv }{
(\p, \ST_1 \lmixi{} \ST_2, \sigma), Y }
\trans{\lnewlab}
(\p, \ST_1 \lmixi{} \ST_2', \sigma), Y
}
&
{\footnotesize \lrule{NLCtxt/NRCtxt}}
\end{array}$
\caption{Local semantics: complete rules.}
\label{afig:lsem}
\end{figure}

\subsection{Projection: definition of merge}
\label{zsec:projection}

The projection of a communication and message in transit relies on a merge operator. We give below  the full definition of merge, which extends the standard merge to MC trivially, by using the first identity case for both MC definitions and active MC.

\[
\begin{array}{lll}\ST_1 \sqcap \ST_2 = \begin{cases}
    \ST_1 &~~ \ST_1 = \ST_2\\[0.5cm]
    \p \& \{a_i.\ST_i^1\}_{i\in I\setminus J} \cup \{b_j.\ST_j^2\}_{j\in J\setminus I} \cup \{a_i. \ST_i^1  \}_{i\in I\cap J}
    & \begin{array}{ll}
    \ST_1 = \p \& \{a_i.\ST_i^1\}_{i\in I}, \text{ and }\\
    \ST_2 = \p \& \{b_j.\ST_j^2\}_{j\in J} \text{ and }\\
    \forall i\in I\cap J.\, a_i.\ST_i^1 = b_i.\ST_i^2\end{array}\\[0.8cm]
    \mu \mathtt t . ( \ST_1' \sqcap \ST_2') &
    \begin{array}{ll}
    \ST_i = \mu \mathtt t . \ST_i',~~i\in\{1,2\}  \end{array}\\[0.5cm]
    \bot & ~\text{ otherwise }
\end{cases}
\end{array}
\]

 \section{Operational Correspondence}
\label{app:correspondence}

\subsection{Full definition of $<:$}
\label{asec:st}
The full definition of preorder `$<:$' on pairs of local types, and its lifting to configurations and systems is defined below.

\[\begin{array}{c}
\inferrule{
\ST_i' \mathbin{<:} \ST_i \quad (\forall i\in I)\quad \wild\in\{\&,\, \oplus\}
}{
\p \, \wild_{i \mathclose{\in} I} \, a_i.\ST_i'
\mathbin{<:}
\p \, \wild_{i \mathclose{\in} I} \, a_i.\ST_i
}
\qquad

\inferrule{
\ST_k' \mathbin{<:} \ST_k \quad k\in I
}{
\p \, \&_{i \mathclose{\in} I} \, a_i.\ST_i'
\mathbin{<:}
\p \, \& \, a_k.\ST_k
}
\\
\\
\inferrule{
\ST_i' \mathbin{<:} \ST_i
\quad
(\forall i \in \{1, 2\} )\quad \wild\in\{\mix, \lmix\}
}{
\ST_1' \, \wild \, \, \ST_2'
<:
\ST_1 \, \wild \, \, \ST_2
}
\\
\\
\inferrule{
\ST_1' \mathbin{<:} \ST_1
\qquad
\ST_2' \mathbin{<:} \ST_2}
{
\ST_1' \mixi{c} \ST_2'
~\mathbin{<:}~
\ST_1 \lmixi{c} \ST_2
}
\quad
\inferrule{
\ST_1' \mathbin{<:} \ST_1
}{
\ST_1' \lmixi{c} \bullet
~ \mathbin{<:}~
\ST_1 \lmixi{c}
\bullet
}
\quad
\inferrule{
\ST_2' \mathbin{<:} \ST_2
}{
\bullet \blacktriangleright^c \ST_2'
~ \mathbin{<:}~
\bullet \blacktriangleright^c \ST_2
}
\\
\\
\inferrule{
\ST' \mathbin{<:} \ST
}{
 \mu \mathtt{t}.\ST' \mathbin{<:} \mu \mathtt{t}.\ST
}
\qquad
\inferrule{
 \ST' [\mu \mathtt{t}. \ST'/\mathtt{t}]\mathbin{<:} \ST
}{
\mu \mathtt{t}. \ST' \mathbin{<:} \ST
}
\qquad
\inferrule{
}{
\mathtt{t} \mathbin{<:} \mathtt{t}
}
\qquad
\inferrule{
}{
\mathsf{end} \mathbin{<:} \mathsf{end}
}\\[0.5cm]
\hline
\\
\inferrule{
\ST' \mathbin{<:} \ST
\quad
}{
(\p, \ST', \sigma )
\mathbin{<:}
(\p, \ST, \sigma)
}~~\mathtt{[PConfig]}
\qquad
\inferrule{
\forall \role{r} \mathop{\in} R ~~ Y_\role{r}'\mathbin{<:} Y_\role{r}
}{
\{Y_{\role{r}}' \}_{\role r \in R} \mathbin{<:} \{Y_{\role{r}}\}_{\role{r} \mathclose{\in} R}
}~~~~\mathtt{[PSystem]}
\end{array}
\]

 \subsection{Fidelity}

\subsubsection{Auxiliaries}

First, observe that projected local configurations do not have garbage. This simplifies our reasoning about the correspondence.

\begin{lemma}
\label{nostale}
Let $\pth_{out} \vdash \GT \mathbin{\proj} \role{r} = \ST , \sigma$ for $\role{r}\in\RolesG{\GT}$. Then for all $\p\in \RolesG{\GT},\, (a,\pth)\in \sigma(\p)$,
$\mathtt{stale}(\pth_{in})=\mathtt{false}$ where $\pth = \pth_{out} . \pth_{in}$.
\end{lemma}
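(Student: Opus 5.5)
The plan is to prove the statement by structural induction on $\GT$, following the clauses of projection in \Cref{fig:projection}. I will generalise over the context path $\pth_{out}$. The claim to establish is: whenever $\pth_{out} \vdash \GT \proj \role r = \ST,\sigma$, every message $(a,\pth)$ in $\sigma$ has $\pth = \pth_{out}.\pth_{in}$ for some $\pth_{in}$, and $\mathtt{stale}(\pth_{in},\ST)=\mathtt{false}$. This makes the prefix property part of the induction hypothesis. It holds because projection only ever extends the context path, to $\pth.\pleft$ or $\pth.\pright$, and it is only at an in-transit term that the current path is written into a message.

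\textbf{Cases where $\ST$ is not an active MC.} The first step is to note that $\mathtt{stale}(\pth_{in},\ST)$ can only be true when $\ST$ has the shape $\ST_1\blacktriangleright\ST_2$, $\ST_1\blacktriangleright\bullet$ or $\bullet\blacktriangleright\ST_2$ and $\pth_{in}$ is non-empty. For $\pth_{in}=\epath$ the predicate is false by definition. For every other shape of $\ST$ it falls into the ``otherwise'' case: this covers $\oplus$, $\&$, the MC definition $\mix$, $\mu\mathtt t.\ST'$, $\mathtt t$ and $\tend$.

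Hence the interaction, in-transit, recursion, variable, $\tend$ and MC-definition cases only need the path-prefix part of the claim. For interactions, in-transit terms and MC definitions, this part follows from the induction hypothesis on the continuations, which are projected with the same $\pth_{out}$. In the in-transit case there is one additional check. The message $\msg{p}{a_k}{\pth_{out}}$ that is prepended to $\sigma_k(\p)$ for $\role r=\q$ has $\pth_{in}=\epath$, so it satisfies the prefix condition trivially. The queues $\sigma_i$ for $i\neq k$ are $\sigma_0$ by the side condition of that clause, so they contribute no messages. Merge for third-party roles only combines local types; the queue used is the common $\sigma$ from the clause, so it adds nothing.

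\textbf{Active MC $\GT_1\blacktriangleright_{\lset,\rset}\GT_2$.} This is the case with actual content. Suppose $\role r\in\lset$. Then $\ST=\ST_1\blacktriangleright\bullet$ and $\sigma=\sigma_1$, where $\sigma_1$ comes from projecting $\GT_1$ with $\pth_{out}.\pleft$. By the induction hypothesis, every message in $\sigma_1$ has path $\pth_{out}.\pleft.\pth'$ with $\mathtt{stale}(\pth',\ST_1)=\mathtt{false}$. Taking $\pth_{in}=\pleft.\pth'$ gives $\mathtt{stale}(\pleft.\pth',\ST_1\blacktriangleright\bullet)=\mathtt{stale}(\pth',\ST_1)=\mathtt{false}$.

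The case $\role r\in\rset$ is symmetric, using the $\pright$ clause of $\mathtt{stale}$ on $\bullet\blacktriangleright\ST_2$. In the uncommitted case $\sigma=\sigma_1\circ\sigma_2$, so each message comes from $\sigma_1$ (path $\pth_{out}.\pleft.\pth'$) or from $\sigma_2$ (path $\pth_{out}.\pright.\pth'$). Since $\ST=\ST_1\blacktriangleright\ST_2$, the predicate reduces to $\mathtt{stale}(\pth',\ST_1)$ or $\mathtt{stale}(\pth',\ST_2)$ respectively, and both are false by the induction hypothesis.

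The key observation behind this case is that projection never builds a $\bullet$ on the side from which it draws queue contents. The side replaced by $\bullet$ is exactly the side whose queue is discarded.

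\textbf{Expected obstacle.} I expect the main difficulty to be bookkeeping rather than mathematics, in two places. First, the decomposition $\pth=\pth_{out}.\pth_{in}$ must be carried through the induction exactly, with the $\pleft$ or $\pright$ step absorbed into $\pth_{in}$ at each active MC. Second, the queues of discarded in-transit branches must be argued to be empty, and the merged third-party projections must be argued to introduce no queue contents. If the $\mu$ case is worrying (queues under a binder), note that in reachable types every $\mu$ body comes from the initial type, which contains no in-transit terms. So it contributes $\sigma_0$, and in any case $\mathtt{stale}$ on a $\mu$ type is false by the ``otherwise'' clause.
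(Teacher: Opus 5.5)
Your strategy matches the paper's: structural induction on $\GT$, with the active MC as the substantive case, split into the same three sub-cases ($\role r\in\lset$, $\role r\in\rset$, uncommitted). Making the prefix condition $\pth=\pth_{out}.\pth_{in}$ part of the induction hypothesis is an improvement, since the paper relies on it silently.

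One step fails as written: ``Hence the interaction, in-transit, \ldots\ cases only need the path-prefix part of the claim.'' Your triviality argument depends on the shape of the \emph{local} type $\ST$, but you split on the shape of $\GT$. For a third party $\role r\notin\{\role p,\role q\}$ of an interaction, $\ST=\sqcap_{i\in I}\ST_i$. For $\role r\neq\role q$ of an in-transit term, $\ST=\ST_k$. Both can be active MCs, possibly with a $\bullet$ side. This does occur: $\mathtt{[Inst]}$ has empty subject, so it can fire under $\mathtt{[Cont1]}$/$\mathtt{[Cont2]}$. Other roles can then act and commit inside the continuation, leaving messages with non-empty $\pth_{in}$ (e.g.\ $\pth_{in}=\pright$) in $\role r$'s queue. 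In those states $\mathtt{stale}(\pth_{in},\ST)=\mathtt{false}$ does not follow from the ``otherwise'' clause.

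The repair is to use the full induction hypothesis in these pass-through cases, as the paper does (``the third case using merge follows by induction'').
\begin{itemize}
\item In the in-transit case with $\role r\neq\role q$, the pair $(\ST,\sigma)$ is literally $\pth_{out}\vdash\GT_k\mathbin{\proj}\role r$, so the hypothesis applies directly.
\item In the merge case, if $\sqcap_{i\in I}\ST_i$ has an active-MC shape at all, it can only come from the identity clause of merge, since the other clauses yield a branching or a $\mu$-type. Hence $\ST=\ST_i$ for every $i$. The interaction clause forces one common queue $\sigma$ for all $\pth_{out}\vdash\GT_i\mathbin{\proj}\role r$, so the hypothesis for any single branch gives the claim.
\end{itemize}
With this patch the rest of your argument goes through.
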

\begin{proof}
By induction on the syntax of $\GT$.
The base cases for $\GT\in\{\tend, \, \mathtt t\}$ follow by observing that forall $\pth$, $\mathtt{stale}(\pth,\tend)=\mathtt{stale}(\pth,\mathtt t)=\mathtt{false}$.
If $\GT =\GactShort{\p}{\q}{ \{a_i : \GT_i\}_{i\in I}}$, in the first two cases of the corresponding projection rule the local type has a send or a receive prefix, hence no path is stale for them. The third case using merge follows by induction. Similarly, if $\GT = \GT_1 \mix \GT_2$ and $\GT=\mu \mathtt t. \GT'$: the projection is not stale for any path.

The interesting case is the one for $\GT = \GT_1 \lmix \GT_2$. Since $\ST$ is a mixed choice then $\pth_{out}'$ is not empty (by projection, one side, left or right is added to the path). So, without loss of generality, assume $\pth = \pth_{out} . \mathtt{s} . \pth_{in}$ with $\mathtt{s}\in\{\pleft, \pright\}$.
If $\role r\in \lset$ then by induction forall $(a, \pth)$ in some of $\role r$'s queues $\sigma$, we have $\mathtt{stale}(\pth_{in}, \ST_1) = \mathtt{false}$ with $\pth = \pth_{out}.\mathtt{s}.\pth_{in}$. Since $\ST = \ST_1 \lmix \bullet$ hence $\mathtt{s} = \pleft$, hence
\[\mathtt{stale}(\pth_{in}, \ST_1) = \mathtt{false}~~\Longrightarrow ~~ \mathtt{stale}(\pleft.\pth_{in}, \ST_1 \lmix \bullet) = \mathtt{false}\]
as desired.

The case for $\role r\in\lset$ is symmetric to the case above for $\role r \in \rset$. The otherwise case for $\role r\not\in \lset\cup\rset$ also follows by observing that, whatever the MC side that originated the message,
\[\mathtt{stale}(\pth_{in}, \ST_i) = \mathtt{false}~~\Longrightarrow ~~ \mathtt{stale}(\mathtt{s}_i.\pth_{in}, \ST_i \lmix \bullet) = \mathtt{false}\]
with $\mathtt{s}_1 = \pleft$ and $\mathtt{s}_2 = \pright$.
\end{proof}

\begin{lemma}[Garbageless Projections]
\label{nogarbage}
Let $\epsilon \vdash \GT \mathbin{\proj} \role{r} = \ST , \sigma$ for a $\role{r}\in\RolesG{\GT}$. Then
\[(\role r, \sigma, \ST) \trans{\gcl}(\role r, \sigma', \ST')~~\Rightarrow~~ \sigma=\sigma' \]
\end{lemma}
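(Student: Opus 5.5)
This lemma should follow almost directly from \Cref{nostale}, instantiated with the empty outer path. The only transition that can fire is \textsc{[Discard]}, so $\ST'=\ST$ and $\sigma'=\gc(\ST,\sigma)$. It therefore suffices to show $\gc(\ST,\sigma)=\sigma$, that is, that no message $(a,\pth)$ in any $\sigma(\p)$ satisfies $\mathtt{stale}(\pth,\ST)$.

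Apply \Cref{nostale} with $\pth_{out}=\epsilon$. For every $\p\in\RolesG{\GT}$ and every $(a,\pth)\in\sigma(\p)$ we have $\pth=\epsilon.\pth_{in}=\pth_{in}$, and the lemma gives that the path is not stale. Here I read $\mathtt{stale}(\pth_{in})$ in \Cref{nostale} as $\mathtt{stale}(\pth_{in},\ST)$, the receiver's own projected local type, since that is the only type in scope. One should check that the induction in \Cref{nostale} is indeed carried out against the projected $\ST$: at an active MC it compares against $\ST_1\lmix\bullet$, $\bullet\lmix\ST_2$ or $\ST_1\lmix\ST_2$. If the statement is ambiguous, I would restate it with $\ST$ explicit before using it.

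Since $\gc$ removes exactly the messages whose path is stale w.r.t.\ $\ST$, and no such messages exist, every queue is returned unchanged. Hence $\sigma'=\sigma$.

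The main subtlety is the domain of the queue. The statement quantifies over $\p\in\RolesG{\GT}$, while $\sigma$ maps roles of the base type. Roles committed to the opposite side of an active MC can drop out of $\RolesG{\GT}$, yet still have entries in $\sigma$ from projecting the retained side. To handle this I would check that the projection rules build $\sigma$ only from in-transit terms $\GmsgShort{p}{q}{k}{\St{}}$ occurring in $\GT$. The statement of \Cref{nostale} covers exactly such terms if its proof, which is by induction on the syntax of $\GT$, ranges over all senders occurring in $\GT$. If it does not, I would strengthen the quantifier in \Cref{nostale} to all $\p\in\mathsf{dom}(\sigma)$; the same syntactic induction goes through unchanged, because each case only inspects the path attached to the enqueued message. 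The only genuinely non-trivial case of that induction is the active MC with $\role r\notin\lset\cup\rset$, where the queues are concatenated by $\circ$. There a path starting with either $\pleft$ or $\pright$ is non-stale against $\ST_1\lmix\ST_2$ by the inductive hypothesis on the corresponding side.
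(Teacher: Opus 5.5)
Your proposal is correct and follows the paper's own route: the paper also derives the lemma directly from \Cref{nostale} with the empty outer path, observing that $\mathsf{gc}$ returns a queue with no stale messages unchanged. Your extra care is harmless and sensible, and the same syntactic induction supports it. Specifically, you read $\mathtt{stale}(\pth_{in})$ as $\mathtt{stale}(\pth_{in},\ST)$, and you quantify over all senders in $\mathsf{dom}(\sigma)$ rather than only over $\RolesG{\GT}$, which covers senders that have dropped out after committing to the opposite side.
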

\begin{proof}
    This lemma follows from \Cref{nostale} with $\pth = \epsilon$, observing that in absence of stale messages garbage collection on $\sigma$ returns $\sigma$ itself.
\end{proof}

In \Cref{lem:proj} (and in the rest of the paper), by   projectability of $\GT$ on $\p$ we intend the fact that the projection of $\GT$ on $\p$ returns.
We say that $\GT$ is \emph{projectable} if the (partial) projection function $\GT\proj \p$ returns for all $\p \in\RolesG{\GT}$.

\begin{proposition}[Projectability Preservation]\label{lem:proj}
If $\GT$ is projectable and ${\GT}\trans{\ell}\GT'$ then $\GT'$ is projectable.
\end{proposition}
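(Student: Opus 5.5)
We prove the statement by induction on the derivation of ${\GT}\trans{\ell}\GT'$, with a case analysis on the last rule applied. Throughout we fix an arbitrary role $\role r\in\RolesG{\GT'}$ and a path $\pth$, and show that $\projrl{\GT'}{\role r}{\pth}$ is defined. We use two facts. First, $\RolesG{\GT'}\subseteq\RolesG{\GT}$: transitions never introduce roles, and commitment only removes roles from the active sides. Second, projectability of $\GT$ is hereditary, i.e.\ every subterm reached by the projection clauses (continuations, both MC sides, unfolded bodies) is itself projectable on the relevant roles.

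\textbf{Base cases.} For $\mathtt{[Snd]}$, the term $\GactShort{p}{q}{\St{}}$ becomes $\GmsgShort{p}{q}{k}{\St{}}$. Projection onto $\p$ and onto third parties now returns the component $\ST_k,\sigma_k$. This is defined because each $\projrl{\GT_i}{\role r}{\pth}$ was defined, and for a third party the merge $\sqcap_i\ST_i$ was defined. Projection onto $\q$ is unchanged except that a message is prepended to the queue. For $\mathtt{[Rcv]}$ the result is $\GT_k$, whose projection was already computed as a component.

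For $\mathtt{[Inst]}$, $\GT_1\mix\GT_2$ becomes $\GtoP{\GT_1}{}{\emptyset}{\emptyset}{\GT_2}$. Projection then falls into the uncommitted clause, which needs the projections of $\GT_1$ and $\GT_2$ under the extended paths $\pth.\pleft$ and $\pth.\pright$. Here we use that projection of initial-like subterms is path-independent up to queue annotations: no messages are in transit inside a definition, so the queues are $\sigma_0$ and $\sigma_0\circ\sigma_0=\sigma_0$.

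\textbf{Inductive cases.} For $\mathtt{[Cont1]}$ and $\mathtt{[Cont2]}$ each continuation moves, and the induction hypothesis gives projectability of each $\GT_i'$. For third parties we must also show that the merge $\sqcap_i\ST_i'$ is still defined. This is the \textbf{main obstacle}. We handle it with an auxiliary lemma: if $\GT_i\trans{\ell}\GT_i'$ for all $i$ with the same label $\ell$ not involving $\p,\q$, and the projections of the $\GT_i$ on $\role r$ merge, then so do the projections of the $\GT_i'$.

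We prove that lemma by showing that a step with a fixed label transforms the local projection of a non-subject role in a deterministic way determined by $\ell$. If $\role r$ is the subject, its local type advances along the branch named in $\ell$. If not, its local type is unchanged, apart from queue extension or a local $\nlab$ instantiation, which acts identically on identical merge-compatible structures. Since merge is pointwise on common branches, compatible types stay compatible. The MC instantiation case relies on the fact that merge treats MCs only by the identity case, so both sides instantiate identically.

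For $\mathtt{[Rec]}$ we use that $\mu\rv.\GT$ is projectable iff $\GT[\mu\rv.\GT/\rv]$ is, which holds by a substitution lemma on projection, and then apply induction.

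For the MC rules $\mathtt{[Ctx1]}$, $\mathtt{[Ctx2]}$, $\mathtt{[LSnd]}$, $\mathtt{[LRcv1]}$, $\mathtt{[LRcv2]}$, $\mathtt{[RSnd]}$ and $\mathtt{[RRcv]}$, induction gives projectability of the moved side. The projection clause for $\GtoP{\GT_1'}{}{\lset'}{\rset'}{\GT_2'}$ may now select a different case, since a role may have moved from uncommitted to committed. This only drops one side of the projection, so it remains defined. Finally, queue concatenation $\sigma_1\circ\sigma_2$ is defined because both queues have domain $\RolesG{\underline{\GT}}$.
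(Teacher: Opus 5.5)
The paper states this proposition without proof, so your argument has to stand on its own. It breaks at the auxiliary merge lemma you use for $\mathtt{[Cont1]}$.

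That lemma rests on the claim that a role $\role r$ which is not the subject of $\ell$ sees its projection change only by queue extension or a $\nu$-instantiation. The claim is false. The projection clause for an in-transit message $\p\mathbin{\rightsquigarrowtriangle}\q:k\{\ldots\}$ sends a third party to the selected continuation $\ST_k$, not to $\sqcap_i\ST_i$. A nested send therefore replaces a merge by one of its components. Components need not be merge-compatible even when their merges coincide, because merge is identity-only on selections.

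Concretely, let $\GT=\p\mathbin{\rightarrowtriangle}\q:\{a_1.\GT^{(1)},\,a_2.\GT^{(2)}\}$ with
\[
\GT^{(i)}=\role r\mathbin{\rightarrowtriangle}\role u:e.\;\role s\mathbin{\rightarrowtriangle}\role t:\{c_1.\,\role t\mathbin{\rightarrowtriangle}\q:g_1.\,\q\mathbin{\rightarrowtriangle}\role r:h_i,\;\; c_2.\,\role t\mathbin{\rightarrowtriangle}\q:g_2.\,\q\mathbin{\rightarrowtriangle}\role r:h_{3-i}\}\qquad(i=1,2).
\]
This $\GT$ is initial, balanced, vacuously aware and projectable. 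In particular $\GT^{(1)}\proj\role r=\GT^{(2)}\proj\role r=\role u\oplus e.\,\q\,\&\,\{h_1,h_2\}$.

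Yet $\GT\trans{\role s\role t!c_1}\GT'$, by $\mathtt{[Snd]}$ in both branches lifted by $\mathtt{[Cont1]}$ through $\role r\mathbin{\rightarrowtriangle}\role u$ and then $\p\mathbin{\rightarrowtriangle}\q$. In $\GT'$ the two branches project on $\role r$ to $\role u\oplus e.\,\q\,\&\,h_1$ and $\role u\oplus e.\,\q\,\&\,h_2$, and their merge is undefined.

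So your lemma is false, and in fact the proposition itself fails under the paper's definitions as written. Reading the third-party in-transit clause as $\sqcap_i\ST_i$ does not rescue it: the nested receive of $c_1$ then breaks projectability instead. No induction can close this step; the statement has to be weakened, or projection/merge restricted, first.

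Your claim that instantiation acts identically in all branches is likewise unjustified. The unannotated label $\nu$ does not name the MC, so $\mathtt{[Cont1]}$ may instantiate different MCs in different branches (e.g.\ by $\mathtt{[Ctx1]}$ in one and $\mathtt{[Ctx2]}$ in another). The identity-only merge on MCs does not tolerate that.

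Independently, your $\mathtt{[Snd]}$ case ignores the side condition $\sigma_i=\sigma_0$ for $i\neq k$ in the in-transit clause. Projectability of $\p\mathbin{\rightarrowtriangle}\q:\{a_i.\GT_i\}_{i\in I}$ only gives one queue common to all branches, and $\mathtt{[Cont1]}$ can make that queue non-empty. For example, $\p\mathbin{\rightarrowtriangle}\q:\{a_1.\,\role s\mathbin{\rightsquigarrowtriangle}\role t:b,\ a_2.\,\role s\mathbin{\rightsquigarrowtriangle}\role t:b\}$, reached from an initial type by $\role s\role t!b$, is projectable. Its $\p\q!a_1$-successor is not projectable on $\role t$, because branch $a_2$ contributes the queue $\sigma_0[\role s\mapsto(b,\epsilon)]\neq\sigma_0$. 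This is a defect of the definition (the side condition should be dropped) rather than of your strategy. Still, your sentence asserting definedness in that case is incorrect as it stands.
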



\begin{proposition} \label{emptyqueue}
    If $\pth \vdash \GactShort{p}{q}{S}\proj \q = \ST, \sigma$ then $\sigma(\p)$ is the empty message sequence $\epsilon$.
\end{proposition}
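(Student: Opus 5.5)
Looking at the projection rule for interactions in \Cref{fig:projection}, the interaction case takes $\sigma$ from the continuations. For $\role r=\q$ we get $\p\,\&_{i\in I}a_i.\ST_i,\sigma$ with $\ST_i,\sigma=\projrl{\GT_i}{\q}{\pth}$ for every $i$. So the queue of the projection of $\GactShort{p}{q}{\St{}}$ onto $\q$ is exactly the (common) queue of the projections of the continuations $\GT_i$ onto $\q$. The statement therefore does not follow from one unfolding of the definition. It needs an invariant about which messages from $\p$ can appear in $\q$'s projected queue underneath a not-yet-sent interaction from $\p$ to $\q$.

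I would prove a slightly more general statement by induction on the structure of the global type $\GT'$ in the hole. The generalised claim is: if $\GT=\GactShort{p}{q}{\St{}}$ is a subterm of a reachable $\GT_0$, then every message from $\p$ to $\q$ that is in transit in $\GT_0$ lies outside $\GT$. Hence no $\GmsgShort{p}{q}{k}{\St{}'}$ occurs inside the continuations of $\GT$. The invariant is that in any reachable type, an in-transit term from $\p$ to $\q$ is never nested under a pending (unsent) interaction $\p\rightarrowtriangle\q$. I would establish this by induction on the reachability derivation from an initial type, where it holds trivially because initial types have no in-transit messages. The preservation cases go by case analysis on the rules of \Cref{fig:gsem}:
\begin{itemize}
\item \glab{Snd} turns the outermost $\p\rightarrowtriangle\q$ into an in-transit term, which is not under itself.
\item \glab{Cont1} lets a continuation move only when $\p,\q\notin\sbj{\ell}$, so it cannot create a send from $\p$ under an unsent $\p\rightarrowtriangle\q$.
\item \glab{Cont2} and the MC rules preserve the nesting structure.
\item \glab{Rec} only unfolds terms with no in-transit messages.
\end{itemize}

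With the invariant in hand, I then do structural induction on the continuations $\GT_i$ to show that the projection onto $\q$ has $\sigma(\p)=\epsilon$. The cases are:
\begin{itemize}
\item $\tend$, $\mathtt t$, $\mu$, and MC definitions give $\sigma_0$ directly.
\item For interactions, the queue comes from the continuations, so the induction hypothesis applies (merge preserves the shared $\sigma$).
\item For in-transit terms $\GmsgShort{s}{r}{k}{}$, the queue entry for $\q$ is extended only when $\role s=\p$ and $\role r=\q$, which the invariant excludes; otherwise the result follows by induction.
\item For active MCs, $\sigma_1\circ\sigma_2$ concatenates two queues, each empty at $\p$ by induction, so the result is empty at $\p$.
\end{itemize}

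The main obstacle is stating and proving the invariant robustly in the presence of the MC rules. A role $\p$ committed to the RHS can still have an in-transit LHS message while a pending $\p\rightarrowtriangle\q$ appears on the other side of the MC. That configuration is harmless because the two are in different blocks, not nested one under the other. The invariant must therefore be phrased about syntactic nesting under the continuation set $\St{}$, not about the whole type. I would first try to obtain this from \Cref{lem:wn} and the side conditions of \glab{Cont1}; if that fails, I would add the invariant as an explicit clause alongside well-nestedness.
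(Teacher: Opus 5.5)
The paper states this proposition without proof and uses it once, in the bottom-up fidelity proof, for subterms of types reachable from an initial one. There is no paper argument to compare against; your proposal is correct and supplies what is missing.

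You are right that the statement does not follow from the interaction clause of projection alone. Taken literally, with no reachability hypothesis, it is false. The syntactically legal type $\GactShort{p}{q}{\{a.\,\GmsgShort{p}{q}{1}{\{b_1.\tend\}}\}}$ projects onto $\q$ with $\sigma(\p)=(b_1,\pth)$. Your invariant is the right one: no in-transit term from $\p$ to $\q$ occurs in a continuation of a pending $\p \mathbin{\rightarrowtriangle} \q$. It is enforced by the side condition $\p,\q\notin\sbj{\ell}$ of \glab{Cont1}. Your second step, structural induction over the projection clauses, is sound.

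Three points to tighten:
\begin{itemize}
\item \emph{Unfolding.} The invariant on its own is not closed under unfolding. Suppose a $\mu$-body contained an in-transit $\p$-to-$\q$ term that is not below a pending $\p \mathbin{\rightarrowtriangle} \q$, while the bound variable occurred under such a pending interaction. Then the unfolded term would violate the invariant. So what your \glab{Rec} bullet takes for granted must be proved as a companion invariant: every $\mu$-subterm and every MC-definition subterm of a reachable type is free of in-transit terms. This holds because no rule reduces under a binder or inside an uninstantiated MC. The same fact justifies the auxiliary claim your \glab{Cont1} case relies on: a step labelled $\ell$ creates a new in-transit term from $\role s$ to $\role r$ only if $\ell=\role s\role r!a$ for some $a$.
\item \emph{Well-nestedness.} \Cref{lem:wn} will not help. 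It only constrains the leading prefix of each MC side and says nothing about messages below pending interactions. Go straight to the explicit invariant.
\item \emph{The $\mu$ case of projection.} In the projection step, the $\mu$ case does not give $\sigma_0$ directly, since the rule returns the body's queue. It follows from the induction hypothesis or from the companion invariant.
\end{itemize}
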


\begin{proposition}[Determinism of $\sigma$]\label{pro:sigmadet}
\[\GT_1,\sigma \trans{\ell}  \GT_1',\sigma_1 ~~\land~~\GT_2,\sigma \trans{\ell}  \GT_2',\sigma_2
~~\Longrightarrow
\sigma_1=\sigma_2\]
\end{proposition}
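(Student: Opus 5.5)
The plan is to prove the statement by induction on the derivation of the two transitions. I read $\GT_1,\sigma$ and $\GT_2,\sigma$ as local configurations sharing the same input queue $\sigma$, and $\ell$ as a label of the lower-level judgement $\tstate{\ltheta}{\lenv}{\cdot}$. The point is that the queue component of the target is computed from $\sigma$ and $\ell$ alone; it never depends on the local type or on the path environment $\pth$.

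First I would dispose of the base rules (\lrule{Snd}, \lrule{Rcv}, \lrule{New}) by case analysis on $\ell$.
\begin{itemize}
\item If $\ell = \lnewlab$, the only base rule is \lrule{New}, which leaves $\sigma$ untouched, so $\sigma_1=\sigma_2=\sigma$.
\item If $\ell = \p\q!a$ with the configuration as subject, \lrule{Snd} changes only the receiver's queue. The subject's own queue is returned unchanged, so again $\sigma_1=\sigma_2=\sigma$.
\item If $\ell = \p\q?a_k$, \lrule{Rcv} splits $\sigma(\p)=\vv m_1\cdot(a_k,\pth)\cdot\vv m_2$ with $(a_k,\cdot)\notin\vv m_1$, and returns $\sigma[\p\mapsto\vv m_1\cdot\vv m_2]$. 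This decomposition is unique: it is fixed by the first occurrence of the label $a_k$ in $\sigma(\p)$, which is determined by $\sigma$ and $\ell$ independently of the branching type and of $\pth$. Hence both derivations remove the same message and $\sigma_1=\sigma_2$.
\end{itemize}

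Next come the inductive rules: \lrule{Rec}, \lrule{LSnd}, \lrule{RSnd}, \lrule{LRcv1}, \lrule{LRcv2}, \lrule{RRcv}, \lrule{LCtxt}, \lrule{RCtxt}, \lrule{NLCtxt} and \lrule{NRCtxt}. Each of these returns, as its queue, exactly the queue produced by its premise, with the same label $\ell$, and only modifies the type and the path environment. So I would strengthen the claim to ``the resulting queue is a function of $(\sigma,\ell)$ for any type and any $\pth$''. With this strengthening the induction goes through even when the two derivations use different rules. For example, one configuration may commit via \lrule{LRcv1} while the other uses \lrule{RRcv}; both bottom out in an instance of \lrule{Rcv} on the same $\sigma$ with the same $\ell$.

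The main obstacle is the purge label $\gcl$. \lrule{Discard} computes $\gc(\ST,\sigma)$, which genuinely depends on the local type through $\mathtt{stale}(\pth,\ST)$, so the strengthened claim fails for it. I would therefore state and use the proposition only for labels $\ell\neq\gcl$, which covers the lower-level judgement where it is applied in the fidelity proofs. If $\gcl$ must be covered, I would first try to add the hypothesis that $\GT_1$ and $\GT_2$ agree on their MC commitment structure, so that $\mathtt{stale}$ coincides. Alternatively, I would appeal to \Cref{nogarbage}: for projected configurations $\gcl$ is the identity on $\sigma$, which gives $\sigma_1=\sigma_2=\sigma$ directly.
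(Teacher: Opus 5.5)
The paper states this proposition without proof, so your argument has to stand on its own. It breaks in the receive case.

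In $\lrule{Rcv}$ the consumed message is $(a_k,\pth)$, where $\pth$ is the \emph{environment} of the judgement ("consumes the first such message $a_k$ annotated by $\pth$"). The side condition $(a,\pth)\notin\vv{m}_1$ therefore selects the first occurrence of the \emph{pair} (label, path), not of the label alone as in your $(a_k,\cdot)\notin\vv m_1$. That environment is built up by $\lrule{LRcv1}$, $\lrule{LRcv2}$, $\lrule{RRcv}$, $\lrule{LCtxt}$ and $\lrule{RCtxt}$, so it depends on the MC structure of the local type. Your strengthened invariant ("the queue is a function of $(\sigma,\ell)$ for any type and any $\pth$") is therefore false.

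The scenario you cite as harmless is itself a counterexample. Take $\sigma(\p)=(a,\pleft)\cdot(a,\pright)$ and the type $\p\,\&\,a.\tend \lmix \p\,\&\,a.\tend$. It performs $\p\q?a$ either via $\lrule{LRcv1}$/$\lrule{LRcv2}$, leaving $(a,\pright)$, or via $\lrule{RRcv}$, leaving $(a,\pleft)$. Two types with the input on opposite sides behave the same way. So with path-free receive labels the proposition is simply false.

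There are two ways to repair this:
\begin{itemize}
\item Let receive labels carry the path, $\p\q?(a_k,\pth)$, as the paper itself writes in \Cref{erl} and in the proof of \Cref{para3}. The decomposition is then pinned by the first occurrence of $(a_k,\pth)$, and your induction for non-$\gcl$ labels goes through.
\item Add the hypothesis that both derivations reach $\lrule{Rcv}$ under the same environment.
\end{itemize}

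Second, excluding $\gcl$ does not cover how the paper uses the proposition. In the bottom-up fidelity proof it is cited exactly to pass from (\ref{m23}) to (\ref{m24}), that is, to identify the \emph{purged} queues across the branches $\ST'_{\role r i}$. It is cited again for the purge step of $\role r$. \Cref{nogarbage} does not help there, because the configuration reached after $\ell$ is no longer a projection; stale messages are precisely why the $\gcl$ steps appear.

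Your first fallback is the right one and should become part of the statement. $\mathsf{gc}(\ST,\sigma)$ depends on $\ST$ only through $\mathtt{stale}(\pth,\ST)$, i.e.\ through the pattern of committed ($\bullet$) sides along each message path. Two types with the same MC/commitment skeleton therefore purge the same messages. This hypothesis is available where the paper applies the proposition: branches of a defined merge coincide on MC terms, since merge is the identity there, and they coincide on the continuations of shared labels.
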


\begin{proposition}\label{unfoldmerge}
If $\mu \mathtt t. \ST_1 \sqcap\mu \mathtt t. \ST_2$ is defined then $\ST_1[\mu \mathtt t. \ST_1 /\mathtt t] \sqcap \ST_2[\mu \mathtt t. \ST_2 /\mathtt t]$ is defined.
\end{proposition}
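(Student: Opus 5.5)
The plan is a case split on which clause of $\sqcap$ makes $\mu \mathtt t. \ST_1 \sqcap \mu \mathtt t. \ST_2$ defined, followed by an induction over the inner merge. Write $N_i = \mu \mathtt t.\ST_i$ and $\ST_i^\circ = \ST_i[N_i/\mathtt t]$ for $i\in\{1,2\}$. Only two clauses of the merge can apply to two recursive types.

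\textbf{Identity clause.} If $N_1 = N_2$, then $\ST_1 = \ST_2$ syntactically, so the two substitutions coincide and $\ST_1^\circ = \ST_2^\circ$. The merge $\ST_1^\circ \sqcap \ST_2^\circ$ is then defined by the identity clause.

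\textbf{Recursion clause.} Otherwise $N_1 \sqcap N_2 = \mu\mathtt t.(\ST_1 \sqcap \ST_2)$, so $\ST_1 \sqcap \ST_2$ is defined. I would prove a substitution lemma for this case: for all $K_1, K_2$ with $K_1 \sqcap K_2$ defined, $K_1[N_1/\mathtt t] \sqcap K_2[N_2/\mathtt t]$ is defined. The proof is by induction on the derivation of definedness of $K_1 \sqcap K_2$.
\begin{itemize}
\item Recursion clause on $\mu\mathtt t'$: push the substitution under the binder, which is harmless for $\mathtt t'\neq\mathtt t$ and vacuous for $\mathtt t'=\mathtt t$, and apply the induction hypothesis.
\item Branching clause: the non-shared labels impose no constraint. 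For a shared label $a_i$ the clause demands syntactic equality $K_i^1 = K_i^2$, hence it needs $K_i^1[N_1/\mathtt t] = K_i^1[N_2/\mathtt t]$.
\item Identity clause $K_1 = K_2$: this has the same requirement, $K_1[N_1/\mathtt t] = K_1[N_2/\mathtt t]$.
\end{itemize}
Both requirements hold when $\mathtt t$ is not free in the common continuation, or when $N_1 = N_2$.

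\textbf{Main obstacle.} The identity and common-label requirements are exactly where the argument breaks, and I expect this step to fail for the merge as literally defined. Take $\ST_1 = \p\&\{a.\mathtt t\}$ and $\ST_2 = \p\&\{a.\mathtt t,\, b.\tend\}$. The shared label $a$ has equal continuations $\mathtt t$, so $\mu\mathtt t.\ST_1 \sqcap \mu\mathtt t.\ST_2$ is defined. After unfolding, however, the continuations of $a$ become $N_1 \neq N_2$, so the unfolded merge is $\bot$.

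My first attempt would therefore be to establish the statement under the side condition that makes the substitution lemma go through: for every shared label, and for every identity-clause use, inside $\ST_1 \sqcap \ST_2$, either $\mathtt t$ is not free in that subterm or $N_1 = N_2$. I would then check whether global projection ever produces merges violating this condition, and otherwise report the counterexample. I would not alter the definition of $\sqcap$.
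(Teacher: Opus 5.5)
Your counterexample is correct for the merge as defined in \Cref{zsec:projection}. Before unfolding, the shared label $a$ carries the equal continuations $\mathtt t$. After unfolding it carries $\mu\mathtt t.\,\p\&\{a.\mathtt t\}\neq\mu\mathtt t.\,\p\&\{a.\mathtt t,\,b.\tend\}$, so the branching clause fails and the result is $\bot$. The paper gives no proof of this proposition; it is only stated, and used in the proof of \Cref{merge1}. There is therefore no argument to compare yours against, and declining to prove it as stated is the right call.

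To settle the question you left open: such merges do arise when projecting balanced, projectable global types without MCs. Take $\p\mathbin{\rightarrowtriangle}\q:\{a_1.G_1,\ a_2.G_2\}$ with $G_1=\mu\mathtt t.\,\q\mathbin{\rightarrowtriangle}\role r:\{a.\,\role r\mathbin{\rightarrowtriangle}\p:\{e.\mathtt t\}\}$ and $G_2=\mu\mathtt t.\,\q\mathbin{\rightarrowtriangle}\role r:\{a.\,\role r\mathbin{\rightarrowtriangle}\p:\{e.\mathtt t\},\ b.\,\role r\mathbin{\rightarrowtriangle}\p:\{f.\tend\}\}$. Every role is projectable. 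The projection onto $\role r$ merges $\mu\mathtt t.\,\q\&\{a.\,\p\oplus\{e.\mathtt t\}\}$ with $\mu\mathtt t.\,\q\&\{a.\,\p\oplus\{e.\mathtt t\},\ b.\,\p\oplus\{f.\tend\}\}$, and this merge is defined. The merge of their unfoldings is $\bot$. Replacing $\role r\mathbin{\rightarrowtriangle}\p$ by $\q\mathbin{\rightarrowtriangle}\p$ gives exactly your example.

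One refinement of your diagnosis. The proof of \Cref{merge1} merges shared-label continuations recursively (it writes $a_i.\ST_i\sqcap\ST''_i$), and \Cref{refoldmerge} is stated in homomorphic form. So the intended merge is presumably the Deni\'{e}lou--Yoshida one.
\begin{itemize}
\item Under that reading your particular example is no longer a counterexample: the shared $a$ then only needs $N_1\sqcap N_2$, which is defined.
\item The $\oplus$ example above still is a counterexample, under either reading. Selections, and, per the remark in \Cref{sec:projection}, mixed choices, are merged only by the identity clause.
\item The same example also falsifies \Cref{refoldmerge}.
\end{itemize}
So the real obstruction is a free $\mathtt t$ below a constructor the merge does not traverse, not the equality test on shared labels as such.

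Your identity-clause requirement $K_1[N_1/\mathtt t]=K_1[N_2/\mathtt t]$ is also stronger than necessary. Only mergeability of the two substituted terms is needed; for instance $K_1=\mathtt t$ is harmless, since $N_1\sqcap N_2$ is defined.

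The clean repair is a fully homomorphic merge: pointwise through $\oplus$, $\mu$ and both MC forms, with $\mathtt t\sqcap\mathtt t=\mathtt t$. With that merge, your substitution-lemma induction proves the proposition once \emph{equal} is replaced by \emph{mergeable} in the shared-label and identity cases. The identity case is then discharged by a small induction on the common subterm. Short of that, your side condition is a sound, if conservative, restriction.
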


\begin{proposition}\label{refoldmerge}
$
\ST_1 [\mu \mathtt t. \ST_1'/\mathtt t] \sqcap \ST_2 [\mu \mathtt t. \ST_2'/\mathtt t] = \ST_1 \sqcap \ST_2 [\mu \mathtt t. \ST_1'\sqcap \ST_2'/\mathtt t]$.
\end{proposition}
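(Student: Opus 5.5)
The plan is to prove the equation by structural induction on the pair $(\ST_1,\ST_2)$, following the clauses of $\sqcap$ (\Cref{zsec:projection}), with the substitution pushed through each clause. I read the right-hand side as $(\ST_1\sqcap\ST_2)[\mu \mathtt t.(\ST_1'\sqcap\ST_2')/\mathtt t]$. I also read the equation in the Kleene sense: one side is defined iff the other is, and then they coincide. This is the reading needed when the proposition is used together with \Cref{unfoldmerge} in the recursion case of fidelity.

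First I record a preliminary fact: by the $\mu$-clause of merge, $\mu \mathtt t.\ST_1'\sqcap\mu \mathtt t.\ST_2' = \mu \mathtt t.(\ST_1'\sqcap\ST_2')$ whenever either side is defined. So the substituted term on the right is exactly the merge of the two substituted terms on the left. A second small lemma is idempotence, $\ST\sqcap\ST=\ST$, which is immediate from the first clause. It is used to reconcile the identity clause with the structural clauses.

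The induction proceeds by cases on the shape of $\ST_1$.
\begin{enumerate}
\item $\ST_1=\mathtt t$. If $\ST_2=\mathtt t$, both sides reduce to $\mu \mathtt t.\ST_1'\sqcap\mu\mathtt t.\ST_2'$ by the preliminary fact. If $\ST_2$ is a different variable, $\tend$, or a prefix, then $\ST_1\sqcap\ST_2=\bot$. The left-hand side merges a $\mu$-term with a non-$\mu$ term, which is also $\bot$ unless the two terms are syntactically equal. That equality is impossible, since $\ST_2[\ldots]$ is not headed by $\mu \mathtt t$ unless $\ST_2=\mathtt t$.
\item $\ST_1=\tend$ or another variable $\mathtt t'$. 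The substitution is the identity, and the case is trivial.
\item $\ST_1$ and $\ST_2$ are branchings on the same $\p$. Substitution distributes over the label sets. For shared labels, the side condition ``$a_i.\ST^1_i = b_i.\ST^2_i$'' of the branching clause holds after substitution iff it holds before. The non-trivial direction follows from injectivity of substitution on terms not containing the substituted $\mu$-term at the head, checked by a small auxiliary induction. The unshared labels are carried along unchanged, so the result is the branch-wise substitution of $\ST_1\sqcap\ST_2$.
\item Selections and MC terms. These are handled only by the identity clause, so both sides are defined iff the substituted terms are equal. That is again equivalent to $\ST_1=\ST_2$ modulo the injectivity lemma, together with idempotence when $\ST'_1$ and $\ST'_2$ happen to coincide.
\item $\ST_1=\mu \mathtt s.\ST''_1$ with $\mathtt s\neq \mathtt t$. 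Use the $\mu$-clause, the induction hypothesis on the bodies, and commutation of the two substitutions (no capture, by Barendregt's convention). If $\mathtt s=\mathtt t$, the substitution is vacuous.
\end{enumerate}

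The main obstacle I expect is the overlap between the identity clause and the other clauses. The substituted terms may be syntactically equal even when $\ST_1\neq\ST_2$, for instance if $\ST_1$ contains $\mathtt t$ where $\ST_2$ literally contains $\mu \mathtt t.\ST'_2$. In that situation the left-hand side fires the identity clause while the right-hand side does not.

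I would first try to rule this out using the context of use. In projection, the bodies $\ST_1,\ST_2$ arise as projections of the same recursive global body, so $\mathtt t$ is bound at the same position on both sides and such a mixed occurrence cannot happen. Failing that, I would show the two outcomes agree anyway. The right-hand side then merges $\mathtt t$ against $\mu \mathtt t.\ST'_2$ at that position, which is undefined, while the left-hand side is defined. So either I must add the side condition that $\mu\mathtt t.\ST'_i$ does not occur free in $\ST_1,\ST_2$, or I must refine the $\mathtt t$ case above to permit unfolding. I would state the side condition explicitly, since it holds for all terms arising from projection of closed global types.
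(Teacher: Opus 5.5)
The paper gives no proof of this proposition; it is stated bare and used only in the recursion case of Proposition~\ref{merge1}. Your argument has a gap that cannot be closed, because the equation is false.

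The slip is in your ``preliminary fact''. Knowing that the two substituents merge, $\mu\mathtt t.\ST_1'\sqcap\mu\mathtt t.\ST_2'=\mu\mathtt t.(\ST_1'\sqcap\ST_2')$, does not let you push the merge down to the occurrences of $\mathtt t$, because $\sqcap$ never reaches them. It stops at selections and at the equality tests of the identity and shared-label clauses, and it copies unshared branches verbatim. Hence the three key claims of your cases 3 and 4 fail once $\ST_1'\neq\ST_2'$:
\begin{enumerate}
\item Equality of continuations before substitution does not survive substitution by two \emph{different} terms; injectivity concerns a single substitution.
\item $\ST_1=\ST_2$ does not make the substituted terms equal; idempotence only covers $\ST_1'=\ST_2'$.
\item An unshared branch of $\ST_1$ appears on the left instantiated with $\mu\mathtt t.\ST_1'$, not with $\mu\mathtt t.(\ST_1'\sqcap\ST_2')$. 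So the result is not the branch-wise substitution of $\ST_1\sqcap\ST_2$.
\end{enumerate}
For example, take $\ST_1=\ST_2=\role s\oplus a.\mathtt t$, $\ST_1'=\p\,\&\,b.\tend$ and $\ST_2'=\p\,\&\,c.\tend$. The right-hand side $\role s\oplus a.\,\mu\mathtt t.\p\,\&\,\{b.\tend,c.\tend\}$ is defined, whereas the left-hand side merges two distinct selections and is $\bot$.

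This is not an artefact of the general form. Take exactly the instance used in the paper ($\ST_i=\ST_i'$), with $\ST_1'=\p\,\&\,b.\,\role s\oplus a.\mathtt t$ and $\ST_2'=\p\,\&\,c.\,\role s\oplus a.\mathtt t$, so that $\ST_1'\sqcap\ST_2'=N=\p\,\&\,\{b.\,\role s\oplus a.\mathtt t,\ c.\,\role s\oplus a.\mathtt t\}$. Both sides are defined, yet
\[
\p\,\&\,\{b.\,\role s\oplus a.\,\mu\mathtt t.\ST_1',\ c.\,\role s\oplus a.\,\mu\mathtt t.\ST_2'\}
\;\neq\;
\p\,\&\,\{b.\,\role s\oplus a.\,\mu\mathtt t.N,\ c.\,\role s\oplus a.\,\mu\mathtt t.N\}.
\]
They are not even trace-equivalent: after $b$ and $a$ the left-hand side accepts only $b$.

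With a shared label the left-hand side can even be undefined. Take $\ST_1'=\p\,\&\,\{b.\,\role s\oplus a.\mathtt t,\ c.\tend\}$ and $\ST_2'=\p\,\&\,\{b.\,\role s\oplus a.\mathtt t,\ d.\tend\}$. Then the left-hand side is undefined while the right-hand side is defined, which also refutes Proposition~\ref{unfoldmerge}.

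Your freshness side condition is irrelevant to all of these examples; it only handles the converse overlap you noticed, which is itself a genuine counterexample. What your induction does establish is the special case $\ST_1'=\ST_2'$, i.e.\ that $\sqcap$ commutes with a single substitution. In general one can hope at most for a one-directional relation; in the example above the right-hand side is below the left-hand side in $<:$. So the recursion case of Proposition~\ref{merge1} needs to be reorganised rather than patched by an extra hypothesis.
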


\begin{proposition}[Merge and $<:$]\label{merge1}
Consider local types $\ST_1^1$, $\ST_1^2$, $\ST_1^2$, $\ST_2^2$. Assume that for all $i\in \{1,2\}$, $\ST_i^1 <: \ST_i^2$ and $\ST_1^i \sqcap \ST_2^i$ is defined. Then
\[\ST_1^1 \sqcap \ST_2^1 <: \ST_1^2 \sqcap \ST_2^2\]
\end{proposition}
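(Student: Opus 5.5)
We argue by induction on the derivation of $\ST_1^1 \sqcap \ST_2^1$, i.e.\ by case analysis on which clause of the merge definition (\Cref{zsec:projection}) produces it. In each case we show that the same clause applies to $\ST_1^2 \sqcap \ST_2^2$, and that the two results are related by $<:$. The only non-trivial inversion facts we need about $<:$ (\Cref{asec:st}) are about the top-level shape of the two sides of a judgement $\ST' <: \ST$. If $\ST$ is a branching $\p\&\{\ldots\}$, then $\ST'$ is either a branching on a superset of labels or a $\mu$ (handled by unfolding). If $\ST$ is an active MC, then $\ST'$ is an MC definition or an active MC. Otherwise $\ST'$ has the same top-level constructor as $\ST$.

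First take the identity clause, $\ST_1^1=\ST_2^1$. The premises give $\ST_1^2\sqcap\ST_2^2$ defined, and we need $\ST_1^1 <: \ST_1^2\sqcap\ST_2^2$. When $\ST_1^2=\ST_2^2$ this is immediate. Otherwise we do a nested case split on the clause used for the right-hand merge.
\begin{itemize}
\item For the branching clause, $\ST_1^1$ is below both $\p\&\{a_i.\ST_i\}_{I}$ and $\p\&\{b_j.\ST_j\}_{J}$. By inversion, its label set covers both $I$ and $J$, and its continuations are below the corresponding ones. Applying the first $<:$ rule for branching then gives $\ST_1^1 <: $ the merged branching; on common labels the two continuations coincide by the merge side condition.
\item For the $\mu$ clause, we use the $\mu$ rules of $<:$ together with \Cref{unfoldmerge,refoldmerge}.
\end{itemize}

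Next take the branching clause on the left, with $\ST_1^1=\p\&\{a_i.\ST_i^{1,1}\}_I$ and $\ST_2^1=\p\&\{b_j.\ST_j^{2,1}\}_J$, agreeing on $I\cap J$. By inversion, $\ST_1^2$ and $\ST_2^2$ are branchings on subsets $I'\subseteq I$ and $J'\subseteq J$, or are $\mu$-types that we first unfold via the rule $\mu\mathtt t.\ST'<:\ST$. Their merge is a branching over $I'\cup J'$. For each label in $I'\cup J'$ we take the continuation from whichever side owns it and relate it by the componentwise hypotheses. On common labels of $I'\cap J'$ both hypotheses apply, and they agree. The subset rule of $<:$ then yields $\ST_1^1\sqcap\ST_2^1 <: \ST_1^2\sqcap\ST_2^2$.

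The $\mu$ clause on the left, $\ST_i^1=\mu\mathtt t.\ST_i'$, has two sub-cases. If the right-hand sides are also $\mu$-types by the congruence rule, we apply the induction hypothesis under the binder and use congruence. If some $\ST_i^2$ is not a $\mu$, then $<:$ was derived by unfolding on the left. In that case we replace $\ST_i^1$ by its unfolding, using \Cref{unfoldmerge} to keep the merge defined and \Cref{refoldmerge} to commute merge with substitution, and then conclude with the unfolding rule of $<:$.

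The main obstacle is this last situation: the recursion clauses interacting with the left-unfolding rule of $<:$, where the two sides of the preorder are out of step by one unfolding. To make the induction well-founded we induct on the combined size of the $<:$ derivations rather than on merge, since each unfolding step shortens a $<:$ derivation. We also expect a separate lemma that $<:$ is preserved by substitution of related recursive bodies.
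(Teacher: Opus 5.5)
The step that fails is the one that closes both your identity clause and your branching clause. Your inversion fact is fine: if a branching $\ST'$ satisfies $\ST'<:\p\&\{\ldots\}_K$, its label set contains $K$. But to conclude you invoke a ``subset rule'' that the preorder of \Cref{asec:st} does not have. For branchings it has only two rules: the same-label rule, with identical index sets on both sides, and the single-label rule $\p\&_{i\in I}a_i.\ST_i'<:\p\&\,a_k.\ST_k$.

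In your branching clause, as soon as one hypothesis is derived by the single-label rule, the right-hand merge ranges over $I'\cup J'$. This can be a proper, non-singleton subset of $I\cup J$, and then no rule derives the conclusion. In your identity clause, ``applying the first rule'' likewise needs the label set of $\ST_1^1$ to be exactly $I\cup J$, whereas inversion only gives containment (e.g.\ $\ST_1^1=\ST_2^1=\p\&\{a.\tend,b.\tend,c.\tend\}$, $\ST_1^2=\p\&a.\tend$, $\ST_2^2=\p\&b.\tend$).

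This cannot be patched inside the proof, because with $<:$ as defined the proposition is false. Take $\ST_1^1=\p\&\{a.\tend,b.\tend\}<:\p\&a.\tend=\ST_1^2$ (single-label rule) and $\ST_2^1=\ST_2^2=\p\&c.\tend$. Both merges are defined, giving $\p\&\{a.\tend,b.\tend,c.\tend\}$ and $\p\&\{a.\tend,c.\tend\}$. No rule relates them: the same-label rule needs equal label sets, the single-label rule needs a singleton on the right, and every other rule needs a $\mu$, an MC, $\mathtt t$ or $\tend$.

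To make your argument work you must first generalise the single-label rule to a width rule: $\p\&_{i\in I}a_i.\ST_i'<:\p\&_{j\in J}a_j.\ST_j$ whenever $J\subseteq I$ and $\ST_j'<:\ST_j$ for all $j\in J$. The top-down fidelity case $\mathtt{[Rcv]}$, $\sqcap_{i\in I}\ST_i<:\ST_k$, also needs this rule when $\ST_k$ is a branching with several labels. With it, your branching and identity cases go through essentially as you wrote them.

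The paper's proof avoids the problem only because its prefix case assumes that both $\ST_1^1<:\ST_1^2$ and $\ST_2^1<:\ST_2^2$ end with the same-label rule. Then $I$ and $J$ are preserved and that rule is simply reapplied. The single-label rule is never treated, so in effect the paper covers only derivations that avoid it; your more careful inversion is exactly what exposes the omission.

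One further caution for your recursion case. \Cref{refoldmerge}, which you (like the paper) use to commute merge with unfolding, fails as an equality. For $A_1=\p\&a.\mathtt t$ and $A_2=\p\&b.\mathtt t$, merging the unfoldings keeps $\mu\mathtt t.A_1$ and $\mu\mathtt t.A_2$ under $a$ and $b$, whereas unfolding the merge puts $\mu\mathtt t.(A_1\sqcap A_2)$ under both. So there you can only use the inequation ``unfolded merge $<:$ merge of unfoldings'' together with transitivity of $<:$. That is where the width rule and the substitution lemma you anticipate have to do the work.
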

\begin{proof}

By induction on the depth of the derivation syntax of $\ST_1^1 <: \ST_2^1$.

\paragraph{Base cases.} There are two axioms in the judgments in \Cref{asec:st}, for $\mathtt t$ and $\tend$. If $\ST_1^1 = \mathtt t $ it can only be $t <: \ST_1^2 = \mathtt t$.
Since $\ST_1^1 \sqcap \ST_2^1$ is defined, then $\ST_2^1=\mathtt t$ and by definition of `$<:$' we have  $\ST_2^2 = \mathtt t$. Since $\mathtt t \sqcap \mathtt t = \mathtt t$ the thesis is immediate by hypothesis. The case for $\ST_1^1 = \tend$ is similar.

\paragraph{Inductive cases - prefix rule.}
If the \textbf{first rule} is used then $\ST_1^1$ is either a sending or a receiving prefix. If $\ST_1^1 = \p\&_{i\in I} a_i.\ST_i$ then by definition of `$<:$' we have
$\ST_1^2 = \p\&_{i\in I} a_i.\ST_i'$ with
\begin{equation}\label{m11}
\ST_i<:\ST_i'    \qquad \forall i \in I
\end{equation}
Note that by `$<:$' both $\ST_1^1$ and $\ST_1^2$ have choices over the same set $I$.
By definedness of $\ST_1^1\sqcap \ST_2^1$ then $\ST_2^1 = \p\&_{j\in J} a_j.\ST_j''$.
By $\ST_2^1 <: \ST_2^2$ we have $\ST_2^2 =  \p\&_{j\in J}a_j.\ST_j'''$
with
\begin{equation}\label{m12}
\ST_i''<:\ST_i'''    \qquad \forall j \in J
\end{equation}
By definition of merge, $\ST_1^1\sqcap \ST_2^1$ has the following form:
\[ \ST_1^1\sqcap \ST_2^1 = \p\& \{a_i. \ST_i\}_{i\in I\setminus J}~\cup ~ \{a_j. \ST_j''\}_{j\in J\setminus I} ~\cup ~ \{ a_i.\ST_i\sqcap \ST''_i\}_{i\in I\setminus J} \]
and similarly
\[ \ST_1^2\sqcap\ST_2^2 = \p\& \{a_i.\ST_i'\}_{i\in I\setminus J}~\cup~\{a_j.\ST_j'''\}_{j\in J\setminus I} ~\cup~ \{ a_i.\ST_i'\sqcap \ST_i'''\}_{i\in I\cap J}\]
Since merge is defined inductively, if $\ST_1^1\sqcap\ST_2^1$ (resp. $\ST_2^1\sqcap\ST_2^2$) is defined then
all $\ST_i\sqcap \ST''_i$ (resp. $\ST_i'\sqcap \ST'''_i$) are defined for $i\in I\setminus J$. Therefore we can apply induction on the continuations obtaining
\begin{equation}\label{m13}
\ST_i\sqcap \ST_i'' <: \ST_i'\sqcap \ST_i'''     \qquad \forall i \in I\cap J
\end{equation}
Using (\ref{m11}), (\ref{m12}) and (\ref{m13}) as premises for `$<:$' (first rule) we obtain the thesis $\ST_1^1\sqcap\ST_2^1<:\ST_1^2\sqcap\ST_2^2 $ for this case.

If $\ST_1^1 = \p\oplus_{i\in I} a_i.\ST_i$ the thesis is immediate by induction since
$\ST_1^1\sqcap \ST_2^1 = \ST_1^1$.

\paragraph{Inductive cases - MC rules.} The cases for MC, where $\ST_1^1$ and $\ST_2^1$ have the same type of mixed choice (second, fourth and fifth rules of `$<:$' in ~\Cref{asec:st}) is also immediate by induction. The case for the third rule also follows by induction observing $\ST_1^1$ and $\ST_2^1$ have the same MC configurations, and so do $\ST_1^2$ and $\ST_2^2$, and also observing that the definition of merge does not alter such configuration, being the identity (first rule of merge).

\paragraph{Inductive cases - recursion rules.}
If $\ST_1^1 = \mu \mathtt .\ST_1'$ we have two cases. If $\ST_2^1 =\mu \mathtt .\ST_2'$ then the thesis is by induction using the first rule in the third line (\Cref{asec:st}).
If unfolding is needed then the rule applied to derive $\ST_i^1 <: \ST_i^2 $ is
\[\inferrule{
 \ST_i' [\mu \mathtt{t}. \ST_i'/\mathtt{t}]\mathbin{<:} \ST_i^2
}{
\ST_{i}^1 = \mu \mathtt{t}. \ST_i' \mathbin{<:} \ST_i^2
}\]
By hypothesis $\mu \mathtt{t}. \ST_1'\sqcap \mu \mathtt{t}. \ST_2'$ is defined and hence by \Cref{unfoldmerge}
$\ST_1' [\mu \mathtt{t}. \ST_1'/\mathtt{t}] \sqcap \ST_2' [\mu \mathtt{t}. \ST_2'/\mathtt{t}]$ is defined. By induction, therefore,
\begin{equation}\label{merenda}
\ST_1' [\mu \mathtt{t}. \ST_1'/\mathtt{t}] \sqcap \ST_2' [\mu \mathtt{t}. \ST_2'/\mathtt{t}]\mathbin{<:} \ST_i^2\sqcap\ST_2^2\end{equation}
By \Cref{refoldmerge},
$\ST_1' [\mu \mathtt{t}. \ST_1'/\mathtt{t}] \sqcap \ST_2' [\mu \mathtt{t}. \ST_2'/\mathtt{t}] = \ST_1' \sqcap \ST_2' [\mu \mathtt{t}. \ST_1'\sqcap \ST_2'/\mathtt{t}]$ hence (\ref{merenda}) becomes
\begin{equation}\label{merenda1}
\ST_1' \sqcap \ST_2' [\mu \mathtt{t}. \ST_1'\sqcap \ST_2'/\mathtt{t}]\mathbin{<:} \ST_i^2\sqcap\ST_2^2\end{equation}
By applying \ref{merenda1} as a premise for the second rule in the third row (\Cref{asec:st}) we obtain the thesis.
\end{proof}

\begin{lemma}[Send-project] \label{lem:prestale}
Let $\GT\proj = (\role{q}, \ST, \sigma),Y$ and assume that
$(\role q, \ST, \sigma),Y\trans{\p\role q !\llab} (\role {q}, \ST', \sigma'),Y'$ and $\GT \trans{\p\role q !a}\GT'$, then $\ST=\ST'$.
\end{lemma}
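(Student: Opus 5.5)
The statement concerns the receiver $\q$ of a send action: when $\p$ sends $a$ to $\q$, only $\q$'s queue changes, and its local type stays the same. The local side is immediate from the rules. The only local rules with a send label are $\lrule{Snd}$ (possibly under $\lrule{Rec}$, $\lrule{LSnd}$, $\lrule{RSnd}$, $\lrule{LCtxt}$, $\lrule{RCtxt}$). In all of them the subject $\p$ is the configuration whose type is rewritten. The receiver configuration $(\q,\ST,\sigma'[\p\mapsto\vv m])$ appears in the conclusion of $\lrule{Snd}$ as $(\q,\ST,\sigma'[\p\mapsto \vv m\cdot m])$, i.e.\ with its type untouched. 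The context rules $\lrule{LSnd}$, $\lrule{RSnd}$, $\lrule{LCtxt}$, $\lrule{RCtxt}$ and $\lrule{Rec}$ only wrap or unfold the \emph{subject's} local type; they carry the remaining system $Y$ through unchanged except as produced by the premise. So I will argue by induction on the derivation of the local step.

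\textbf{Step 1: base case.} Under $\lrule{Snd}$, the configuration of $\q$ is the second component and its type $\ST$ is literally preserved. This requires $\q\neq\p$, which holds since configurations in a system have pairwise-distinct roles and the label $\p\q!a$ has distinct endpoints.

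\textbf{Step 2: inductive cases.} Each context rule has a premise of the form $\pth'\vdash (\p,\ST_1,\sigma),Y \trans{\p\q!a} (\p,\ST_1',\sigma),Y'$. The conclusion only rebuilds the $\p$ component, so by the induction hypothesis $\q$'s type in $Y'$ equals that in $Y$. The top-level rules also preserve it. $\lrule{Low}$ just strips $\pth$. $\lrule{Par}$ either places $\q$ in the untouched part, where the claim is trivial, or inside the moving part, where the induction hypothesis applies. $\lrule{Discard}$ does not carry a send label.

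The hypothesis $\GT\trans{\p\q!a}\GT'$ and the projection hypothesis are then not needed for the equality $\ST=\ST'$. They only guarantee that the configuration of $\q$ exists in the derived system, i.e.\ $\q\in\RolesG{\GT}$. I will note that they are used later, when this lemma is combined with the projection of $\GT'$: the global step turns $\GactShort{p}{q}{\St{}}$ into $\GmsgShort{p}{q}{k}{\St{}}$, and projection on $\q$ yields the same branching type, only with an enqueued message.

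The main (mild) obstacle is the bookkeeping in Step 2. I must check that no rule, in particular $\lrule{RSnd}$ with its $\bullet$ replacement, or a nested context, ever rewrites a non-subject configuration. I expect this to follow immediately from inspecting \Cref{afig:lsem}.
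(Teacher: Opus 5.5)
Your Steps 1--2 address only the literal reading, in which $\ST'$ is $\q$'s type after the local step. As you note yourself, $\GT\trans{\p\q!a}\GT'$ then plays no role, and that idle hypothesis signals that this is not the content of the lemma. The paper's proof does not look at the local derivation at all. It writes $\GT=\Ctname[\GactShort{p}{q}{\St{}}]$ and $\GT'=\Ctname[\GmsgShort{p}{q}{k}{\St{}}]$, and shows by induction on $\Ctname$ that $\GT\proj\q$ and $\GT'\proj\q$ yield the same local type. Its base case is that $\GactShort{p}{q}{\St{}}$ and $\GmsgShort{p}{q}{k}{\St{}}$ both project on $\q$ to $\p\&_{i\in I}a_i.\ST_i$, differing only in the queue. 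This is how the lemma is used in the active-MC case of \Cref{fullfidelity}: ``the projected local type from $\GT'$ on [the receiver] is $\ST_2$ itself'', after which only the queue remains to be matched via \Cref{lem:stale}. Your proposal defers this fact to a closing remark and states only its base case. So, relative to what the lemma must deliver, there is a genuine gap.

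What is missing is the inductive step over $\Ctname$.
\begin{itemize}
\item Through an interaction prefix in which $\q$ is a third party, you need that the merge of branch projections is unchanged; note that $\mathtt{[Cont1]}$ advances every branch.
\item Through an active MC, the projection clause selected for $\q$ by $\lset,\rset$ must be the same before and after. This holds because $\mathtt{[LSnd]}$ changes neither set and $\mathtt{[RSnd]}$ adds only the sender $\p\neq\q$. A side from which $\q$ has committed away is discarded by projection anyway.
\item Recursion needs care, because a global $\mathtt{[Rec]}$ step goes through the unfolding. For example, with $\GT=\mu\mathtt t.\GactShort{p}{q}{a.\mathtt t}$ and $\ST=\mu\mathtt t.\p\&a.\mathtt t$, the type of $\GT'\proj\q$ is $\p\&a.\ST$. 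This equals $\ST$ only up to unfolding, i.e.\ up to $\mathbin{<:}$, which is all fidelity needs; the paper's sketch glosses over this too.
\end{itemize}

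The same issue affects your literal argument. Contrary to your Step~2, $\lrule{Rec}$ carries an arbitrary label. It can therefore be instantiated at the receiver's configuration inside a $\lrule{Snd}$ derivation, where the receiver's type is unconstrained. Then $\ST=\mu\mathtt t.\ST_0$ but $\ST'=\ST_0[\mu\mathtt t.\ST_0/\mathtt t]$, which differs syntactically under the paper's iso-recursive reading. So even the literal claim needs either a restriction to derivations that unfold only the subject, or a formulation up to unfolding. $\lrule{LCtxt}$ and $\lrule{RCtxt}$ also carry an arbitrary label, but there your induction hypothesis does preserve the receiver's type.
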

\begin{proof}
    We give a proof sketch. By analysis of the global semantics rules there exists $\Ctname$ such that $\GT =\Ctname[\GactShort{p}{q}{\St{}}]$ and  $\GT' =\Ctname[\GmsgShort{p}{q}{k}{\St{}}\cup\{a_k.\Ctname\}]$.
The thesis is by induction on the structure of $\Ctname$, where in the base case for $\Ctname = [\,]$ we observe that $\GactShort{p}{q}{\St{}} \proj \role q$
and $\GmsgShort{p}{q}{k}{\St{}}\cup\{a_k.\Ctname\} \proj \role q$ give the same local type.
\end{proof}
\begin{lemma}[Stale]
\label{lem:stale}
Let $\GT\proj = (\role{q}, \ST, \sigma),Y$ and assume that
$(\role q, \ST, \sigma),Y\trans{\p\role q !\llab} (\role {q}, \ST, \sigma'),Y'$ and $\GT \trans{\p\role q !a}\GT'$. Then the following hold:

\[ \neg \mathtt{stale}(\pth, \ST) \quad \text{iff} \quad \sigma' =\sigma[\p \mapsto \sigma(\p)\cdot (a,\pth)] \]
Namely, projection includes in the queue all and only messages that are not stale.
\end{lemma}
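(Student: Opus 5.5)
The plan is to argue by induction on the context $\Ctname$ locating the sending interaction, reusing the decomposition from the proof of \Cref{lem:prestale}. By inspection of the global rules, $\GT\trans{\p\q!a}\GT'$ implies $\GT=\Ctname[\GactShort{p}{q}{\St{}}]$ and $\GT'=\Ctname[\GmsgShort{p}{q}{k}{\St{}}]$ with $a=a_k$. The path $\pth$ carried by the local message is determined by the nesting of active MCs in $\Ctname$: the local rule $\lrule{Snd}$, reached through $\lrule{LSnd}$, $\lrule{RSnd}$ and the context rules, annotates the message with the sequence of $\pleft/\pright$ choices along $\Ctname$. Since $\GT'$ is reachable, $\q$'s updated queue must agree with $\GT'\proj\q$; by \Cref{pro:sigmadet} it is enough to compare queues computed by projection. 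The statement then becomes a claim about projection: $\GT'\proj\q$ has queue $\sigma[\p\mapsto\sigma(\p)\cdot(a,\pth)]$ exactly when following $\pth$ in $\ST$ never reaches a $\bullet$ side. By \Cref{lem:prestale}, $\q$'s local type $\ST$ is the same before and after the send.

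I would first prove a generalised, path-parametric version: for $\pth_{out}\vdash\Ctname[\cdot]\proj\q$, the projected queue of $\Ctname[\GmsgShort{p}{q}{k}{\St{}}]$ equals the projected queue of $\Ctname[\GactShort{p}{q}{\St{}}]$ extended at the end of $\p$'s FIFO by $(a_k,\pth_{out}.\pth_{in})$ iff $\neg\mathtt{stale}(\pth_{in},\ST)$. The proof is by induction on $\Ctname$. In the base case $\Ctname=\hole$, the in-transit projection rule adds $\msg{p}{a_k}{\pth}$ with $\pth_{in}=\epsilon$, and $\mathtt{stale}(\epsilon,\ST)=\mathtt{false}$. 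For prefix contexts, the hole lies in a continuation. The projection either keeps the relevant queue unchanged or merges, and merge is the identity on the queue component because of the $\sigma_i=\sigma_0$ side conditions and \Cref{emptyqueue}. The claim follows by induction, since the local type seen by $\mathtt{stale}$ is the continuation.

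The key case is an active MC, $\Ctname=\GtoP{\Ctname'}{}{\lset}{\rset}{\GT_2}$, or its mirror. Projection onto $\q$ splits three ways. If $\q\in\lset$, the local type is $\ST_1\lmix\bullet$ and the queue is $\sigma_1$ computed with path $\pth.\pleft$. Then $\mathtt{stale}(\pleft.\pth',\ST_1\lmix\bullet)=\mathtt{stale}(\pth',\ST_1)$, so the induction hypothesis applies directly. If $\q\in\rset$, the local type is $\bullet\lmix\ST_2$, $\mathtt{stale}(\pleft.\pth',\cdot)=\mathtt{true}$, and projection ignores $\GT_1$ entirely, so the queue is unchanged. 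This matches the ``iff'' in the negative direction. If $\q\notin\lset\cup\rset$, the queue is $\sigma_1\circ\sigma_2$ and $\mathtt{stale}$ descends into $\ST_1$.

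The main obstacle is this third case. The new message is appended to $\sigma_1(\p)$, which lies before $\sigma_2(\p)$, whereas the local $\lrule{Snd}$ appends it at the very end of $\q$'s queue. I would use coherence (\Cref{coherenceinvariantI}) together with LR-initiation (\Cref{auxx}) to show $\sigma_2(\p)=\epsilon$ whenever $\p$ can still send on the LHS. By $\lrule{LSnd}$, $\p\notin\rset$. If $\p$ is the observer, \Cref{auxx}(2) gives $\rset=\emptyset$; otherwise $\p$ being uncommitted on the right still leaves no in-transit RHS messages from $\p$, again by \Cref{auxx}. In either case $\GT_2$ contains no in-transit message from $\p$, so the two appends coincide. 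The mirror context follows symmetrically, using the $\pright$ clauses of $\mathtt{stale}$ and $\lset=\emptyset$ from coherence. MC definitions and $\mu$ contexts hold trivially, since they contain no hole reachable by a send.
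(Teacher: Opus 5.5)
Your proposal is essentially the paper's proof. The paper inducts on the derivation of the local send rather than on the global context, but it also peels off one MC layer at a time with the same three-way split on $\q$'s projected type ($\ST_1\lmix\ST_2$, $\ST_1\lmix\bullet$, $\bullet\lmix\ST_2$), and its one delicate case is likewise the LHS send with $\q$ uncommitted, settled by $\p$'s component of the RHS queue being empty. You are right that only $\sigma_2(\p)=\epsilon$ is needed; the paper's $\sigma_R=\epsilon$ overstates it, since other roles may already have sent to $\q$ on the RHS.

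Two small corrections. First, for a non-observer $\p$ that emptiness does not follow from \Cref{auxx}. The direct reason, which is the paper's, is that every send inside $\GT_2$ is lifted by the global rule $\mathtt{[RSnd]}$, which puts the sender into $\rset$, so by \Cref{lem:monoton} $\p\notin\rset$ means $\p$ never sent on the RHS. This works uniformly for observer and non-observer, with no need for coherence, and in the mirror case $\ST_1\lmix\bullet$ is simply the stale case. Second, $\q$'s locally updated queue does \emph{not} agree with $\GT'\proj\q$ when the message is stale. So $\sigma'$ has to be read directly as the queue of $\GT'\proj\q$, as the paper's proof does, rather than obtained via \Cref{pro:sigmadet}.
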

\begin{proof}
Observe that the last message sent is at the tail of the queue $\sigma(\p)$. By induction on the derivation of the local transition, proceeding by case analysis on the last rule applied.

\paragraph{Base case - axiom $\mathtt{[Snd]}$}
In this case we have no MC context. So
$\GT$ has the form of an interaction from $\p$ to $\role q$ (either top level or as continuation of other interactions from different participants). $\GT \proj \p = \role q \oplus\{a_i. \ST_{\p i}\}_{i\in I}$ and $\GT \proj \role q = \p \& \{a_i. \ST_{\role q i}\}_{i\in I}$.
By $\mathtt{[Snd]}$ $\sigma' =\sigma [\p \mapsto \sigma(\p)\cdot (a, \pth)]$. Since $\pth = 0$ we have trivially $\neg \mathtt{stale}(0, \ST)$ hence done.

\paragraph{Inductive cases}
There are four send rules in the local semantics: $\mathtt{[LSnd]}$, $\mathtt{[LCtxt]}$, $\mathtt{[RSnd]}$, and $\mathtt{[RCtxt]}$. The cases for $\mathtt{[LSnd]}$ and $\mathtt{[LCtxt]}$ – send action on the left – proceed similarly, because the local type in the configuration of $\p$ (i.e., whether sender $\p$ is committed in the outmost MC) is not relevant to the proof. Similarly $\mathtt{[RSnd]}$ and $\mathtt{[RCtxt]}$ proceed in the same way for send action on the right. We only show the cases for $\mathtt{[LSnd]}$, and $\mathtt{[RSnd]}$.  Without loss of generality assume that $\GT=\GT_1\blacktriangleright \GT_2$ and $\ST$ is a MC.

If the local action is by $\mathtt{[LSnd]}$ then we can infer $\pth = \pleft . \pth'$ for some $\pth'$, and also that $\p$'s local type is a MC. By inspection of the global rules and hypothesis we can infer $\GT_1 \trans{\p\role q !a}\GT_1'$ for some $\GT_1'$. Now $\ST$ can have one of the following forms:
\begin{itemize}
    \item $\ST = \ST_1 \blacktriangleright \ST_2$. Let $\sigma_L$ and $\sigma_R$ be the queues obtained by projecting $\GT_1$ and $\GT_2$, on $\role q$. Recall that actions on the RHS are always committing, and that $\p$ must not be committed to the RHS when making a LHS step, hence $\sigma_R = \epsilon$.
    By projection (active MC, third case): \begin{equation}
    \label{eqactive}
    \sigma = \sigma_L \circ \sigma_R = \sigma_L \quad \sigma' = \sigma_L'\circ\sigma_R = \sigma_L' \quad \text{(with  $\sigma_L'$ projection of $\GT_1'$ on $\role q$ and  $\sigma_R = \epsilon$)}\end{equation}
    By induction $\neg \mathtt{stale}(\pth', \ST_1)$ iff  $\sigma_L' = \sigma_L[ \p\mapsto  \sigma_L(\p)\cdot (a, \pth)]$ which gives the thesis by observing that $\neg \mathtt{stale}(\pth', \ST_1)$ iff $\neg \mathtt{stale}(\pleft.\pth’, \ST)$ by definition of  $\mathtt{stale}$.
    \item $\ST = \ST_1 \blacktriangleright \bullet$.
    Again, let $\sigma_L$ be the queue obtained by projecting $\GT_1$ on $\role q$.
    By the projection (active MC, second case) $\sigma = \sigma_L$ and $\sigma' = \sigma_L'$ (for $\sigma_L'$ projection of $\GT_1'$ on $\role q$). By induction $\neg \mathtt{stale}(\pth', \ST_1)$ iff  $\sigma_L' = \sigma_L[ \p\mapsto (a, \pth), \sigma_L(\p)]$ which gives the thesis via observing $\neg \mathtt{stale}(\pth', \ST_1)$ iff $\neg \mathtt{stale}(\pleft.\pth’, \ST)$ by definition of  $\mathtt{stale}$.
    \item $\ST = \bullet \blacktriangleright \ST_2$. In this case $\neg \mathtt{stale}(\pleft.\pth', \ST)$ which yields the \emph{if} direction. For the \emph{only-if} case we need to show $\sigma'\neq \sigma[ \p\mapsto (a, \pth)\cdot \sigma(\p)]$ which follows by $\sigma'= \sigma_R = \sigma$. Namely $(a, \pth)$ is not in $\sigma'$.
\end{itemize}

If the action is by $\mathtt{[RSnd]}$ then  $\GT_2\trans{\p\role q!a}\GT_2'$ and $\pi = \pright. \pi'$. We have three cases.
\begin{itemize}
    \item $\ST = \ST_1 \blacktriangleright \ST_2$. We have $\sigma = \sigma_L\circ \sigma_R$ and $\sigma' = \sigma_L\circ\sigma_R'$ for some $\sigma_R'$ resulting from projection of $\GT_2'$ on $\role r$. In this case, however, it can be that both $\sigma_L$ and $\sigma_R'$ are non empty. By induction $\neg \mathtt{stale}(\pth', \ST_2)$ iff $\sigma_R' = \sigma_R[ \p\mapsto (a, \pth)\cdot \sigma_R(p)]$.
    The thesis follows from observing: (a) $\neg \mathtt{stale}(\pth', \ST_2)$ iff $\neg \mathtt{stale}(\pright.\pth', \ST)$ (by definition of $\mathtt{stale}$), (b) $\sigma_L \circ \sigma_R' = \sigma_L \circ \sigma_R[ \p\mapsto (a, \pth)\cdot \sigma_R(p)]$ (by induction) which by definition of $\circ$ (which is not commutative and concatenates the messages of $\sigma_R(\p)$ \emph{after} those of $\sigma_L(\p)$ we have $\sigma_L \circ (\sigma_R[ \p\mapsto (a, \pth)\cdot \sigma_R(p)]) = (\sigma_L \circ \sigma_R) [ \p\mapsto (a, \pth)\cdot \sigma_L(p)\circ \sigma(R)(\p)] = \sigma'$ as required.
    \item $\ST = \ST_1 \blacktriangleright \bullet$. This case is symmetric to the corresponding (third) case for $\mathtt{[LSnd]}$ but observing $\neg \mathtt{stale}(\pright.\pth', \ST)$ for the \emph{if} direction and $\sigma' = \sigma_L =\sigma$ (the new message is not in $\sigma'$).
    \item $\ST = \bullet \blacktriangleright \ST_2$ is  symmetric to case (2) for $\mathtt{[LSnd]}$.
\end{itemize}
\end{proof}

 \subsection{Bottom-up fidelity}
\label{asec:fidelity}

\fullfidelity*

\begin{proof}
Property (2) follows from \Cref{nogarbage} (projected systems have no stale messages). We focus on property (1).

Let $\pth \vdash \GT \mathbin{\proj} \role{r} = \ST_{\role r}, \sigma_{\role r}$ for all $\role{r}\in\RolesG{\GT}$.
We also set $Y_{\role r} = (\role r, \ST_{\role r}, \sigma_{\role r})$ and $Y_{\GT} = \{ Y_{\role r}\}_{\role r\in \RolesG{\GT} }$.
The move by $Y$ is by $\mathtt{[Par]}$ or   $\mathtt{[Low]}$. The case for $\mathtt{[Discard]}$ is not possible since $\ell\neq\rho$. The case for $\rho$ actions is handled in (2).

Without loss of generality we We focus on

In case of $\mathtt{[Par]}$, and then one of the rules for configurations in Figure~\ref{fig:lsem2},
we reason by induction on the syntax of $\GT$.

\paragraph{1. \textbf{Communication}} Assume $\GT = \role p \rightarrow \role q : \{ a_i : \GT_i \}_{i\in I}$ and, without loss of generality, \[Y_{\GT} = Y_{\p} , Y_{\role q}, Y_{R}\] where
\[\begin{array}{ccl}
     Y_{\p} & = & (\role p, ~\tseli{\role q}{a}\ST_{\p i}, ~\sigma_{\p})\\
     Y_{\q} & = &  (\q,  ~{\p}\&_{i\in I} {a_i}.\ST_{\q i}, ~\sigma_{\q}[\p\mapsto \epsilon ])\\
     Y_{R} & =  & \{(\role r, ~\sqcap_{i\in I} \ST_{\role r i}, ~\sigma_{\role r})\}_{\role r\in\RolesG{\GT}\setminus\{\p,\q\}}
\end{array}\]
noting that $\sigma_{\q}(\p)$ is empty (\Cref{emptyqueue}). Without loss of generality, the transition $Y_{\GT} \trans{\ell} $ is by: (1) a send action by $Y_{\p}$ \footnote{Role $\role q$ cannot move and its first action needs to be the reception of the message from $\p$ since local types are mono-threaded by construction)}, or (2) an action by a role $\role r$ in $Y_{R}$, or (3) a MC instantiation by a configuration in $Y_{R}$.

In case (1) the last rule applied (after $\lrule{Par}$) is $\lrule{LSnd}$ of the local semantics:
\[\tstate{\Theta}{\lenv}{
Y_{\GT}}
\trans{\role p\role q!\llab_k}
(\p, \ST_{\p k}, \sigma_{\p}), (\q,  ~{\p}\&_{i\in I} {a_i}.\ST_{\q i}, ~\sigma_{\q}[\p\mapsto (a_k, \pth') ]), Y_{R}
\]
for some path $\pth'$.
By rule $\mathtt{[Snd]}$ of the global semantics
\[\gspair{\GT}{\Theta}\trans{\p\q!a_k}{\p \rightsquigarrow \q : k \{ a_i : \GT_i \}_{i\in I}} = \GT'\]
The projection of $\GT'$ on role $\p$ is
$(\p, \ST_{\p k}, \sigma_{\p})$, the projection on role $\q$ is $(\q, {\p}\&_{i\in I} {a_i}.\ST_{\q i}, \sigma_{\q}[\p\mapsto (a_k,\pth')]$, and the projection on the roles of the configurations in $Y_{R}$ is unchanged. Note that the definition of merge (for each role $\role r$ in $Y_R$ is not affected by the selection of branch $k$ by $\p$).

Consider now case (2) where the action is by a configuration $(\role r,\, \ST_{\role r},\, \sigma_{\role r})$ in $Y_{R}$ with $\mathtt{sbj}(\ell)=\role r$.
Let, for all $i\in I$, $\GT_{i}\proj \role r = \ST_{\role r i}, \sigma_{\role r i}$. Observe $\ST_{\role r}=\sqcap_{i\in I} \ST_{\role r i}$ and by projection
\begin{equation}\label{samesig}
 \forall i,j\in I.~\sigma_{\role r i} =   \sigma_{\role r j}
\end{equation}
We will therefore denote any $\sigma_{\role r i}$ as $\sigma_{\role r}$.
The transition has the following form:
\[Y_{\GT} = (\role r,\, \ST_{\role r},\, \sigma_{\role r}), (\role t,\, \ST_{\role t},\, \sigma_{\role t}),Y'' \trans{\ell}(\role r,\, \ST_{\role r}',\, \sigma_{\role r}'), (\role t, \ST_{\role t}, \sigma_{\role t}'), Y''\]
Note that if $\ell$ is a send action (assume to is from role $\role t$ without loss of generality) then $\sigma_{\role r}=\sigma_{\role r'}$, and if $\ell$ is a receive action (assume it is from role $\role t$ without loss of generality) then
$\sigma_{\role t}=\sigma_{\role t'}$.

By induction, for all $i\in I$
\begin{equation}\label{m1}\GT_i \trans{\ell} \GT_i'
\end{equation}
and
\begin{equation}\label{m2} \ST_{\role r i}',\, \sigma_{\role r i}' \myreaches{\gcl} \ST_{\role r i}',\, \sigma_{\role r i}^{g} <: \GT_i'\proj{\role r}
\end{equation}
By (\ref{samesig}) follows that (\ref{m2}) is equivalent to
\begin{equation}\label{m23} \ST_{\role r i}',\, \sigma_{\role r}' \myreaches{\gcl} \ST_{\role r i} ,\,\sigma_{\role r i}^{g}  <: \GT_i'\proj{\role r}
\end{equation}
and by \Cref{pro:sigmadet} we have that (\ref{m23}) is equivalent to
\begin{equation}\label{m24} \ST_{\role r i}',\, \sigma_{\role r}' \myreaches{\gcl} \ST_{\role r i},\, \sigma_{\role r}^{g}  <: \GT_i'\proj{\role r}
\end{equation}
By using (\ref{m1}) as a premise in rule $\mathtt{[Cont1]}$ of the global semantics we obtain
\[\GT \trans{\ell}\GT'\]
It remains to show the second part of the thesis:  \begin{equation}\label{mthe}
Y' \myreaches{\gcl}Y_{g} <: \GT'\hspace{-1mm}\proj\end{equation}

We decompose the  reasoning, considering each configuration in $Y'$:
\begin{itemize}
\item for role $\role r$ we need to show
\[(\role r, \ST_{\role r}', \sigma_{\role r}') \myreaches{\rho} <: \GT'\proj \role r\]
The first part (garbage collection)
\[(\role r, \ST_{\role r}', \sigma_{\role r}') \myreaches{\rho}(\role r, \ST_{\role r}', \sigma_{\role r}^g) \]
follows by (\ref{m24}) and \Cref{pro:sigmadet}.
The second part (preorder)
\[(\role r, \ST_{\role r}', \sigma_{\role r}^g) <: \GT'\proj \role r\]
follows by using (\ref{m2})  as hypothesis \Cref{merge1} (merge with `$<:$').

\item For $\role t$, the thesis follows again by induction and \Cref{pro:sigmadet}, observing that all $\GT_i'\proj \role t$ are the same for all $i\in I$.
\item For all other roles the thesis holds by hypothesis since the local types are unchanged, and the queues as well.
\end{itemize}
In case (3) if MC is an instantiation action, the thesis is by induction.

\paragraph{2. \textbf{Message in transit}} Assume
$\GT = \GmsgShort{p}{q}{k}{\{a_i.\GT_i\}_{i\in I}}$ and, without loss of generality, \[Y_{\GT} = Y_{\role q}, Y_{R} \qquad  Y_{\q}  =   (\q,  ~{\p}\&_{i\in I} {a_i}.\ST_{\q i}, ~\sigma_{\q}[\p\mapsto (a_k,\pth') ])\]
There are two cases: (1) the transition of $Y_{\GT}$ is a receive action by $Y_{\q}$ or (2) an action by one other configuration in $Y_{R}$.

In case (1) we have by $\lrule{LRcv}$:
\[Y_{\q} \trans{\p\q?a_k} (\q,  ~\ST_{\q k}, ~\sigma_{\q}[\p\mapsto \epsilon ])=Y_{\q}'\]
and by $\lrule{Par}$
\[Y_{\GT} \trans{\p\q?a_k} Y_{\q}',Y_{R}\]
By rule $\mathtt{[Rcv]}$ of the global semantics
\[ \GT = \gspair{\GmsgShort{p}{q}{k}{\{a_i.\GT_i\}_{i\in I}}}{\Theta}\trans{\p\role q ? a_k} \GT_k\]

It remains to show the second part of the thesis:  \begin{equation}
Y' \myreaches{\gcl}Y_{g} <: \GT'\hspace{-1mm}\proj\end{equation}

We decompose the  reasoning, considering each configuration in $Y'$:
\begin{itemize}
\item For $\q$, by induction
\[\ST_{\role q k},\sigma_{\q}'\myreaches{\gcl}\ST_{\role q k},\sigma_{\q}^g <: \GT_k\proj \q\]
which directly yields the thesis.
\item The cases for other configurations are straightforward by hypothesis as $Y_R$ is nto changed.
\end{itemize}

Case (2) where a configuration in $Y_R$ makes a step in the continuations $\GT_i$ proceeds analogously to case (2) for communications.

\paragraph{4. \textbf{MC definition}}

Assume
$\GT=\GtoDef{  \GT_1}{c}{\lset}{\rset}{  \GT_2}$ and instantiation is triggered by configuration of role $\role r_1$ in $Y_{\GT}$.
Observe that by projection $\sigma_1 = \sigma_0$ and
$\ST_1 =\ST^1_1 \mixi{c} \ST^2_1$ with
\begin{equation}\label{fs1}
 \pth\vdash \GT_1 \mathbin{\proj} \role{r_1} = \ST_1^1,\sigma_0
\end{equation}

By the form of $\ST$ the transition can only be by rule $\lrule{New}$ of the local semantics:

\[\tstate{\ltheta}{\lenv}{
(\role {r1}, \ST_1^1 \mixi{c} \ST^2_1, \sigma_0)}, Y_2
\trans{\lnewlab}
(\role {r_1}, \ST_1^1 \lmixi{c} \ST^2_1, \sigma_0), Y_2 \]

By rule $\mathtt{[Inst]}$ of the global semantics
\[\gspair{\GtoDef{  \GT_1}{c}{\lset}{\rset}{  \GT_2}}{\Theta}\trans{\lnewlab}{\GtoP{  \GT_1}{c}{\emptyset}{\emptyset}{  \GT_2}} = \GT'\]
The projection of $\GT'$ is as follows:
\[ \pth \vdash {\GtoP{  \GT_1}{c}{\emptyset}{\emptyset}{  \GT_2}} \proj \role{r_1} =  \ST_1^1 \blacktriangleright^{c}  \ST^2_1, \sigma_0
\]
where (\ref{fs1}) was used to develop the projections of $\GT_1$ and $\GT_2$ on $\p$,  to derive $\sigma_0$ as $\sigma_0 \circ \sigma_0$ as desired for $Y_1$.
To conclude this case we need to consider $Y = Y_1,Y_2$. Observe that $Y_2$ is unchanged by the transition, that $\GT'\proj \role {r_2} = \ST_2^1 \blacktriangleright^{c}  \ST^2_2, \sigma_0$ and that
$ \ST_2^1 \mixi{c}  \ST^2_2 <: \ST_2^1 \lmixi{c} \ST_2^2$ as desired.  This case, when only one configuration instantiate a MC -- not immediately paired with the other ones, unlike the global case, justifies the need for the preorder $<:$ in the statement.

\paragraph{4. \textbf{Active MC}}
In this case $\GT = \GtoP{\GT_1}{c}{\lset}{\rset}{\GT_2}$.
Without loss of generality we consider three cases in which: (1) $\role r_1$ is not in $\rset\cup\lset$, (2) $\role r_1\in\lset$, and (3) $\role r_1\in\rset$.

In (1), $Y_1 = (\role r_1, \ST_{11} \blacktriangleright^c \ST_{12} , \sigma_1)$ and $Y_2 = (\role r_2, \ST_2, \sigma_2)$.
We first consider the case where the last rule applied is on the left. We show the case for $\lrule{LSnd}$. The cases for $\lrule{LRcv1}$
 and $\lrule{LRcv2}$ proceed similarly (but with no need of garbage collection - the only interesting but is observing that commitment in global types reflects directly on local types by projection) whereas $\lrule{LCtxt}$ is straightforward by induction.

$\lrule{LSnd}$ for some path $\pth$:
\[
\begin{array}{lll}
\tstate{\ltheta}{\lenv'}{(\role r_1, \ST_{11} \blacktriangleright^{c} \ST_{12} , \sigma_1),(\role r_2, \ST_2, \sigma_2)}\trans{\role r_1\role r_2!\llab_k}
\\
\qquad\qquad(\role r_1, \ST_{11}' \blacktriangleright^{c} \ST_{12} , \sigma_1),(\role r_2, \ST_2, \sigma_2[\role r_1\mapsto \sigma_2 \cdot (a_k, \pth)])
\end{array}
\]

with premise

\begin{equation}\label{eq:prem11}
\begin{array}{lll}
\tstate{\ltheta}{\pth'.\pleft}{(\role r_1, \ST_{11}, \sigma_1),(\role r_2, \ST_2, \sigma_2)}\trans{\role r_1\role r_2!\llab_k}\\
\qquad\qquad(\role r_1, \ST_{11}', \sigma_1),(\role r_2, \ST_2, \sigma_2[\role r_1\mapsto (a_k, \pth)])
\end{array}
\end{equation}
Observe that $\pth$ may refer to a MC that is nested inside $c$.

By induction on the premise $(\ref{eq:prem11})$ we have that
\[\gspair{\GT_1}{\Theta}\trans{\role r_1\role r_2!a_k}\GT_1'\]
and the projection of $\GT_1'$ on role $\role r_1$ is
$\ST_{11}'', \sigma_1$ with $\ST_{11}'<:\ST_{11}''$.

The assumption for this case (1), that $\role r_1\not\in \rset$, satistfies the premise of rule $\mathtt{[LSnd]}$ of the global semantics, which we can then apply:
\[\gspair{\GtoP{\GT_1}{c}{\lset}{\rset}{\GT_2}}{\Theta}\trans{\role r_1\role r_2!a_k}\GtoP{\GT_1'}{c}{\lset}{\rset}{\GT_2} = \GT'\]

It remains to show the thesis for the projection of $\GT'$ on $\role r_2$.
%

Recall $(\role r_2, \ST_2, \sigma_2)$ moves to $(\role r_2, \ST_2,  \sigma_2[\role r_1\mapsto (a_k, \pth)])$.
By \Cref{lem:prestale} the projected local type from $\GT'$ on $\role r_1$ is $\ST_2$ itself. So, $\GT'\proj \role r_2 = \ST_2,\, \sigma_2'$ for some $\sigma_2'$.
It remains to show that $\sigma' = \gc(\sigma_2[\role r_1\mapsto (a_k, \pth)])$ that is, the new message $(a_k, \pth)$ is in the projected queue if and only if it is not garbage collected (i.e., not stale), which follows by ~\Cref{lem:stale}.

If $Y$ moves on the RHS we have three possible rules: $\lrule{RSnd}$ or $\lrule{RRcv}$ or $\lrule{RCtxt}$. We show $\lrule{RSnd}$ below.

Let $Y_2 = (\role r_2, \ST_2, \sigma_2)$. By  $\lrule{RSnd}$:
\[
\begin{array}{lll}
\tstate{\ltheta}{\lenv'}{(\role r_1, \ST_{11} \lmixi{c} \ST_{12} , \sigma_1),(\role r_2, \ST_2, \sigma_2)}\trans{\role r_1\role r_2!\llab_k}
\\
\qquad\qquad (\role r_1, \bullet \lmixi{c} \ST_{12}' , \sigma_1),(\role r_2, \ST_2, \sigma_2[\role r_1\mapsto \sigma_2 \cdot (a_k, \pth)])
\end{array}
\]

with premise

\begin{equation}\label{eq:prem1}
\begin{array}{ll}
\tstate{\ltheta}{\lenv'.\pright}{(\role r_1, \ST_{12}, \sigma_1),(\role r_2, \ST_2, \sigma_2)}\trans{\role r_1\role r_2!\llab_k} \\
\qquad\qquad (\role r_1, \ST_{12}', \sigma_1),(\role r_2, \ST_2, \sigma_2[\role r_1\mapsto (a_k, \pth)])
\end{array}
\end{equation}

By induction on the premise $(\ref{eq:prem1})$ we have
\[\gspair{\GT_2}{\Theta}\trans{\role r_1\role r_2!a_k}\GT_2'\]
and the projection of $\GT_2'$ on role $\role r_1$ is
$\ST_{12}'', \sigma_1$ with $\ST_{12}'<:\ST_{12}''$.

Since $\role r_1\in\rset$ then $\GT_2$ cannot make action $\role r_1\role r_2!a_k$ hence we can apply $\mathtt{[RSnd]}$ of the global semantics obtaining:
\[\gspair{\GtoP{\GT_1}{c}{\lset}{\rset }{\GT_2}}{\Theta}\trans{\role r_1\role r_2!a_k}\GtoP{\GT_1}{c}{\lset}{\rset\cup\{\role r_1\}}{\GT_2'} = \GT'\]

The projection of $\GT'$ on $\role r_1$ is
$\bullet  \blacktriangleright^{\inst} \ST_{12}'' , \sigma_1$ since $\role r_1$ is committed to the RHS, with the projection of the LHS given by induction. Also by induction $\ST_{12}'<:\ST_{12}''$ hence by definition of `$<:$' we have $\bullet  \blacktriangleright^{\inst} \ST_{12}' <: \bullet  \blacktriangleright^{\inst} \ST_{12}''$ as desired for $\role r_1$. The thesis for $\role r_2$ follows by \Cref{lem:prestale} (the local type is unchanged) and \Cref{lem:stale} (the queue of the reached state is the projected queue modulo garbage collection).

Case (2) is similar to case (1) but simpler, since only LHS moves are possible by $\role r_1$. In case (3) we have $Y_1 = (\role r_1, \bullet \blacktriangleright^{\inst} \ST_{12} , \sigma_1)$ hence the last rule applied is $\lrule{RCtxt}$ and the thesis is straightforward by induction.

\paragraph{5.\textbf{Recursion}}

Directly by induction.
\end{proof}

\subsection{Top-Down fidelity}
\label{asec:completeness}
\fullcomplete*
\begin{proof}
\Cref{thm:complete} follows directly from the more general property for transitions with general $\pth$ environments.
Let $\pth \vdash \GT \mathbin{\proj} \role{r} = \ST_{\role r}, \sigma_{\role r}$ for all $\role{r}\in\RolesG{\GT}$.
We also set $Y_{\role r} = (\role r, \ST_{\role r}, \sigma_{\role r})$ and $Y_{\GT} = \{ Y_{\role r}\}_{\role r\in \RolesG{\GT} }$.
%

We proceed by induction on the transition of $\GT$, reasoning by case analysis on the last rule used.
$\mathtt{[Snd]}$ Assume $\GT = \Gact$. By projection \[Y_{\GT} = (\p , \q \oplus_{i \in I} a_i . \ST_i, \sigma_{\role p}), (\q, \p \&_{i \in I} a_i . \ST_i, \sigma_{\q}), Y_r\]
with $Y_r$ being the configuration of participants in $\RolesG{\GT}\setminus\{\p, \q\}$.

By $\mathtt{[Snd]}$ of the global semantics, for a $k\in I$,
\[\gspairf{\Gact} \trans{\lsnd{p}{q}{a_k}} \Gmsg\]
By one application of rule $\lrule{Par}$ of the local semantics with one application of $\lrule{Snd}$ as a premise:
\[\begin{array}{lll}
\tstate{\ltheta}{\epath}{(\p , \q \oplus_{i \in I} a_i . \ST_i, \sigma_{\role p}), (\q, \p \&_{i \in I} a_i . \ST_i, \sigma_q[\p\mapsto \vec{m}]), Y_r }
 \trans{\p\q!\llab_k} Y_p', Y_q', Y_r
\end{array}
\]
with $Y_p' = (\p , \ST_k, \sigma_{\role p})$ and $Y_q'=(\q, \p \&_{i \in I} a_i . \ST_i, \sigma_q[\p\mapsto \vec{m}\cdot (a_k, \epath)])$.
The projection of $\Gmsg$ on $\p$ and $\q$ are $\ST_k, \sigma_{\role p}$ and $\p \&_{i \in I} a_i . \ST_i, \sigma_q[\p\mapsto \vec{m}\cdot (a_k, \epath)]$, respectively, which correspond to the types and queues in $Y_p'$ and $Y_q'$. The configurations for the other participants are unchanged in the projection (they remain the merge of the projection on that participant of all $\GT_i$ with $i\in I$ -- the third-party case of projection for interaction and message in transit are the same) and after the transition, hence done.

$\mathtt{[Rcv]}$ This case proceeds as $\mathtt{[Snd]}$ except the precise correspondence between $\GT'$ and $Y'$ is lost and `$<:$' is required.
Assume $\GT = \Gmsg$. The projection is of the form

\[Y_{\GT} = (\q, \p \&_{i \in I} a_i . \ST_i, \sigma_q[\p\mapsto (a_k, \epsilon) \cdot \vec{m} ]), (\role r, \sqcap_{i\in I} \ST'_i, \sigma_{\q}), Y_p, Y_r\]
where $Y_p$ is the projection on the sender (not important in this case) and $\role r\not\in\{\p, \q\}$ (without loss of generality we focus on one single $\role r$).

By $\mathtt{[Rcv]}$ of the global semantics,
\[\Gmsg  \trans{\lrcv{p}{q}{a_k}} \GT_k \]
By one application of rule $\lrule{Par}$ of the local semantics with one application of $\lrule{Snd}$ as a premise:
\[\begin{array}{lll}
\tstate{\ltheta}{\epath}{Y_\GT}
 \trans{\p\q?\llab_k}
 (\q, \ST_k, \sigma_q[\p\mapsto  \vec{m} ]), (\role r, \sqcap_{i\in I} \ST'_i, \sigma_{\q}), Y_p, Y_r
\end{array}
\]
Note that $Y_\role r$ is unchanged (as all other roles other than $\q$. The projection of $\GT_k$ on $\p$ is still $Y_p = \proj{\epsilon}{\GT_k}{\p}$. The projection of $\GT_k$ on $\role q$ is indeed the state reached by $\q$, that is $(\q, \ST_k, \sigma_q[\p\mapsto  \vec{m} ])$ with $\sigma_k = \sigma_q[\p\mapsto  \vec{m} ]$. As to all other $\role r$, their projection is now $\proj{\epsilon}{\GT_k}{\role r} = \ST_k, \sigma_k$. But
\begin{equation}\label{boom}
\sqcap_{i\in I} \ST_i  <: \ST_k\end{equation}
hence done.
Observe that (\ref{boom}) holds: (1) by `$<:$' (second rule) if $\role r$ has a receive/branching prefix, (2) otherwise by $\sqcap_{i\in I} \ST_i  = \ST_k$.

$\mathtt{[Cont1]}$, $\mathtt{[Cont2]}$, and $\mathtt{[Rec]}$ are straightforward by induction.

$\mathtt{[Inst]}$ We can assume $\GT = \GtoDef{  \GT_1}{c}{\lset}{\rset}{  \GT_2}$. By projection
\begin{equation}\label{hypn1}
Y_{\GT} = \{(\p , \ST_{1\role{p}} \mixi{c'} \ST_{2\role{p}}, \sigma_0)\}_{\p \in \RolesG{\GT}}
\end{equation}

By $\mathtt{[Inst]}$ of the global semantics
\[\gspairf{\GtoDef{  \GT_1}{c}{\lset}{\rset}{  \GT_2}} \trans{\lnewlab}\GtoP{  \GT_1}{c}{\emptyset}{\emptyset}{  \GT_2}\]
with
\[\GtoP{  \GT_1}{c}{\emptyset}{\emptyset}{  \GT_2} \proj ~= \{(\p , \ST_{1\role{p}} \lmixi{c} \ST_{2\role{p}}, \sigma_0)\}_{{\p\in \RolesG{\GT}}}\]

A configuration in $Y$ that has been projected can either have MC $c$ to instantiate top level or after some communication actions. Let $J \subseteq \RolesG{\GT}$ be the roles whose configurations have the MC $c$ instantiated by $\GT$ at top level, and $m = |J|$. With $m$ applications of rule
$\lrule{Par}$ with $\lrule{New}$ of the local semantics as premise:
\[\begin{array}{lll}
\tstate{\ltheta}{\epath}{Y_{\GT}}\trans{\vv{\nu}}  \{(\p , \ST_{1\role{p}} \lmixi{c}\ST_{2\role{p}}, \sigma_0)\}_{{\p\in \RolesG{\GT}\cap J}} \cup \{(\p , \ST_{1\role{p}} \mixi{c} \ST_{2\role{p}}, \sigma_0)\}_{{\p\in  J}}
\end{array}
\]
The thesis follows by observing that for all $\p\in J$,
$(\p , \ST_{1\role{p}} \mixi{c'} \ST_{2\role{p}}, \sigma_0) <: (\p , \ST_{1\role{p}} \lmixi{c'} \ST_{2\role{p}}, \sigma_0)$.

%
%
%

$\mathtt{[Ctx1]}$ Assume $\GT = \GtoP{  \GT_1}{{c},{n}}{\lset}{\rset}{\GT_2}$. By projection we have two cases: either $\lset$ is empty or not.

First, assume $\lset\neq\emptyset$. Since $\GT$ is reachable from an aware and balanced state (hypothesis), it is coherent (Lemma~\ref{lem:cohpre}) and hence $\rset=\emptyset$. Without loss of generality assume
 \[Y_{\GT} =  \{(\p , \ST_{\role p } \,\blacktriangleright\, \bullet, \sigma_{\role p})\}_{{\p\in \RolesG{\GT}\cap \lset}}~~\cup~~  \{(\p , \ST_{1\role{p}} \,\blacktriangleright\, \ST_{2\role{p}},  \sigma_{\role p})\}_{{\p\in \RolesG{\GT} \setminus \lset}}\]
where $\RolesG{\GT}$ is partitioned in the set of committed roles (in $\lset$), and not committed roles (not in $\lset$).
By $\mathtt{[Ctx1]}$ of the global semantics (omitting $c,n$),
\begin{equation}\label{insteq}\infer{\gspairf{ \GT_1} \trans{\lnewlab}\GT_1'}{\gspairf{\GtoP{  \GT_1}{}{\lset}{\rset}{\GT_2}} \trans{\lnewlab}\GtoP{  \GT_1'}{}{\lset}{\rset}{  \GT_2}}\end{equation}

Focussing on the LHS blocks of the local types in $Y_{\GT}$:
\[Y_{1} = \{(\p , \ST_{\role p}, \sigma_{\role p})\}_{\p \in \RolesG{\GT}\cap\lset}~\cup~ \{(\p , \ST_{1\role p}, \sigma_{1\role p} ) \}_{\p \in \RolesG{\GT}\setminus \lset}\]

By induction, looking at the premise in (\ref{insteq})
\[Y_{1} \trans{\vv{\lnewlab}} \{(\p , \ST_{\p}', \sigma_{\role p})\}_{\p \in \RolesG{\GT}\cap\lset}~\cup~ \{(\p , \ST_{1\role p}', \sigma_{1\role p} ) \}_{\p \in \RolesG{\GT}\setminus \lset}\]
where for all $\p\in \RolesG{\GT}\setminus \lset$, $\sigma_{\role p} = \sigma_{1\role p}\circ\sigma_{2\role p}$ for some $\sigma_{2\role p}$.
Moreover,
\begin{equation}\label{n1}
\forall \role p\in \RolesG{\GT}\cap\lset,~~~\pleft \vdash \GT_1' \mathbin{\proj} \p = \ST_\role{p}', \sigma_{\role p}
\end{equation}
and
\begin{equation}\label{n2}
\forall \p\in \RolesG{\GT}\setminus \lset,~~~\pleft \vdash \GT_1' \mathbin{\proj} \p = \ST_{1\role{p}}', \sigma_{1\role p}
\end{equation}

We need to check that the projection of $\GT'$ is $Y'$.
The thesis for roles committed on the LHS is straightforward by induction, as shown in (\ref{n1}). The case for $\p \in \RolesG{\GT}\setminus\lset$ follows by observing that: (1)
projection of $\GT_1'$ into $\ST_{1\role{p}}, \sigma_{1\role p }$ follows by induction as shown in (\ref{n2}); (2) $\ST_2$ is unchanged hence projection of $\GT_2'$ into $\ST_{2\role{p}},\sigma_{2\role p}$ follows by hypothesis and projection.
The queues are unchanged hence still $\sigma_{\role p} = \sigma_{1\role p }\circ\sigma_{2\role p }$.

The case for $\lset=\emptyset$ is symmetric, with possibly some role committed on the RHS:
\[Y_{\GT} =  \{(\p , \bullet \, \blacktriangleright\, \ST_{\role p } \, , \sigma_{\role p})\}_{\p\in \RolesG{\GT}\cap\rset}~~\cup~~  \{(\p , \ST_{1\role{p}} \,\blacktriangleright\, \ST_{2\role{p}}, \sigma_{\role p})\}_{\p\in \RolesG{\GT}\setminus\rset}\]

By $\mathtt{[Ctx1]}$ of the global semantics,
\[\infer{\gspairf{\GT_1} \trans{\lnewlab}\GT_1'}{\gspairf{\GtoP{  \GT_1}{{c},{n}}{\lset}{\rset}{\GT_2}} \trans{\nu \inst}\GtoP{  \GT_1'}{{c},{n}}{\lset}{\rset}{  \GT_2}}\]

By induction,
(where $Y_c$ are the configurations of roles in  $\RolesG{\GT}\cap\rset$ -- already committed on the right):
\begin{equation}\label{extra}
Y_{1} = \{(\p , \ST_{1\role{p}}, \sigma_{1\role p} ) \}_{\role p\in \RolesG{\GT}\setminus \rset}\cup  Y_{c}\trans{\vv{\lnewlab}} \{(\p , \ST_{1\role{p}}', \sigma_{1\role p} ) \}_{\role p\in \RolesG{\GT}\setminus \rset}\cup  Y_{c}\end{equation}
where for all $\p\in \RolesG{\GT}\setminus \rset$, $\sigma_{\role p} = \sigma_{1\role p }\circ\sigma_{2\role p }$ for some $\sigma_{2\role p}$.

By induction,
\begin{equation}\label{n2rr}
\forall \p \in \RolesG{\GT}\setminus \rset,~~~\pleft \vdash \GT_1' \mathbin{\proj} \p = \ST_{1\role{p}}', \sigma_{1\role p}
\end{equation}

The thesis for $\p \in \RolesG{\GT}\setminus \rset$ follows from observing that projection is defined inductively on the two sides and: (1) (\ref{n2rr}) and (\ref{extra}) for the LHS block; (2) the global and local types are unchanged hence the thesis for the RHS follows by hypothesis.
The queues are unchanged hence still $\sigma_{\role p} = \sigma_{1\role p }\circ\sigma_{2\role p }$.

%
%
%
$\mathtt{[LSnd]}$ Let $\GT = \GtoP{  \GT_1}{}{\lset}{\rset}{\GT_2}$.
By $\mathtt{[LSnd]}$
\begin{equation}\label{inf4}
\infer{\gspairf{\GT_1} \trans{\lsnd{p}{q}{a_k}} \GT_1'}{\gspairf{\GT} \trans{\lsnd{p}{q}{a_k}} \GtoP{  \GT_1'}{}{\lset}{\rset}{\GT_2}}\end{equation}


We have two cases: (a) $\p$ is uncommitted, (b) $\p$ is committed.

Case (a): If $\p$ is uncommitted, let $Y_1$ be the system obtained by projecting $\GT_1$ on all its roles.
By induction, by the premise in (\ref{inf4}) it follows, observing that the definition of `$<:$' for systems ignores the queues (see \Cref{asec:st}):
\begin{equation}\label{eqlsnd2}
Y_1\trans{\lsnd{p}{q}{\llab_k}} Y_1' ~~~\land~~~ Y_1'<:\GT_1'\proj
\end{equation}
From \Cref{eqlsnd2} we obtain (via definition of '$<:$') $Y_1' \blacktriangleright Y_2 <: Y'$.
Lookinq at the queues now, without loss of generality, assume the only queue that changed in the transition from $Y$ to $Y'$ is $\q$'s. Denoting the queue of $\q$ in $Y$ as $\sigma_{\q}$ and the queue of $\q$ in $Y'$ as $\sigma_{\q}'$ we have
$\sigma_{\q}' = \sigma_{\role q}[\p\mapsto \sigma_{\role q}(\p) \cdot (a_k, \pth)]$.
The queues of all other roles are unchanged.
By \Cref{nogarbage} (garbageless projections)
$Y$'s queues have no stale messages. Hence the only message that may cause the queues of $Y'$ be different from thos of $\GT'\proj$ is $(a_k, \pth)$. By \Cref{lem:stale} we have that $(a_k, \pth)$ is stale (can be garbage collected) if and only if it is not in in the queues of $\GT'\proj$, as desired.

Case (b) is essentially as case (a) except the projection of $\ST_{\role p }$ ignores the right-hand side process.

%
%
%
$\mathtt{[LRcv1]}$ Let $\GT = \GtoP{  \GT_1}{}{\lset}{\rset}{\GT_2}$.
By $\mathtt{[LRcv1]}$
\[
\infer{\gspairf{\GT_1} \trans{\lrcv{p}{q}{a_k}} \GT_1' \qquad \q\not\in \rset\land(\p\in\lset\lor \role{q}=\role{r})}{\gspairf{\GT} \trans{\lrcv{p}{q}{a_k}} \GtoP{  \GT_1'}{}{\lset\cup\{\q\}}{\rset}{\GT_2}}\]

Without loss of generality,
\begin{equation}\label{eqlrcv1}
Y_{\GT} = Y_{\q} ,Y_{rest}\end{equation}
where either (a) $Y_{\q} = (\q, \ST_{1\q}\,\blacktriangleright\, \ST_{2\q}, \sigma_{\q})$ or (b) $Y_q = (\q, \ST_{1\q}\,\blacktriangleright\, \bullet,  \sigma_{\q})$. In either case the thesis follows by projection where the LHS of the outmost MC is the projection of $\GT_1'$ by induction, whereas the RHS of the MC is just left out by the second projection rule for active MC.
%
%
%
In the case for $\mathtt{[LRcv2]}$ either $\role{q}$ is already committed or remains uncommitted. In the first case, $Y_{\q} = (\q, \ST_{1\q}\,\blacktriangleright\, \bullet,  \sigma_{\q})$ and in the second case $(\q, \ST_{1\q}\,\blacktriangleright\, \ST_{2\q}, \sigma_{\q})$.
The case is similar to $\mathtt{[LRcv1]}$, except the action is not a committing action hence $Y_{\q}$ and $Y_{\q}'$ have the same form.

%
%
%
The case for $\mathtt{[RSnd]}$ is symmetric to the case for  $\mathtt{[LSnd]}$ but simpler: no action can happen on the right hand side if $\lset$ is not empty so there is no need for garbage collection.
The case for $\mathtt{[RRcv]}$ is as $\mathtt{[RSnd]}$ since on the right-hand side both send and receive are committing.
\end{proof}

\subsection{Lemmata for \Cref{thm:oc}}

\newcommand{\wco}{\approx^{(1)}}
\newcommand{\wct}{\approx^{(2)}}

\PURPLE{

Theorem 4.4 requires to show that the following relation is a weak correspondence (\Cref{def:wc}). 

\begin{equation}
\label{R}
    R = \{(\GT,Y) ~| ~\GT \text{ is reachable from an initial, aware, balanced global type }.~Y <: \GT\proj\}
\end{equation}

\Cref{wc:top} and \Cref{thm:complete} gives us only one step of correspondence when $Y = \GT\proj$. We need to show the transitive closure holds with $Y <: \GT\proj$. 

Given a pair $(\GT,Y)\in R$, we use the notation $\GT\wco Y$ to denote condition (1) in  \Cref{def:wc} and $\GT\wct Y$ to denote condition (2) in \Cref{def:wc}. The fact that $R$ is a weak correspondence follows from \Cref{para} and \Cref{para3}.





}

\begin{lemma}\label{para1}
\PURPLE{If $\ST <: \ST'$ and $(\p, \ST, \sigma)\trans{\ell} (\p, \ST'', \sigma)$ then $(\p, \ST', \sigma)\trans{\ell} (\p, \ST''', \sigma)$ and $\ST'' <: \ST'''$. 
}
\end{lemma}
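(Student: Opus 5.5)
The plan is to prove the statement by induction on the derivation of $\ST <: \ST'$, with an inner case analysis on the last rule used to derive the transition $(\p,\ST,\sigma)\trans{\ell}(\p,\ST'',\sigma)$. In each case I would rebuild a transition of $\ST'$ with the same label by reusing the same local rule, applied to the premise supplied by the induction hypothesis. I would then close the case by choosing the rule of `$<:$' from \Cref{asec:st} that matches the shape of the reached terms.

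\textbf{Axioms and recursion.} The axioms $\tend<:\tend$ and $\mathtt t<:\mathtt t$ are vacuous, since neither side has a transition. For $\mu\mathtt t.\ST_0<:\mu\mathtt t.\ST_0'$, the transition of the left-hand side comes from $\lrule{Rec}$ applied to the unfolding. The premise $\ST_0<:\ST_0'$ lifts to the unfoldings by substitution, which I would prove as a small auxiliary fact by induction on $\ST_0<:\ST_0'$. For the unfolding rule $\mu\mathtt t.\ST_0<:\ST'$, with premise $\ST_0[\mu\mathtt t.\ST_0/\mathtt t]<:\ST'$, the transition of the left-hand side is again derived by $\lrule{Rec}$ from the unfolding, so the induction hypothesis applies to the premise directly.

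\textbf{Prefixes.} For the first rule (same $\oplus$ or $\&$ prefix on the same index set), $\lrule{Snd}$ or $\lrule{Rcv}$ selects some branch $k$. The other side can select the same $k$, because the sets of labels coincide and the queue is the same $\sigma$. The reached continuations are related by the premise $\ST_k<:\ST_k'$.

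\textbf{Mixed choices.} For the rules relating $\ST_1\wild\ST_2$, $\ST_1\blacktriangleright\bullet$ and $\bullet\blacktriangleright\ST_2$, the transition comes from one of $\lrule{LSnd}$, $\lrule{RSnd}$, $\lrule{LRcv1/2}$, $\lrule{RRcv}$, $\lrule{LCtxt}$, $\lrule{RCtxt}$, $\lrule{NLCtxt}$ or $\lrule{NRCtxt}$. In each of these the premise is a transition of one side under the extended path $\pth.\pleft$ or $\pth.\pright$. I would apply the induction hypothesis to that side and then reapply the same rule. Whether a receive is committing depends only on the label and on $\comSetG{\underline{\GT}}$, so the same choice between $\lrule{LRcv1}$ and $\lrule{LRcv2}$ is available on both sides. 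The results again have the shapes $\_\blacktriangleright\_$, $\_\blacktriangleright\bullet$ or $\bullet\blacktriangleright\_$, which are closed under the corresponding rules of `$<:$'.

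\textbf{Deferred instantiation ($\ST_1\mixi{c}\ST_2<:\ST_1'\lmixi{c}\ST_2'$).} If the left-hand side moves by $\lrule{New}$, the right-hand side is already instantiated and cannot perform a top-level $\nu$. Here I would read the statement weakly for $\nu$, matching by the empty step. This is sound because $\ST_1\lmixi{c}\ST_2<:\ST_1'\lmixi{c}\ST_2'$ by the congruence rule. This reading is harmless for \Cref{def:wc}, since $\nu$ is treated as $\tau$ there. Non-$\nu$ actions are impossible on the left-hand side in this case.

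\textbf{Main obstacle: the narrowing rule.} The hardest case is $\p\&_{i\in I}a_i.\ST_i<:\p\&\,a_k.\ST_k'$. The smaller (merged) type could consume some $a_j$ with $j\neq k$, which the larger type cannot match. I would first try to rule this out using the queue: the lemma is used only for pairs $(\GT,Y)$ in the relation $R$ of (\ref{R}). There the narrowed branch comes from the third-party case of projection after the global $\mathtt{[Rcv]}$ has fixed $k$, as in (\ref{boom}). So the head message from $\p$ relevant to this input should be $(a_k,\_)$. Making this precise requires an invariant on queues of systems $Y<:\GT\proj$. It may be cleaner to prove the lemma under the extra hypothesis that $\ell$ is enabled in the configuration $(\p,\ST',\sigma)$ obtained by replacing $\ST$ with $\ST'$, and to discharge that hypothesis at the call site from top-down fidelity (\Cref{wc:top}). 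I expect this bookkeeping, not the structural cases, to carry the real difficulty.
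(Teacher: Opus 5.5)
You are right about the statement as literally written, and right that it cannot be proved as is. Take $\ST=\q\,\&_{i\in\{1,2\}}\,a_i.\tend <: \q\,\&\,a_1.\tend=\ST'$ with $(a_2,\epsilon)$ at the head of $\sigma(\q)$: the left-hand type receives $a_2$ and the right-hand one cannot. The deferred-instantiation rule also defeats any strong reading for $\nu$. So your proposal does not prove the lemma, and the repair you suggest points the wrong way.

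The missing idea is the orientation in which the lemma is meant and used. The paper's sketch argues the converse simulation: the right-hand term of $<:$ moves and the left-hand term follows. That is why, in the narrowing case, the many-branch type (the left-hand side of the rule) plays the follower. It is also why, for $\ST_1'\mixi{c}\ST_2'<:\ST_1\lmixi{c}\ST_2$ and $\mu\mathtt t.\ST'<:\ST$, the left-hand term catches up by \lrule{New} or \lrule{Rec}. This converse is what \Cref{para} consumes. With $R$ as in (\ref{R}) we have $Y<:\GT\proj$; \Cref{wc:top} makes $\GT\proj$ move, and $Y$ must follow with $Y'<:Y''$. The ``$\GT\proj<:Y$'' in the first line of that proof is a slip, and it is what makes the stated orientation look right.

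In this orientation your main obstacle disappears. The type $\q\,\&_{i\in I}\,a_i.\ST_i'$ has every branch of $\q\,\&\,a_k.\ST_k$ and reads the same queue, and the premise $\ST_k'<:\ST_k$ relates the continuations. The only real adjustment is that the conclusion must be weak, $\myreaches{\ell}$, to absorb the inserted \lrule{New} steps.

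The orientation you do prove is exactly \Cref{para2} together with \Cref{para22}: $\nu$ matched by the empty step, and receives either excluded or assumed enabled on the right. The paper needs these only for the bottom-up clause, \Cref{para3}. There it handles receives with the queue argument you anticipate. $\mathtt{[PConfig]}$ forces equal queues, so the queued $(a_j,\pi)$ also lies in $\GT\proj$. Hence $\GT$ contains the corresponding in-transit message, the sender has already selected $j$, and $j=k$ follows.

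Your other option, discharging enabledness ``from top-down fidelity'', would be circular at the call site in \Cref{para}. The follower there is $Y$, so the extra hypothesis would be an assumption about $Y$, which is exactly what must be shown. \Cref{wc:top} only tells you about $\GT\proj$.

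A smaller point applies in either orientation. In the case $\mu\mathtt t.\ST_0<:\mu\mathtt t.\ST_0'$, the relation between unfoldings given by your substitution lemma is not a subderivation of the given one. So induction on $<:$ alone does not license the appeal to the hypothesis. Induct primarily on the transition derivation of the moving term, which shrinks under \lrule{Rec}, and use the $<:$ derivation as a secondary measure.
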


\begin{proof}(sketch)
The proof is by induction on the derivation of $\ST <: \ST'$ proceeding by case analysis on the last rules in \Cref{asec:st}. The base cases are trivial as $\tend$ and $\mathtt t$ do not make any transition. 
The case for 
\[
\inferrule{
\ST_k' \mathbin{<:} \ST_k \quad k\in I
}{
\p \, \&_{i \mathclose{\in} I} \, a_i.\ST_i'
\mathbin{<:}
\p \, \& \, a_k.\ST_k
}
\]
follows by induction since $\ST' = \p \, \&_{i \mathclose{\in} I} \, a_i.\ST_i'$
has more action options than $\ST = \p \, \& \, a_k.\ST_k$ and $\ST_k<:\ST_k'$ by inductive definition of `$<:$'.

The cases for rule 
\[\inferrule{
\ST_1' \mathbin{<:} \ST_1
\qquad
\ST_2' \mathbin{<:} \ST_2}
{
\ST_1' \mixi{c} \ST_2'
~\mathbin{<:}~
\ST_1 \lmixi{c} \ST_2
}\]
with $\ST = \ST_1' \mixi{c} \ST_2'$ and $\ST'=\ST_1 \lmixi{c} \ST_2$ follows by observing that the two terms in the conclusion can perform the same actions modulo a  MC instantiation (i.e., $\tau$) that $\ST_1' \mixi{c} \ST_2'$ can always perform by local semantic rule  \lrule{New}. Similarly for the rule below, by action \lrule{Rec}.
\[\inferrule{
 \ST' [\mu \mathtt{t}. \ST'/\mathtt{t}]\mathbin{<:} \ST
}{
\mu \mathtt{t}. \ST' \mathbin{<:} \ST
}\]
All other rules are straightforward by induction as left and right-hand side terms of `$<:$' in the conclusions are the same. 
\end{proof}

\begin{lemma}\label{para}
\PURPLE{If $(\GT,Y)\in R$ then 
$ \GT \wco Y$.}
\end{lemma}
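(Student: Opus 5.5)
We must show: if $(\GT,Y)\in R$, i.e.\ $\GT$ is reachable from an initial, aware, balanced global type and $Y <: \derived{\GT}$, then whenever $\GT\trans{\ell}\GT'$ there is $Y'$ with $Y\Taureach{\ell}Y'$ and $(\GT',Y')\in R$. The plan is to go through the projected system $\derived{\GT}$, use top-down fidelity (\Cref{wc:top}) there, and then transfer the resulting run back to $Y$. Two facts make this transfer work. First, $<:$ differs from equality only in deferred MC instantiations (rule $\ST_1'\mixi{c}\ST_2'<:\ST_1\lmixi{c}\ST_2$), unrolled recursions, and branch restriction on inputs. Second, each of these can be closed by silent steps $\lnewlab$ (via $\lrule{New}$ under $\lrule{Rec}$/context rules) or simulated directly.

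\textbf{Step 1 (catch-up).} We show by induction on the derivation of $\ST <: \ST'$ that from any configuration $(\p,\ST,\sigma)$ with $\ST <: \ST'$, either the configuration can already match the next action of $\ST'$, or it can perform a finite sequence of $\lnewlab$ steps reaching $\ST''$ with $\ST'' <: \ST'$ in which the MCs in question are now active. Because the queue is untouched and $<:$ on systems is pointwise by $\mathtt{[PSystem]}$, this lifts to systems.

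\textbf{Step 2 (apply fidelity).} \Cref{wc:top} gives, for $\GT\trans{\ell}\GT'$, a run $\derived{\GT}\trans{\ell}\gcreach Y_0'$ (case 1) or $\derived{\GT}\trans{\vec\lnewlab}Y_0'$ (case 2), with $Y_0' <: \derived{\GT'}$. In case 2 we simply take $Y'=Y$. We need $Y <: \derived{\GT'}$, which should follow because a global $\lnewlab$ only turns $\mixi{c}$ into $\lmixi{c}$, and $\ST'\mixi{c}\ST''<:\ST'\lmixi{c}\ST''$ still holds (likewise under context, using the congruence rules of $<:$). In case 1 the subject $\p$ of $\ell$ is the only configuration whose type changes, apart from the receiver's queue on a send.

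\textbf{Step 3 (simulate).} Using Step 1 on $\p$'s configuration in $Y$, we reach a state where $\p$ performs $\ell$. We then invoke \Cref{para1} in the direction from the smaller type to the larger one, showing that the residual of $Y$ is related by $<:$ to the residual $Y_0'$. Inspecting the proof of \Cref{para1}, the symmetric simulation holds for the $\&$-restriction rule. The reason is that $\derived{\GT}$'s input offers only branches the global type enables, so the smaller type, which offers a superset, can mirror it; for the $\mu$ and MC rules the extra $\lnewlab$/unfold steps are silent. The $\gcl$ steps act only on queues and depend on $\mathtt{stale}(\pth,\ST)$. Committed sides $\bullet$ are identical on both sides of $<:$, since the rules for $\ST\lmixi{c}\bullet$ and $\bullet\lmixi{c}\ST$ preserve the $\bullet$, so the same $\gcl$ steps are enabled. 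Finally, transitivity of $<:$ (a routine induction) gives $Y' <: Y_0' <: \derived{\GT'}$. Reachability of $\GT'$ is immediate, so $(\GT',Y')\in R$.

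The main obstacle is Step 3's interaction between staleness and deferred instantiation. If $Y$ has an uninstantiated $\mixi{c}$ where $\derived{\GT}$ has $\lmixi{c}$, then $\mathtt{stale}$ returns false on the former (the ``otherwise'' case). The worry is that garbage collection might then differ. I would argue this cannot happen: staleness requires a $\bullet$, which only arises after commitment. Commitment in turn requires the configuration to have acted inside the MC, and hence to have instantiated it, so at any point where $\gcl$ matters the two types agree on that MC's status. I would check this first, as a small lemma stating that $\mathtt{stale}(\pth,\ST)=\mathtt{stale}(\pth,\ST')$ whenever $\ST<:\ST'$ and $\ST'$ is reachable, before assembling the rest.
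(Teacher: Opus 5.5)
Your route is the paper's. You apply top-down fidelity (\Cref{wc:top}) to $\derived{\GT}$ and transfer the run to $Y$ role by role via $\mathtt{[PSystem]}$. You use \Cref{para1} in the orientation actually needed, where the larger, projected type moves and the smaller one mirrors it, possibly after silent $\lnewlab$ steps. You then conclude by transitivity of $<:$. You are more careful than the paper, whose proof never says how $Y$ matches the $\gcl$ steps that \Cref{wc:top} produces. That is precisely where your argument breaks.

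\textbf{The gap: your staleness lemma, and the statement itself, fail for arbitrary $Y$.} The lemma is false in the stated generality because $R$ constrains only $\GT$ and admits any $Y<:\derived{\GT}$. Here is a counterexample.
\begin{itemize}
\item Take \CODE{Timeout} with one extra RHS message $z$ from \CODE{C} to \CODE{B}. Consider the reachable state in which \CODE{B} has sent both timeouts and \CODE{C} has consumed \CODE{TOc}, while \CODE{A} is uncommitted and has sent $a_1$ but not yet $a_2$. Then $\derived{\GT}$ gives \CODE{C} the type $\bullet\lmixi{c}\mathtt{B}\oplus z.\tend$.
\item Let $Y$ agree with $\derived{\GT}$ except that \CODE{C} has type $\mu\mathtt{t}.(\bullet\lmixi{c}\mathtt{B}\oplus z.\tend)$. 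This is below the projected type by the left-unfolding rule of $<:$ (\Cref{asec:st}).
\item When \CODE{A} sends $a_2$ on the LHS, $\derived{\GT}$ purges $(a_2,\pleft)$ from \CODE{C}'s queue, since $\mathtt{stale}$ holds there.
\item In $Y$ the message cannot be purged. $\mathtt{stale}$ returns false on a $\mu$, and \CODE{C} has no silent step that would unfold it, since it contains no MC definition.
\item Since $\mathtt{[PConfig]}$ demands identical queues, no $Y'$ with $(\GT',Y')\in R$ exists.
\end{itemize}
Your justification, that commitment presupposes instantiation, is a fact about the history of $Y$, and $Y$ here has none. Restricting $R$ to reachable $Y$ does not cure this. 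The reason is that $\derived{\GT}$ is in general not reachable from the initial projection (cf.\ $\GT_P$ in \Cref{sec:correspondence}), yet $(\GT,\derived{\GT})$ must lie in $R$ for \Cref{thm:oc}.

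\textbf{A repair.} Add to $R$ the syntactic invariant that every $\mu$-body and every side of an MC definition occurring in $Y$ is initial, i.e.\ contains no active MC $\ST_1\lmix\ST_2$ and no $\bullet$. This invariant holds for every $\derived{\GT}$ with $\GT$ reachable, it is preserved by local steps, and it yields your lemma:
\begin{itemize}
\item No rule of $<:$ pairs a $\bullet$ side with a non-$\bullet$ one.
\item So along the MC spine navigated by $\mathtt{stale}$, the two types can differ only where $Y$ has a folded $\mu$ or an uninstantiated $\ST_1\mixi{c}\ST_2$.
\item By the invariant, the larger type has no $\bullet$ below such a point, so $\mathtt{stale}$ agrees on both sides.
\end{itemize}
With this invariant in $R$, the rest of your plan goes through.
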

\begin{proof}
By $R$ we have $\GT \proj~ <: Y$. If $\GT\trans{\ell}\GT'$ by \Cref{wc:top}
$\GT \proj\myreaches{\ell} Y''$ and $Y'' <: \GT' \proj$. 

We extend \Cref{para1} to configurations: observe that $\GT \proj$ and $Y$ can be decomposed into a finite number of configurations by rule $\mathtt{[PSystem]}$, recalled below for convenience with $Y = \{Y_{\role{r}}^2 \}_{\role r \in P}$ and $
\GT \proj =\{Y_{\role{r}}^1 \}_{\role r \in P}$.
\[\begin{array}{ll}
\inferrule{
\forall \role{r} \mathop{\in} P ~~ Y_\role{r}^2\mathbin{<:} Y_\role{r}^1
}{
\{Y_{\role{r}}^2 \}_{\role r \in P} \mathbin{<:} \{Y_{\role{r}}^1\}_{\role{r} \mathclose{\in} P}
}~~~~\mathtt{[PSystem]}
\end{array}\]
By applying \Cref{para1} to the pairs of corresponding (same participant) configurations, by hypothesis 
$\GT \proj~ <: Y$ we obtain $Y\myreaches{\ell} Y'$ and $Y' <: Y''$. 
Hence, by transitivity of `$<:$' we have $Y' <: \GT' \proj$ as desired. 
\end{proof}

\begin{lemma}\label{para2}
If $~\ST_1 <: \ST_2$, $(\p, \ST_1, \sigma) \trans{\ell}(\p, \ST_1', \sigma)$, and $\ell$ is not a receive action, then \\ $(\p, \ST_2, \sigma) \myreaches{\ell}(\p, \ST_2', \sigma)$ and $\ST_1' <: \ST_2'$.
\end{lemma}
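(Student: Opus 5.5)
The proof goes by induction on the derivation of $\ST_1 <: \ST_2$, with a case analysis on the last rule of the preorder in \Cref{asec:st}. The inner derivation of the step $(\p,\ST_1,\sigma)\trans{\ell}(\p,\ST_1',\sigma)$ guides each case. This lemma is the converse direction of \Cref{para1}: now the \emph{smaller} side moves and the larger side must follow. In the local semantics, the subject's queue is untouched by the non-receive actions: sends modify the receiver's queue, $\nlab$ touches no queue, and $\gcl$ is handled by \lrule{Discard} on the queue alone. This is why we can keep $\sigma$ fixed throughout.

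\textbf{Base cases.} For the axioms $\tend<:\tend$ and $\mathtt t<:\mathtt t$ there is nothing to prove, since neither term moves.

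\textbf{Prefix cases.} For the first prefix rule with $\wild=\oplus$, $\ST_1$ selects some $a_k$ and $\ST_2$ can select the same label, so we take $\ST_2'=\ST_{2k}$ and use the premise $\ST_{1k}<:\ST_{2k}$. The $\&$ instance of that rule, and the second prefix rule (a branch on $I$ below a single branch on $a_k$), only admit receive actions, so they are excluded by hypothesis. This exclusion is exactly why receives are ruled out here: a larger external choice may receive labels the singleton branch cannot.

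\textbf{MC cases.} If both sides have the same MC shape ($\mix$, $\lmix$, $\ST\lmix\bullet$, $\bullet\lmix\ST$), the step of $\ST_1$ is derived by one of \lrule{LSnd}, \lrule{RSnd}, \lrule{LCtxt}, \lrule{RCtxt}, \lrule{NLCtxt}, \lrule{NRCtxt}, or \lrule{New}. We apply the induction hypothesis to the premise on the relevant side and re-apply the same rule on $\ST_2$. The path environment $\pth$ is identical on both sides, so the messages sent coincide, and the shapes of the results ($\bullet$ inserted on the same side) are again related by the corresponding MC rule.

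The delicate case is $\ST_1'\mix\ST_2' <: \ST_1\lmix\ST_2$. A definition can only perform $\nlab$ by \lrule{New}, reaching $\ST_1'\lmix\ST_2'$. Here $\ST_2$ does nothing: $\ST_2$ matches with the empty $\tau$-sequence allowed by the weak arrow $\myreaches{\ell}$ (as $\nlab$ is silent). We then conclude $\ST_1'\lmix\ST_2' <: \ST_1\lmix\ST_2$ by the congruence rule for $\lmix$. I read the statement's $\myreaches{\ell}$ in this weak sense; this is the step I expect to need care, and if a strict step were demanded this case would fail.

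\textbf{Recursion cases.} For the rule $\mu\mathtt t.\ST<:\mu\mathtt t.\ST^\dagger$, the step of $\mu\mathtt t.\ST$ comes via \lrule{Rec} from the unfolding. We need the unfolded terms to be related; this follows from a routine substitution lemma for $<:$ (if $\ST<:\ST^\dagger$ then $\ST[\mu\mathtt t.\ST/\mathtt t]<:\ST^\dagger[\mu\mathtt t.\ST^\dagger/\mathtt t]$), after which the induction hypothesis applies. For the unfolding rule $\mu\mathtt t.\ST'<:\ST$, a step of $\mu\mathtt t.\ST'$ is by \lrule{Rec}, i.e.\ a step of $\ST'[\mu\mathtt t.\ST'/\mathtt t]$, and the premise gives the induction hypothesis directly.

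Finally, a $\gcl$ step changes only $\sigma$ and not the type. It is therefore matched by the same \lrule{Discard} step, since the side condition $\mathsf{gc}(\ST,\sigma)$ depends on staleness. For this we need that $<:$ preserves the $\bullet$-structure followed by $\mathtt{stale}$, which holds because no rule of $<:$ relates different committed shapes, except $\mix$ versus $\lmix$, where neither is stale.
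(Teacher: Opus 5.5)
Your plan is the paper's own: induction on the derivation of $\ST_1<:\ST_2$, with the $\&$-prefix cases discarded because they admit only receives, and the remaining cases closed by the induction hypothesis. Your weak reading of the arrow is the right treatment of the one non-trivial case: the $\lrule{New}$ step of a definition related to an active choice is answered by the empty silent sequence. The paper's sketch covers this case only with a remark about a silent MC initialisation.

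One step, however, does not go through as written. Take the case $\mu\mathtt t.\ST<:\mu\mathtt t.\ST^\dagger$. Your substitution lemma supplies $\ST[\mu\mathtt t.\ST/\mathtt t]<:\ST^\dagger[\mu\mathtt t.\ST^\dagger/\mathtt t]$, but that derivation is built by grafting the entire derivation of $\mu\mathtt t.\ST<:\mu\mathtt t.\ST^\dagger$ onto each $\mathtt t<:\mathtt t$ leaf. It is therefore not a subderivation (in general it is strictly larger), and an induction on $<:$ gives you no hypothesis for it.

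The repair is to induct instead on the derivation of the step $(\p,\ST_1,\sigma)\trans{\ell}(\p,\ST_1',\sigma)$, generalising over every $\ST_2$ with $\ST_1<:\ST_2$ and still casing on the last rule of $<:$. $\lrule{Rec}$ strictly shrinks that derivation. Your prefix, $\lrule{New}$, MC-congruence and left-unfolding cases go through unchanged (when the inner match is the empty sequence, the outer one is too). The paper's sketch, which never discusses recursion, silently needs the same repair.

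Your $\gcl$ paragraph should also be simplified, though its conclusion survives.
\begin{itemize}
\item $\lrule{Discard}$ does rewrite the subject's own queue, so the claim that non-receive actions leave it untouched is false for $\gcl$. It is the fixed $\sigma$ in the hypothesis that forces $\mathsf{gc}(\ST_1,\sigma)=\sigma$.
\item $<:$ does not preserve staleness across the rule relating $\mix$ to $\lmix$. $\mathtt{stale}(\pleft.\pth,\ST_a'\mix\ST_b')$ is always false, whereas $\mathtt{stale}(\pleft.\pth,\ST_a\lmix\ST_b)=\mathtt{stale}(\pth,\ST_a)$ can be true when $\ST_a$ (and then also $\ST_a'$) has a committed side along $\pth$. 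So replaying $\lrule{Discard}$ on $\ST_2$ need not return $\sigma$, unless you invoke the unstated invariant that definitions contain no $\bullet$.
\end{itemize}
None of this is required: $\gcl$ is silent, so match it by zero steps with $\ST_2'=\ST_2$.
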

\begin{proof}
First, we show that $(\p, \ST_2, \sigma) \trans{\ell}$ by induction on the derivation $\ST_1 <: \ST_2$ proceeding by case analysis on the last rule used. 

Since $(\p, \ST, \sigma)$ does not make a receive action, then the last rule used cannot be a receive/branching (first rule where $\wild=\&$ and second rule in \Cref{asec:st}). By inspection of the remaining rules, the thesis follows by inductive hypothesis. Action $\ell$ can be mimicked immediately unless $\ST$ is a MC definition, in which case it is mimicked after a MC initialization (i.e., a $\tau$ action). 
The fact that the continuation is still in a preorder relation is given by $\ST <: \ST'$ and by inductive definition of `$<:$'.
\end{proof}

For receiving actions, we have a weaker lemma which follows mechanically by induction on the derivation of `$<:$'. 
\begin{lemma}\label{para22}
If $~\ST_1 <: \ST_2$, $(\p, \ST_1, \sigma) \trans{\ell}(\p, \ST_1', \sigma)$ with $\ell$ a receive action, and $(\p, \ST_2, \sigma) \myreaches{\ell}(\p, \ST_2', \sigma)$ then $\ST_1' <: \ST_2'$.
\end{lemma}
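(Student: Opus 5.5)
The plan is to prove \Cref{para22} by induction on the derivation of $\ST_1 <: \ST_2$, with a case analysis on the last rule of \Cref{asec:st}. The argument has the same shape as the proofs of \Cref{para1} and \Cref{para2}. The difference is that here the matching receive step of $(\p,\ST_2,\sigma)$ is assumed, so we only need to show that the two continuations stay related. The queue $\sigma$ is shared on both sides. Because $\ell$ is a receive $\q\p?a_k$ with a fixed label and path, $\lrule{Rcv}$ consumes the same message from both configurations. Hence only the local types need tracking.

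The base cases for $\tend$ and $\mathtt t$ are vacuous, since neither can move. In the prefix rule with $\wild=\&$, both sides are $\q\&_{i\in I}a_i.(\cdot)$. The receive $a_k$ leads to $\ST_{1k}'$ and $\ST_{2k}$ respectively, and the premise $\ST_{1k}'<:\ST_{2k}$ gives the claim directly. The case $\wild=\oplus$ cannot arise, because the left side cannot receive. For the second (branch-restriction) rule, $\q\&_{i\in I}a_i.\ST_i'<:\q\&\,a_k.\ST_k$, two facts are needed. First, since $\ST_2$ performs $\ell$, the received label must be $a_k$. Second, the left side receiving $a_k$ must continue as $\ST_k'$, by determinism of branching on labels. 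The premise $\ST_k'<:\ST_k$ then concludes this case.

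For the MC rules, I will do a sub-case analysis on which local rule derived the step of $\ST_1$: $\lrule{LRcv1}$, $\lrule{LRcv2}$, $\lrule{RRcv}$, $\lrule{LCtxt}$ or $\lrule{RCtxt}$. Suppose both sides have the same MC shape ($\lmix$ with identical $\bullet$ markers). The side, and whether the step commits, is determined by the path $\pth$ and by membership of $a$ in $\comSetG{\underline{\GT}}$. Both of these are identical for the two types, so the right side must use the same rule. The induction hypothesis applied to the relevant component yields the result, and re-applying the corresponding MC rule of $<:$ rebuilds the relation; $\bullet$ replaces the abandoned side on both sides consistently.

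Two cases are asymmetric. The first is the rule $\ST_1'\mixi{c}\ST_2'<:\ST_1\lmixi{c}\ST_2$: the left side cannot receive without first taking a $\lnewlab$ step, so this case is vacuous. The same holds for a top-level $\mu$ on the left, which is handled through $\lrule{Rec}$. The second, and the one I expect to be the main obstacle, is the unfolding rule $\mu\mathtt t.\ST'<:\ST$. Here the step of $\mu\mathtt t.\ST'$ comes from $\lrule{Rec}$ with premise $\ST'[\mu\mathtt t.\ST'/\mathtt t]\trans{\ell}$. The induction hypothesis then applies to the premise $\ST'[\mu\mathtt t.\ST'/\mathtt t]<:\ST$, which is a strictly smaller derivation, so the induction must be on the derivation and not on the syntax of the types. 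The congruence rule $\mu\mathtt t.\ST'<:\mu\mathtt t.\ST$ needs a small auxiliary fact: $<:$ is preserved by unfolding both sides. This is proved by a routine induction showing $<:$ is closed under substitution of related recursive types.
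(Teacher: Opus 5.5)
Your case analysis matches the paper's approach. The paper only asserts that the lemma follows mechanically by induction on the derivation of $<:$. Your prefix, branch-restriction and MC cases are handled correctly, and so is the vacuous $\mixi{c}$-versus-$\lmixi{c}$ case.

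The step that fails as written is the congruence case $\mu\mathtt t.\ST'<:\mu\mathtt t.\ST$. The substitution lemma does give you a derivation of $\ST'[\mu\mathtt t.\ST'/\mathtt t]<:\ST[\mu\mathtt t.\ST/\mathtt t]$. But that derivation is built by grafting a copy of the \emph{entire} derivation of $\mu\mathtt t.\ST'<:\mu\mathtt t.\ST$ onto each leaf $\mathtt t<:\mathtt t$ of the derivation of $\ST'<:\ST$. Whenever $\mathtt t$ occurs in $\ST'$, it is therefore not smaller than the derivation you are inducting on, so your induction hypothesis cannot be applied to it. The unfolding rule, which you single out as the main obstacle, is actually the harmless case, since its premise is a genuine subderivation.

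The simplest repair is to change the measure: induct on the derivation of the step $\ST_1\trans{\ell}\ST_1'$, and do the case analysis on the last rule of $<:$ by inversion. In both recursion cases, the step of $\mu\mathtt t.\ST'$ is derived by $\lrule{Rec}$ from a strictly smaller derivation of $\ST'[\mu\mathtt t.\ST'/\mathtt t]\trans{\ell}\ST_1'$. On the right, every step of $\mu\mathtt t.\ST$ is derived by $\lrule{Rec}$ from a step of $\ST[\mu\mathtt t.\ST/\mathtt t]$ with the same target. The MC cases also decrease this measure, and the prefix cases need no hypothesis at all.

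A smaller point concerns the weak transition $\myreaches{\ell}$. If it is read, as in the proof of the preceding lemma, as allowing $\lnewlab$ steps around $\ell$, then $\ST_2$ may instantiate MCs in two places you do not account for:
\begin{itemize}
\item on the non-receiving side of an active MC, via $\lrule{NLCtxt}$/$\lrule{NRCtxt}$;
\item in the continuation after the receive.
\end{itemize}
Your claims that ``the right side must use the same rule'' and that the prefix case ends exactly in $\ST_{2k}$ then need an extra, easy fact: $X<:Y$ and $Y\trans{\lnewlab}Y'$ imply $X<:Y'$. This follows from the rule relating $\ST_1'\mixi{c}\ST_2'$ to $\ST_1\lmixi{c}\ST_2$, by the same kind of induction and with the same care for $\mu$. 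In the paper's only use of the lemma, the matching receive is a single step. So this point affects the statement as written rather than its application.
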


\begin{lemma}\label{para3}
\PURPLE{If $(\GT, Y)\in R$ then $ \GT \wct Y$.}
\end{lemma}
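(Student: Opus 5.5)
The goal is condition (2) of \Cref{def:wc} for a pair $(\GT,Y)\in R$, where $Y <: \GT\proj$. Suppose $Y\trans{\ell}Y'$. I would transfer this step from $Y$ to the projected system $Y_{\GT}=\GT\proj$, apply bottom-up fidelity (\Cref{thm:complete}) there, and then reconcile the two resulting systems using transitivity of $<:$. Since $<:$ on systems is pointwise (rule $\mathtt{[PSystem]}$), the argument splits by the configuration whose subject performs $\ell$. All other configurations are untouched, except possibly the receiver's queue in a send step; queues are identical on both sides of $<:$, so they stay in sync.

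\textbf{Case $\ell$ is $\nu$ or $\gcl$.} Both are treated as $\tau$. For $\gcl$, the configuration of $Y$ and its counterpart in $\GT\proj$ have the same queue. By \Cref{nogarbage}, that queue contains no stale messages; moreover, the preorder $<:$ preserves which sides are committed ($\bullet$ is matched by $\bullet$). Hence $\gcl$ leaves $Y$ unchanged and we take $\GT'=\GT$. For $\nu$, the step only instantiates an MC definition $\ST_1\mixi{c}\ST_2$ in $Y$. By the key rule of $<:$, the result is still below the corresponding component of $\GT\proj$ (either another definition, or the already active MC). So $Y'<:\GT\proj$ with $\GT'=\GT$, using the empty sequence of $\tau$ steps.

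\textbf{Case $\ell$ is a send.} This is the case covered by \Cref{para2}. From $Y_{\role p}<:(\GT\proj)_{\role p}$ and the step of $Y_{\role p}$, the projected configuration performs $\myreaches{\ell}$, that is, some $\nu$ and recursion-unfolding steps followed by $\ell$, with continuations related by $<:$. Each preliminary $\nu$ step by $\GT\proj$ is matched by bottom-up fidelity (\Cref{thm:complete}) with a $\nu$ step of $\GT$ up to $<:$. I would therefore iterate the argument: after each matched step the pair is again in $R$, and the measure "number of pending instantiations before $\ell$" decreases. Then $\ell$ itself is matched: \Cref{thm:complete} gives $\GT\trans{\ell}\GT'$ and $Y''\gcreach Y'''<:\GT'\proj$. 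The $\gcl$ steps can be replayed on $Y'$, because the queues agree and committed sides agree. Transitivity of $<:$ then places the resulting pair in $R$.

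\textbf{Case $\ell$ is a receive.} This is the main obstacle. \Cref{para22} only gives the preorder on continuations once we know that $\GT\proj$ can perform the receive. The difficulty is that $Y_{\role q}$ may be a branching $\p\&_{i\in I}a_i.\ST_i'$ sitting above a narrower $\p\& a_k.\ST_k$ in $\GT\proj$ (second rule of $<:$). I would argue that the message actually consumed, $(a_j,\pth)$, lies in the shared queue. By \Cref{nostale} and projection of in-transit messages, the queue contents of $\GT\proj$ record exactly the in-transit branch $k$ selected in $\GT$. Hence $a_j=a_k$, and $\GT\proj$ can perform the same receive, possibly after $\nu$/unfolding steps handled as in the send case. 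Then \Cref{thm:complete} yields $\GT\trans{\ell}\GT'$, and \Cref{para22} together with the garbage-collection replay and transitivity closes the case. If this identification of the branch via the queue fails in some nested-MC configuration, I would fall back on strengthening the invariant in $R$, requiring narrowed branches to agree with the head of the queue.
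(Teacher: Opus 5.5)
Your route matches the paper's. You move the step of $Y$ over to $\GT\proj$: via \Cref{para2} for non-receives, and for receives via the observation that the consumed message sits in the shared queue, hence is the in-transit branch of $\GT$, so the narrowed branch agrees, and then \Cref{para22}. You then apply bottom-up fidelity (\Cref{thm:complete}), replay the purge steps on $Y'$, and close with transitivity of $<:$. Your send, receive and $\gcl$ cases are fine at the paper's level of detail. The preliminary $\nu$ steps you allow on the $\GT\proj$ side in the send case never actually arise, because $<:$ only lets $Y$ lag behind $\GT\proj$ in instantiations, never lead; handling them anyway is harmless.

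The step that fails is your separate $\nu$ case. There you take $\GT'=\GT$ and claim $Y'<:\GT\proj$ whether the counterpart in $\GT\proj$ is ``another definition'' or an already active MC. In the first subcase this is false. The only rule of $<:$ relating the two forms is $\ST_1'\mixi{c}\ST_2' <: \ST_1\lmixi{c}\ST_2$, with the definition on the left, and no rule places an active MC below a definition. A concrete instance: take an initial $\GT=\GT_1\mixi{c}\GT_2$ and $Y=\GT\proj$. One local instantiation already gives $Y'\not<:\GT\proj$. Instantiation is irreversible, so no further silent step of $Y'$ can bring the pair back into $R$ while $\GT$ stays put.

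The repair is what the paper does implicitly by treating $\nu$ as just another non-receive action. In that subcase $\GT\proj$ can perform the same $\nu$, so \Cref{thm:complete} gives $\GT\trans{\nu}\GT'$ with $Y'$ related to $\GT'\proj$ through transitivity. The roles that have not yet instantiated still carry the definition, which the key rule permits. This global step is allowed by \Cref{def:wc} because $\nu$ counts as $\tau$. Only when the counterpart is already active can you close the case with $\GT'=\GT$.
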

\begin{proof}

By hypothesis we have $Y <: \GT \proj$. 
First, assume $Y\trans{\ell}Y'$ and $\ell$ is not a receive action. 
By \Cref{para2} we have $\GT\proj \myreaches{\ell} Y''$ and $Y' <: Y''$. 
By \Cref{thm:complete} $\GT \trans{\ell}\GT'$ and $Y'\myreaches{}Y'''$ with $Y'''<: \GT'\proj$. 
Again, since $Y'\myreaches{}Y'''$ and $Y\trans{\ell}Y'$, by \Cref{para2} we have 
$Y'\myreaches{}Y''''$ and $Y'''' <: Y'''$. By transitivity of `$<:$' we have $Y'''' <: \GT'\proj$ as desired. 
\footnote{Note, the step $Y'\myreaches{}Y''''$ is necessary when the role that is supposed to receive the message sent (for now stored in its queue) is getting a message that commits them to the opposite side wrt where the message is received. In this case, the queue of the receiver may include stale messages that need to be purged to restore equality of the queues between $Y'$ and $\GT'\proj$ (projected systems never have stale messages).}

The case for receive action is interesting. 
If $Y\trans{\ell}Y'$ by a receive action, say $\ell = \p\q?(a_k,\pi)$ then this action only involves one configuration in $Y$: the configuration of the receiver $(\q, \ST_{2\q}, \sigma_{2\q})$ and $\sigma_{2\q}(\p) = \vec{m_1}\cdot(a_k,\pi)\cdot\vec{m_2}$.
This case is interesting because the local type of $\GT\proj \q$ could be of the form $\p\, \& \, a_j.\ST_j'$ where $\q$ is already set to receive a specific label $j$ by the second rule of $<:$ in \Cref{asec:st} and this requires us to show that $k=j$ for any $k$ that $\q$ may have chosen in $Y$. 

We use hypothesis $Y <: \GT\proj$ focussing on the configurations $Y_\p$ of sender $\p$ and $Y_\q$ receiver $\q$. Let $\GT\proj \p = {\ST_{\p}}, \sigma_\p$ and $\GT\proj \q = {\ST_{\q}}, \sigma_\q$. By $\mathtt{[PConfig]}$ we have $Y_{\p} <: (\p, {\ST_{\p}}, \sigma_\p)$ and $Y_{\q} <: (\q, {\ST_{\q}}, \sigma_\q)$.

If $(a_k,\pi)$ is in the queue of $\q$ in $Y_\q$ then it is also in $\sigma_\q(\p)$, which means $\p$ has already sent the message $k$ in $\GT$. More precisely, $\GT$ must have a subterm which is a message in transit by $\p$ to $\q$ of label $a_k$ in the path $\pi$.


So, the branch selected by $\p$ in $Y$ must be the same as the one selected by $\GT\proj$ yielding $j=k$. Hence, also  $\GT\proj \trans{\p\q?(a_k,\pi)}Y''$ and by \Cref{para22} $Y' <: Y''$. This case now proceeds as the one above for non-receiving actions.

\end{proof}

\newcommand{\LAC}{\Ctname_{\ST \mathtt e}}
\newcommand{\GAC}{\Ctname_{\GT\mathtt e}}
\newcommand{\Cag}{\Ctname_\GT}
\newcommand{\Cal}{\Ctname_\ST}
\newcommand{\Path}[1]{\mathtt{Path}(#1)}

\section{Orphan Message Freedom (OMF)}\label{app:om}
\label{app:orphan}\label{asec:er}

To reason on orphan messages is presence of MC, it helps to define the notion of active context. Intuitively, an active context is a context where the hole is \emph{not} in a stale path.


\begin{definition}[Active Contexts]\label{def:acontexts}
We define below global active contexts $\Cag(\p)$.
Global active contexts are parameterized on a role $\p$. Parameterization is necessary as some contexts may be active only for some participants when other participants have already committed to the opposite side.  We write $\Cag$ (omitting the parameter for readability) when not relevant or clear from the context.

\[\begin{array}{rcll}
\Cag(\p)& ::= & [] \\
             & \mid &
             \GactShort{p}{q}{\St{}}\cup\{a.\Cag(\p)\}\\
  & \mid &   \GmsgShort{p}{q}{k}{\St{}}\cup\{a_k.\Cag(\p)\}\\
              & \mid &
             \Cag(\p) \blacktriangleright_{\lset,\rset} \GT & \quad (\lset\cup \rset \subseteq \Roles(\GT) \text{ and }\p\not\in\rset) \\
             & \mid & \GT \blacktriangleright_{\lset,\rset} \Cag(\p) & \quad ( \lset\cup\rset\subseteq \Roles(\GT) \text{ and }\p\not\in\lset) \\
%
\end{array}\]
We write $\Cag^\circ$ to denote a general global context obtained by the rules for $\Cag$ but omitting the side conditions $\p\not\in\rset$ and $\p\not\in\lset$ in the fourth and fifth grammar rules, respectively. Note that $\Cag^\circ$ may contain stale paths. Note also that $\Cag^\circ$ differs from $\Ctname$ defined in the main text of this paper as the former only involves runtime elements, no MC definitions and recursive types.
\end{definition}

The following definition is useful to correlate paths of global types with message paths of corresponding local types.

\begin{definition}[Path of global active contexts]
Define the path of an active context inductively as follows:
 \[\begin{array}{ll}
 \Path{[]} = \epsilon \\
 \Path{\GactShort{p}{q}{\St{}}\cup\{a.\Cag\}} = \Path{\Cag(}\\
 \Path{\GmsgShort{p}{q}{k}{\St{}}\cup\{a_k.\Cag\}} = \Path{\Cag}\\
 \Path{\Cag \blacktriangleright_{\lset,\rset}\GT} = \pleft . \Path{\Cag} \\
 \Path{ \GT \blacktriangleright_{\lset,\rset} \Cag} = \pright . \Path{\Cag}
 \end{array}\]
 \end{definition}

 \begin{definition}[Actions $\trans{\ell@\pi}$]
 We say that $\Cag[\GT]$ moves at $\pi$ with label $\ell$, written $\Cag[\GT]\trans{\ell@\pi}$ if $\Cag[\GT]\trans{\ell}\Cag' [\GT']$ with $\GT\neq \GT'$ and $\pi = \Path{\Cag} = \Path{\Cag' }$.
 \end{definition}

 \begin{definition}[Global staleness]
 We say that $\pi$ is stale in $\GT$ if there exists no $\Cag$ such that $\GT = \Cag[\GT'] $ and $\Path{\Cag} = \pi$.
 \end{definition}

\begin{proposition}\label{erg}
For all $\GT = \Cag(\q)[\GmsgShort{p}{q}{k}{\{a_i.\GT_i\}_{i\in I}}]$ reachable from an initial, aware, balanced global type, (with $\pi = \Path{\Cag}$),
there exists $\GT'$ reachable from $\GT$ such that either $\GT'\trans{\p\q?a_k@\pi}$ or $\pi$ is stale in $\GT'$.
\end{proposition}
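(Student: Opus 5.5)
We argue by applying global progress (\Cref{cor:gprog}, via \Cref{thm:progress} together with preservation of awareness, balance and coherence) to the receiver $\q$, and track what happens to the in-transit term along the way. Since $\GT = \Cag(\q)[\GmsgShort{p}{q}{k}{\{a_i.\GT_i\}_{i\in I}}]$ and $\Cag(\q)$ is active for $\q$, we first check that $\q\in\RolesG{\GT}$. This follows by induction on $\Cag(\q)$: the side conditions $\q\not\in\rset$ (resp.\ $\q\not\in\lset$) ensure that $\q$ is not removed by the active-MC clause of $\RolesG{\cdot}$, and the prefix clauses keep $\q$ as receiver. Progress then yields a run $\GT\myreaches{}\GT''\trans{\ell}$ with $\sbj{\ell}=\q$.

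The core is an invariant along any run $\GT\trans{\vec\ell}\GT^\ast$. Either the in-transit term is still present at the same path, i.e.\ $\GT^\ast=\Cag'(\q)[\GmsgShort{p}{q}{k}{\{a_i.\GT_i'\}}]$ with $\Path{\Cag'}=\pi$, or $\pi$ has become stale in $\GT^\ast$ (some enclosing active MC on $\pi$ now has $\q$ committed to the opposite side), or the message has been received. We prove this by induction on the run, with case analysis on the global rules. Rules \glab{Cont1}, \glab{Cont2}, \glab{Ctx1}/\glab{Ctx2} and the L/R rules acting on other subterms preserve the context shape and its path, by monotonicity (\Cref{lem:monoton}). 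Prefix rules \glab{Snd}/\glab{Rcv} in outer prefixes of $\Cag$ merely strip prefixes, which does not change the path, since prefixes contribute nothing to $\Path{\cdot}$. A rule that commits $\q$ to the opposite side of an enclosing MC makes $\pi$ stale, and by monotonicity it stays stale. Recursion is harmless because active contexts contain no $\mu$.

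It then suffices to show that, as long as $\pi$ is not stale and the message is unreceived, $\q$ cannot act anywhere other than by receiving $a_k$ at $\pi$. By well-nestedness (\Cref{lem:wn}) and the prefix structure, the in-transit term is $\q$'s \emph{next} action inside the innermost active block containing it, since the context above it consists only of prefixes and active MCs. Hence any action by $\q$ is one of three kinds. It may be the receive $\p\q?a_k@\pi$ itself. It may be an action on the other side of some enclosing MC; by coherence (\Cref{lem:cohpre}) and the R-rules, this commits $\q$ to that side and makes $\pi$ stale. Or it may be a prefix-guarded action, which is impossible because prefixes of $\Cag$ with subject $\q$ have already been consumed. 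Applying progress repeatedly then gives $\GT'$ with $\GT'\trans{\p\q?a_k@\pi}$, or with $\pi$ stale.

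The main obstacle is the case of an LHS action by $\q$ at a different position inside the same active MC. Here we must argue that $\q$'s projection is sequential, so its only LHS-enabled action is the head of its own thread. We would also need to rule out FIFO interference: an earlier $\p\to\q$ message could block the receive, but by progress it is consumed first. We would try to handle this by an inner induction on the number of in-transit $\p\to\q$ terms preceding ours on $\q$'s thread.
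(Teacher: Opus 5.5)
Your path-preservation invariant is fine, but the termination step has a genuine gap. It rests on the claim that, while $\pi$ is not stale and the message is unreceived, every action of $\q$ is either the target receive or makes $\pi$ stale. Progress (\Cref{cor:gprog}) only yields \emph{some} reachable action of $\q$, with no control over which one. So ``applying progress repeatedly'' terminates only if that claim holds, and it does not.

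When the hole lies on the RHS of an active MC and $\q$ is still uncommitted, $\q$ may act on the LHS via \glab{LSnd} or \glab{LRcv2}. These actions are non-committing, so they neither consume the message nor make $\pi$ stale. Your case ``an action on the other side of an enclosing MC commits $\q$'' is valid only when that other side is the RHS.

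Concretely, take the aware and balanced stream-exception type of \Cref{ex:cleartermination}, $\q \mathbin{\rightarrowtriangle} \p : c.\,\mu\rv.(\q \mathbin{\rightarrowtriangle} \p : a.\,\rv) \;\triangleright\; \p \mathbin{\rightarrowtriangle} \q : b.\,\tend$, and let $\GT_l$ be its LHS. After instantiation and $\p\q!b$ by \glab{RSnd} you reach $\GT_l \blacktriangleright_{\emptyset,\{\p\}} \p \mathbin{\rightsquigarrowtriangle} \q : b\{b.\tend\}$. The hole is at path $\pright$, under a context active for $\q$. From here $\q$ can perform $\q\p!c$ and then $\q\p!a$ forever, all by \glab{LSnd} since $\q\notin\rset=\{\p\}$. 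At every state of that run, progress for $\q$ is witnessed by $\q\p!a$, and nothing in your argument steers the run towards $\p\q?b$. Sequentiality of $\q$'s projection does not help, because $\q$'s LHS and RHS are genuinely alternative threads of the mixed choice.

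There is also a smaller inaccuracy: prefixes of the context involving $\q$ need not have been consumed. The send $\p\q!a_k$ can pass a $\q \mathbin{\rightarrowtriangle} \role r$ prefix by \glab{Cont1} when $\p\neq\role r$, or sit under a $\role s \mathbin{\rightsquigarrowtriangle} \q$ term by \glab{Cont2}. So there are further $\q$-actions above the hole, for which you would at least need a decreasing measure.

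Since the statement is existential, the remedy is to construct the run rather than extract it from progress. The paper does this by induction on the active context:
\begin{itemize}
\item For the empty context, it fires \glab{Rcv}.
\item For a prefix, it fires $\role s\role r!a$ and $\role s\role r?a$, which leaves the path unchanged.
\item For a hole on one side of an active MC whose opposite commitment set is empty (for a right hole, $\lset=\emptyset$ by coherence), it lifts the run given by the induction hypothesis through \glab{LSnd}, \glab{LRcv1}, \glab{LRcv2}, \glab{Ctx1}, or symmetrically \glab{RSnd}, \glab{RRcv}, \glab{Ctx2}. Their side conditions hold because the lifted actions never touch the opposite block.
\item For a hole on the LHS with $\rset\neq\emptyset$, it uses the dependency of $\q$ on the observer given by awareness to make $\q$ commit on the RHS, which makes $\pi$ stale.
\end{itemize}
In the example above, this construction simply fires $\p\q?b$ by \glab{RRcv} immediately, which your progress-based iteration cannot guarantee.
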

\begin{proof}(sketch)
We reason by induction on the structure of $\Cag$.

Case $\Cag = []$. The thesis $\GT'\trans{\p\q?a_k@\pi}$ follows immediately by an action for \glab{Rcv} with $\GT=\GT'$.

Case $\Cag = \GactShort{s}{r}{\St{}}\cup\{a.\Cag ' \}$. By induction:
\[\Cag ' [\GmsgShort{p}{q}{k}{\{a_i.\GT_i\}_{i\in I}}] \trans{}^* \trans{\p\q?a_k@\pi'}\]
 with $\pi'=\Path{\Cag '}$. By two applications of rule \glab{Snd} and \glab{Rcv}, respectively,  $\GT\trans{\role s\role r!a, \role s\role r?a} \Cag ' [\GmsgShort{p}{q}{k}{\{a_i.\GT_i\}_{i\in I}}] $. Observe that $\Path{\Cag ' } = \Path{\Cag} = \pi$ and hence $\GT \trans{}^* \trans{\p\q?a_k@\pi}$ with the last action being at $\Path{\Cag }=\pi$ as required.
The case for $\GmsgShort{p}{q}{k}{\St{}}\cup\{a_k.\Cag\}$ is similar.

Case $ \Cag(\q) =  \Cag ' (\q) \blacktriangleright_{\lset,\rset} \GT_{\mathtt r} $ (with $\q\not\in\rset$). By induction
\begin{equation}\label{eq:omeq}
\Cag ' [\GmsgShort{p}{q}{k}{\{a_i.\GT_i\}_{i\in I}}] \trans{}^* \GT_{in}'
\end{equation}
and either (1) $\GT_{in}' \trans{\p\q?a_k@\pi'}$ with $\pi' = \Path{\Cag '}$ or (2) $\pi'$ is stale in $\GT_{in}'$.



We have two cases:
\begin{itemize}
\item if $\rset\neq \emptyset$ then, since there is an active dependency from the observer of the mixed choice and $\q$, $\q$ will commit to the RHS, which will make $\pi$ stale in the reached state as desired. Here it does not matter if $\pi'$ in the induction step is stale or not -- case 1 or 2 following (\ref{eq:omeq}), as the whole $\pi$ becomes stale by effect of an action on the RHS of the larger context.

\item if $\rset= \emptyset$ and (\ref{eq:omeq}) holds by case (1) then we need to show that all actions by $\Cag ' [\GmsgShort{p}{q}{k}{\{a_i.\GT_i\}_{i\in I}}] $ can also be performed by the larger context $\Cag ' [\GmsgShort{p}{q}{k}{\{a_i.\GT_i\}_{i\in I}}] \blacktriangleright_{\lset,\emptyset}^c \GT_{\mathtt r}$.

First, observe that no action on the LHS of MC $c$ can change the right-hand side set $c$ and hence $\Cag ' [\GmsgShort{p}{q}{k}{\{a_i.\GT_i\}_{i\in I}}] \blacktriangleright_{\lset,\emptyset} \GT_{\mathtt r}  \trans{}^*$ with lead to an outer context where $\rset=\emptyset$. We omit the (mechanical) inner induction on the length of $\trans{}^*\trans{\p\q?a_k@\pi'}$ and show that any action $\ell$ can be executed by its context proceeding by case analysis on the last transition rule used.
The possible cases are: (i) instantiation actions, which can occur in $\Cag$ by \glab{Ctx1} (if there are still uncommitted roles in $\Cag '$ then there are also in $\Cag$ hence the premise of \glab{Ctx1} holds for $\Cag$), (ii) non committing receive actions that can occur in the larger context by \glab{LRcv2}, and (iii) send actions that can occur by \glab{LSnd}. As to committing actions -- that can be only receive actions as no participant can commit on the LHS with a send action (iv): receive actions can  occur in $\Cag$ by \glab{LRcv1} and change the context into $\Cag{} ' \blacktriangleright_{\lset\cup\{\role r'\},\rset} \GT_{\mathtt r} $ where $\mathtt{r'}$ is the subject of the receive action. Observe that the path of the new context is still $\pi$: $\Path{\Cag ' \blacktriangleright_{\lset\cup\{\role r'\},\emptyset} \GT_{\mathtt r}} = \pi$. The previous four cases show that $\Cag ' [\GmsgShort{p}{q}{k}{\{a_i.\GT_i\}_{i\in I}}] \blacktriangleright_{\lset,\emptyset} \GT_{\mathtt r} \trans{}^* \trans{\p\q?a_k@\pleft.\pi'}$ where, by the structure of $\Cag$, $\pleft.\pi'= \pi$, as desired.

If $\rset= \emptyset$ and (\ref{eq:omeq}) holds then let, with no loss of generality, $\ell_i\in \vec{\ell}$ be the first action in the sequence $\Cag ' [\GmsgShort{p}{q}{k}{\{a_i.\GT_i\}_{i\in I}}] \trans{\vec{\ell}}\GT_{in}'$ by which $\pi'$ becomes stale. Using a similar reasoning to case (1) all actions of the inner context can be performed by the outer MC context, including $\ell_i$. The thesis follows by observing that a context with a stale subterm is also stale hence if $\pi'$ becomes stale then also $\pi$ does.

\end{itemize}
Case $ \Cag =  \GT_{\mathtt l} \blacktriangleright_{\emptyset,\rset} \Cag ' $ is symmetric.
\end{proof}

\begin{proposition}[Stale persistency]\label{foreverstale}
If $Y, (\q, \ST, \sigma)\trans{\ell} Y', (\q, \ST', \sigma')$ and $\mathtt{stale}(\pi,\ST)$ then $\mathtt{stale}(\pi,\ST')$.
\end{proposition}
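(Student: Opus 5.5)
The plan is a structural induction on the path $\pi$, nested inside a case analysis on the local transition rule applied to the configuration of $\q$. Two observations reduce the work. First, if $\q$ is not the subject of $\ell$, or $\ell=\gcl$ (rule \textsc{[Discard]}), then $\ST'=\ST$; only the queue changes, and since $\mathtt{stale}(\pi,\ST)$ depends on the local type alone, there is nothing to prove. Second, rule \textsc{[Par]} only adds untouched configurations, so we may assume $\q$ is the subject and the step comes from \textsc{[Low]}. In that case the step is derived by one of \lrule{Snd}, \lrule{Rcv}, \lrule{Rec}, \lrule{New}, \lrule{LSnd}, \lrule{RSnd}, \lrule{LRcv1/2}, \lrule{RRcv}, or one of the context rules \lrule{LCtxt}, \lrule{RCtxt}, \lrule{NLCtxt}, \lrule{NRCtxt}.

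We argue by induction on $\pi$, with a case split on the shape of $\ST$. If $\pi=\epsilon$, then $\mathtt{stale}(\epsilon,\ST)$ is false, so the claim holds vacuously. Otherwise $\pi=\mathtt{s}.\pi'$ with $\mathtt{s}\in\{\pleft,\pright\}$. By the definition of $\mathtt{stale}$, a true value forces $\ST$ to be an active MC: one of $\ST_1\lmix\ST_2$, $\ST_1\lmix\bullet$, or $\bullet\lmix\ST_2$. Every other shape falls into the ``otherwise'' case and gives false. So prefixes, $\mu$-types and MC definitions are excluded, and the rules \lrule{Snd}, \lrule{Rcv}, \lrule{Rec} and \lrule{New} cannot act at top level of a stale $\ST$. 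The case $\pi=\pleft.\pi'$ has two subcases, and $\pi=\pright.\pi'$ is symmetric.
\begin{itemize}
\item If $\ST=\bullet\lmix\ST_2$, the only applicable rule is \lrule{RCtxt}. It yields $\ST'=\bullet\lmix\ST_2'$, which is still stale at $\pleft.\pi'$.
\item If $\ST=\ST_1\lmix\ST_2$ or $\ST=\ST_1\lmix\bullet$, then $\mathtt{stale}(\pi',\ST_1)$ holds. There are two kinds of step to consider.
\begin{itemize}
\item A step on the left (\lrule{LSnd}, \lrule{LRcv1/2}, \lrule{LCtxt}, \lrule{NLCtxt}) turns $\ST_1$ into $\ST_1'$ via a premise transition of the same configuration. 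The induction hypothesis applied to that premise gives $\mathtt{stale}(\pi',\ST_1')$. The result has the form $\ST_1'\lmix X$ with $X\in\{\ST_2,\bullet\}$, so $\mathtt{stale}(\pleft.\pi',\ST')$ holds.
\item A step on the right (\lrule{RSnd}, \lrule{RRcv}, \lrule{NRCtxt}) either leaves $\ST_1$ untouched, giving $\ST_1\lmix\ST_2'$, or replaces it by $\bullet$. In the first case staleness is inherited from $\ST_1$. In the second case $\ST'=\bullet\lmix\ST_2'$, which is stale at any $\pleft$ path by the first clause of $\mathtt{stale}$.
\end{itemize}
\end{itemize}

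The induction hypothesis is applied to a premise transition. That premise lives under an extended environment $\pth.\pleft$ and may involve an extra configuration $Y$, whereas the statement concerns top-level transitions. So the claim must first be generalised: for any environment and any accompanying configurations $Y$, if $\tstate{\ltheta}{\lenv}{(\q,\ST,\sigma),Y}\trans{\ell}(\q,\ST',\sigma'),Y'$ (or $\q$ is untouched), then staleness is preserved. The result then follows by \textsc{[Low]}. Neither $\lenv$ nor $Y$ enters the definition of $\mathtt{stale}$, so this generalisation is harmless.

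I expect the main obstacle to be this bookkeeping rather than any deep step: the induction hypothesis needs the generalised transition judgment, and one must check that no rule ever turns a $\bullet$ back into a live side. That second point is immediate from the rule shapes: $\bullet$ appears only in conclusions, and is never removed.
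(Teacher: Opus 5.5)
Your proof is correct and is essentially the paper's argument: a case analysis on the last local rule, descending through the MC rules and using that rules only ever introduce $\bullet$ sides and never remove them. Inducting on $\pi$ rather than on the derivation changes nothing of substance, since each MC rule acting on $\q$ strips one step off both the local type and the path.

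One small caveat concerns your opening claim that a non-acting configuration keeps its local type. Strictly speaking this is not literally true: $\lrule{Rec}$ is not tied to the acting role, so it may unfold a bystander's $\mu$-type inside a derivation. The paper's induction on the derivation, over all configurations, absorbs this case. It is also harmless for your argument, because a stale path traverses only active MC nodes and never a $\mu$.
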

The proof is mechanical by induction on the transition, reasoning by case analysis on the last rule applied. Observe that transitions never remove MC instances and can only change the tree structure of nested MC by either expanding the leaves with new instantiations, or committing hence \emph{adding} (never removing) stale sides $\bullet$.

\begin{proposition}[Local-global activeness]\label{inversionlem}
Let $Y, (\q, \ST, \sigma) = \GT \proj $ and  $\sigma[\p] = (a, \pi) \cdot \vec{m}$ then $\GT =  \Cag(\q)[\GmsgShort{p}{q}{k}{\{a_i.\GT_i\}_{i\in I}}]$ for some $\Cag$ where $\Path{\Cag} = \pi$.
\end{proposition}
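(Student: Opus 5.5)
The argument is by structural induction on $\GT$, following the clauses of projection in \Cref{fig:projection}. The induction needs a slightly generalised form: for every path prefix $\pth_{out}$, if $\pth_{out} \vdash \GT \proj \q = \ST, \sigma$ and $\sigma[\p] = \vec{m}_1 \cdot (a,\pth_{out}.\pth) \cdot \vec{m}_2$ (or, in the head-of-queue form of the statement, $\vec{m}_1=\epsilon$), then $\GT = \Cag(\q)[\GmsgShort{p}{q}{k}{\{a_i.\GT_i\}_{i\in I}}]$ with $a = a_k$ and $\Path{\Cag} = \pth$. The statement is then the instance $\pth_{out}=\epsilon$. The point of generalising to arbitrary positions in the queue is that the MC case concatenates queues with $\circ$, so the head of the top-level queue may come from the tail of a sub-queue.

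Base and easy cases.
\begin{itemize}
\item For $\tend$, $\mathtt t$, MC definitions and $\mu\mathtt t.\GT$, projection yields $\sigma_0$. The hypothesis is vacuous, since the body of a recursion is initial.
\item For an interaction $\GactShort{r}{s}{\St{}}$, the queue of $\q$ is the common queue $\sigma$ of all branches; this common queue is forced by the projection rule together with \Cref{emptyqueue} when $\q=\role s$. We pick any branch $j$, apply the induction hypothesis to $\GT_j$, and wrap the resulting context as $\GactShort{r}{s}{\St{}}\cup\{a_j.\Cag\}$, which leaves the path unchanged.
\item For a message in transit $\GmsgShort{r}{s}{h}{\St{}}$ there are two cases. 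If $\q=\role s$, $\p=\role r$ and the message is the newly prepended head $(a_h,\pth_{out})$, we take $\Cag=[]$ with path $\epsilon$. Otherwise the message sits in $\sigma_h$, and we recurse into $\GT_h$ under the context $\GmsgShort{r}{s}{h}{\St{}}\cup\{a_h.\Cag\}$.
\end{itemize}

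Active MC case. Let $\GT=\GtoP{\GT_1}{}{\lset}{\rset}{\GT_2}$.
\begin{itemize}
\item If $\q\in\lset$, then $\sigma$ is the projection of $\GT_1$ under $\pth_{out}.\pleft$. Every message in it carries a path beginning $\pth_{out}.\pleft$, because projection only ever prepends the current context path; this is a small invariant to be established alongside the main induction. The induction hypothesis gives $\Cag'$ in $\GT_1$, and we take $\Cag = \Cag' \blacktriangleright_{\lset,\rset}\GT_2$. Its side condition $\q\notin\rset$ holds by coherence (\Cref{coherenceinvariantI}).
\item If $\q\in\rset$, the case is symmetric, using $\pright$ and the side condition $\q\notin\lset$.
\item If $\q\notin\lset\cup\rset$, then $\sigma=\sigma_1\circ\sigma_2$. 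The message lies either in $\sigma_1(\p)$ or in $\sigma_2(\p)$. By the prefix invariant, its path begins $\pth_{out}.\pleft$ or $\pth_{out}.\pright$ respectively, so the first component of $\pth$ determines the side. We recurse into that side and wrap the context accordingly. Both side conditions hold trivially here, since $\q$ is in neither set.
\end{itemize}
We also have to check the side condition $\lset\cup\rset\subseteq\Roles(\cdot)$ in the active-context grammar. We would discharge it using balance (case (3) of \Cref{wellset}) together with \Cref{Finvariant}; where this turns out to be delicate, we would note that it is an administrative condition inherited from reachability.

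The main obstacle is the queue-concatenation case. A message at the head of the concatenated queue may originate from $\sigma_2$, namely when $\sigma_1(\p)$ is empty. Uniquely identifying the side from which it came, and hence the correct branch of the context, therefore relies entirely on the path-prefix invariant. This is why the induction must be carried out on the general path-annotated projection $\pth_{out}\vdash\cdot$ rather than on top-level projection.
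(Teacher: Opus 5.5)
Your argument is correct and is essentially the paper's. Both rest on the three-case projection rule for $\GT_1\blacktriangleright_{\lset,\rset}\GT_2$ (only the queue of a block that $\q$ has not committed against survives) plus coherence for the committed cases; the paper organises this as ``find the in-transit subterm, then induct on $\pth$, peeling the innermost MC'', while your top-down induction with the path-prefix invariant handles $\sigma_1\circ\sigma_2$ explicitly and gives the arbitrary-position version (\Cref{inversion}) at once.

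The side condition $\lset\cup\rset\subseteq\Roles(\cdot)$ that you leave open is not checked in the paper's proof either, and balance cannot discharge it when the hole is on the left with $\q\notin\lset\cup\rset$ and $\rset\neq\emptyset$: with $\lset=\emptyset$ balance only yields $\rset\subseteq\RolesG{\GT_1}$, and roles in $\rset$ may already have completed their part of the right block.
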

\begin{proof}(sketch)
Observe that $\neg \mathtt{stale}(\pi,\ST)$ by garbageless projection (\Cref{nogarbage}).

By inspection of the projection rules, the only global type that can produce a message $(a,\pi)$ that is, by the second projection rule for messages in transit
$\GmsgShort{p}{q}{k}{\{a_i.\GT_i\}_{i\in I}}$.

Assume without loss of generality that $\GT = \Cag^\circ[\Cag'(\q)[\GmsgShort{p}{q}{k}{\{a_i.\GT_i\}_{i\in I}}]]$ (the inner active context may be just the hole). We need to show that $\Cag^\circ$ is an active context for $\q$ with path $\pi$.

The proof is by induction on the structure of $\pi$.

Case $\pi= \epsilon$.
Application of the second rule on  $\GmsgShort{p}{q}{k}{\{a_i.\GT_i\}_{i\in I}}$ can occur after a finite number of applications of instances of the first or the second projection rule (for communication and message in transit).  It results that
$\GT = \Cag^\circ[\GmsgShort{p}{q}{k}{\{a_i.\GT_i\}_{i\in I}}]$ where $\Cag^\circ$ is either a hole $[]$ or a finite concatenation of communication actions and messages in transit. In all these cases, since no MC occurs in $\Cag^\circ$ then $\Cag^\circ$ is an active context on $\q$ hence
$\GT = \Cag(\q)[\GmsgShort{p}{q}{k}{\{a_i.\GT_i\}_{i\in I}}]$
with $\Cag^\circ = \Cag(\q)$ and  $\Path{\Cag}=\pi=\epsilon$ as required.

If $\pi =  \pi'.\pleft$ then without loss of generality
$\GT = \Cag^\circ[ \Cag'(\q)[\GmsgShort{p}{q}{k}{\{a_i.\GT_i\}_{i\in I}} ]  ]$
where $\Cag'(\q)$ is the largest context with only send/receive actions, and since by definition it has no MC it is active. We now use the form of $\pi$ to decompose $\Cag^\circ$ and show it is active. By inspection of the projection rules (by either fourth rule, first and third case) we have observe that $\Cag^\circ$ has the following form:
\[\Cag^\circ = {\Cag^\circ} ' [
\Cag'(\q) \blacktriangleright_{\lset,\rset} \GT_2 ]
\]
If the fourth projection rule - first case - was used to project $\GT$, then $\role q\in\lset$ hence by coherence $\role q\not \in\rset$. If the fourth rule - third case - was used
again, by side condition $\q\not\in\lset\cup\rset$ we have $\q\not\in\rset$. Since in both cases $\q\not\in\rset$ then $
[] \blacktriangleright_{\lset,\rset} \GT_2$ is active, and hence $\Cag'(\q) \blacktriangleright_{\lset,\rset} \GT_2$ is also active, as obtained by plugging an active context into another active one.

By induction, we have ${\Cag^\circ} '$ is active for $\q$ with path $\pi'$. Hence,
$\Cag^\circ$ is active because it is obtained by plugging context $\Cag'(\q) \blacktriangleright_{\lset,\rset} \GT_2$ that is active for $\p$ with $\pi$ into another context ${\Cag^\circ} '$ that is also active for $\q$ with $\pi$ (this is by definition of active contexts).

The case for $\pi = \pi'.\pright$ is symmetric.

\end{proof}

\Cref{inversion} follows by generalizing \Cref{inversionlem}, by mechanical induction on the length of $\vec{m_1}$ and on the derivation of $Y <: \GT\proj$.

\begin{proposition}[Local-global activeness (general)]\label{inversion}
Let $Y, (\q, \ST, \sigma) <: \GT \proj $ and  $\sigma[\p] =\vec{m_1}\cdot (a, \pi) \cdot \vec{m_2}$ then $\GT =  \Cag(\q)[\GmsgShort{p}{q}{k}{\{a_i.\GT_i\}_{i\in I}}]$ for some $\Cag$ where $\Path{\Cag} = \pi$.
\end{proposition}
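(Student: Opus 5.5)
The plan is to reduce the general statement to \Cref{inversionlem} in two independent steps, as the remark before the proposition suggests: first discharge the preorder $Y,(\q,\ST,\sigma) <: \GT\proj$, then discharge the prefix $\vec{m_1}$ in the queue.

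For the first step, observe that by rules $\mathtt{[PSystem]}$ and $\mathtt{[PConfig]}$ the preorder only relates local types and never changes queues. Hence $Y,(\q,\ST,\sigma) <: \GT\proj$ yields $\GT\proj\q = (\ST^\sharp,\sigma)$ for some $\ST^\sharp$ with $\ST <: \ST^\sharp$, with the same $\sigma$. The conclusion of the proposition mentions only $\GT$, $\q$, and the message $(a,\pi)$ in $\sigma(\p)$, so from this point we may work with the projected configuration $\GT\proj\q$ itself. No induction on the derivation of $<:$ is really needed beyond this queue-invariance remark.

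The second step is an induction on the length of $\vec{m_1}$, or more directly an induction on the structure of $\GT$ following the projection rules, tracking how each message of $\sigma(\p)$ is produced. I would strengthen \Cref{inversionlem} to the following claim: for every $\pth_{out}$, if $\pth_{out}\vdash \GT\proj \q = (\ST,\sigma)$ and $(a,\pi)$ occurs at any position of $\sigma(\p)$, then $\pi = \pth_{out}.\pi'$ and $\GT = \Cag(\q)[\GmsgShort{p}{q}{k}{\St{}}]$ with $\Path{\Cag}=\pi'$. The cases run over the projection rules.
\begin{itemize}
\item For an interaction, a third-party message in transit, or the case $\q=\p$ of $\GmsgShort{p}{q}{k}{\St{}}$, the queue is that of a continuation, since all branches share $\sigma_0$ except the chosen one $k$. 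Induction on that continuation gives a context, which we extend with the active prefix.
\item For a message in transit to $\q$, the queue is $(a_k,\pth_{out})\cdot\sigma_k(\p)$. Either the message is the head, which is the base case with $\Cag=[]$ as in \Cref{inversionlem}, or it lies in $\sigma_k(\p)$, where induction on $\GT_k$ applies.
\item For an active MC, the queue is $\sigma_1$, $\sigma_2$, or $\sigma_1\circ\sigma_2$, projected with paths $\pth_{out}.\pleft$ and $\pth_{out}.\pright$. A message in $\sigma_1$ (respectively $\sigma_2$) comes from the LHS (respectively RHS) by induction. The side condition needed for activeness is $\q\notin\rset$ for the left and $\q\notin\lset$ for the right. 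It follows from which projection case was used, using coherence (\Cref{coherenceinvariantI}) when $\q\in\lset$.
\item MC definitions, recursion and $\tend$ project to empty queues, so these cases are vacuous.
\end{itemize}

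The main obstacle is the concatenation $\sigma_1\circ\sigma_2$. There, a message at some position of $\sigma(\p)$ must be attributed to exactly one side, and its path must agree with that side. I would handle this by noting that every message enqueued by the left projection carries a path with prefix $\pth_{out}.\pleft$, and every message from the right carries $\pth_{out}.\pright$. This is immediate from the message-in-transit rule, which stamps the current context path. The first symbol of $\pi$ after $\pth_{out}$ therefore determines the side, and position bookkeeping in $\vec{m_1}$ becomes irrelevant. Instantiating the strengthened claim with $\pth_{out}=\epsilon$ then yields the proposition.
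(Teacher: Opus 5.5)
Your proof is correct, but it follows a somewhat different route from the paper. The paper's proof is a single sentence: it generalises \Cref{inversionlem}, a head-of-queue version proved by induction on the structure of $\pi$, by induction on the length of $\vec{m_1}$ and on the derivation of $Y <: \GT\proj$.

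You differ in two ways.
\begin{itemize}
\item You drop the induction on $<:$ entirely, observing that $\mathtt{[PSystem]}$ and $\mathtt{[PConfig]}$ relate only local types and force identical queues. This is right, and it shows that part of the paper's induction is redundant.
\item Instead of an induction on $|\vec{m_1}|$ on top of the head lemma, you prove the position-independent statement directly, by structural induction on $\GT$ with $\pth_{out}$ as an accumulating parameter.
\end{itemize}

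The paper's route buys reuse of an existing lemma. Yours buys an induction whose steps are actually evident. The paper's induction on $|\vec{m_1}|$ is not self-explanatory: under $\sigma_1\circ\sigma_2$, the head message and the target message can come from different sides of the same MC, so dropping the head does not lead to any obvious smaller instance. Your invariant confronts the concatenation head-on. There, the position alone already tells you whether the message lies in $\sigma_1(\p)$ or $\sigma_2(\p)$, so the path-prefix argument is a consistency check rather than a necessity.

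Two details to tighten.
\begin{itemize}
\item State the strengthened claim for coherent $\GT$; coherence is inherited by subterms. The proposition, like the paper's proof of \Cref{inversionlem}, implicitly assumes $\GT$ is reachable from an initial, aware, balanced type.
\item The recursion case is not vacuous by the projection rule itself, which gives $\mu\mathtt t.\GT$ the queue of its body. You need the invariant that $\mu$-subterms of reachable types are initial, which holds because the global LTS never reduces under $\mu$ or inside an MC definition. The paper's active-context grammar, which has no $\mu$ case, relies on this silently too.
\end{itemize}
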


\enabledreceptionlocal*
\begin{proof}
If $\mathtt{stale}(\pi,\ST)$ the thesis follows immediately.
If $\neg\mathtt{stale}(\pi,\ST)$ then by \Cref{inversion} $Y, (\q, \ST, \sigma) <:  \GT\proj$ for some $\GT =  \Cag[\GmsgShort{p}{q}{k}{\{a_i.\GT_i\}_{i\in I}}]$ with $\Path{\Cag} = \pi$.

By \Cref{erg} there exists $\GT'$ reachable from $\GT$ such that $\GT'\trans{\p\q?a_k@\Path{\Cag}}$.

By \Cref{thm:oc} since $\GT\approx \GT\proj$ we obtain
$\GT\proj\trans{}^*\trans{\p\q?(a_k,\pi)}$.

By $Y, (\q, \ST, \sigma) <:  \GT\proj$ and \Cref{para} we obtain
$Y, (\q, \ST, \sigma)\trans{}^* \trans{\p\q?(a_k\pi)}$, with
$\pi=\Path{\Cag}$ as desired.
\end{proof}

\end{document}